\PassOptionsToPackage{unicode}{hyperref}
\PassOptionsToPackage{hyphens}{url}
\PassOptionsToPackage{dvipsnames,svgnames,x11names}{xcolor}

\documentclass[
  12pt]{article}

\usepackage{amsmath,amssymb}
\usepackage{iftex}
\ifPDFTeX
  \usepackage[T1]{fontenc}
  \usepackage[utf8]{inputenc}
  \usepackage{textcomp} % provide euro and other symbols
  \newcommand{\symbfit}[1]{\boldsymbol{#1}}

\else % if luatex or xetex
  \usepackage{unicode-math}
  \defaultfontfeatures{Scale=MatchLowercase}
  \defaultfontfeatures[\rmfamily]{Ligatures=TeX,Scale=1}
\fi
\usepackage{lmodern}
\ifPDFTeX\else  
\fi
\IfFileExists{upquote.sty}{\usepackage{upquote}}{}
\IfFileExists{microtype.sty}{% use microtype if available
  \usepackage[]{microtype}
  \UseMicrotypeSet[protrusion]{basicmath} % disable protrusion for tt fonts
}{}
\makeatletter
\@ifundefined{KOMAClassName}{% if non-KOMA class
  \IfFileExists{parskip.sty}{%
    \usepackage{parskip}
  }{% else
    \setlength{\parindent}{0pt}
    \setlength{\parskip}{6pt plus 2pt minus 1pt}}
}{% if KOMA class
  \KOMAoptions{parskip=half}}
\makeatother
\usepackage{xcolor}
\makeatletter
\ifx\paragraph\undefined\else
  \let\oldparagraph\paragraph
  \renewcommand{\paragraph}{
    \@ifstar
      \xxxParagraphStar
      \xxxParagraphNoStar
  }
  \newcommand{\xxxParagraphStar}[1]{\oldparagraph*{#1}\mbox{}}
  \newcommand{\xxxParagraphNoStar}[1]{\oldparagraph{#1}\mbox{}}
\fi
\ifx\subparagraph\undefined\else
  \let\oldsubparagraph\subparagraph
  \renewcommand{\subparagraph}{
    \@ifstar
      \xxxSubParagraphStar
      \xxxSubParagraphNoStar
  }
  \newcommand{\xxxSubParagraphStar}[1]{\oldsubparagraph*{#1}\mbox{}}
  \newcommand{\xxxSubParagraphNoStar}[1]{\oldsubparagraph{#1}\mbox{}}
\fi
\makeatother

\usepackage{longtable,booktabs,array}
\usepackage{calc} % for calculating minipage widths
\usepackage{etoolbox}
\makeatletter
\patchcmd\longtable{\par}{\if@noskipsec\mbox{}\fi\par}{}{}
\makeatother
\IfFileExists{footnotehyper.sty}{\usepackage{footnotehyper}}{\usepackage{footnote}}
\makesavenoteenv{longtable}
\usepackage{graphicx}
\makeatletter
\def\maxwidth{\ifdim\Gin@nat@width>\linewidth\linewidth\else\Gin@nat@width\fi}
\def\maxheight{\ifdim\Gin@nat@height>\textheight\textheight\else\Gin@nat@height\fi}
\makeatother
\setkeys{Gin}{width=\maxwidth,height=\maxheight,keepaspectratio}
\makeatletter
\def\fps@figure{htbp}
\makeatother

\makeatletter
\@ifpackageloaded{caption}{}{\usepackage{caption}}
\AtBeginDocument{%
\ifdefined\contentsname
  \renewcommand*\contentsname{Table of contents}
\else
  \newcommand\contentsname{Table of contents}
\fi
\ifdefined\listfigurename
  \renewcommand*\listfigurename{List of Figures}
\else
  \newcommand\listfigurename{List of Figures}
\fi
\ifdefined\listtablename
  \renewcommand*\listtablename{List of Tables}
\else
  \newcommand\listtablename{List of Tables}
\fi
\ifdefined\figurename
  \renewcommand*\figurename{Figure}
\else
  \newcommand\figurename{Figure}
\fi
\ifdefined\tablename
  \renewcommand*\tablename{Table}
\else
  \newcommand\tablename{Table}
\fi
}
\@ifpackageloaded{float}{}{\usepackage{float}}
\floatstyle{ruled}
\@ifundefined{c@chapter}{\newfloat{codelisting}{h}{lop}}{\newfloat{codelisting}{h}{lop}[chapter]}
\floatname{codelisting}{Listing}

\makeatother
\makeatletter
\@ifpackageloaded{caption}{}{\usepackage{caption}}
\@ifpackageloaded{subcaption}{}{\usepackage{subcaption}}
\makeatother

\ifLuaTeX
  \usepackage{selnolig}  % disable illegal ligatures
\fi
\usepackage[]{natbib}
\usepackage{bookmark}

\IfFileExists{xurl.sty}{\usepackage{xurl}}{} % add URL line breaks if available
\hypersetup{
  pdftitle={Estimating dynamic models by matching
random features},
  colorlinks=true,
  linkcolor={blue},
  filecolor={Maroon},
  citecolor={Blue},
  urlcolor={Blue},
  pdfcreator={LaTeX via pandoc}}

\usepackage{bibunits}
\defaultbibliographystyle{agsm}
\defaultbibliography{bibliography.bib}

\newcounter{assumption}
\newenvironment{assumption}[1][]{%
  \refstepcounter{assumption}%
  \par\medskip
  \noindent\textbf{Assumption~\theassumption.}\ %
  \ifx\relax#1\relax\else\ \textbf{(#1).}\fi
  \itshape
}{%
  \par\medskip
}

\newcounter{proposition}

\newcounter{theorem}
\newenvironment{theorem}[1][]{%
  \refstepcounter{theorem}%
  \par\medskip
  \noindent\textbf{Theorem~\thetheorem.}\ %
  \ifx\relax#1\relax\else\ \textbf{(#1).}\fi
  \itshape
}{%
  \par\medskip
}

\newcounter{lemma}
\newenvironment{lemma}[1][]{%
  \refstepcounter{lemma}%
  \par\medskip
  \noindent\textbf{Lemma~\thelemma.}\ %
  \ifx\relax#1\relax\else\ \textbf{(#1).}\fi
  \itshape
}{%
  \par\medskip
}

\newcounter{definition}
\newenvironment{definition}[1][]{%
  \refstepcounter{definition}%
  \par\medskip
  \noindent\textbf{Definition~\thedefinition.}\ %
  \ifx\relax#1\relax\else\textbf{(#1).}\ \fi
}{%
  \par\medskip
}

\newcommand{\norm}[1]{\left\|#1\right\|}

\newcommand{\abs}[1]{\left|#1\right|}

\newcommand{\anon}{1}

\begin{document}
\begin{bibunit}

\def\spacingset#1{\renewcommand{\baselinestretch}%
{#1}\small\normalsize} \spacingset{1}

\if1\anon
{
  \title{\bf Estimating dynamic models by matching random features}
  \author{Michael Wieck-Sosa\thanks{
    The authors gratefully acknowledge support from NSF grant DMS-2310834.}\hspace{.2cm} and Cosma Rohilla Shalizi\thanks{Corresponding author. E-mail: cshalizi@cmu.edu}\\
    Department of Statistics \& Data Science, Carnegie Mellon University}
  \maketitle
} \fi

\if0\anon
{
  \bigskip
  \bigskip
  \bigskip
  \begin{center}
    {\LARGE\bf Estimating dynamic models by matching random features}
\end{center}
  \medskip
} \fi

\bigskip
\begin{abstract}
Scientists increasingly express their ideas as dynamic models of complex processes. It is often much easier to simulate these models than to calculate the probability of their generating a particular outcome, making likelihood-based estimation infeasible. Existing likelihood-free approaches rely either on manually chosen summary statistics or on representations learned by neural networks. The former is error-prone and laborious, while the latter is computationally intensive, leaving many scientists in a difficult position. We show that, for a large class of dynamic models, parameters can be estimated by matching a small number of random features of the observed and simulated data. Specifically, we show that models with a $p$-dimensional parameter can be identified from just $2p+1$ generic random Fourier features. We introduce two estimators for stationary and nonstationary processes, respectively, and we establish their consistency under mild regularity conditions. More broadly, our results serve as the foundation for a new class of random feature methods for simulation-based estimation and inference.
\end{abstract}
%The text of your abstract. 200 or fewer words.

\noindent%
{\it Keywords:} likelihood-free, simulation-based estimation, identification, time series
\vfill

\newpage
\spacingset{1.8} % DON'T change the spacing!

\section{Introduction}
\label{section:intro}

Across scientific disciplines, dynamic models are used to describe the mechanisms that generate natural phenomena over time. Yet the same mechanistic detail that makes these models scientifically compelling often places conventional likelihood-based statistical inference out of reach. In such cases, simulation offers a way forward. Simulation-based methods assess parameter values $\theta$ by comparing certain characteristics of the observations with those of the synthetic data generated from the model at $\theta$. However, it is often unclear which characteristics should be compared and which should be treated as noise.

Many simulation-based methods rely on carefully crafted summary statistics. These summary statistics are designed to be sensitive to changes in the parameter values and to ensure identifiability. Typically, the summary statistics are user-chosen~\citep{wood_likelihood,ABC_Beaumont_2019} or learned~\citep{cranmer_sbi_review,zammitmangion_sbi_review}.

However, both approaches face difficulties. Manually selecting summary statistics requires substantial problem-specific effort from scientists. On the other hand, neural network-based methods are computationally demanding and sensitive to implementation choices.

We propose to replace both hand-crafted and learned summary statistics with \textit{randomly chosen} functions of the data. We establish that models with a $p$-dimensional parameter $\theta$ can be identified with generic choices of just $2p+1$ random Fourier features. Our results imply that simulation-based methods can be made far more automatic through matching random features.
% without meaningfully reducing precision. 
% compared to what? dont want reviewer to say "but actually, you didn't compare to anything yet"

Our approach combines classical statistical ideas with modern random feature methods from machine learning~\citep{Rahimi_Recht_2007_random_features}. However, we use random features in a fundamentally different way from their conventional role in approximating kernel methods. In particular, we repurpose random features for statistical estimation and inference by using them as low-dimensional witnesses (or probes) to distinguish probability distributions.

It is well-known that probability distributions can be uniquely characterized by the expectations they assign a sufficiently rich set of test functions. Indeed, probabilists \textit{define} convergence of distributions as the convergence of expectations of all bounded, continuous functions \citep[ch.\ 4]{Kallenberg-mod-prob}. The space of all test functions is too rich to be practically useful. However, there are smaller, more practical function spaces that are rich enough to be convergence-determining. For distributions on Euclidean spaces, the expectations of Fourier basis functions are convergence-determining \citep[ch.\ 5]{Kallenberg-mod-prob}.

Our first innovation is to realize that a \textit{parametric} model can only produce a very limited variety of distributions. Our second innovation is to realize that it is enough to pick $2p+1$ functions \textit{at random} from a sufficiently-rich, convergence-determining function class. In particular, we adapt the results from \cite{amir_et_al_finite_witness_thm} to show that the expectations of $2p+1$ generic random Fourier features will uniquely characterize the distributions produced by a model with a $p$-dimensional parameter space; see the Supplementary Material for the technical details. This can be done without looking at the data, or even understanding much about the model. Figure 1B illustrates this in a simple example: independent and identically distributed (iid) observations from a Gaussian distribution with unknown mean.

This, in and of itself, is a result at the level of probability
distributions and expectations, rather than data and estimators; it is
about identification, rather than consistency.  To get a parameter estimation procedure, we need to make additional assumptions
which allow us to reliably approximate expectation
values from observations.  We show this is possible in two
regimes: long-time (or ergodic) asymptotics, and dense-sampling (or
infill) asymptotics.

The first regime applies when
there is a stationary distribution, or at least an
asymptotically stationary distribution. Because a stationary process
is, by definition, one where the distribution of values over time
intervals is invariant under translation, we can meaningfully speak of
\textit{the} expectation value of any function. The expectation values of
the $2p+1$ random features can then be reliably approximated simply by taking averages of those features over time, using classical ergodic theory and mixing
theory. Figure 1A illustrates the key steps.

The second regime applies to nonstationary processes,
precisely those where the distribution changes meaningfully over time,
and so the expectations of random features also vary over time. The
correct framework for such situations is that of so-called ``infill asymptotics'', where we get more and more observations related to each local structure of the underlying nonstationary process.  In this regime, the local expectations of the $2p+1$ random features can be reliably approximated by averages of those features over rolling windows, which can be specified without knowledge of the data generating process.

In both regimes, we use mathematical representations of stochastic processes as transformations of sequences of noise inputs. Most dynamic models admit such a representation. This can be checked by recursively expressing each observation in terms of its past noise inputs. Examples include discrete-time dynamical systems and ordinary differential equations (ODEs) with observational noise, discretized stochastic differential equations (SDEs), linear and nonlinear time series models, and state-space models.

\begin{figure}
\centering

\begin{tabular}{@{}p{0.49\linewidth}@{\hfill}p{0.49\linewidth}@{}}
\textbf{(A)} & \textbf{(B)} \\[-1mm]

  \includegraphics[width=\linewidth,trim=0 -25 0 0,
  clip]{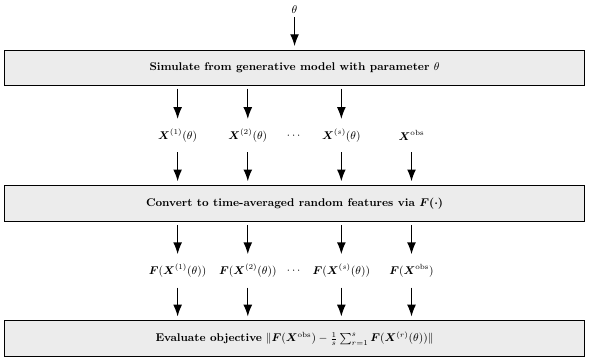} & 
  \includegraphics[width=\linewidth]{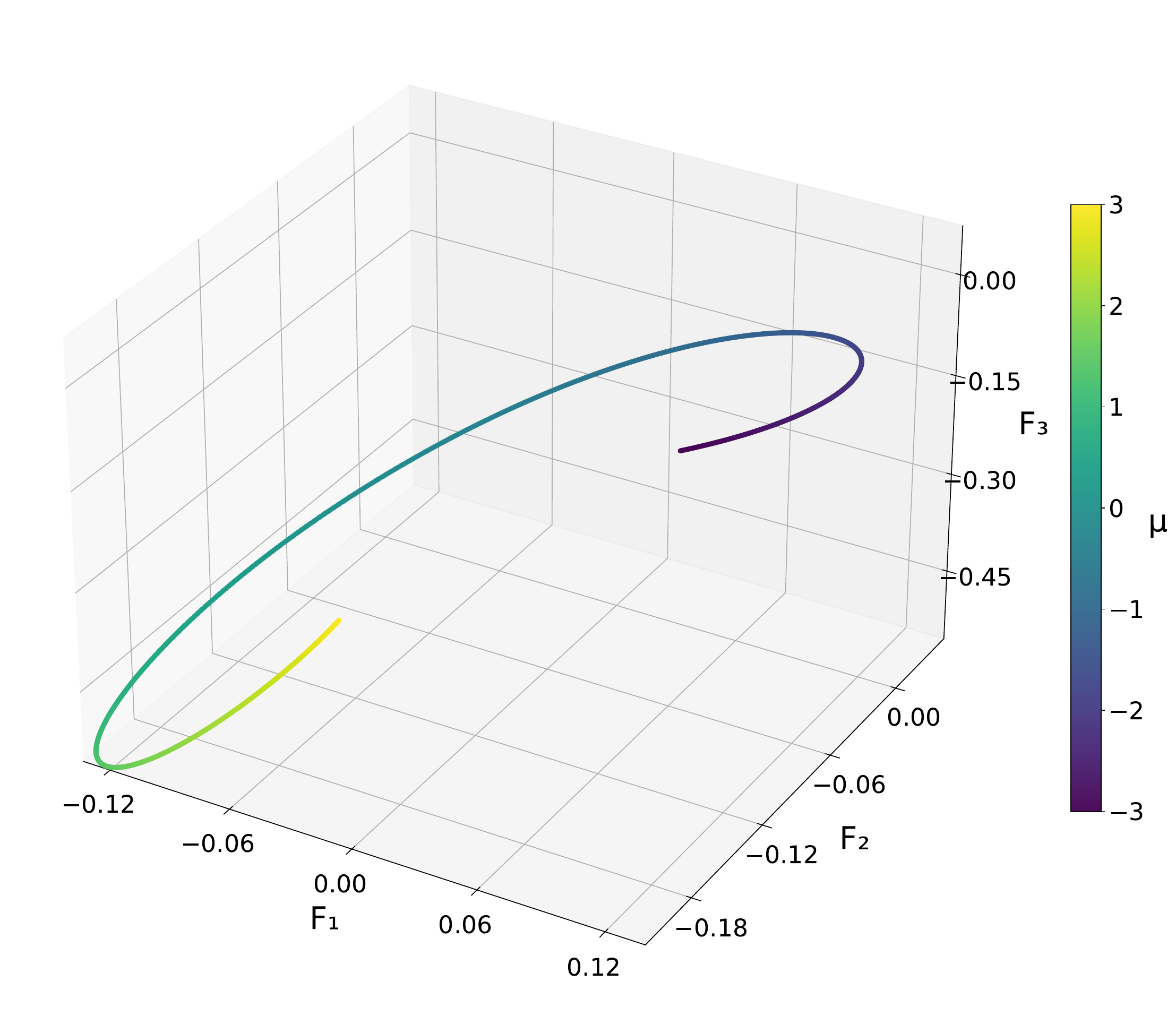}
\end{tabular}
\caption{\textbf{Key steps of the procedure and example with the normal distribution.}\\ (\textbf{A}) We observe a time series $\mathbf{X}^{\mathrm{obs}}$ and simulate $s$ replicate time series $\mathbf{X}^{(r)}(\theta)$, $r=1,\ldots,s$, from the model, given some value for the $p$-dimensional parameter $\theta$. Each time series is converted to time-averages of the same $k=2p+1$ random features via $\mathbf{F}(\cdot)$, which yields $\mathbf{F}(\mathbf{X}^{\mathrm{obs}})$ and $\mathbf{F}(\mathbf{X}^{(r)}(\theta))$, $r=1,\ldots,s$. We then evaluate the objective by calculating the Euclidean distance $\left\|\mathbf{F}(\mathbf{X}^{\mathrm{obs}})-\frac{1}{s}\sum_{r=1}^s \mathbf{F}(\mathbf{X}^{(r)}(\theta))\right\|$. The estimate $\hat\theta$ is obtained by minimizing this objective via optimization. The rolling-window procedure for nonstationary time series is similar, but it minimizes distances between rolling-window averages. \\
(\textbf{B}) Let $\mathbf{X}^{(r)}=(X_t^{(r)})_{t=1}^{n}$, where $X_t^{(r)}\overset{\mathrm{iid}}{\sim} N(\mu,1)$, for $t=1,\ldots,n\equiv 1{,}000$,  $r=1,\ldots,s\equiv 10$. The plot shows $\frac{1}{s}\sum_{r=1}^s \mathbf{F}(\mathbf{X}^{(r)}(\theta))$ as a function of the mean parameter $\theta=\mu$ of the normal distribution, where color denotes $\mu\in [-3,3]$. Three random features identify each $\mu$ value. 
}
\label{fig:estimation_procedure_and_iid_gaussian_feature_trajectory_3d}
\end{figure}

To establish the
convergence of averages to the corresponding expectations, we require some mild,
technical regularity assumptions, which say that the impact
of any single noise input must eventually decay over time. These assumptions are, in some cases, equivalent to mixing-type assumptions concerning the decay of temporal dependence. Under these assumptions, we provide guarantees that our parameter estimators are consistent.

\subsection{Related work}\label{subsection:related_work}

We provide a selective overview of simulation-based estimation methods. The Supplementary Material contains a comprehensive review of random feature methods, embedding results in nonlinear dynamics, and limit theorems for stochastic processes.

Classical simulation-based approaches include the method of simulated moments~\citep{mcfadden_method_sim_moments} and indirect inference~\citep{indirect_inference_gourieroux_monfort_1993}; see~\cite{sim_based_econometrics_gourieroux_monfort} for an overview.~\cite{wood_likelihood} extends these ideas by introducing the synthetic likelihood, which is based on the idea of treating user-chosen summary statistics as normally distributed. This work inspired an entire literature on synthetic likelihoods.

\cite{BSL_price_2018} introduce the Bayesian synthetic likelihood approach, which assigns a prior distribution on the parameter and uses Markov chain Monte Carlo methods to sample from the resulting posterior distribution of the parameter conditional on the observed summary statistics. Approximate Bayesian computation is another well-established Bayesian approach, which tries to approximate these posterior distributions by determining whether the simulated summary statistics are sufficiently close to the observed summary statistics based on a user-chosen tolerance level~\citep{ABC_Beaumont_2019}.

Lastly, we point to reviews of neural methods. These methods replace user-chosen summary statistics with representations learned from simulated data--parameter pairs~\citep{cranmer_sbi_review,zammitmangion_sbi_review}. Overall, simulation-based estimation methods share the idea of substituting unknown quantities with their simulated analogues.

\section{Methods}
\label{section:meth}

In this section, we introduce the notation, setting, random features, and estimators. 

\subsection{Notation and setting}\label{subsection:notation_and_setting}

We observe a time series $X_{t}^{\mathrm{obs}}$, $t=1,\ldots,n$, taking values in $\mathbb{R}^d$ for some fixed dimension $d\in\mathbb{N}$. The observed time series is assumed to arise from a known generative model with an unknown $p$-dimensional parameter $\theta_0\in\Theta\subset \mathbb{R}^{p}$, $p\in\mathbb{N}$. Each value of $\theta\in\Theta$ determines a law of a stochastic process. For any value of $\theta$, we can use the generative model to simulate $s\in\mathbb{N}$ realizations of a length $n$ time series $(X_{1:n}^{(r)}(\theta))_{r \in [s]}$. When we do not need to refer to a particular realization, we drop the superscript.

\textsc{Notation}: For real numbers $x,y\in\mathbb{R}$, denote $x \land y = \min(x,y)$ and $x\lor y = \max(x,y)$.
For a vector $x\in\mathbb{R}^d$, denote the $\ell^q$ norm by $\left\|x\right\|_q$ and the Euclidean norm by $\left\|x\right\| = \left\|x\right\|_2$. For a random vector $X$ with distribution depending on $\theta\in\Theta$, $\left\|X\right\|_{\mathcal{L}^q(\theta)}=\left(\mathbb{E}_{\theta}\left\|X\right\|^q\right)^{1/q}$, where $\mathbb{E}_{\theta}(\cdot)$ denotes expectation with respect to this $\theta$-dependent distribution. For two sequences $a_j$, $b_j$, $j\in\mathbb{N}$, we write $a_j\ll b_j$ or $b_j\gg a_j$ to denote $a_j / b_j \xrightarrow[]{} 0$ as $j\xrightarrow[]{}\infty$. For any natural number $j\in\mathbb{N}$, denote $[j]=\{1,2,\ldots,j\}$. For a sequence of random vectors $X_t$, $t=1,\ldots,n$, each taking values in $\mathbb{R}^d$, we denote a subsequence as $X_{t_1:t_2}=(X_{t_1},\ldots,X_{t_2})^{\top}$, which takes values in $\mathbb{R}^{(t_2-t_1+1) \times d}$ for some $t_1,t_2 \in [n]$, $t_1\leq t_2$.
%%% Note: \mathcal{L}^q(\theta) instead of L^q because using L for lag 

\subsection{Random Fourier features}\label{subsection:random_Fourier_features}

The central idea is that we should estimate the $p$-dimensional parameter $\theta_0$ by matching the values of a small number of random features of the simulated and observed data. Consider $k=2p+1$ randomly drawn functions $\varphi_1,\ldots,\varphi_k$ from a distribution over a suitably nice function class $\mathcal{F}$. The $k$ functions should be sampled independently of one another and of the data. Each function $\varphi_i:\mathbb{R}^{(m+1) \times d}\xrightarrow[]{} \mathbb{R}$ takes as input a length $m+1$ subsequence of the $d$-dimensional time series, where $m$ is chosen based on the dynamics of the generative model. For instance, if the model is an order $m^{\ast}$ Markov process, then set $m=m^{\ast}$.

% probably still true as long as they are convergence determining in the sense of the finite whitness theorem. but more clear to just state everything in terms of random Fourier features  
% Here we focus on random Fourier features, though our framework can be used with different random features.
Let $x_1,\ldots,x_{m+1}$ be vectors in $\mathbb{R}^d$, and define $x=(x_1,\ldots,x_{m+1})^{\top}\in\mathbb{R}^{(m+1)\times d}$. Specifically, we consider random Fourier features of the form 
\begin{equation}\label{eqn:random_Fourier_features_order_m}\varphi_{i}(x)=\cos\left(\sum_{j=1}^{m+1} \Omega_{i,j} \cdot x_{j} + \alpha_i\right),\end{equation} where $\Omega_{i,j} \overset{\mathrm{iid}}{\sim} N(0,I_d)$ and $\alpha_i \overset{\mathrm{iid}}{\sim} U(-\pi,\pi)$ for $i=1,\ldots,k$ and $j=1,\ldots,m+1$. Denote the collection of randomly chosen parameters for the $i$-th random Fourier feature, $\varphi_i$, by \begin{equation}\label{eqn:parameters_for_the_random_features}\vartheta_i=(\Omega_{i,1}^{\top},\ldots,\Omega_{i,m+1}^{\top},\alpha_i)^{\top},\end{equation} and let $\vartheta=(\vartheta_1,\ldots,\vartheta_k)$, which is a $\mathbb{R}^{(d(m+1)+1)\times k}$-valued random variable. Similarly, denote all $k$ of these functions by \begin{equation}\label{eqn:all_k_random_Fourier_features_order_m}\varphi=(\varphi_1,\ldots,\varphi_k),\end{equation} so that $\varphi:\mathbb{R}^{(m+1) \times d}\xrightarrow[]{} \mathbb{R}^k$. Define the $i$-th random feature of the observed and $r$-th simulated time series at time $t=m+1,\ldots,n$ as  $$f_{t,i}^{\mathrm{obs}}=\varphi_i(X_{t-m:t}^{\mathrm{obs}}), \quad f_{t,i}^{(r)}(\theta)=\varphi_i\left( X_{t-m:t}^{(r)}(\theta)\right).$$ Lastly, write all $k$ random features at time $t$ as \begin{equation}\begin{aligned}\label{eqn:random_features_at_time_t} f_{t}^{\mathrm{obs}} & =\left(f_{t,1}^{\mathrm{obs}},\ldots,f_{t,k}^{\mathrm{obs}}\right), \quad f_{t}^{(r)}(\theta) & =\left(f_{t,1}^{(r)}(\theta),\ldots,f_{t,k}^{(r)}(\theta)\right).\end{aligned}\end{equation}

\subsection{Estimators}
\label{subsection:estimators}
\paragraph*{Time-average estimator.}

For processes that are stationary, or at least asymptotically mean stationary, it suffices to consider the time-average of $k=2p+1$ random features. Denote the time-averages of the observed and simulated random features by
\begin{equation}\begin{aligned}\label{eqn:time_average_random_features}  F^{\mathrm{obs}} &= \frac{1}{n-m}\sum_{t=m+1}^n f_{t}^{\mathrm{obs}}, \quad  \bar{F}^{\mathrm{sim}}(\theta)&= \frac{1}{n-m}\sum_{t=m+1}^n \bar{f}_{t}^{\mathrm{sim}}(\theta), \end{aligned}\end{equation}  where $\bar{f}_{t}^{\mathrm{sim}}(\theta)=\frac{1}{s}\sum_{r=1}^s f_{t}^{(r)}(\theta)$. The sample discrepancy function is defined as \begin{equation}\label{eqn:time_average_sample_discrepancy}\hat{Q}_n^{\mathrm{TA}}(\theta) =F^{\mathrm{obs}} - \bar{F}^{\mathrm{sim}}(\theta).\end{equation}

Let the time-average estimator $\hat{\theta}^{\mathrm{TA}}$ be a $\Theta$-valued random variable satisfying \begin{equation}\label{eqn:time_average_estimator}\norm{\hat{Q}_n^{\mathrm{TA}}(\hat{\theta}^{\mathrm{TA}})} \leq \inf_{\theta \in \Theta}\norm{\hat{Q}_n^{\mathrm{TA}}(\theta)} + o_p(1).\end{equation} $\hat{\theta}^{\mathrm{TA}}$ is obtained from an optimization procedure that aims to minimize $\left\|\hat{Q}_n^{\mathrm{TA}}(\cdot)\right\|$.

\paragraph*{Rolling-window estimator.}

For many nonstationary processes, a different approach is required because the time-average of the (time-varying) expectations of the random features may not converge to a limiting mean. Even when it does converge, the limiting mean may not uniquely identify each $\theta$. Hence, we introduce the following \textit{rolling-window} estimator. 

For some window size $w\in\mathbb{N}$, denote the rolling-window averages of the observed and simulated random features at time $t$ by \begin{equation}\begin{aligned}\label{eqn:rolling_window_random_features}  F_{t}^{\mathrm{obs}}&=\frac{1}{N_t}\sum_{j=(t-w)\lor (m+1)}^t f_{j}^{\mathrm{obs}},\quad  \bar{F}_{t}^{\mathrm{sim}}(\theta) &=  \frac{1}{N_t}\sum_{j=(t-w)\lor (m+1)}^t \bar{f}_{j}^{\mathrm{sim}}(\theta),
\end{aligned}\end{equation} where $N_t=N_t(w,m)=(w+1)\land (t-m)$ and  $\bar{f}_{t}^{\mathrm{sim}}(\theta)=\frac{1}{s}\sum_{r=1}^s f_{t}^{(r)}(\theta)$. The sample discrepancy function is defined as \begin{equation}\label{eqn:rolling_window_sample_discrepancy} \begin{aligned} \hat{Q}_n^{\mathrm{RW}}(\theta)&= \frac{1}{n-m}\sum_{t=m+\tau+L}^{n} \left[\left\|F_{t-L}^{\mathrm{obs}}-\bar{F}_{t-L}^{\mathrm{sim}}(\theta)\right\|^2 + K_{t}(\theta)\right],\\   K_{t}(\theta)&=2\left(F_{t-L}^{\mathrm{obs}}-\bar{F}_{t-L}^{\mathrm{sim}}(\theta) \right)^{\top}\left(\left[f_t^{\mathrm{obs}}-\bar{f}_{t}^{\mathrm{sim}}(\theta)\right]-\left[F_{t-L}^{\mathrm{obs}}-\bar{F}_{t-L}^{\mathrm{sim}}(\theta)\right]\right),\end{aligned}\end{equation} where $\tau\in\mathbb{N}$ is an initial time-offset and $L\in\mathbb{N}$ is a lag. In our theory, $w=w_n$, $\tau=\tau_n$, and $L=L_n$ each grow with the sample size $n$ at some rate; see Section~\ref{subsection:result_for_rolling_window_estimator} for the details.

Let the rolling-window estimator $\hat{\theta}^{\mathrm{RW}}$ be a $\Theta$-valued random variable satisfying \begin{align}\label{eqn:rolling_window_estimator}\abs{\hat{Q}_n^{\mathrm{RW}}(\hat{\theta}^{\mathrm{RW}})} \leq \inf_{\theta \in \Theta}\abs{\hat{Q}_n^{\mathrm{RW}}(\theta)} + o_p(1).\end{align} $\hat{\theta}^{\mathrm{RW}}$ is obtained from an optimization procedure that aims to minimize $\left|\hat{Q}_n^{\mathrm{RW}}(\cdot)\right|$.

\paragraph*{Extensions.} Some extensions are possible. First, both estimators can be generalized by replacing the Euclidean norm with a weighted norm, using a weight matrix given by the inverse of a suitable estimator of the long-run covariance matrix. Second, the sample averages used in both estimators can be replaced with different mean estimators.

\section{Theory}
\label{section:theory}

In this section, we present the assumptions and theoretical results for the estimators.

\subsection{Noise inputs}\label{subsection:noise_inputs}

Let $\left(\varepsilon_i\right)_{i\in\mathbb{Z}}$ be an iid sequence of random variables. Let $\left(\varepsilon_i^{(r)}\right)_{i\in\mathbb{Z}}$, $r=0,1,\ldots,s$, be iid copies of $\left(\varepsilon_i\right)_{i\in\mathbb{Z}}$. Define the sequence of noise inputs up to time $t$ as \begin{equation}\label{eqn:noise_input_sequence_up_to_time_t}\symbfit{\varepsilon}_t=(\varepsilon_t,\varepsilon_{t-1},\ldots),\end{equation} and define $\symbfit{\varepsilon}_t^{(r)}$, $r=0,1,\ldots,s$, analogously. For each time $t=1,\ldots,n$, we express $X_{t}^{\mathrm{obs}}$ and each $X_{t}^{(r)}(\theta)$ as functions of one of these noise inputs. In practice, simulations only use finitely many noise inputs. However, considering a countably infinite sequence of noise inputs is convenient for several reasons. For instance, this allows us to accommodate sequences of simulation models that are approximations to the true data generating process, where the approximation improves by letting the number of burn-in noise inputs grow.

Let $\left(\varepsilon^{\ast}_i\right)_{i\in\mathbb{Z}}$ be an iid copy of $\left(\varepsilon_i\right)_{i\in\mathbb{Z}}$. For $t\in\mathbb{Z}$, $j\in\mathbb{N}_0$, define \begin{equation}\label{eqn:replaced_j_past_noise_input_sequence_up_to_time_t}\tilde{\symbfit{\varepsilon}}_{t,j}=\left(\varepsilon_{t},\ldots,\varepsilon_{t-j+1},\varepsilon^{\ast}_{t-j}, \varepsilon_{t-j-1},\ldots\right),\end{equation} and define $\tilde{\symbfit{\varepsilon}}_{t,j}^{(r)}$, $r=0,1,\ldots,s$, analogously. $\tilde{\symbfit{\varepsilon}}_{t,j}$ is the sequence of noise inputs $\symbfit{\varepsilon}_{t}$ with the innovation from $j$ steps in the past, $\varepsilon_{t-j}$, replaced by an iid copy $\varepsilon_{t-j}^{\ast}$. Intuitively, the output of the generative model at time $t$ changes if we replace the noise input from $j$ steps ago with an iid copy because the process at time $t$ is a function of the past noise inputs. Our central assumption is that the generative model's current output becomes increasingly insensitive to the $j$-th noise input in the past as $j$ grows.

To formalize this, we use the physical dependence measure of~\cite{wu_funct_dep_meas}. Under certain conditions, the physical dependence measure conditions we use imply and are implied by $\beta$-mixing and strong mixing conditions; see the Supplementary Material for more discussion.

\subsection{Results for time-average estimator}\label{subsection:result_for_time_average_estimator}

We state sufficient conditions for $\hat{\theta}^{\mathrm{TA}}$ to be consistent. The main result is stated in Theorem~\ref{thm:consistency_time_average_estimator}. In a separate paper focused on statistical inference, we establish asymptotic normality under stronger conditions, which implies a convergence rate of $O_p(n^{-1/2})$.

In this setting, we use long-run asymptotics. That is, $n$ increasing corresponds to observing the process further into the future. We allow the observed and simulated time series to be generated by approximations to a limiting data generating process. This accommodates many settings with imperfect observations (e.g., observing residuals rather than errors), discretization errors, unknown initial values, and growing burn-in periods.

For each $t\in\mathbb{N}$, $i\in\mathbb{N}$, $r=0,1,\ldots,s$, the measurable mapping $G_t^{(i,r)}$ is from $\mathbb{R}^{\infty}\times \Theta$ to $\mathbb{R}^d$ and $(G_t^{(i,r)}(\symbfit{\varepsilon}_j,\theta))_{j\in\mathbb{Z}}$ is a stationary ergodic process. The observed and simulated time series may be nonstationary because the mappings may change over time $t\in \mathbb{N}$. Throughout, we endow $\mathbb{R}^{\infty}$ with the $\sigma$-algebra generated by all finite projections. The mapping $G_t^{(i,r)}$ can be understood as an approximation to some limiting mapping $G_t^{(r)}$, where the approximation improves as $i$ grows and the dependence on $r$ allows for random initial conditions. See Assumption~\ref{asmpt:limiting_mapping_discrete_time} and the following discussion for more details. The superscript may be ignored if approximations are unnecessary.

The index $n\in\mathbb{N}$ is linked to both the approximation and the sample size. 

\begin{assumption}\label{asmpt:algorithmic_dynamic_model_discrete_time} The observed and simulated time series of length $n\in\mathbb{N}$ are  \begin{align}\label{eqn:algorithmic_dynamic_model_discrete_time} X_{t}^{\mathrm{obs}}&=G_t^{(n,0)}(\symbfit{\varepsilon}_{t}^{(0)},\theta_0), \quad X_{t}^{(r)}(\theta)=G_t^{(n,r)}(\symbfit{\varepsilon}_{t}^{(r)},\theta),\quad t=1,\ldots,n, \end{align} for all $r=1,\ldots,s$ and $\theta\in\Theta$.
\end{assumption}

Oftentimes, we can take the observed time series to be generated by the limiting mapping, i.e., $G_t^{(n,0)}\equiv G_t^{(0)}$ for all $n\in\mathbb{N}$. The following assumption imposes the convergence to limiting mappings as the approximation level grows. Note that no rate of convergence is required. We write $\symbfit{\varepsilon}_0$ because it has the same distribution as each $\symbfit{\varepsilon}_t$.

\begin{assumption}\label{asmpt:limiting_mapping_discrete_time} For each $t\in\mathbb{N}$ and $r=0,1,\ldots,s$, there exists a measurable mapping $G_t^{(r)}:\mathbb{R}^{\infty}\times \Theta\xrightarrow[]{}\mathbb{R}^d$ such that $(G_t^{(r)}(\symbfit{\varepsilon}_j,\theta))_{j\in\mathbb{Z}}$ is a stationary ergodic process, and for some $q>2$, as $n\xrightarrow[]{}\infty$, we have  \begin{align*}\underset{\theta\in\Theta}{\sup} \left( \frac{1}{n}\sum_{t=1}^n \left\|G_t^{(n,r)}(\symbfit{\varepsilon}_0,\theta)-G_t^{(r)}(\symbfit{\varepsilon}_0,\theta)\right\|_{\mathcal{L}^q(\theta)}\right)\xrightarrow[]{} 0. \end{align*}\end{assumption}

Under Assumption~\ref{asmpt:algorithmic_dynamic_model_discrete_time}, the $k= 2p+1$ time-averages from~\eqref{eqn:time_average_random_features} can be written as
\begin{equation}\label{eqn:time_average_random_features_G_general}\begin{aligned} F^{\mathrm{obs}} &= \frac{1}{n-m}\sum_{t=m+1}^n \varphi\left(\left[G_{t-m}^{(n,0)}(\symbfit{\varepsilon}_{t-m}^{(0)},\theta_0),\ldots,G_t^{(n,0)}(\symbfit{\varepsilon}_t^{(0)},\theta_0)\right]^{\top}\right), \\  \bar{F}^{\mathrm{sim}}(\theta) &= \frac{1}{n-m}\sum_{t=m+1}^n \left[\frac{1}{s}\sum_{r=1}^s \varphi\left(\left[G_{t-m}^{(n,r)}(\symbfit{\varepsilon}_{t-m}^{(r)},\theta),\ldots,G_t^{(n,r)}(\symbfit{\varepsilon}_t^{(r)},\theta)\right]^{\top}\right)\right].\end{aligned}\end{equation}

Denote by $P_{\theta,t}^{(r)}$ the distribution of $\left[G_{t-m}^{(r)}(\symbfit{\varepsilon}_{-m},\theta),\ldots,G_t^{(r)}(\symbfit{\varepsilon}_{0},\theta)\right]^{\top}$, where $m$ is the number of lags in the random features $\varphi$ from~\eqref{eqn:all_k_random_Fourier_features_order_m}. We impose asymptotic mean stationarity by requiring the time-average of the distributions $(P_{\theta,t}^{(r)})_{t\in\mathbb{N}}$ to converge weakly to a limiting distribution $P_{\theta}$ which depends neither on $t$ nor $r$, \textit{uniformly} over $\theta\in\Theta$. For example, this allows us to consider discrete-time dynamical systems with unknown initial value.

\begin{assumption}\label{asmpt:weak_convergence_of_time_avg_of_distributions_discrete_time} 
    For all $\theta \in \Theta$, there exists a limiting distribution $P_{\theta}$, such that, as $n\xrightarrow[]{}\infty$, 
    $$\underset{\theta\in\Theta}{\sup} \ \underset{0\leq r \leq s}{\max} \ \left|
\begin{aligned}
 \int_{\mathbb{R}^{(m+1)\times d}}\phi(x) \, d\bar{P}_{\theta,n}^{(r)}(x) - \int_{\mathbb{R}^{(m+1)\times d}}\phi(x) \, dP_{\theta}(x)\end{aligned} \right|\xrightarrow[]{} 0,$$ for all continuous, bounded functions $\phi:\mathbb{R}^{(m+1)\times d}\xrightarrow[]{}\mathbb{R}$, where $\bar{P}_{\theta,n}^{(r)}=\frac{1}{n-m}\sum_{t=m+1}^n P_{\theta,t}^{(r)}$.
    
\end{assumption}

Next, we introduce mappings $\Phi:\Theta\xrightarrow[]{}\mathbb{R}^k$, defined as \begin{equation}\label{eqn:limiting_expectation_random_features_discrete_time} \Phi(\theta)=\int_{\mathbb{R}^{(m+1)\times d}}\varphi(x) \, dP_{\theta}(x), \quad \theta \in \Theta.\end{equation} Under Assumptions~\ref{asmpt:algorithmic_dynamic_model_discrete_time}-\ref{asmpt:weak_convergence_of_time_avg_of_distributions_discrete_time}, the limiting expectation values of the $k=2p+1$ random features of the time series coincide with $\Phi$.

The definition of a semialgebraic set is required for the next assumption. We simplify the definition from \cite{bochnak_coste_roy_real_algebraic_geometry} to Euclidean spaces.

\begin{definition}[Definition 2.1.4 in \cite{bochnak_coste_roy_real_algebraic_geometry}]\label{def:semialgebraic_set}
Let $p,a,b_1,\ldots,b_a \in \mathbb{N}$. A \textit{semialgebraic subset} of $\mathbb{R}^p$
is a subset of the form
\[
\bigcup_{i=1}^{a}\bigcap_{j=1}^{b_i}
\left\{x \in \mathbb{R}^p
\mid f_{i,j}(x) \mathbin{*_{i,j}} 0\right\},
\]
where $f_{i,j} \in \mathbb{R}[x_1,\ldots,x_p]$, $*_{i,j}$ is either
$<$ or $=$, for $i=1,\ldots,a$ and $j=1,\ldots,b_i$, and $\mathbb{R}[x_1,\ldots,x_p]$ is the ring of polynomials in
the variables $x_1,\ldots,x_p$ with coefficients in $\mathbb{R}$.
\end{definition}

%We introduce some notation for the next assumption. Let $\mathcal{P}\left(\mathbb{R}^{(m+1)\times d}\right)$ denote the space of probability measures on $\mathbb{R}^{(m+1)\times d}$. 

Let $\psi_{\theta}(\omega)$ denote the characteristic function of $P_{\theta}$ evaluated at $\omega\in\mathbb{R}^{(m+1)\times d}$.

\begin{assumption}\label{asmpt:statistical_model_discrete_time}
The following conditions hold:
\begin{enumerate} 
    \item $\Theta$ is a compact, semialgebraic subset of $\mathbb{R}^{p}$ with non-empty interior. 
    \item The map $\theta\mapsto P_{\theta}$ is injective.  
    \item The map $(\theta,\omega)\mapsto \psi_{\theta}(\omega)$ admits a jointly real analytic extension to $O \times \mathbb{R}^{(m+1)\times d}$, where $O\subset\mathbb{R}^p$ is an open neighborhood of $\Theta$. 
\end{enumerate}    
\end{assumption}

The first condition is required for establishing identifiability and consistency. As a simple example, for $a_j,b_j\in\mathbb{R}$ with $a_j < b_j$, $j=1,\ldots,p$, we may take the parameter space to be  $$\Theta=\prod_{j=1}^p [a_j,b_j].$$ % just take $\Theta$ to be a Cartesian product of closed, bounded intervals
The second condition is standard for identifiability. The third condition is used to establish the regularity of each map $(\theta,\vartheta_i)\mapsto \int_{\mathbb{R}^{(m+1)\times d}}\varphi_i(x) \, dP_{\theta}(x)$, $i=1,\ldots,k$.

\begin{theorem}\label{thm:injective_discrete_time}
Suppose Assumptions~\ref{asmpt:algorithmic_dynamic_model_discrete_time}-\ref{asmpt:statistical_model_discrete_time} hold. Then, with probability one over the draw of random parameters $(\vartheta_1,\ldots,\vartheta_k)$ for the $k=2p+1$ random Fourier features $(\varphi_1,\ldots,\varphi_k)$, we have that the mapping $\theta\mapsto\Phi(\theta)$ is one-to-one, where $\Phi(\theta)$, $\theta\in\Theta$, are the limiting expectations of the random Fourier features.
\end{theorem}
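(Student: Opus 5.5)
The plan is to reduce the claim to the finite witness theorem of \cite{amir_et_al_finite_witness_thm}. In the form we need it says the following. Let $F:\mathbb{M}\times\mathbb{R}^D\to\mathbb{R}$ be a map with $\mathbb{M}$ a $\sigma$-subanalytic (in particular, semialgebraic) set of dimension $p$, and suppose $F$ is analytic in an appropriate sense. If the full family $\{F(\cdot,\vartheta_i)\}_{\vartheta_i\in\mathbb{R}^D}$ separates points of $\mathbb{M}$, then for Lebesgue-almost every $(\vartheta_1,\ldots,\vartheta_{2p+1})$ the finite family $F(\cdot,\vartheta_1),\ldots,F(\cdot,\vartheta_{2p+1})$ already separates points. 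Here $D=d(m+1)+1$. Since the law of $\vartheta$ (Gaussian $\Omega$, uniform $\alpha$) is absolutely continuous with respect to Lebesgue measure, ``almost every'' becomes ``with probability one''. The proof therefore has three steps: identify $F$, check that the full family separates points, and check the regularity hypotheses.

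\textbf{Step 1 (the feature map).} Set
\[
F(\theta,\vartheta_i)=\int\cos(\langle\Omega_i,x\rangle+\alpha_i)\,dP_\theta(x)=\mathrm{Re}\bigl(e^{\mathrm{i}\alpha_i}\psi_\theta(\Omega_i)\bigr)=\cos\alpha_i\,\mathrm{Re}\,\psi_\theta(\Omega_i)-\sin\alpha_i\,\mathrm{Im}\,\psi_\theta(\Omega_i).
\]
Then $\Phi(\theta)=(F(\theta,\vartheta_1),\ldots,F(\theta,\vartheta_k))$. Assumptions~\ref{asmpt:algorithmic_dynamic_model_discrete_time}--\ref{asmpt:weak_convergence_of_time_avg_of_distributions_discrete_time} are used only to guarantee that $P_\theta$ exists and that $\Phi$ is the limiting expectation.

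\textbf{Step 2 (full-family separation).} Suppose $F(\theta,\cdot)=F(\theta',\cdot)$ for all $\vartheta_i$. Taking $\alpha_i=0$ and $\alpha_i=-\pi/2$ shows that the real and imaginary parts of $\psi_\theta$ and $\psi_{\theta'}$ agree everywhere. By L\'evy's uniqueness theorem $P_\theta=P_{\theta'}$, and injectivity (Assumption~\ref{asmpt:statistical_model_discrete_time}.2) then gives $\theta=\theta'$.

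\textbf{Step 3 (regularity).} By Assumption~\ref{asmpt:statistical_model_discrete_time}.3, the real and imaginary parts of $\psi_\theta(\omega)$ are jointly real analytic on $O\times\mathbb{R}^{(m+1)\times d}$. Products and sums with the entire functions $\cos\alpha$ and $\sin\alpha$ preserve this, so $F$ is real analytic on $O\times\mathbb{R}^D$. Because $\Theta$ is compact and semialgebraic (Assumption~\ref{asmpt:statistical_model_discrete_time}.1), it is subanalytic of dimension $p$, and its nonempty interior makes the dimension exactly $p$. The restriction of $F$ to $\Theta\times\mathbb{R}^D$ should therefore satisfy the analytic/$\sigma$-subanalytic hypothesis of the witness theorem.

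\textbf{Main obstacle.} I expect the difficulty to lie in matching the exact hypotheses of \cite{amir_et_al_finite_witness_thm}. Their theorem is stated for $\sigma$-subanalytic functions on $\sigma$-subanalytic sets, whereas here the parameter domain $\mathbb{R}^D$ is noncompact and analyticity holds only on an open neighborhood of a compact set. The first fix I would try is to note that real analytic functions restricted to relatively compact subanalytic sets are globally subanalytic. I would then write $\mathbb{R}^D$ as a countable union of closed balls, so that $F$ is $\sigma$-subanalytic, and check that the theorem's dimension-counting argument is unaffected.

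If that route fails, I would prove the result directly. Consider the semialgebraic set $\{(\theta,\theta'):\theta\neq\theta'\}$, of dimension $2p$, and the incidence set
\[
\{(\theta,\theta',\vartheta_i):F(\theta,\vartheta_i)=F(\theta',\vartheta_i)\}.
\]
By Step 2, for each pair $\theta\neq\theta'$ the fiber over $(\theta,\theta')$ is a proper analytic subset of $\mathbb{R}^D$, so it has codimension at least $1$. Taking $k=2p+1$ independent copies gives codimension at least $2p+1$. Hence the projection of the bad set onto the $\vartheta$-space has dimension at most $2p+kD-(2p+1)<kD$, which is a Lebesgue-null set.
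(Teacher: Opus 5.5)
Your proposal is correct and follows the same route as the paper. You write each feature expectation as $\cos\alpha_i\,\mathrm{Re}\,\psi_\theta(\Omega_i)-\sin\alpha_i\,\mathrm{Im}\,\psi_\theta(\Omega_i)$, get full-family separation from $\alpha_i\in\{0,-\pi/2\}$, uniqueness of characteristic functions and injectivity of $\theta\mapsto P_\theta$, and then invoke the finite witness theorem (Corollary A.20 of Amir et al.), whose dimension condition holds automatically for families analytic in $\vartheta$ over an open connected parameter set.

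Two small differences. Your countable cover of the parameter space by compact balls actually justifies the $\sigma$-subanalyticity of the feature map, which the paper only asserts. In the other direction, the paper proves in its Step 1 (via Assumptions 2--3 and the Lipschitz bound on $\varphi$) that the time-averaged feature expectations converge to $\Phi(\theta)$, which you only state.
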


To simplify the presentation of the results below, we make the weak assumption that the random parameters we have drawn for the random Fourier features yield a function from~\eqref{eqn:limiting_expectation_random_features_discrete_time} that is one-to-one. We say this assumption is weak because by Theorem~\ref{thm:injective_discrete_time} it holds with probability one over the draw of random parameters. 
\begin{assumption}\label{asmpt:injective_discrete_time}   The random parameters $(\vartheta_1,\ldots,\vartheta_k)$ we have drawn for the $k=2p+1$ random Fourier features $(\varphi_1,\ldots,\varphi_k)$ yield a function $\theta\mapsto\Phi(\theta)$ that is one-to-one. 
\end{assumption}

Alternatively, if one desires to avoid this assumption, one may interpret the following results as holding with probability one over the draw of random parameters $(\vartheta_1,\ldots,\vartheta_k)$.

The next stochastic equicontinuity-type assumption is useful for establishing consistency.

\begin{assumption}\label{asmpt:stochastic_equicontinuity_discrete_time} For all $\epsilon >0$, there exists an $\eta_G =\eta_G(\epsilon)>0$ such that 
$$\underset{i \in\mathbb{N}}{\sup} \ \underset{t \in\mathbb{N}}{\sup} \ \underset{0\leq r\leq s}{\max} \ \mathbb{E}\left(
\underset{ \left\|\theta-\theta'\right\|\leq \epsilon}{\underset{\theta,\theta'\in\Theta}{\sup}} \left\|
G_{t}^{(i,r)}(\symbfit{\varepsilon}_{0},\theta) - 
G_{t}^{(i,r)}(\symbfit{\varepsilon}_{0},\theta')\right\|\right)< \eta_G,$$ with $\eta_G \xrightarrow[]{} 0$ as $\epsilon \xrightarrow[]{} 0$, where the expectation $\mathbb{E}(\cdot)$ is taken w.r.t.\ the law of the noise inputs.
\end{assumption}

The next assumption controls the temporal dependence uniformly over $\Theta$, so that we can estimate the expectations of the random features based on their time-averages. Specifically, we impose a uniform moment bound and a uniform polynomial decay of the physical dependence measure of~\cite{wu_funct_dep_meas}.

\begin{assumption}\label{asmpt:temporal_dependence_discrete_time}
    There exist $\Psi>0$, $\beta>2$ such that, for some $q > 2$ and for all $j\in\mathbb{N}_0$, \begin{align*}  \underset{i\in\mathbb{N}}{\sup}  \ \underset{\theta\in\Theta}{\sup} \ \underset{t\in\mathbb{N}}{\sup}   \  \underset{0\leq r \leq s}{\max} \   \left\|G_t^{(i,r)}(\symbfit{\varepsilon}_{0},\theta) - G_t^{(i,r)}(\tilde{\symbfit{\varepsilon}}_{0,j},\theta)\right\|_{\mathcal{L}^q(\theta)}   & \leq   \Psi (j \lor 1)^{-\beta},\\  \underset{i\in\mathbb{N}}{\sup} \ \underset{\theta\in\Theta}{\sup} \ \underset{t\in\mathbb{N}}{\sup}   \  \underset{0\leq r \leq s}{\max} \  \left\|G_t^{(i,r)}(\symbfit{\varepsilon}_{0},\theta)\right\|_{\mathcal{L}^q(\theta)}  &\leq  \Psi, \end{align*}  where $\tilde{\symbfit{\varepsilon}}_{0,j}$ is from~\eqref{eqn:replaced_j_past_noise_input_sequence_up_to_time_t}.
\end{assumption}

We now establish the consistency of the time-average estimator $\hat{\theta}^{\mathrm{TA}}$.

\begin{theorem}\label{thm:consistency_time_average_estimator}
    If Assumptions~\ref{asmpt:algorithmic_dynamic_model_discrete_time}-\ref{asmpt:temporal_dependence_discrete_time} hold, then $\hat{\theta}^{\mathrm{TA}}\xrightarrow[]{p}\theta_0$.
\end{theorem}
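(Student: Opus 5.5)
The proof follows the standard extremum-estimator consistency argument, as in Newey--McFadden Theorem 2.1. Set the population discrepancy $Q(\theta)=\Phi(\theta_0)-\Phi(\theta)$. By Assumption~\ref{asmpt:injective_discrete_time} it vanishes only at $\theta_0$. The plan has three parts: show that $\Phi$ is continuous on the compact set $\Theta$; show that $\sup_{\theta\in\Theta}\|\hat Q_n^{\mathrm{TA}}(\theta)-Q(\theta)\|\xrightarrow{p}0$; and conclude with the usual well-separated-minimum argument.

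\textbf{Step 1: continuity of $\Phi$.} Continuity comes from Assumption~\ref{asmpt:statistical_model_discrete_time}(3). Because $\varphi_i$ is a cosine, $\int\varphi_i\,dP_\theta=\mathrm{Re}\bigl(e^{\mathrm{i}\alpha_i}\psi_\theta(\Omega_i)\bigr)$, and this is analytic, hence continuous, in $\theta$. Compactness of $\Theta$ and injectivity then give, for every $\delta>0$,
\[
\inf_{\|\theta-\theta_0\|\ge\delta}\|Q(\theta)\|>0 .
\]

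\textbf{Step 2: uniform convergence.} I split the error into three pieces:
\[
\hat Q_n^{\mathrm{TA}}(\theta)-Q(\theta)=\bigl[F^{\mathrm{obs}}-\Phi(\theta_0)\bigr]-\bigl[\bar F^{\mathrm{sim}}(\theta)-\Phi(\theta)\bigr].
\]
For each replicate $r$, I further decompose $F^{(r)}(\theta)-\Phi(\theta)$ into (a), (b) and (c) below.

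(a) \emph{Approximation error.} This is the replacement of $G^{(n,r)}_t$ by $G^{(r)}_t$. Since $\varphi$ is $1$-Lipschitz with Lipschitz constant bounded by $\|\Omega_i\|$, this error is at most a constant times $\frac1n\sum_t\|G_t^{(n,r)}-G_t^{(r)}\|_{\mathcal L^1}$. That average vanishes uniformly in $\theta$ by Assumption~\ref{asmpt:limiting_mapping_discrete_time}, and Markov's inequality then controls it in probability.

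(b) \emph{Stochastic fluctuation.} This is the centered time average of $\varphi$ evaluated on the limiting process. The physical dependence measure of the feature process is bounded by $\|\Omega\|$ times the sum over the $m+1$ lags of the bounds in Assumption~\ref{asmpt:temporal_dependence_discrete_time}. A Wu-type variance bound then gives
\[
\mathrm{Var}\Bigl(\tfrac1{n-m}\textstyle\sum_t f_t\Bigr)\lesssim n^{-1}\Bigl(\textstyle\sum_j j^{-\beta}\Bigr)^2=O(n^{-1})
\]
uniformly in $\theta$, so the term converges to $0$ pointwise in probability. The same bound transfers to the $G^{(n,r)}$ version, since Assumption~\ref{asmpt:temporal_dependence_discrete_time} is uniform in $i$.

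(c) \emph{Deterministic bias.} This is $\int\varphi\,d\bar P^{(r)}_{\theta,n}-\Phi(\theta)$. It vanishes uniformly by Assumption~\ref{asmpt:weak_convergence_of_time_avg_of_distributions_discrete_time}, because $\varphi$ is bounded and continuous.

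\textbf{Step 3: from pointwise to uniform.} I take a finite $\epsilon$-net of the compact $\Theta$ and bound the oscillation of the simulated features between net points by
\[
\|\Omega\|\,\frac1{n-m}\sum_t\sum_{l=0}^{m}\sup_{\|\theta-\theta'\|\le\epsilon}\|G_{t-l}^{(n,r)}(\cdot,\theta)-G_{t-l}^{(n,r)}(\cdot,\theta')\|.
\]
By Assumption~\ref{asmpt:stochastic_equicontinuity_discrete_time}, the expectation of this bound is at most $\|\Omega\|(m+1)\eta_G(\epsilon)$, and Markov's inequality makes it small in probability uniformly in $n$. The oscillation of $\Phi$ is small by the uniform continuity from Step 1. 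Combining the net with the pointwise convergence at finitely many points gives the uniform convergence. I expect this stochastic-equicontinuity and pointwise variance step to be the main technical obstacle, specifically tracking the constants uniformly over $\theta$, $r$, and the approximation level $n$.

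\textbf{Conclusion.} By uniform convergence and the near-minimizing property \eqref{eqn:time_average_estimator},
\[
\|Q(\hat\theta^{\mathrm{TA}})\|\le\|\hat Q_n^{\mathrm{TA}}(\hat\theta^{\mathrm{TA}})\|+o_p(1)\le\|\hat Q_n^{\mathrm{TA}}(\theta_0)\|+o_p(1)=\|Q(\theta_0)\|+o_p(1)=o_p(1).
\]
Hence $P(\|\hat\theta^{\mathrm{TA}}-\theta_0\|\ge\delta)\le P\bigl(\|Q(\hat\theta^{\mathrm{TA}})\|\ge\inf_{\|\theta-\theta_0\|\ge\delta}\|Q(\theta)\|\bigr)\to0$, which is the claim.
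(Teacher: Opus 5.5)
Your proposal is correct and is essentially the paper's argument: a well-separated minimum from continuity of $\Phi$, compactness and injectivity, and uniform convergence via an $\epsilon$-net that uses the moment bound of Lemma~\ref{lma:lp_lln_discrete_time} at the net points and Assumption~\ref{asmpt:stochastic_equicontinuity_discrete_time} for the oscillation; your fixed-$\epsilon$ net and direct argmin step are a lighter substitute for the paper's shrinking $\epsilon_n$ with $N(\epsilon_n)=o(n^{1/2})$ and its appeal to Pakes--Pollard. One slip to fix: Assumption~\ref{asmpt:temporal_dependence_discrete_time} controls the dependence of $G_t^{(i,r)}$, $i\in\mathbb{N}$, not of the limit $G_t^{(r)}$, so apply the fluctuation bound (b) to the $G^{(n,r)}$ process and handle the approximation error (a) at the level of expectations (Lipschitz bound plus Assumption~\ref{asmpt:limiting_mapping_discrete_time}), which is exactly how the paper orders the decomposition.
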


\subsection{Results for rolling-window estimator}\label{subsection:result_for_rolling_window_estimator}

We present assumptions that guarantee the consistency of the rolling-window estimator $\hat{\theta}^{\mathrm{RW}}$. The main result is Theorem~\ref{thm:consistency_rolling_window_estimator}. In a separate manuscript focused on statistical inference, we prove asymptotic normality under stronger conditions, which implies an $O_p(n^{-1/2})$ rate of convergence.

In this section, we rescale time to the unit interval and use infill asymptotics as in the literature on locally stationary processes by~\cite{dahlhaus1997}. As $n$ grows, an increasing number of observations related to each local structure of the process become available. The rolling-window estimator introduced here can be used to fit models with very general forms of nonstationarity, such as ordinary differential equations (ODEs), state-space models with covariates and control inputs, and structural time series models with complicated trends, seasonality, change-points, and ``rough'' nonstationarity.

% Note: I haven't read too much of this literature, so I don't want to overstate things. 
% Prior work has used infill asymptotics for estimating the parameters of ordinary differential equations (ODEs) with iid observational noise~\cite{sieve_ODE_2010, ode_estimation}. However, to the best of our knowledge, our work is the first to provide theoretical guarantees for settings where the observational noise for the ODE has temporal, cross-sectional, and state dependence. Beyond ODEs, our rolling-window estimator can be used to fit the parameters of models with very general forms of nonstationarity, such as state-space models and structural time series models with complicated trends, seasonality, change-points, and ``rough'' nonstationarity.

The observed and simulated time series may be generated by approximations to a limiting data generating process. For example, this distinction allows the observed time series to be based on observed residuals rather than errors, or to consist of imperfect observations of an exact solution of an ODE. Similarly, the simulated time series can be based on an ODE solver with observational noise processes with growing burn-in periods.

We represent the generative model using measurable mappings. For each $u\in [0,1]$, $i\in\mathbb{N}$, $r=0,1,\ldots,s$, the mapping $G_u^{(i,r)}$ is from $\mathbb{R}^{\infty}\times \Theta$ to $\mathbb{R}^d$ and $(G_u^{(i,r)}(\symbfit{\varepsilon}_j,\theta))_{j\in\mathbb{Z}}$ is a stationary ergodic process. The time series may be nonstationary because the mappings may change over rescaled time $u\in [0,1]$. The mapping $G_u^{(i,r)}$ is an approximation to a limiting mapping. When an approximation is not needed, the superscript can be ignored.

The index $n\in\mathbb{N}$ is linked to the sample size and approximation.

\begin{assumption}\label{asmpt:algorithmic_dynamic_model_continuous_time} The observed and simulated time series of length $n\in\mathbb{N}$ are \begin{align}\label{eqn:algorithmic_dynamic_model_continuous_time} X_{t}^{\mathrm{obs}}&=G_{t/n}^{(n,0)}(\symbfit{\varepsilon}_{t}^{(0)},\theta_0), \quad X_{t}^{(r)}(\theta)=G_{t/n}^{(n,r)}(\symbfit{\varepsilon}_{t}^{(r)},\theta),\quad t=1,\ldots,n,\end{align} for all $r=1,\ldots,s$ and $\theta\in\Theta$.
\end{assumption} 
% Note: we assume the observed time series comes from the limiting mapping (generated by nature) and that the simulated time series comes from the approximation. We can transfer the assumptions, which we make on the approximations, to the observed process by asymptotically replacing the observed time series with the limiting mapping with the approximation mapping.

In many situations, we can take the observed time series to be generated by the limiting mapping, i.e., $G_{u}^{(n,0)}\equiv G_{u}^{(0)}$ for all $u\in [0,1]$. The next assumption imposes convergence to the limiting mapping, with no required convergence rate. Note that $\symbfit{\varepsilon}_0$ has the same distribution as $\symbfit{\varepsilon}_t$ for all $t\in\mathbb{Z}$. 

\begin{assumption}\label{asmpt:limiting_mapping_continuous_time} For each $u\in [0,1]$, there exists a measurable mapping $G_u:\mathbb{R}^{\infty}\times \Theta\xrightarrow[]{}\mathbb{R}^d$ such that $(G_u(\symbfit{\varepsilon}_j,\theta))_{j\in\mathbb{Z}}$ is a stationary ergodic process, and for some $q>2$, as $i\xrightarrow[]{} \infty$,   \begin{align*}\underset{\theta\in\Theta}{\sup} \ \underset{u\in [0,1]}{\sup} \  \underset{0\leq r \leq s}{\max} \ \left\|G_u^{(i,r)}(\symbfit{\varepsilon}_0,\theta)-G_u(\symbfit{\varepsilon}_0,\theta)\right\|_{\mathcal{L}^q(\theta)}\xrightarrow[]{} 0. \end{align*}
\end{assumption}

Under Assumption~\ref{asmpt:algorithmic_dynamic_model_continuous_time}, the observed and simulated rolling-window random features at time $t$ can be written as \begin{align}\nonumber F_{t}^{\mathrm{obs}}&=\frac{1}{N_t}\sum_{j=(t-w)\lor (m+1)}^t \varphi\left(\left[G_{(j-m)/n}^{(n,0)}(\symbfit{\varepsilon}_{j-m}^{(0)},\theta_0),\ldots,G_{j/n}^{(n,0)}(\symbfit{\varepsilon}_j^{(0)},\theta_0)\right]^{\top}\right),\\ \nonumber \bar{F}_{t}^{\mathrm{sim}}(\theta) &=  \frac{1}{N_t}\sum_{j=(t-w)\lor (m+1)}^t \left[\frac{1}{s}\sum_{r=1}^s\varphi\left(\left[G_{(j-m)/n}^{(n,r)}(\symbfit{\varepsilon}_{j-m}^{(r)},\theta),\ldots,G_{j/n}^{(n,r)}(\symbfit{\varepsilon}_j^{(r)},\theta)\right]^{\top}\right)\right].
\end{align}

For each $u\in [0,1]$, denote by $P_{\theta,u}$ the distribution of $\left[G_u(\symbfit{\varepsilon}_{-m},\theta),\ldots,G_u(\symbfit{\varepsilon}_{0},\theta)\right]^{\top}$, where $m$ is the number of lags used in the random features. We introduce mappings $\Phi_u:\Theta\xrightarrow[]{}\mathbb{R}^k$, $u\in [0,1]$, defined as \begin{equation}\label{eqn:original_random_feature_expectation_cont_time} \Phi_u(\theta)= \int_{\mathbb{R}^{(m+1)\times d}}\varphi(x) \, dP_{\theta,u}(x).\end{equation} Under Assumptions~\ref{asmpt:algorithmic_dynamic_model_continuous_time} and \ref{asmpt:limiting_mapping_continuous_time}, the expectation values of the $k= 2p+1$ random features at rescaled time $u\in [0,1]$ coincide with $\Phi_u$.

Unfortunately, in general, we cannot injectively map $\Theta$ into $\mathbb{R}^k$ using $\Phi_u$ at a particular $u\in [0,1]$. To see this, consider a family of stochastic processes defined by $X_t(\theta)=\cos(\theta 2\pi t/n)+\varepsilon_t$, $t=1,\ldots,n$, where each $\varepsilon_t \overset{\mathrm{iid}}{\sim} N(0,1)$ and $\theta \in [0,10]$. For $u\in [0,1]$ and $m=0$, denote by $P_{\theta,u}$ the distribution of $\cos(\theta 2\pi u)+\varepsilon_t$, i.e., $P_{\theta,u}=N(\cos(\theta 2\pi u),1)$. Clearly, for two different parameters $\theta_1\neq\theta_2$, it is possible for $P_{\theta_1,u}=P_{\theta_2,u}$. Of course, this also occurs in more complicated dynamic models, such as differential equations. To deal with these situations, we make the following reparametrization assumption.

The definition of a semialgebraic mapping is required for the next assumption. We simplify the definition from \cite{bochnak_coste_roy_real_algebraic_geometry} to Euclidean spaces.

\begin{definition}[Definition 2.2.5 in \cite{bochnak_coste_roy_real_algebraic_geometry}]\label{def:semialgebraic_map}
Let $p_1,p_2 \in \mathbb{N}$, and let
$A \subset \mathbb{R}^{p_1}$ and $B \subset \mathbb{R}^{p_2}$ be two
semialgebraic sets (Definition~\ref{def:semialgebraic_set}). A mapping $g \colon A \to B$ is semialgebraic if
its graph $G(g)=\{(x,g(x)):x\in A\}$ is semialgebraic in $\mathbb{R}^{p_1+p_2}$.
\end{definition}

\begin{assumption}\label{asmpt:reparametrization_continuous_time} For each $u\in [0,1]$, there exists a Lipschitz, semialgebraic mapping $g_u:\Theta\xrightarrow[]{}\mathbb{R}^{\tilde{p}}$ defined by $g_u(\theta)=\tilde{\theta}_u$, where $\tilde{\Theta}_u=\{g_u(\theta):\theta\in\Theta\}\subset\mathbb{R}^{\tilde{p}}$, $\tilde{p}\in \mathbb{N}$, and there exists a measurable mapping $\tilde{G}_u:\mathbb{R}^{\infty}\times \tilde{\Theta}_u\xrightarrow[]{}\mathbb{R}^d$ such that $$\tilde{G}_u(\symbfit{\varepsilon}_0,\tilde{\theta}_u)=G_u(\symbfit{\varepsilon}_0,\theta),$$ where $(\tilde{G}_u(\symbfit{\varepsilon}_j,\tilde{\theta}_u))_{j\in\mathbb{Z}}$ is a stationary ergodic process for each fixed $u\in [0,1]$. Furthermore, assume that the map $(u,\theta)\mapsto g_u(\theta)$ is jointly Borel measurable.
\end{assumption}

The assumption that $g_u$ is semialgebraic (Definition~\ref{def:semialgebraic_map}) can be weakened to only requiring $g_u$ to be $\sigma$-subanalytic; see the Supplementary Material for the technical details.

Going forward, let $\tilde{P}_{\tilde{\theta}_u,u}$ denote the distribution of $\left[\tilde{G}_u(\symbfit{\varepsilon}_{-m},\tilde{\theta}_u),\ldots,\tilde{G}_u(\symbfit{\varepsilon}_0,\tilde{\theta}_u)\right]^{\top}$. Now, we introduce the mappings $\tilde{\Phi}_u:\tilde{\Theta}_u\xrightarrow[]{}\mathbb{R}^k$, defined as \begin{equation}\label{eqn:reparam_random_feature_expectation}\tilde{\Phi}_u(\tilde{\theta}_u) = \int_{\mathbb{R}^{(m+1)\times d}}\varphi(x) \, d\tilde{P}_{\tilde{\theta}_u,u}(x).\end{equation} Under Assumptions~\ref{asmpt:algorithmic_dynamic_model_continuous_time}-\ref{asmpt:reparametrization_continuous_time}, the limiting local expectation values at rescaled time $u\in [0,1]$ of the $k=2p+1$ random features of the time series are given by $\tilde{\Phi}_u(\tilde{\theta}_u)$, $\tilde{\theta}_u\in\tilde{\Theta}_u$. We would like to say that, for generic choices of the $k=2p+1$ random Fourier features, we have that $\tilde{\Phi}_u$ is one-to-one.

% Let $\mathcal{P}\left(\mathbb{R}^{(m+1)\times d}\right)$ denote the space of probability measures on $\mathbb{R}^{(m+1)\times d}$

Let $\tilde{\psi}_{\tilde{\theta}_u,u}(\omega)$ denote the characteristic function of $\tilde{P}_{\tilde{\theta}_u,u}$ evaluated at $\omega\in\mathbb{R}^{(m+1)\times d}$.

\begin{assumption}\label{asmpt:statistical_model_continuous_time}
The following conditions hold for all rescaled times $u\in [0,1]$: 
\begin{enumerate}
    \item $\tilde{\Theta}_u\subset \mathbb{R}^{\tilde{p}}$ is compact. $\Theta\subset \mathbb{R}^{p}$ is compact and semialgebraic with non-empty interior.
    \item The map $\tilde{\theta}_u\mapsto \tilde{P}_{\tilde{\theta}_u,u}$ is injective.
    \item The map $(\tilde{\theta}_u,\omega)\mapsto \tilde{\psi}_{\tilde{\theta}_u,u}(\omega)$ admits a jointly real analytic extension to $\tilde{O}_u \times \mathbb{R}^{(m+1)\times d}$, where $\tilde{O}_u \subset\mathbb{R}^{\tilde{p}}$ is an open neighborhood of $\tilde{\Theta}_u$.

\end{enumerate}
\end{assumption}

The first condition is used for establishing identifiability and consistency. As a simple example, for some $a_j,b_j\in\mathbb{R}$ with $a_j < b_j$, $j=1,\ldots,p$, we can take  $$\Theta=\prod_{j=1}^p [a_j,b_j].$$ 
The second condition is a standard assumption used for identifiability. The third condition is used to show the regularity of $(\tilde{\theta}_u,\vartheta_i)\mapsto \int_{\mathbb{R}^{(m+1)\times d}}\varphi_i(x) \, d\tilde{P}_{\tilde{\theta}_u,u}(x)$, $i=1,\ldots,k$, $u\in [0,1]$.

We require an additional regularity condition, which is used to show that the mapping $\tilde{\theta}_u\mapsto\tilde{\Phi}_u(\tilde{\theta}_u)$, $u\in [0,1]$, is one-to-one for generic random Fourier features at almost every rescaled time $u$.

\begin{assumption}\label{asmpt:measurability} The following conditions hold:
\begin{enumerate}
    \item The set-valued map $u\mapsto \tilde{\Theta}_u$ is weakly Borel measurable. 
    \item There exists a jointly Borel measurable function $\bar{\psi}:[0,1] \times \mathbb{R}^{\tilde{p}} \times \mathbb{R}^{(m+1)\times d}\xrightarrow[]{}\mathbb{C}$ such that $\bar{\psi}(u,\tilde{\theta}_u,\omega)= \tilde{\psi}_{\tilde{\theta}_u,u}(\omega)$ for all $u\in [0,1]$, $\tilde{\theta}_u\in\tilde{\Theta}_u$, and $\omega\in\mathbb{R}^{(m+1)\times d}$, and is continuous in $\tilde{\theta}_u$ for each fixed $(u,\omega)$. 
\end{enumerate}
    
\end{assumption}

The following result is essential for the identifiability of the parameters. We emphasize that we only require $k=2p+1$ features, as opposed to $2\tilde{p}+1$, where $\tilde{p}$ is from Assumption~\ref{asmpt:reparametrization_continuous_time}.

\begin{theorem}\label{thm:injective_continuous_time} Suppose Assumptions~\ref{asmpt:algorithmic_dynamic_model_continuous_time}-\ref{asmpt:measurability} hold. Then, with probability one over the draw of random parameters $(\vartheta_1,\ldots,\vartheta_k)$ for the $k=2p+1$ random Fourier features $(\varphi_1,\ldots,\varphi_k)$, we have that the mappings $\tilde{\theta}_u \mapsto \tilde{\Phi}_u(\tilde{\theta}_u)$, $u\in [0,1]$, are one-to-one for almost every $u\in [0,1]$, where $\tilde{\Phi}_u(\tilde{\theta}_u)$, $\tilde{\theta}_u\in\tilde{\Theta}_u$, are the limiting local expectations of the random Fourier features.
\end{theorem}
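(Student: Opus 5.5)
The plan is to reduce Theorem~\ref{thm:injective_continuous_time} to a pointwise-in-$u$ finite witness statement, in the spirit of Theorem~\ref{thm:injective_discrete_time}. We then upgrade ``for every fixed $u$, almost every $\vartheta$'' to ``almost every $\vartheta$, almost every $u$'' by Fubini. There are three steps.

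\textbf{Step 1: injectivity at a fixed $u$.} Fix $u\in[0,1]$. By Assumption~\ref{asmpt:statistical_model_continuous_time}(2), distinct $\tilde\theta_u\neq\tilde\theta_u'$ give distinct laws $\tilde P_{\tilde\theta_u,u}\neq \tilde P_{\tilde\theta'_u,u}$, so their characteristic functions differ. Writing
\[
\int\varphi_i\,d\tilde P_{\tilde\theta_u,u}=\mathrm{Re}\bigl(e^{\mathrm{i}\alpha_i}\tilde\psi_{\tilde\theta_u,u}(\Omega_i)\bigr),
\]
the map $(\tilde\theta_u,\tilde\theta'_u,\vartheta_i)\mapsto \int\varphi_i\,d\tilde P_{\tilde\theta_u,u}-\int\varphi_i\,d\tilde P_{\tilde\theta'_u,u}$ is jointly real analytic by Assumption~\ref{asmpt:statistical_model_continuous_time}(3). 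For each fixed pair with $\tilde\theta_u\neq\tilde\theta'_u$ it is not identically zero in $\vartheta_i$. We then invoke the finite witness theorem of \cite{amir_et_al_finite_witness_thm}, exactly as in the proof of Theorem~\ref{thm:injective_discrete_time}. That theorem says that if the ``bad'' set $\{(\tilde\theta_u,\tilde\theta'_u):\tilde\theta_u\neq\tilde\theta'_u\}$ has dimension at most $D$ and is (sub)analytic or semialgebraic, then $D+1$ generic analytic witnesses separate it.

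The naive count would give $D=2\tilde p$. Here the reparametrization helps: $\tilde\Theta_u=g_u(\Theta)$ is the image of the compact semialgebraic set $\Theta$ under a Lipschitz semialgebraic map. It is therefore semialgebraic of dimension at most $p$, since semialgebraic maps do not increase dimension. Consequently $\tilde\Theta_u\times\tilde\Theta_u$ minus the diagonal is semialgebraic of dimension at most $2p$. Hence $k=2p+1$ witnesses suffice, and the set $B_u$ of bad $\vartheta\in\mathbb{R}^{(d(m+1)+1)\times k}$ has Lebesgue measure zero, hence probability zero under the Gaussian--uniform law of $\vartheta$. I would check carefully that the witness theorem applies to analytic functions defined on a neighborhood $\tilde O_u$ and restricted to the semialgebraic set $\tilde\Theta_u$. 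If needed, I would pull back through $g_u$ and work on $\Theta\times\Theta$ minus the fibre set $\{g_u(\theta)=g_u(\theta')\}$, which is also semialgebraic of dimension at most $2p$.

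\textbf{Step 2: measurability.} Define
\[
B=\{(u,\vartheta): \tilde\Phi_u \text{ is not one-to-one}\}=\{(u,\vartheta):\exists\, \tilde\theta\neq\tilde\theta'\in\tilde\Theta_u,\ \bar\psi\text{-features coincide}\}.
\]
I expect this step to be the main obstacle. A priori $B$ is a projection of a Borel set, hence only analytic, which makes it universally measurable; that already suffices for Fubini with respect to the completed product measure. To get something cleaner I would use Assumption~\ref{asmpt:measurability}. Weak measurability of $u\mapsto\tilde\Theta_u$, with compact values, gives a Castaing representation $\{\sigma_\ell(u)\}_{\ell\in\mathbb{N}}$ of measurable selectors dense in $\tilde\Theta_u$. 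Continuity of $\bar\psi$ in $\tilde\theta$ then lets us write the complement of $B$ as a countable union and intersection of Borel conditions. For example, injectivity on a compact set is equivalent to: for all rational $\delta>0$ there is $\eta>0$ such that $\|\sigma_\ell-\sigma_{\ell'}\|\ge\delta$ implies $\|\tilde\Phi_u(\sigma_\ell)-\tilde\Phi_u(\sigma_{\ell'})\|\ge\eta$. Here $\tilde\Phi_u$ is continuous in $\tilde\theta$, because features are bounded and continuity of the characteristic function in $\tilde\theta$ gives weak continuity. Since $\tilde\Theta_u$ is compact, this shows $B$ is Borel in $(u,\vartheta)$.

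\textbf{Step 3: Fubini.} By Step 1, each $u$-section $B_u$ has probability zero, so $(\lambda\otimes\mathbb{P}_\vartheta)(B)=0$. Tonelli then gives that for $\mathbb{P}_\vartheta$-almost every $\vartheta$ the section $\{u:(u,\vartheta)\in B\}$ has Lebesgue measure zero. This is precisely the claim: with probability one over the draw of $(\vartheta_1,\ldots,\vartheta_k)$, the map $\tilde\theta_u\mapsto\tilde\Phi_u(\tilde\theta_u)$ is one-to-one for almost every $u\in[0,1]$.
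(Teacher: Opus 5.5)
Your proposal is correct and has the same structure as the paper's proof: a pointwise-in-$u$ application of the finite witness theorem (with $\mathcal{N}$ reduced to the diagonal via injectivity of $\tilde\theta_u\mapsto\tilde P_{\tilde\theta_u,u}$ and uniqueness of characteristic functions), then Borel measurability of the non-injectivity set $B$, then Tonelli. The only differences are in the tools: you bound $\dim\tilde\Theta_u\le p$ by semialgebraic dimension theory rather than by the Lipschitz property of $g_u$ and Hausdorff dimension, and you get measurability of $B$ from a Castaing representation (or universal measurability of analytic sets) rather than by applying the measurable maximum theorem to the penalized objective $b_\ell(u,\vartheta)$.
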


To simplify the presentation of the results below, we make the weak assumption that the random parameters we have drawn for the random Fourier features yield functions from~\eqref{eqn:reparam_random_feature_expectation} that are one-to-one at almost every rescaled time $u$. We say that this is a weak assumption because it holds with probability one over the draw of random parameters. 
\begin{assumption}\label{asmpt:injective_continuous_time}  The random parameters $(\vartheta_1,\ldots,\vartheta_k)$ we have drawn for the $k=2p+1$ random Fourier features $(\varphi_1,\ldots,\varphi_k)$ yield mappings $\tilde{\theta}_u\mapsto\tilde{\Phi}_u(\tilde{\theta}_u)$, $u\in [0,1]$, that are one-to-one for almost every $u\in [0,1]$.
\end{assumption}

Alternatively, if one wishes to avoid this assumption, one may interpret the results below as holding with probability one over the draw of random parameters $(\vartheta_1,\ldots,\vartheta_k)$.

Next, we assume that the paths of distributions $(P_{\theta_1,u})_{u\in [0,1]}$, $(P_{\theta_2,u})_{u\in [0,1]}$ are sufficiently distinguishable, in some sense, for any two different parameters $\theta_1,\theta_2\in\Theta$. 
\begin{assumption}\label{asmpt:distinguish_paths_of_distributions_continuous_time} For all $\theta_1,\theta_2\in\Theta$, if $\theta_1\neq\theta_2$, then there exists a subset $\mathcal{U}_{\theta_1,\theta_2}\subseteq [0,1]$ with positive Lebesgue measure such that $P_{\theta_1,u}\neq P_{\theta_2,u}$ for all $u\in \mathcal{U}_{\theta_1,\theta_2}$.
\end{assumption}

 This is a mild condition, as we allow $P_{\theta_1,u}=P_{\theta_2,u}$ for all rescaled times $u \in [0,1] \setminus \mathcal{U}_{\theta_1,\theta_2}$.

We have stated the core assumptions, so we may now sketch the main ideas of the proof of Theorem~\ref{thm:consistency_rolling_window_estimator}, which shows the consistency of $\hat{\theta}^{\mathrm{RW}}$. By Assumption~\ref{asmpt:distinguish_paths_of_distributions_continuous_time}, for any two different parameters $\theta_1,\theta_2\in\Theta$, we have that $P_{\theta_1,u}\neq P_{\theta_2,u}$ for all $u\in\mathcal{U}_{\theta_1,\theta_2}$. Since $P_{\theta_1,u}\neq P_{\theta_2,u}$ for all $u\in\mathcal{U}_{\theta_1,\theta_2}$, we also have that $\tilde{P}_{g_u(\theta_1),u}\neq \tilde{P}_{g_u(\theta_2),u}$ for all $u\in\mathcal{U}_{\theta_1,\theta_2}$ because $\tilde{P}_{g_u(\theta),u}=P_{\theta,u}$ is the distribution of $$[\tilde{G}_u(\symbfit{\varepsilon}_{-m},g_u(\theta)),\ldots,\tilde{G}_u(\symbfit{\varepsilon}_0,g_u(\theta))]^{\top}=[G_u(\symbfit{\varepsilon}_{-m},\theta),\ldots,G_u(\symbfit{\varepsilon}_0,\theta)]^{\top}.$$ Therefore, for all $u\in\mathcal{U}_{\theta_1,\theta_2}$, we have that $g_u(\theta_1)\neq g_u(\theta_2)$ because $\tilde{\theta}_u \mapsto \tilde{P}_{\tilde{\theta}_u,u}$ is injective for all $u\in [0,1]$ by Assumption~\ref{asmpt:statistical_model_continuous_time}. We also have that
$\tilde{\Phi}_u(g_u(\theta_1))\neq \tilde{\Phi}_u(g_u(\theta_2))$ for almost every $u\in\mathcal{U}_{\theta_1,\theta_2}$, because by Assumption~\ref{asmpt:injective_continuous_time}, $\tilde{\theta}_u\mapsto \tilde{\Phi}_u(\tilde{\theta}_u)$ is one-to-one for almost every $u\in[0,1]$. This also implies that $\Phi_u(\theta_1)\neq \Phi_u(\theta_2)$ for almost every $u\in\mathcal{U}_{\theta_1,\theta_2}$ by the definitions of $\Phi_u$ and $\tilde{\Phi}_u$ from~\eqref{eqn:original_random_feature_expectation_cont_time} and~\eqref{eqn:reparam_random_feature_expectation}, respectively, because $\tilde{P}_{g_u(\theta),u}=P_{\theta,u}$ as discussed above. Thus, if $\Phi_u$ were known, we could distinguish $\theta_1$ and $\theta_2$ by comparing the expectation values of the random features $\Phi_u(\theta_1)$ and $\Phi_u(\theta_2)$ at almost every $u\in \mathcal{U}_{\theta_1,\theta_2}$.

We use the following stochastic equicontinuity-type assumption to establish consistency.

\begin{assumption}\label{asmpt:stochastic_equicontinuity_continuous_time} For all $\epsilon >0$, there exists an $\eta_G =\eta_G(\epsilon) >0$ such that $$\underset{i \in\mathbb{N}}{\sup} \ \underset{u \in [0,1]}{\sup} \ \underset{0\leq r\leq s}{\max} \ \mathbb{E}\left(
\underset{ \left\|\theta-\theta'\right\|\leq \epsilon}{\underset{\theta,\theta'\in\Theta}{\sup}} \left\|
G_{u}^{(i,r)}(\symbfit{\varepsilon}_{0},\theta) - 
G_{u}^{(i,r)}(\symbfit{\varepsilon}_{0},\theta')\right\|^2\right)< \eta_G,$$ with $\eta_G \xrightarrow[]{} 0$ as $\epsilon \xrightarrow[]{} 0$, where the expectation $\mathbb{E}(\cdot)$ is taken w.r.t.\ the law of the noise inputs.
\end{assumption}

The next assumptions control the temporal dependence and nonstationarity uniformly over $\Theta$, so that we can estimate the expectations of the random features $\Phi_u(\theta)$ at all $u\in [0,1]$ and $\theta \in \Theta$. First, we impose a uniform moment bound and a uniform exponential decay of the physical dependence measure of~\cite{wu_funct_dep_meas}.

\begin{assumption}\label{asmpt:temporal_dependence_continuous_time}
    There exist constants $\rho\in (0,1)$, $\Lambda >0$, such that, for some $q > 2$ and all $j\in\mathbb{N}_0$, we have  \begin{align*}  \underset{i\in\mathbb{N}}{\sup} \ \underset{\theta\in\Theta}{\sup} \ \underset{u\in [0,1]}{\sup}  \  \underset{0 \leq r \leq s}{\max} \  \left\|G_u^{(i,r)}(\symbfit{\varepsilon}_{0},\theta) - G_u^{(i,r)}(\tilde{\symbfit{\varepsilon}}_{0,j},\theta)\right\|_{\mathcal{L}^q(\theta)}   &\leq \Lambda \rho^j, \\  \underset{i\in\mathbb{N}}{\sup} \ \underset{\theta\in\Theta}{\sup} \  \underset{u\in [0,1]}{\sup}  \  \underset{0 \leq r \leq s}{\max} \  \left\|G_u^{(i,r)}(\symbfit{\varepsilon}_{0},\theta)\right\|_{\mathcal{L}^q(\theta)}   &\leq \Lambda,\end{align*} where $\tilde{\symbfit{\varepsilon}}_{0,j}$ is from~\eqref{eqn:replaced_j_past_noise_input_sequence_up_to_time_t}. 

\end{assumption}

Second, we impose a regularity condition to control the nonstationarity, which allows for smooth changes, abrupt change-points, and ``rough'' nonstationarity. Define \begin{align*} \left\|\left(G_u^{(i,r)}(\symbfit{\varepsilon}_0,\theta)\right)_u\right\|_{\kappa\text{-}\mathrm{var},\mathcal{L}^q(\theta)}&=  \underset{\ell\in\mathbb{N}}{\underset{0=u_0<\ldots<u_{\ell}=1}{\sup}} \left(\sum_{j=1}^{\ell} \left\| G_{u_j}^{(i,r)}(\symbfit{\varepsilon}_{0},\theta) - G_{u_{j-1}}^{(i,r)}(\symbfit{\varepsilon}_{0},\theta)\right\|_{\mathcal{L}^q(\theta)}^{\kappa}\right)^{\frac{1}{\kappa}}. \end{align*}

\begin{assumption}\label{asmpt:nonstationarity_continuous_time}
    For some $q > 2$ and some $\kappa\in [1,4)$, the constant $\Lambda >0$ from Assumption~\ref{asmpt:temporal_dependence_continuous_time} also satisfies  \begin{align*}   
\underset{i \in\mathbb{N}}{\sup} \ \underset{\theta\in\Theta}{\sup} \ \underset{0\leq r \leq s}{\max} \left( \underset{u\in [0,1]}{\sup} \ \left\|G_u^{(i,r)}(\symbfit{\varepsilon}_{0},\theta)\right\|_{\mathcal{L}^q(\theta)}  +  \left\|\left(G_u^{(i,r)}(\symbfit{\varepsilon}_0,\theta)\right)_u\right\|_{\kappa\text{-}\mathrm{var},\mathcal{L}^q(\theta)} \right)  &\leq \Lambda.\end{align*} 

\end{assumption} 

Next, we specify the rates of growth for the window size $w_n$, the offset $\tau_n$, and the lag $L_n$ from~\eqref{eqn:rolling_window_sample_discrepancy}. This ensures that we can improve our estimates of the local expectation values of the random features at $u\in [0,1]$ as we gain more observations near rescaled time $u$.

\begin{assumption}\label{asmpt:growth_of_offset_and_window_size_continuous_time} Recall $q>2$ from Assumption~\ref{asmpt:temporal_dependence_continuous_time} and $\kappa\in [1,4)$ from Assumption~\ref{asmpt:nonstationarity_continuous_time}. Assume that $q$, $\kappa$, $(w_n)_{n\in\mathbb{N}}$, $(\tau_n)_{n\in\mathbb{N}}$, and $(L_n)_{n\in\mathbb{N}}$ satisfy:
\begin{enumerate}
\item $\tau_n = o(n)$.
\item $w_n \gg n^{\frac{2}{q}}$, $w_n=o(n)$, and $w_n n^{\max(1-2/\kappa,0)}=o(n)$.
\item $L_n \gg \log(n)^{1+a}$ for some $a\in (0,1)$, $L_n \ll \tau_n$, $L_n \ll n^{\frac{1}{\kappa}-\frac{1}{4}}$, and $L_n =o(n^{\frac{1}{2 \kappa}})$.
\end{enumerate}
\end{assumption}

We now establish the consistency of the rolling-window estimator $\hat{\theta}^{\mathrm{RW}}$.

\begin{theorem}\label{thm:consistency_rolling_window_estimator} 
    If Assumptions~\ref{asmpt:algorithmic_dynamic_model_continuous_time}-\ref{asmpt:growth_of_offset_and_window_size_continuous_time} hold, then $\hat{\theta}^{\mathrm{RW}}\xrightarrow[]{p}\theta_0$. 
\end{theorem}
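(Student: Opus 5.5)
We prove Theorem~\ref{thm:consistency_rolling_window_estimator} with the standard extremum-estimator argument. First identify a deterministic limit $Q(\theta)$ of $\hat{Q}_n^{\mathrm{RW}}(\theta)$. Next show $\sup_{\theta\in\Theta}|\hat{Q}_n^{\mathrm{RW}}(\theta)-Q(\theta)|\xrightarrow{p}0$. Finally show that $Q$ has a well-separated unique minimum at $\theta_0$.

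The natural candidate is
\[
Q(\theta)=\int_0^1\left\|\Phi_u(\theta_0)-\Phi_u(\theta)\right\|^2\,du .
\]
Write $D_t(\theta)=f_t^{\mathrm{obs}}-\bar f_t^{\mathrm{sim}}(\theta)$ and $\bar D_{t}(\theta)=F_{t}^{\mathrm{obs}}-\bar F_{t}^{\mathrm{sim}}(\theta)$. The summand equals $\|\bar D_{t-L}\|^2+2\bar D_{t-L}^\top(D_t-\bar D_{t-L})=2\bar D_{t-L}^\top D_t-\|\bar D_{t-L}\|^2$. The role of the lag $L$ is to make $\bar D_{t-L}$ nearly independent of the innovations driving $D_t$. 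Because of this, the cross term behaves like $2\,\bar D_{t-L}^\top\mathbb{E}[D_t]$, so the whole summand behaves like $\|\mathbb{E} D_t\|^2$. This is presumably why the estimator is built this way: it debiases the squared rolling average, whose variance is otherwise $O(1/w)$ but which matters uniformly. Concretely, I would decompose the discrepancy into three pieces:
\begin{enumerate}
\item[(i)] Replace $G^{(n,r)}_{t/n}$ by the limiting $G_{t/n}$ using Assumption~\ref{asmpt:limiting_mapping_continuous_time} and the Lipschitz property of the cosines.
\item[(ii)] Replace the local means $\mathbb{E}f_j$ over a window by $\Phi_{t/n}(\cdot)$ using the $\kappa$-variation bound of Assumption~\ref{asmpt:nonstationarity_continuous_time}. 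The accumulated bias over windows of length $w_n$ is controlled by $w_n n^{\max(1-2/\kappa,0)}=o(n)$ via H\"older on the variation sum.
\item[(iii)] Control the stochastic fluctuations $\bar D_{t-L}-\mathbb{E}\bar D_{t-L}$ and $D_t-\mathbb{E}D_t$ with physical-dependence moment inequalities (Wu-type Burkholder bounds) under the geometric decay of Assumption~\ref{asmpt:temporal_dependence_continuous_time}. The condition $w_n\gg n^{2/q}$ handles a maximal inequality over $t$.
\end{enumerate}
For the cross term I would couple $D_t$ with a version whose innovations older than $L$ steps are replaced by independent copies. The error is $O(\rho^{L_n})=o(n^{-c})$ since $L_n\gg\log(n)^{1+a}$. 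After coupling, $\frac1{n}\sum_t\bar D_{t-L}^\top(D_t-\mathbb{E}D_t)$ is a sum of weakly dependent, mean-zero terms, and its second moment is $O(L_n/n+\ldots)$. Since $\tau_n=o(n)$, the first $\tau_n+L_n$ terms that are dropped are negligible. A Riemann-sum argument then turns $\frac1{n}\sum_t\|\Phi_{t/n}(\theta_0)-\Phi_{t/n}(\theta)\|^2$ into $Q(\theta)$, which uses bounded $\kappa$-variation and hence regulated paths.

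Uniformity over $\theta$ comes from pointwise convergence plus stochastic equicontinuity. Assumption~\ref{asmpt:stochastic_equicontinuity_continuous_time} together with the $1$-Lipschitz, bounded cosines gives $\mathbb{E}\sup_{\|\theta-\theta'\|\le\epsilon}|\hat Q_n(\theta)-\hat Q_n(\theta')|\lesssim \eta_G(\epsilon)^{1/2}$, uniformly in $n$, because every feature is bounded by $1$. Compactness of $\Theta$ and a finite $\epsilon$-net then upgrade pointwise convergence to uniform convergence. The same bound shows that $Q$ is continuous.

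Identification follows from the argument sketched just before Assumption~\ref{asmpt:stochastic_equicontinuity_continuous_time}. For $\theta\neq\theta_0$, Assumptions~\ref{asmpt:distinguish_paths_of_distributions_continuous_time}, \ref{asmpt:statistical_model_continuous_time} and \ref{asmpt:injective_continuous_time} give $\Phi_u(\theta)\neq\Phi_u(\theta_0)$ on a set of $u$ of positive measure, so $Q(\theta)>0=Q(\theta_0)$. By continuity and compactness, $\inf_{\|\theta-\theta_0\|\ge\delta}Q(\theta)>0$. The near-minimizer property~\eqref{eqn:rolling_window_estimator} and uniform convergence then give $Q(\hat\theta^{\mathrm{RW}})\xrightarrow{p}0$. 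Note that $|\hat Q_n(\theta_0)|\to 0$, and $Q\ge0$ makes taking the absolute value harmless. Hence $\hat\theta^{\mathrm{RW}}\xrightarrow{p}\theta_0$. I expect the main obstacle to be step (iii) together with the $\kappa$-variation bias: the rate conditions on $L_n$ relative to $n^{1/\kappa-1/4}$ and $n^{1/(2\kappa)}$ must be matched exactly, presumably through the error from shifting the window by $L$, which is bounded by variation over $L_n$ steps.
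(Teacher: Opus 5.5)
Your proposal is correct, and its skeleton is the paper's. It uses the same limit $Q^{\mathrm{RW}}(\theta)=\int_0^1\norm{\Phi_u(\theta_0)-\Phi_u(\theta)}^2\,du$ and the same identification chain plus compactness for a well-separated minimum. Once you expand $2\bar D_{t-L}^\top D_t-\norm{\bar D_{t-L}}^2=\norm{\mathbb{E}D_t}^2-\norm{\bar D_{t-L}-\mathbb{E}D_t}^2+2\bar D_{t-L}^\top(D_t-\mathbb{E}D_t)$, you also have exactly the paper's decomposition $I_n^1-I_n^2+I_n^3$. Your steps (i)--(iii), applied to the first two pieces, follow its Steps 1--2.

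Two steps differ.

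\textbf{Cross term.} You decouple $D_t$ from $\bar D_{t-L}$ through the lag. The paper instead splits $\hat\mu_{t-L,n}=\mu^n_{(t-L)/n}+(\hat\mu_{t-L,n}-\mu^n_{(t-L)/n})$. It bounds the sum with deterministic weights $\mu^n_{(t-L)/n}$ at order $n^{1/2}$ via the Wu-type inequality of Lemma~\ref{lma:lp_lln_continuous_time}. It then disposes of the remainder by Cauchy--Schwarz against the bound already obtained for $I_n^2$. That route never uses $L_n\to\infty$. Consistency only needs:
\begin{itemize}
\item $\tau_n+L_n=o(n)$;
\item $L_n^2n^{\max(1-2/\kappa,0)}=o(n)$, for the shift $\mu^n_{t/n}-\mu^n_{(t-L)/n}$;
\item $w_n\to\infty$ with $w_nn^{\max(1-2/\kappa,0)}=o(n)$.
\end{itemize}
The window variances sum to $O(n/w_n+\log w_n)$ by summable autocovariances, so $w_n\gg n^{2/q}$ is not used. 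The remaining rate conditions serve the normality results, so there is nothing delicate to match. Relatedly, the construction does not remove the $O(1/w_n)$ term: it reappears with a negative sign as $-\norm{\bar D_{t-L}-\mathbb{E}D_t}^2$. This is harmless for consistency, but the lag is not ``debiasing'' here.

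\textbf{Uniformity.} You use a single stochastic-equicontinuity bound for the whole criterion, combined with pointwise convergence on a fixed finite net. This is legitimate: the features are bounded, so $\hat Q_n^{\mathrm{RW}}$ is Lipschitz in the simulated features, giving $\mathbb{E}\sup_{\norm{\theta-\theta'}\le\epsilon}\abs{\hat Q_n^{\mathrm{RW}}(\theta)-\hat Q_n^{\mathrm{RW}}(\theta')}\lesssim\eta_G(\epsilon)^{1/2}$ uniformly in $n$. The paper instead runs separate $\epsilon$-net arguments inside each of $I_n^1,I_n^2,I_n^3$, letting $\epsilon_n\to0$ with $N(\epsilon_n)$ matched to the pointwise rates. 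Your version is shorter and avoids the rate matching. It also yields continuity of $Q^{\mathrm{RW}}$ as a by-product, via pointwise convergence of the bounded $\hat Q_n^{\mathrm{RW}}$, whereas the paper proves it separately in Lemma~\ref{lma:continuity_of_theta_to_QRW}. The paper's version gives explicit rates for each piece.
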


\section{Experiments}
\label{section:experiments}

We apply our methods to fit several dynamic models. Each density plot is constructed from $1{,}000$ independent estimates of the $p$-dimensional parameter using $2p+1$ random features and $10$ simulations per parameter value. To be clear, each estimate is based on an independent realization of the process and an independent draw of $2p+1$ random features.

The experiments reported below were conducted in Python. We optimized the objective functions using the differential evolution solver \texttt{differential\_evolution} from the \texttt{scipy.optimize} module in the SciPy library, although our experiments indicate that other optimizers perform comparably. For the examples with differential equations (ODEs), we used the Runge--Kutta 5(4) solver \texttt{RK45} from the \texttt{scipy.integrate} module in SciPy.

For the rolling-window estimator, the parameters $L_n$, $w_n$, and $\tau_n$ are chosen as follows. Note that the theory treats these parameters as prespecified deterministic sequences, whereas here we choose $w_n$ based on a practical selection rule. The lag $L_n$ grows as $\lceil \frac{1}{10}\log(n)^2\rceil$, the window size $w_n$ is selected as the integer within $[n^{\frac{1}{2}},n^{\frac{3}{4}}]$ that minimizes the sum of squared distances $\sum_{t=m+1+L_n}^n \left\|F_{t-L_n}^{\mathrm{obs}}-f_t^{\mathrm{obs}}\right\|^2$, and the offset is set to $\tau_n=w_n$.

\subsection{Experiments with time-average estimator}\label{subsection:experiments_with_ta_estimator}

\paragraph*{Moving average process.} For $t=1,\ldots,n$, we observe
\begin{equation*}
X_t =  \mu+ \psi \varepsilon_{t-1} + \varepsilon_{t}, 
\end{equation*} where the noise inputs $\varepsilon_{t}$ are sampled iid from the intractable g-and-k distribution~\citep{gk_MLE}. We set the location component to zero, the kurtosis component to $0.1$, and the $c$ component to $0.8$. The remaining components $\sigma$ and $g$ control the scale and skewness, respectively. We aim to estimate 
$$\psi_0=0.5,\quad\mu_0=-3,\quad\sigma_0=0.2,\quad g_0=0.1.$$

\begin{figure}[h!]
%\captionsetup{width=\linewidth, margin=0pt}
  \centering \includegraphics[width=1\linewidth]{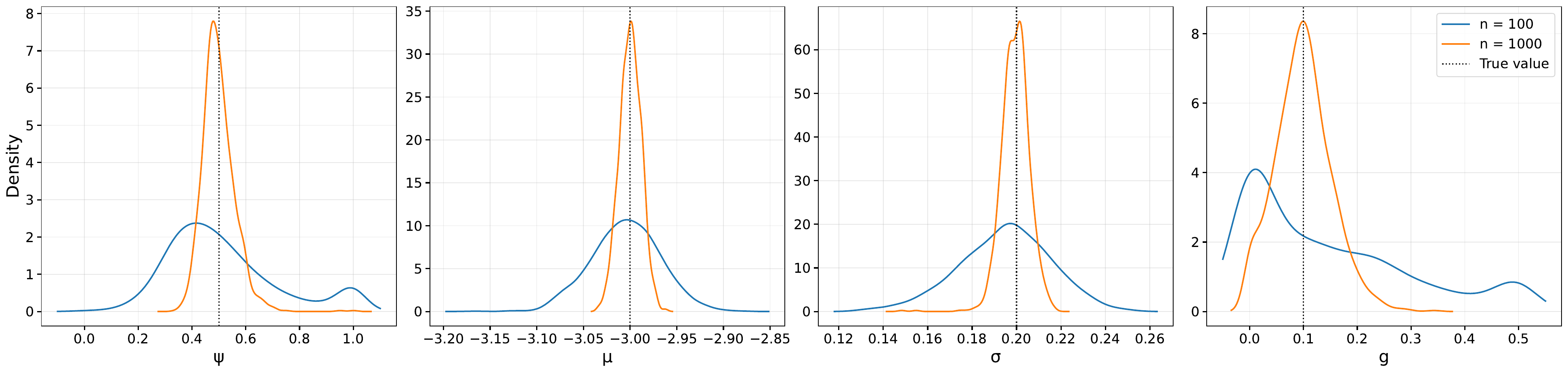}
  \caption{\textbf{Moving average process.} Density of estimates using $9$ random features.}\label{fig:moving_average_estimate_densities_params0-3} 
\end{figure}

\paragraph*{Autoregressive process.} For $t=1,\ldots,n$, we observe \begin{align*}X_t &= X_{t-1} + \psi (\mu - X_{t-1}) + \varepsilon_t,\end{align*} where the noise inputs $\varepsilon_{t}$ are sampled iid from the intractable g-and-k distribution~\citep{gk_MLE} and the initial value $X_0=3$ is known, so the model may be viewed as a discretized version of a L\'{e}vy-driven Ornstein--Uhlenbeck process. We set the location component to zero, the kurtosis component to $0.2$, and the $c$ component to $0.8$. The remaining components $\sigma$ and $g$ control the scale and skewness. We aim to estimate $$\psi_0=0.3,\quad  \mu_0 = 2, \quad \sigma_0=0.3, \quad g_0=0.2.$$

\begin{figure}[h!]
%\captionsetup{width=\linewidth, margin=0pt}
  \centering \includegraphics[width=1\linewidth]{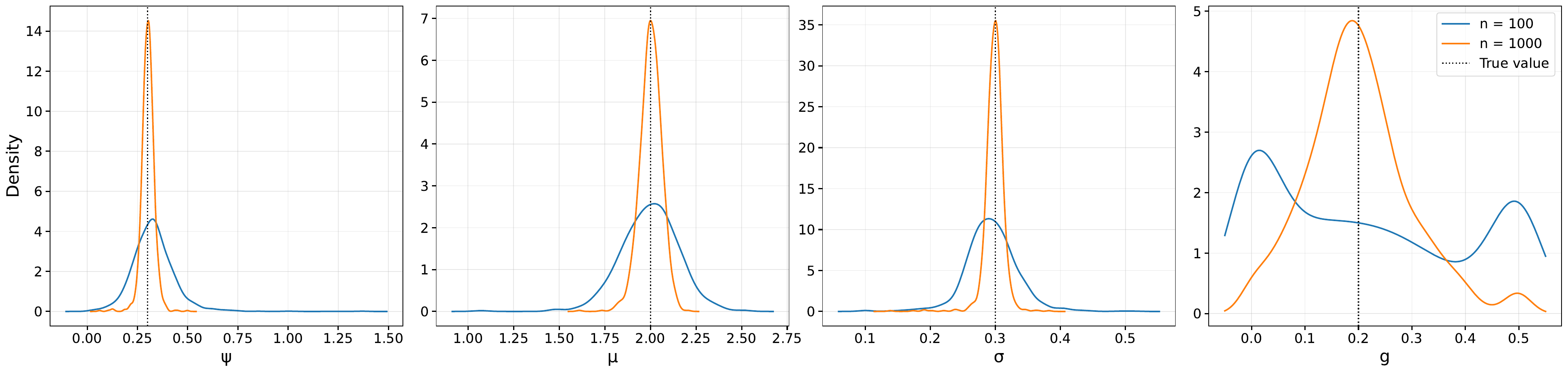}
  \caption{\textbf{Autoregressive process.} Density of estimates using $9$ random features.}\label{fig:autoregressive_estimate_densities_params0-3} 
\end{figure}

\paragraph*{Logistic map model.}~\cite{May_1976_logistic} popularized the logistic map as a population model capable of producing chaotic dynamics. For $t=1,\ldots,n$, define the latent state by $$Z_{t} = \rho Z_{t-1}(1-Z_{t-1}),$$ where the unknown initial value $Z_0$ is sampled from a $U[0,1]$ distribution. We observe $$X_t = Z_{t} + \sigma \varepsilon_t,$$ where each $\varepsilon_t \overset{\mathrm{iid}}{\sim} N(0,1)$.  Our goal is to estimate $$\rho_0=3.9,\quad \sigma_0=0.1.$$ The logistic map is in a chaotic regime when $\rho=3.9$; see~\cite{devaney_book} for more details.

% Note: Logistic map takes values in [0, 1], while observations can be outside of [0, 1], of course. 
% Note: To situate this model within our theoretical framework, we fix $s$ initial values $Z_0^{(r)}\overset{\mathrm{iid}}{\sim} U[0,1]$, $r=1,\ldots,s$, independently of the data before running the procedure. This does fall within the framework of Theory section for time-average estimator if we take the initial condition in this way. 
% Note: the initial values cannot be viewed as parameters for the time-average estimator within our framework unless the initial value determines the limiting distribution, otherwise for a fixed set of parameters will have the same random features for all initial values 
% Note: We can't take initial condition to be Uniform(0,1) because then the physical dependence measure will not decay, i.e. very sensitive to the (random) initial condition noise input. Need physical dependence measure to decay for consistency.
% Note: for consistency we do not need total variation control of the nonstationarity of the mean and we do not require a consistent estimate of the initial state. 
% Note: we can still get asymptotic normality too, without having the long-run covariance being well defined, by instead using a simulation-based estimate of the asymptotic covariance i.e. by averaging over simulations of the sample discrepancy based on the k random features. it just will be that the asymptotic covariance will not be based on the typical local long run covariance, which is fine.

\begin{figure}[h!]
  \centering \includegraphics[width=0.5\linewidth]{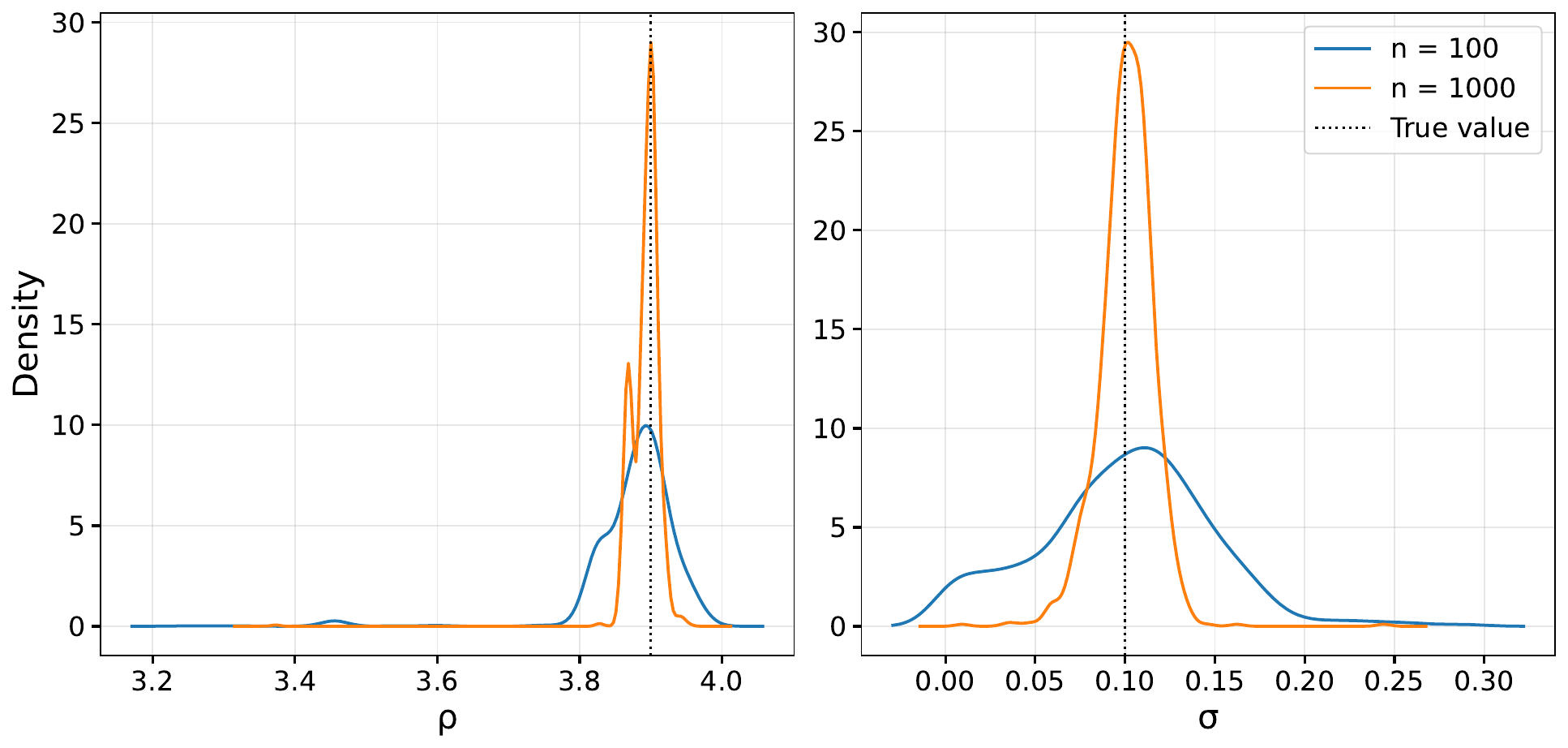}
  \caption{\textbf{Logistic map model.} Density of estimates using $5$ random features.}\label{fig:logistic_map_estimate_densities_params0-1}  
\end{figure}

\paragraph*{State-space model.} For times $t=1,\ldots,n$, define the latent state by 
$$Z_{t} = Z_{t-1} + \psi (\mu - Z_{t-1}) + \sigma \varepsilon_t,$$ where each $\varepsilon_t \overset{\mathrm{iid}}{\sim} N(0,1)$. For dimensions $j=1,\ldots,d\equiv 25$, we observe \begin{align*}X_{t}^{(j)} &= Z_{t} + \lambda \xi_{t}^{(j)}, 
\end{align*} where each $\xi_{t}^{(j)} \overset{\mathrm{iid}}{\sim} N(0,1)$. We aim to estimate $$\psi_0=0.25,\quad \sigma_0=0.08, \quad \mu_0 = 2.5, \quad \lambda_0=0.05.$$ 
%%% Note: To situate this model within our theoretical framework, for the process $Z_{t}$, we let the number of noise inputs in the burn-in period grow with the sample size. 

\begin{figure}[h!]
%\captionsetup{width=\linewidth, margin=0pt}
  \centering \includegraphics[width=1\linewidth]{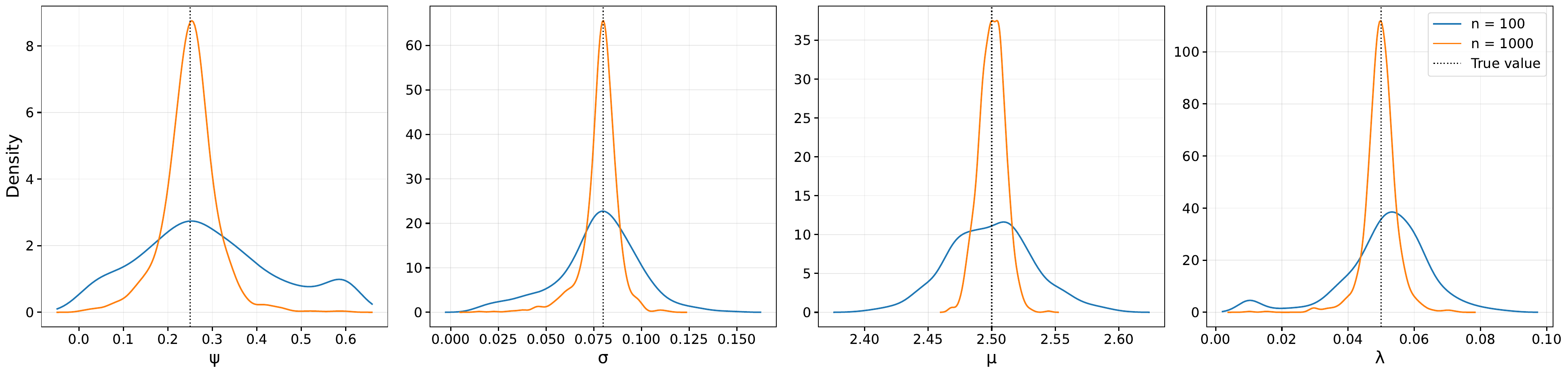}
  \caption{\textbf{State-space model.} Density of estimates using $9$ random features.}\label{fig:state_space_estimate_densities_params0-3} 
\end{figure}

\subsection{Experiments with rolling-window estimator}\label{subsection:experiments_rolling_window}

\paragraph*{SIR model.}~\cite{SIR_kermack_mckendrick_1927} introduced the susceptible-infected-recovered (SIR) model to describe the dynamics of infectious disease transmission. For time horizon $T=50$, let $(S(v), I(v), R(v))_{v\in [0,T]}$ be defined by the system of ODEs \begin{equation*}\label{eq:SIR_continuous} \begin{aligned}  \frac{d S}{dv}(v) &= - \frac{\beta}{N} S(v)\, I(v), \quad  \frac{d I}{dv}(v) &= \frac{\beta}{N} S(v)\, I(v) - \gamma I(v), \quad  \frac{d R}{dv}(v) &= \gamma I(v),\end{aligned} \end{equation*} where $S(v)/N$, $I(v)/N$, and $R(v)/N$ denote the fractions of the population that are susceptible, infected, and recovered at time $v\in [0,T]$, respectively, $\beta>0$ is the transmission rate, and $\gamma>0$ is the recovery rate. Let the known initial values be $S(0)=980{,}000$, $I(0)=20{,}000$, $R(0)=0$, where the population size is $N=S(0)+I(0)+R(0)=1{,}000{,}000$. 

For $t=1,\ldots,n$, we randomly test $\mathrm{TE}_t = \lfloor N/100 \rfloor$ people using a test with sensitivity $\mathrm{SE}_t = 0.95$ and specificity $\mathrm{SP}_t = 0.99$. We observe the proportion of positive tests \begin{align*} X_t=Z_t/\mathrm{TE}_t, \quad  Z_t \sim \mathrm{Binomial}\left(
\mathrm{TE}_t, \pi_{t}
\right),\end{align*} where $\pi_{t}=\mathrm{SE}_t I(Tt/n)/N + (1-\mathrm{SP}_t)\left(1-I(Tt/n)/N\right)$. This model allows $\mathrm{TE}_t$, $\mathrm{SE}_t$, and $\mathrm{SP}_t$ to change over time $t$, but we keep them constant for simplicity. We aim to estimate \[\begin{aligned}\beta_0 &= 0.5,  & \gamma_0 &= 0.2.\end{aligned}\]

\begin{figure}[h!]
  \centering \includegraphics[width=0.5\linewidth]{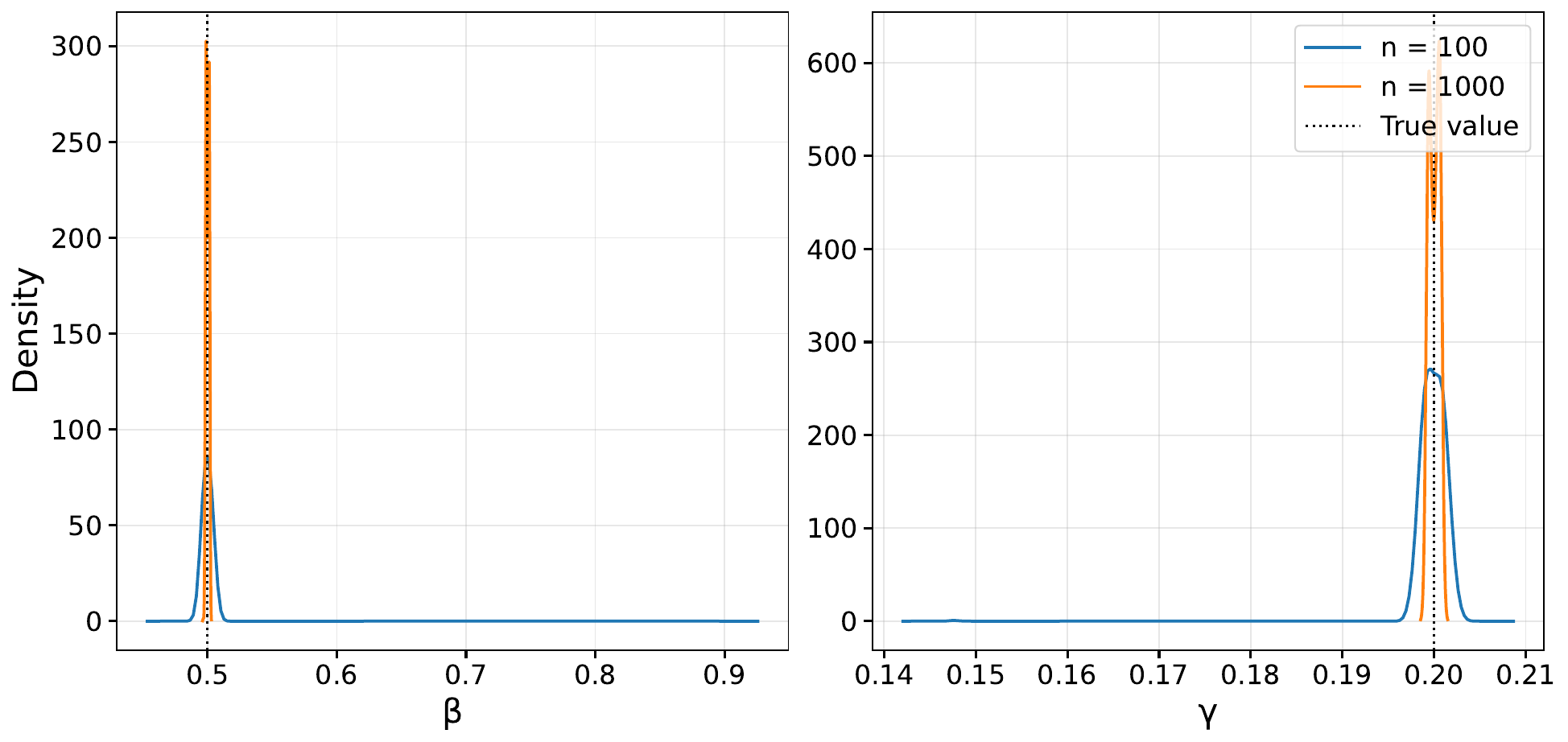}
  \caption{\textbf{SIR model.} Density of estimates using $5$ random features.}\label{fig:sir_binomial_testing_estimate_densities_params0-1} 
\end{figure}

\paragraph*{Structural time series model.} We consider a two-dimensional structural time series model with cycles, trends, and an abrupt change-point. For $u\in [0,1]$, define \begin{align*}  \label{eq:sine_cos_latent}Z^{(1)}(u)=\mu^{(1)}u+\alpha \mathbf{1}_{u \geq \tau}+\mathrm{cos}(2 \pi \beta^{(1)} u), \quad Z^{(2)}(u)=\mu^{(2)}u-\alpha \mathbf{1}_{u \geq \tau}+\mathrm{sin}(2 \pi \beta^{(2)} u).\end{align*} For $t=1,\ldots,n$, we observe \begin{equation*}\label{eq:sine_cos_obs}\begin{aligned}X_{t}^{(1)} &= Z^{(1)}(t/n) + \sigma^{(1)}  \varepsilon_{t}^{(1)},\quad X_{t}^{(2)} &= Z^{(2)}(t/n) + \sigma^{(2)}  \varepsilon_{t}^{(2)},\end{aligned}\end{equation*} where each $\varepsilon_{t}^{(1)},\varepsilon_{t}^{(2)} \overset{\mathrm{iid}}{\sim} N(0,1)$ and each $\sigma^{(1)}, \sigma^{(2)}> 0$. We aim to estimate 
\[ \begin{aligned} \alpha_0 &= 1.8, & \tau_0 &= 0.28, & \mu_0^{(1)}&=-1.5, & \mu_0^{(2)}&=0.4, \\ \beta_0^{(1)} &= 2.8,   & \beta_0^{(2)} &= 2.8, & \sigma_0^{(1)}&= 0.9, & \sigma_0^{(2)}&= 0.9. \end{aligned} \] 
Note that the parameter $\tau$ for the time of the change-point is not covered by our theory because Assumptions~\ref{asmpt:reparametrization_continuous_time} and~\ref{asmpt:stochastic_equicontinuity_continuous_time} are violated. Nevertheless, our method recovers $\tau$ well. We suspect these assumptions can be weakened, but this requires a different proof technique. In contrast, our theory does cover unknown parameters for smooth change-points (e.g., parameterized by a logistic curve), as well as models with known change-point times.
\begin{figure}[h!]
%\captionsetup{width=\linewidth, margin=0pt}
  \centering \includegraphics[width=1\linewidth]{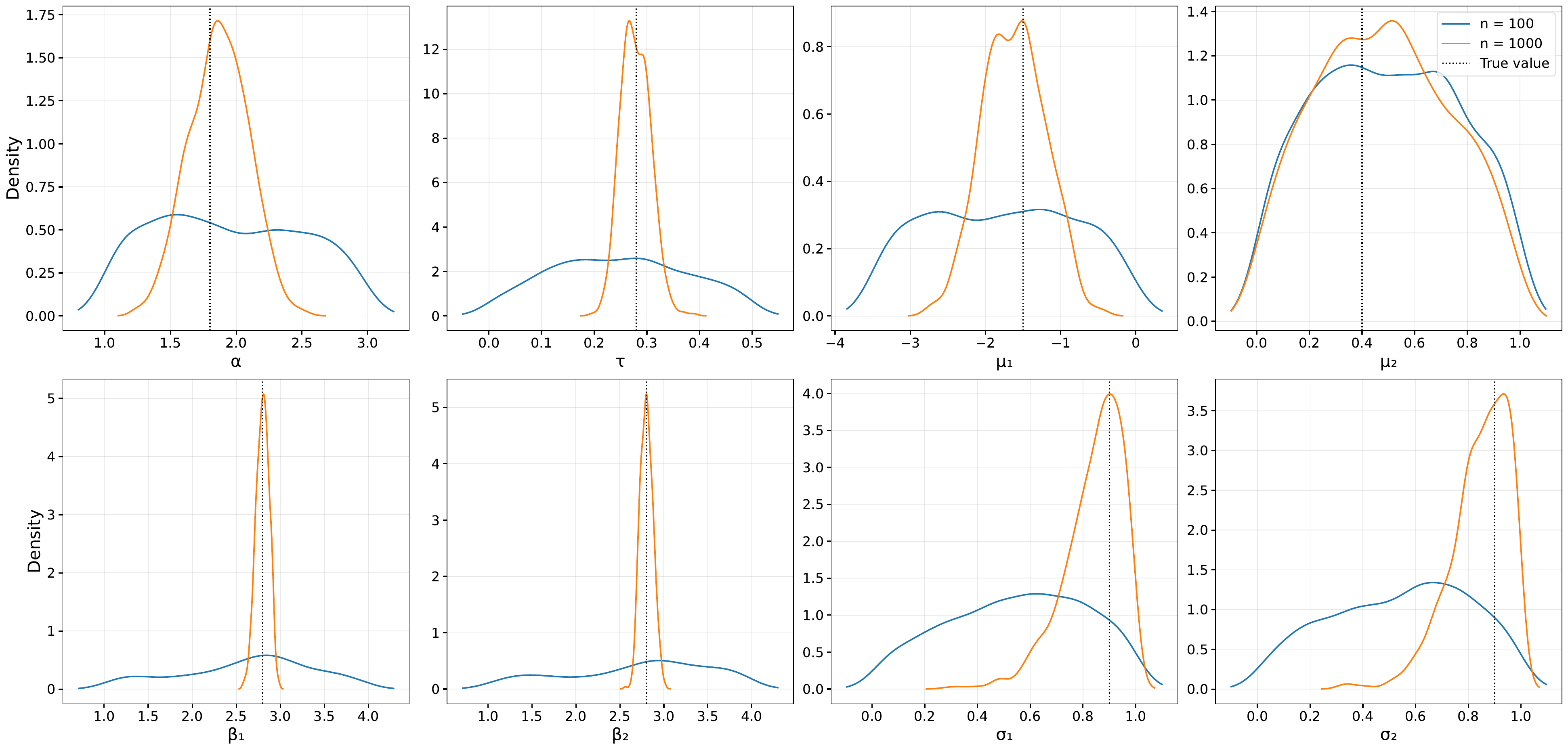}
  \caption{\textbf{Structural time series model.} Density of estimates using $17$ random features.}\label{fig:structural_timeseries_estimate_densities_params0-7} 
\end{figure}

%%% Note: we can handle change-points if we reparametrize e.g. instead of the parameter being the change point rescaled time, the parameter is the functional that experiences the change-point. so for any two different original parameters (rescaled times for the change point) we will have different distributions for a measurable amount of time due to differences in this functional due to the change-point 
%%% Note: that our theoretical guarantees only apply to models in which the distributions (at each rescaled time) change smoothly with the parameter. The abrupt change-point in this model violates this assumption. Nevertheless, in practice, our method still reliably estimates the location of the change-point. On the other hand, our theoretical framework allows for \textit{known} abrupt change-points and general forms of ``rough'' nonstationarity. We emphasize that our framework does provide guarantees for unknown \textit{smooth} change-points, which can be parametrized using logistic functions, for instance.

\section{Conclusion}
\label{section:conc}

We present an alternative to the existing paradigms for simulation-based estimation. Typically, simulation-based methods rely on user-chosen summary statistics or on neural network-based learned representations. Both approaches have large costs. In contrast, we reduce the problem of estimating a $p$-dimensional parameter to the problem of matching $2p+1$ generic random Fourier features of the observed and simulated data.

The principle of matching random features is applicable beyond models for time series data. This suggests a broader research program on random feature methods for various dependence structures, such as those arising in spatiotemporal and network data. The key technical ingredients are concentration inequalities under dependence and control of the variation of the data-generating mechanism, e.g., over space and time.

An inspection of our proofs reveals that random Fourier features are not necessary. In general, the random features need only form a function class that is rich enough to uniquely identify the distribution. This opens the door to alternative classes of random features for categorical, functional, and other types of non-Euclidean data.

% That is, one first estimates the parameter, then simulates datasets from the fitted model and re-estimates the parameter based on each simulated dataset, then finally calculates the empirical standard deviation or variance-covariance matrix of the re-estimates.
Uncertainty quantification via matching random features can be carried out using a model-based bootstrap procedure for confidence set construction. In a separate manuscript, under a stronger set of assumptions, we establish the asymptotic normality of both estimators, thereby enabling the construction of Wald-type confidence sets. Comprehensive simulation results concerning uncertainty quantification will be reported there.
%%% Note: We will also develop a method for estimating the corresponding asymptotic covariance matrix using either a simulation-based approach (estimate parameter, simulate from fitted model, Monte Carlo estimate of covariance) or time-series approach based on the observed time series.

\section{Disclosure statement}\label{disclosure-statement}

The authors declare that no conflicts of interest exist.

\section{Data availability statement}\label{data-availability-statement}

The synthetic data generated for the experiments will be made publicly available. 

% The code will be made publicly available in a GitHub repository and is available to editors and referees during peer review.

\putbib
\end{bibunit}

\clearpage

\appendix

\begin{bibunit}

\phantomsection\label{supplementary-material}
\bigskip

\begin{center}

{\large\bf SUPPLEMENTARY MATERIAL}

\end{center}

\begin{description}

\item[Supplement to ``Estimating dynamic models by matching random features'':] Proofs of results, technical lemmata, and additional discussions. (PDF)

\end{description}

\tableofcontents

%\section{BibTeX}

%We hope you've chosen to use BibTeX!\ If you have, please feel free to use the package natbib with any bibliography style you're comfortable with. The .bst file agsm has been included here for your convenience. 

\section{Proofs}\label{section:proofs}

\subsection{Preliminaries}\label{subsection:preliminaries}

% Note: we cannot directly use their "simple" finite witness theorem because it is only stated for countable unions of affine sets, so we'll just need to use the more general finite witness theorem. 

In this subsection, we state several definitions and theoretical results from different papers. We use the original notation from those papers.

First, we introduce the definitions of $\sigma$-subanalytic sets and $\sigma$-subanalytic functions, which require the definitions of globally subanalytic sets, globally semianalytic sets, and semianalytic sets. Figure~\ref{fig:diagram_of_subanalytic_sets_and_functions} (Figure 3 in \cite{amir_et_al_finite_witness_thm}) summarizes the relationships among the different classes of sets and functions that we discuss in this paper. For more information, see the appendix of \cite{amir_et_al_finite_witness_thm}.

\begin{figure}[h!]
%\captionsetup{width=\linewidth, margin=0pt}
  \centering \includegraphics[width=1\linewidth]{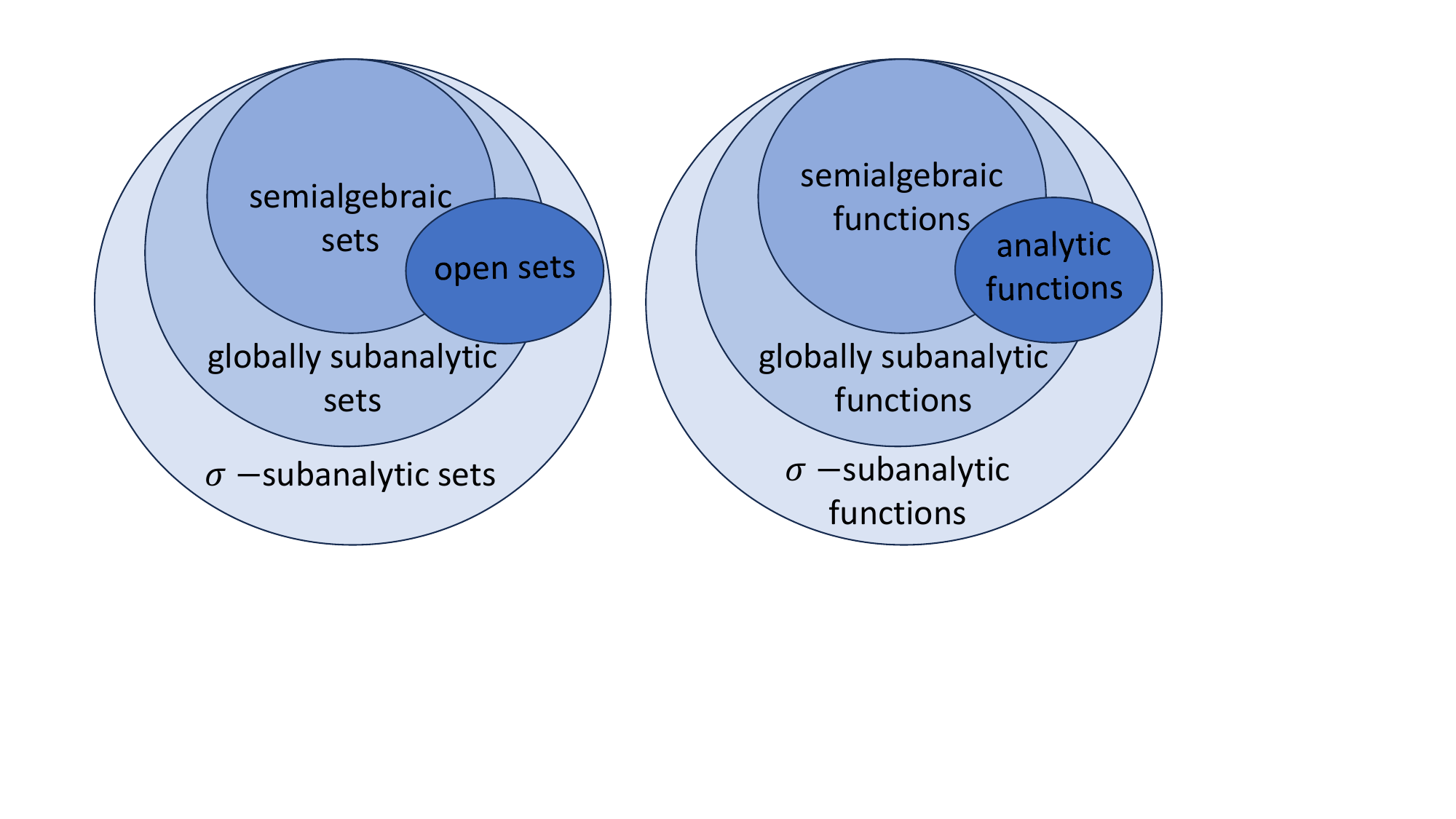}
  \caption{Relationships among classes of sets and functions (Figure 3 in \cite{amir_et_al_finite_witness_thm}).}\label{fig:diagram_of_subanalytic_sets_and_functions} 
\end{figure}

\begin{definition}[Definition 1.1.1 in \cite{Valette_subanalytic}]
	A subset $E\subseteq \mathbb{R}^n$ is called \textit{semianalytic} if it is locally defined
	by finitely many real analytic equalities and inequalities. Namely, for each $a\in \mathbb{R}^n$,
	there is a neighborhood $U$ of $a$, and real analytic functions $f_{ij}, g_{ij}$ on $U$, where
	$i=1,\ldots,r$ and $ j= 1,\ldots s_i$, such that
	\begin{equation}\label{eq:semi}
	E\cap U=\bigcup_{i=1}^r \bigcap_{j=1}^{s_i}\{\mathbf{z}\in U| \, g_{ij}(\mathbf{z})>0 \text{ and } f_{ij}(\mathbf{z})=0 \}. \end{equation}
\end{definition}

\begin{definition}[Definition 1.1.3 in \cite{Valette_subanalytic}]
	A subset $Z\subseteq \mathbb{R}^n$ is \textit{globally semianalytic} if $V_n(Z)$ is a semianalytic
	subset of $\mathbb{R}^n$, where $V_n : \mathbb{R}^n \to (-1,1)^n$ is the homeomorphism defined by
	$$V_n\left(\mathbf{z}=(z_1,\ldots,z_n)\right)=\left(\frac{z_1}{\sqrt{1+|\mathbf{z}|^2}},\ldots , \frac{z_n}{\sqrt{1+|\mathbf{z}|^2}} \right). $$
\end{definition}

\begin{definition}[Definition 1.1.6 in \cite{Valette_subanalytic}]
	A subset $E\subseteq \mathbb{R}^n$ is \textit{globally subanalytic} if it can be presented
	as a linear projection of a globally semianalytic set; more precisely, if there
	exists a globally semianalytic set $Z \subseteq \mathbb{R}^{n+p}$  such that $E=\pi(Z) $, with   $\pi:\mathbb{R}^{n+p}\to \mathbb{R}^n$ being the projection operator that omits the last $p$ coordinates while leaving the remaining coordinates unchanged.
\end{definition}

\begin{definition}[Definition A.10 in \cite{amir_et_al_finite_witness_thm}]\label{def:sigma_subanalytic_set}
We say that a subset $A\subseteq\mathbb{R}^D$ is $\sigma$\textit{-subanalytic} if it is a countable union of globally subanalytic subsets of $\mathbb{R}^D$.
\end{definition}

\begin{definition}[Definition A.13 in \cite{amir_et_al_finite_witness_thm}]\label{def:sigma_subanalytic_function}
Let $\mathbb{M}\subseteq \mathbb{R}^D$ be a $\sigma$-subanalytic set. We say that $f:\mathbb{M}\xrightarrow[]{}\mathbb{R}^L$ is a $\sigma$\textit{-subanalytic function} if its graph is a $\sigma$-subanalytic subset of $\mathbb{R}^{D+L}$.
\end{definition}

Second, we present some properties of $\sigma$-subanalytic sets.

\begin{theorem}[Proposition A.11 in \cite{amir_et_al_finite_witness_thm}]\label{thm:sigma_subanalytic_properties}
	Assume that $A,B \subseteq \mathbb{R}^D $ and $C \subseteq \mathbb{R}^M$ are $\sigma$-subanalytic sets. Then:
	\begin{enumerate}
		\item $A \cup B$ is $\sigma$-subanalytic. More generally, any countable union of $\sigma$-subanalytic sets is $\sigma$-subanalytic.
        \item $A\cap B$ is $\sigma$-subanalytic.
		\item $A\times C $ is $\sigma$-subanalytic.
		\item If $\pi:\mathbb{R}^D\to \mathbb{R}^L $ is a linear projection, then $\pi(A) $ is a $\sigma$-subanalytic set.
		\item $A $ is a \emph{countable} union of $C^\infty $ manifolds.
	\end{enumerate}
\end{theorem}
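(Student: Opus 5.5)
The plan is to reduce every item to the corresponding closure property of \emph{globally} subanalytic sets, and then use the fact that countable unions commute with intersections, products and images. Write $A=\bigcup_{i\in\mathbb{N}}A_i$, $B=\bigcup_{j\in\mathbb{N}}B_j$ and $C=\bigcup_{l\in\mathbb{N}}C_l$, where each $A_i,B_j\subseteq\mathbb{R}^D$ and each $C_l\subseteq\mathbb{R}^M$ is globally subanalytic (Definition~\ref{def:sigma_subanalytic_set}). The whole argument then rests on one imported fact: the globally subanalytic sets form an o-minimal structure, namely $\mathbb{R}_{\mathrm{an}}$. Concretely, they are closed under finite unions, finite intersections, complements (Gabrielov's complement theorem), Cartesian products and coordinate projections, and they contain all semialgebraic sets. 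I will cite \cite{Valette_subanalytic} for these properties rather than reprove them.

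\textbf{Items 1--3.} For item 1, a countable union of countable unions of globally subanalytic sets is again a countable union of globally subanalytic sets, since $\mathbb{N}\times\mathbb{N}$ is countable. For item 2,
\[
A\cap B=\bigcup_{i,j}(A_i\cap B_j),
\]
and each $A_i\cap B_j$ is globally subanalytic by closure under finite intersection. For item 3, $A\times C=\bigcup_{i,l}A_i\times C_l$. To see that each $A_i\times C_l$ is globally subanalytic, write $A_i=\pi(Z)$ and $C_l=\pi'(Z')$ with $Z,Z'$ globally semianalytic. Then $A_i\times C_l$ is the image of $Z\times Z'$ under a coordinate permutation followed by a coordinate projection. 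Alternatively, I can simply invoke closure of the o-minimal structure under products, which sidesteps checking directly that $Z\times Z'$ is globally semianalytic (the compactifying map $V_n$ does not factor over products, so that direct check is the awkward part).

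\textbf{Item 4.} Since $\pi(A)=\bigcup_i\pi(A_i)$, it suffices to treat a single globally subanalytic set $A_i$. For a general linear map $\pi:\mathbb{R}^D\to\mathbb{R}^L$, I would write
\[
\pi(A_i)=\mathrm{pr}\bigl((A_i\times\mathbb{R}^L)\cap\Gamma_\pi\bigr),
\]
where $\Gamma_\pi=\{(x,\pi(x))\}$ is the graph of $\pi$ and $\mathrm{pr}$ is the coordinate projection onto the last $L$ coordinates. The graph $\Gamma_\pi$ is a linear subspace, hence semialgebraic, hence globally subanalytic. So $\pi(A_i)$ is globally subanalytic by items 2--3 together with closure under coordinate projections.

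\textbf{Item 5.} Every globally subanalytic set admits a finite stratification into real-analytic, in particular $C^\infty$, submanifolds. This is cell decomposition in the o-minimal structure $\mathbb{R}_{\mathrm{an}}$, or equivalently the subanalytic stratification theorem (Bierstone--Milman). Taking the union over $i$ then expresses $A$ as a countable union of $C^\infty$ manifolds.

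The only genuinely deep ingredient is closure of globally subanalytic sets under intersection and complement, which rests on Gabrielov's theorem; everything else is bookkeeping with countable unions. I expect the main obstacle to be stating precisely which version of these facts to cite, rather than any new argument.
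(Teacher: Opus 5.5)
Your proposal is correct. The paper gives no proof of its own and simply imports this result as Proposition A.11 of \cite{amir_et_al_finite_witness_thm}; your argument, which reduces to the finite closure properties and stratification of globally subanalytic sets and then handles countable unions, is the standard argument behind that proposition. One small correction to your closing remark: complements are never used here, and closure of globally subanalytic sets under finite intersection is elementary rather than a consequence of Gabrielov's complement theorem.
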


Third, we state a corollary of the finite witness theorem (Theorem A.2) from \cite{amir_et_al_finite_witness_thm}. The notion of dimension is the Hausdorff dimension. When $\mathbb{W}=\mathbb{R}^q$ or is an open subset of $\mathbb{R}^q$, the result holds for almost every $\left( \vartheta_{1},\ldots,\vartheta_{2D+1}\right)$ w.r.t.\ Lebesgue measure.

\begin{theorem}[Corollary A.20 in \cite{amir_et_al_finite_witness_thm}]\label{thm:finite_witness}
Let $\mathbb{M} \subseteq \mathbb{R}^p$, $\mathbb{W} \subseteq \mathbb{R}^q$ be $\sigma$-subanalytic sets of dimension $D$ and $D_{\vartheta}$ respectively. Let $F:\mathbb{M} \times \mathbb{W} \to \mathbb{R}$ be a $\sigma$-subanalytic function. Define the set
$$\mathcal{N}=\{\left(\mathbf{x},\mathbf{y}\right) \in \mathbb{M} \times \mathbb{M} \mid F(\mathbf{x};\vartheta)=F(\mathbf{y};\vartheta),\ \forall \vartheta\in \mathbb{W} \}.$$
Suppose that for all $\left(\mathbf{x},\mathbf{y}\right) \in \mathbb{M} \times \mathbb{M} \setminus \mathcal{N}$
\begin{equation}\label{cond_dimension_defficiency_for_separation}
\dim\{\vartheta\in \mathbb{W}\mid F(\mathbf{x};\vartheta)=F(\mathbf{y};\vartheta) \}\leq D_{\vartheta}-1.\end{equation}
Then for generic $\left( \vartheta_{1},\ldots,\vartheta_{2D+1}\right) \in \mathbb{W}^{2D+1}$,
\begin{equation}\label{eqn:equality_of_functions_2D+1}
\mathcal{N}=\{\left(\mathbf{x},\mathbf{y}\right) \in \mathbb{M} \times \mathbb{M} \mid F(\mathbf{x};\vartheta_{i})=F(\mathbf{y};\vartheta_{i}),\ \forall i=1,\ldots 2D+1\} .
\end{equation}
Moreover, if $\mathbb{W}$ is an open and connected subset of $\mathbb{R}^q$, and $F(\mathbf{x};\vartheta)$ is analytic as a function of $\vartheta$ for all fixed $\mathbf{x} \in \mathbb{M}$, then condition~\eqref{cond_dimension_defficiency_for_separation} is not required, as it is automatically satisfied.
\end{theorem}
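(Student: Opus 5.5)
The plan is a dimension-counting argument in the product space, carried out in the o-minimal structure of globally subanalytic sets and then extended to countable unions. Write $\mathbb{M}=\bigcup_a M_a$, $\mathbb{W}=\bigcup_b W_b$ and $\operatorname{graph}(F)=\bigcup_c \Gamma_c$ as countable unions of globally subanalytic sets. Refining these decompositions, I will assume that on each piece $M_a\times M_{a'}\times W_b$ the restriction of $F$ has globally subanalytic graph. Since a countable union of non-generic sets is non-generic, it suffices to work piece by piece. On each piece everything is definable in $\mathbb{R}_{\mathrm{an}}$, so complements and universally quantified sets are again globally subanalytic.

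The central object is the bad set
\[
\mathcal{B}=\bigl\{(\mathbf{x},\mathbf{y},\vartheta_1,\ldots,\vartheta_{2D+1}) : (\mathbf{x},\mathbf{y})\in \mathbb{M}\times\mathbb{M}\setminus\mathcal{N},\ F(\mathbf{x};\vartheta_i)=F(\mathbf{y};\vartheta_i)\ \forall i\bigr\}\subseteq \mathbb{M}^2\times\mathbb{W}^{2D+1}.
\]
The parameter tuples for which \eqref{eqn:equality_of_functions_2D+1} fails are exactly $\pi(\mathcal{B})$, where $\pi$ is the projection onto $\mathbb{W}^{2D+1}$. The inclusion $\mathcal{N}\subseteq\{\cdots\}$ holds trivially for every tuple.

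I will show that $\mathcal{B}$ is $\sigma$-subanalytic, using the piecewise definability above together with Theorem~\ref{thm:sigma_subanalytic_properties}. I will then bound its dimension by the fibre-dimension inequality for definable sets, $\dim A\le \dim(\text{base})+\sup(\text{fibre dimension})$. For fixed $(\mathbf{x},\mathbf{y})\notin\mathcal{N}$, the fibre is the product $E_{\mathbf{x},\mathbf{y}}^{2D+1}$, where $E_{\mathbf{x},\mathbf{y}}=\{\vartheta: F(\mathbf{x};\vartheta)=F(\mathbf{y};\vartheta)\}$. By \eqref{cond_dimension_defficiency_for_separation}, this fibre has dimension at most $(2D+1)(D_\vartheta-1)$. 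Hence
\[
\dim\mathcal{B}\le 2D+(2D+1)(D_\vartheta-1)=(2D+1)D_\vartheta-1<\dim \mathbb{W}^{2D+1}.
\]
Linear projections do not increase dimension, so $\dim\pi(\mathcal{B})<(2D+1)D_\vartheta$. This is the meaning of "generic": when $\mathbb{W}$ is open in $\mathbb{R}^q$, $\pi(\mathcal{B})$ has Lebesgue measure zero, since a $\sigma$-subanalytic set of dimension below $q(2D+1)$ is a countable union of lower-dimensional $C^\infty$ manifolds by Theorem~\ref{thm:sigma_subanalytic_properties}(5).

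For the final clause, suppose $\mathbb{W}$ is open and connected and $\vartheta\mapsto F(\mathbf{x};\vartheta)$ is analytic. If $(\mathbf{x},\mathbf{y})\notin\mathcal{N}$, then $h(\vartheta)=F(\mathbf{x};\vartheta)-F(\mathbf{y};\vartheta)$ is analytic and not identically zero on the connected set $\mathbb{W}$. By the identity theorem, its zero set has empty interior and measure zero. Since $E_{\mathbf{x},\mathbf{y}}$ is $\sigma$-subanalytic, it is a countable union of manifolds, and none of these can have full dimension $q$. Hence $\dim E_{\mathbf{x},\mathbf{y}}\le q-1$, which is \eqref{cond_dimension_defficiency_for_separation}.

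The main obstacle is the definability step. $\sigma$-subanalytic sets are not closed under complement, and $\mathcal{N}$ is defined by a universal quantifier over $\mathbb{W}$, so it is not immediate that $\mathbb{M}^2\setminus\mathcal{N}$ is $\sigma$-subanalytic. I would first try to handle this by showing that, on each globally subanalytic piece, "$\forall\vartheta\in W_b$" yields a globally subanalytic set by o-minimality. I would then write $\mathbb{M}^2\setminus\mathcal{N}=\bigcup_b\{(\mathbf{x},\mathbf{y}):\exists\vartheta\in W_b,\ F(\mathbf{x};\vartheta)\ne F(\mathbf{y};\vartheta)\}$, which is a countable union of projections of definable sets, hence $\sigma$-subanalytic. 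The fibre-dimension inequality also needs care in passing from definable pieces to countable unions. Since dimension of a countable union is the supremum over the pieces, I expect this to go through.
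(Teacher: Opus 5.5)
The paper does not prove this statement; it imports Corollary A.20 of Amir et al.\ as a black box. Your bad-set dimension count is the standard argument for generic-separation results of this kind, and both the bookkeeping $\dim\mathcal{B}\le 2D+(2D+1)(D_\vartheta-1)<(2D+1)D_\vartheta$ and your treatment of the analytic case are correct.

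\textbf{A false step.} One step is false as stated, although nothing essential depends on it. You cannot in general refine to product pieces $M_a\times M_{a'}\times W_b$ on which $F$ is globally subanalytic. Take $\mathbb{M}=\mathbb{W}=\mathbb{R}$, with $F(x;\vartheta)=\sin(1/(x-\vartheta))$ for $x\neq\vartheta$ and $F(x;x)=0$.
\begin{itemize}
\item The graph is $\sigma$-subanalytic: split according to $|x-\vartheta|\in[1/(n+1),1/n)$, $|x-\vartheta|\ge 1$, and the diagonal. On each piece $1/(x-\vartheta)$ is bounded, so only restricted $\sin$ is involved.
\item Suppose countably many products $M_a\times W_b$ covered $\mathbb{R}^2$ with each $F|_{M_a\times W_b}$ globally subanalytic. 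Then the sets $M_a\cap W_b$ cover $\mathbb{R}$, so one of them contains an open interval $J$.
\item On $J\times J$, the slice $x\mapsto F(x;\vartheta_0)$ with $\vartheta_0\in J$ has infinitely many isolated zeros accumulating at $\vartheta_0$. No globally subanalytic function can have this.
\end{itemize}
So the piecewise definability you invoke is not available. This affects your argument that $\mathcal{B}$ is $\sigma$-subanalytic. It also affects your claim that $\{(\mathbf{x},\mathbf{y},\vartheta):\vartheta\in W_b,\ F(\mathbf{x};\vartheta)\neq F(\mathbf{y};\vartheta)\}$ is definable, which is unjustified as argued.

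\textbf{The repair.} Decompose along the graph, not the domain, and use only Theorem~\ref{thm:sigma_subanalytic_properties}. Let $\Gamma$ be the graph of $F$ and set
$S=\{(\mathbf{x},\mathbf{y},\vartheta,a,b):(\mathbf{x},\vartheta,a)\in\Gamma,\ (\mathbf{y},\vartheta,b)\in\Gamma\}$, which is $\sigma$-subanalytic by closure under products and intersections.
\begin{itemize}
\item Intersecting $S$ with the semialgebraic set $\{a=b\}$, respectively $\{a\neq b\}$, and projecting gives the sets $\{F(\mathbf{x};\vartheta)=F(\mathbf{y};\vartheta)\}$ and $\{F(\mathbf{x};\vartheta)\neq F(\mathbf{y};\vartheta)\}$. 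Both are therefore $\sigma$-subanalytic.
\item Projecting the second set onto $(\mathbf{x},\mathbf{y})$ gives $\mathbb{M}^2\setminus\mathcal{N}$ directly, so no universal quantifier over $\mathbb{W}$ ever has to be evaluated.
\item $\mathcal{B}$ is then a finite intersection of $\sigma$-subanalytic sets.
\end{itemize}
Apply the o-minimal fibre inequality to an arbitrary countable family of globally subanalytic pieces $\mathcal{B}_j$ of $\mathcal{B}$. Each fibre of $\mathcal{B}_j$ over $(\mathbf{x},\mathbf{y})$ lies in $E_{\mathbf{x},\mathbf{y}}^{2D+1}$, and then take the supremum over $j$.

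Two further steps rely on the sets being countable unions of $C^\infty$ manifolds, since Hausdorff dimension is not additive for arbitrary products. These are $\dim E_{\mathbf{x},\mathbf{y}}^{2D+1}\le (2D+1)\dim E_{\mathbf{x},\mathbf{y}}$ and $\dim \mathbb{W}^{2D+1}=(2D+1)D_\vartheta$, the latter giving the strict inequality at the end. State this explicitly.
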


The next result is the measurable maximum theorem from \cite{aliprantis_border_inf_dim_analysis_2006}.

\begin{theorem}[Theorem 18.19 in \cite{aliprantis_border_inf_dim_analysis_2006}]\label{thm:mble_max_theorem}
Let $X$ be a separable metrizable space and $(S,\Sigma)$ a measurable space. Let
$\varphi:S\xrightarrow[]{} X$ be a weakly measurable correspondence with nonempty
compact values, and suppose $f:S\times X\to\mathbb{R}$ is a Carath\'eodory function.
Define the value function $m:S\to\mathbb{R}$ by
$$m(s)=\max_{x\in\varphi(s)} f(s,x),$$
and the correspondence $\mu:S\xrightarrow[]{} X$ of maximizers by $$\mu(s)=\{x\in\varphi(s):f(s,x)=m(s)\}.$$
Then:
\begin{enumerate}
    \item The value function $m$ is measurable.
    \item The argmax correspondence $\mu$ has nonempty and compact values.
    \item The argmax correspondence $\mu$ is measurable and admits a measurable selector.
\end{enumerate}

\end{theorem}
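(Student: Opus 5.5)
We will prove the three claims in order, reducing everything to countably many measurable functions via a Castaing representation of $\varphi$.

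\textbf{Step 1: measurable selectors for $\varphi$.} Since $X$ is separable metrizable and $\varphi$ is weakly measurable with nonempty compact (hence closed) values, $\varphi$ admits a Castaing representation. That is, there are measurable selectors $g_1,g_2,\ldots:S\to X$ with $\varphi(s)=\overline{\{g_n(s):n\in\mathbb{N}\}}$ for every $s$. We would invoke the Kuratowski--Ryll-Nardzewski selection theorem in the version for compact-valued correspondences into separable metrizable spaces; this is Aliprantis--Border, Section 18.

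\textbf{Step 2: measurability of $m$.} A Carath\'eodory function is jointly $\Sigma\otimes\mathcal{B}(X)$-measurable because $X$ is separable metrizable. Hence each composite $s\mapsto f(s,g_n(s))$ is measurable. For fixed $s$, the map $x\mapsto f(s,x)$ is continuous and $\{g_n(s)\}$ is dense in the compact set $\varphi(s)$. Therefore the maximum exists and
\[
m(s)=\max_{x\in\varphi(s)}f(s,x)=\sup_{n}f(s,g_n(s)),
\]
and $m$ is measurable as a countable supremum of measurable functions.

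\textbf{Step 3: $\mu$ has nonempty compact values.} For each $s$, the set $\mu(s)=\{x\in\varphi(s):f(s,x)\ge m(s)\}$ is a closed subset of the compact set $\varphi(s)$. It is nonempty because a continuous function attains its maximum on a nonempty compact set.

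\textbf{Step 4 (main obstacle): weak measurability of $\mu$.} Let $G\subseteq X$ be open. Write $G=\bigcup_k F_k$ with $F_k=\{x:d(x,G^c)\ge 1/k\}$ closed. It then suffices to show that $\{s:\mu(s)\cap F\neq\emptyset\}\in\Sigma$ for every closed $F$.

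Fix such an $F$ and let $N_j$ be the open $1/j$-neighbourhood of $F$. We claim the following equivalence:
\[
\mu(s)\cap F\neq\emptyset \iff \text{for all } j,\ \sup\{f(s,g_n(s)) : g_n(s)\in N_j\}\ge m(s)-1/j,
\]
with the convention $\sup\emptyset=-\infty$.

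For "$\Rightarrow$", take $x\in\mu(s)\cap F$. By density and continuity, there is a $g_n(s)\in N_j$ close enough to $x$ that $f(s,g_n(s))\ge m(s)-1/j$.

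For "$\Leftarrow$", pick points $g_{n_j}(s)\in N_j$ with $f(s,g_{n_j}(s))\ge m(s)-2/j$. By compactness of $\varphi(s)$, a subsequence converges to some $x$. Then $x\in F$ because $F$ is closed, and $f(s,x)\ge m(s)$ by continuity, so $x\in\mu(s)\cap F$.

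Each set $\{s: g_n(s)\in N_j\}$ is measurable, so each restricted supremum is measurable. Together with Step 2, this makes the right-hand side a countable intersection of measurable sets. Hence $\mu$ is weakly measurable.

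\textbf{Step 5: measurable selector for $\mu$.} By Steps 3 and 4, $\mu$ is weakly measurable with nonempty compact values in a separable metrizable space. The same selection theorem used in Step 1 then gives a measurable selector for $\mu$.

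The delicate point is Step 4. There, compactness of $\varphi(s)$ is what allows the passage from approximate maximizers near $F$ to an exact maximizer in $F$. Some care is also needed because $X$ need not be complete, which is why we rely on the compact-valued form of the selection theorem.
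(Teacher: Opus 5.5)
The paper gives no proof of this statement. It is imported verbatim from Aliprantis--Border and used only once, in Step~2.2 of the proof of Theorem~\ref{thm:injective_continuous_time}. There only item~1 is needed: Borel measurability of the value function $b_{\ell}$ in \eqref{eqn:function_being_maximized}. So the comparison is between citing and proving, and your proof is correct and self-contained.

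What your route buys is that it shows exactly where each hypothesis enters:
\begin{itemize}
\item Separability of $X$ gives the Castaing representation and the joint measurability of the Carath\'eodory function $f$. Together these give $m=\sup_n f(\cdot,g_n(\cdot))$.
\item Compactness of the values gives attainment of the maximum. In Step~4 it also lets you pass from approximate maximizers near $F$ to an exact maximizer in $F$.
\end{itemize}
Both directions of your Step~4 equivalence check out. One direction uses density of $\{g_n(s)\}$ in $\varphi(s)$ and continuity of $f(s,\cdot)$. The other uses sequential compactness of $\varphi(s)$, continuity of $d(\cdot,F)$ and closedness of $F$. For the paper's application, your Steps~1--2 alone would suffice.

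Two small points.

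First, Step~4 actually establishes $\{s:\mu(s)\cap F\neq\emptyset\}\in\Sigma$ for every closed $F$. That is exactly what \emph{measurable} means in item~3 in Aliprantis--Border's terminology. So you have proved more than the weak measurability you announce. You should say so, and then deduce weak measurability through the decomposition $G=\bigcup_k F_k$ as you do.

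Second, you only cite the compact-valued selection and Castaing theorems for non-complete $X$. They follow in one line by embedding $X$ in a compact metrizable (hence Polish) space $\widehat{X}$:
\begin{itemize}
\item the values $\varphi(s)$ remain compact, hence closed, in $\widehat{X}$;
\item for open $U\subseteq\widehat{X}$ one has $\{s:\varphi(s)\cap U\neq\emptyset\}=\{s:\varphi(s)\cap(U\cap X)\neq\emptyset\}\in\Sigma$;
\item selectors obtained in $\widehat{X}$ take values in $\varphi(s)\subseteq X$, and they are measurable into $X$ because every Borel subset of $X$ is the trace of a Borel subset of $\widehat{X}$.
\end{itemize}
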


The next result is the Tarski--Seidenberg theorem presented in \cite{bierstone_milman_subanalytic_set}.
\begin{theorem}[Theorem 1.5 in \cite{bierstone_milman_subanalytic_set}]\label{thm:tarski_seidenberg_theorem}
    The image of a semialgebraic set $X\subset \mathbb{R}^{n+1}$ by the projection $\pi:\mathbb{R}^{n+1}\xrightarrow[]{}\mathbb{R}^n$ is semialgebraic.
\end{theorem}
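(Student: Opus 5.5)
Since projection commutes with finite unions, it suffices to treat a basic set
\[
X=\{(x,t)\in\mathbb{R}^n\times\mathbb{R} : f_1(x,t)=\cdots=f_a(x,t)=0,\ g_1(x,t)>0,\ldots,g_b(x,t)>0\}.
\]
Here every $f_i,g_j\in\mathbb{R}[x_1,\ldots,x_n][t]$. The inequalities of type $<$ in Definition~\ref{def:semialgebraic_set} are rewritten as $-g>0$. Then $\pi(X)$ is the set of $x$ for which the one-variable system in $t$, with coefficients polynomial in $x$, has a real solution. I would therefore follow the classical Cohen--H\"ormander route to quantifier elimination.

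The key claim is the following. For any finite family $\mathcal{P}=\{P_1,\ldots,P_s\}\subset\mathbb{R}[x][t]$, there is a finite partition of $\mathbb{R}^n$ into semialgebraic sets $A_1,\ldots,A_N$ with the following property: on each $A_\ell$ the combinatorial \emph{sign table} of $\mathcal{P}(x,\cdot)$ on $\mathbb{R}$ is independent of $x$. The sign table records the number and ordering of the real roots of all $P_i(x,\cdot)$, together with the sign vector of $(P_1,\ldots,P_s)$ at each root and on each open interval between consecutive roots.

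Given the claim, $x\in\pi(X)$ holds iff the sign table contains a point or interval where $f_i=0$ and $g_j>0$ for all $i,j$. This condition is constant on each $A_\ell$, so $\pi(X)$ is a finite union of $A_\ell$'s and hence semialgebraic.

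I would prove the claim by induction on the maximal $t$-degree in $\mathcal{P}$, in three steps.

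\textbf{Fixing degrees.} Split $\mathbb{R}^n$ according to the signs of all leading coefficients (and lower coefficients as needed), which are polynomials in $x$. Each resulting piece is semialgebraic, and on it each $P_i$ has a fixed $t$-degree and fixed leading sign. After truncating, every polynomial has a nonvanishing leading coefficient on the piece.

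\textbf{Enlarging the family.} Let $P$ have maximal degree $d$. Replace $P$ by its derivative $P'$, and replace each other $Q\in\mathcal{P}$ by the pseudo-remainder of $P$ modulo $Q$. Multiplying by a power of the leading coefficient of $Q$ keeps everything polynomial in $x$. Do the same with the remainder of $P$ modulo $P'$. The new family has smaller total degree profile, so by induction there is a semialgebraic partition on which its sign table is constant.

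\textbf{Reconstructing the table.} Show that the sign table of $\mathcal{P}$ is determined by the sign table of the new family. At a root $\xi$ of $Q$, the sign of $P(\xi)$ equals the sign of the remainder at $\xi$, up to the known sign of the leading-coefficient power. On each interval between consecutive roots of $P'$, $P$ is monotone, so it has at most one root there. The signs of $P$ at the interval endpoints, and at $\pm\infty$ via its leading sign and degree, locate that root and determine the signs of $P$ on the subintervals.

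I expect the main obstacle to be this reconstruction step, specifically the uniform bookkeeping. Degree drops, pseudo-division multipliers, and coincident roots among different polynomials all have to be handled so that the sign table of $\mathcal{P}$ is a \emph{function} of the smaller table together with finitely many sign conditions on polynomials in $x$. I would address this by refining the partition further by the signs of all coefficients that appear, and by treating the ``root of $P$'' events as new points inserted into the ordered list of roots. Once this is in place, the induction closes, and the base case of constant-in-$t$ polynomials is immediate.
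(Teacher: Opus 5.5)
The paper does not prove this statement. It quotes it from Bierstone--Milman and uses it as a black box, only to conclude that $\tilde\Theta_u=g_u(\Theta)$ is semialgebraic in the proof of Theorem~\ref{thm:injective_continuous_time}. Your Cohen--H\"ormander sign-table induction is the standard self-contained proof. Your reduction to basic sets, the passage from constant sign tables to $\pi(X)$, and the pseudo-remainder/monotonicity idea are all sound. As written, however, the ``Enlarging the family'' step fails.

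You \emph{replace} each $Q\neq P$ by $\mathrm{prem}(P,Q)$, so $Q$ leaves the family. Your reconstruction needs the positions of the roots of $Q$, since that is where you read off the sign of $P$. The sign table of $\mathcal{P}$ also records the sign of $Q$ on every cell. Once $Q$ is dropped, neither can be recovered. For a concrete failure, take $n=2$, $\mathcal{P}=\{t^3,\,t-x_1,\,t-x_2\}$ and $P=t^3$. Your new family is $\{3t^2,\,x_1^3,\,x_2^3\}$, plus $\mathrm{prem}(t^3,3t^2)=0$. At $x=(1,2)$ and at $x=(2,1)$ this family has the same sign table, and every coefficient of every old and new polynomial has the same sign. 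Yet the roots of $t-x_1$ and $t-x_2$ occur in opposite orders, so the sign tables of $\mathcal{P}$ differ. Refining by coefficient signs therefore does not rescue it.

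The fix is to keep the old members and \emph{add} the remainders:
\[
\mathcal{P}_{\mathrm{new}}=\bigl(\mathcal{P}\setminus\{P\}\bigr)\cup\{P'\}\cup\bigl\{\mathrm{prem}(P,Q) : Q\in(\mathcal{P}\setminus\{P\})\cup\{P'\}\bigr\},
\]
after discarding members that vanish identically on the current piece. The family now grows and its total degree can increase, so ``smaller total degree profile'' needs a genuine well-founded measure. Let $n_j$ be the number of members of $t$-degree $j$, and order $(n_d,n_{d-1},\ldots,n_0)$ lexicographically. Truncation only lowers degrees. The step deletes one member of top degree $d$ and adds only members of degree $<d$, since $\deg P'=d-1$ and $\deg \mathrm{prem}(P,Q)<\deg Q\le d$. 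So the vector strictly drops.

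With this family your reconstruction goes through. The table of $\mathcal{P}_{\mathrm{new}}$ gives the ordered roots of every $Q$ and of $P'$. It also gives the sign of $P$ at each of them, from the remainder together with the fixed sign of the leading coefficient of $Q$ on the piece. Between consecutive such points $P$ is strictly monotone, and the signs at $\pm\infty$ come from the leading coefficient and the parity of $d$. Its roots are therefore located, and the table of $\mathcal{P}$ follows by forgetting the auxiliary columns and merging cells.
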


The following result is the uniqueness theorem for characteristic functions.

\begin{theorem}[Proposition 2.9 in \cite{multivariate_stats_book}]\label{thm:uniqueness_characteristic_function} The characteristic function $c:\mathbb{R}^n \to \mathbb{C}$ of $\mathbf{x}$, defined by
$c(\mathbf{t})
=
c_{\mathbf{x}}(\mathbf{t})
=
\mathbb{E}(e^{i\mathbf{t}'\mathbf{x}})$,
is unique: $$\mathbf{x} \overset{d}{=} \mathbf{y}
\quad\iff\quad
c_{\mathbf{x}}(\mathbf{t})
=
c_{\mathbf{y}}(\mathbf{t}),
\quad
\forall\,\mathbf{t}\in\mathbb{R}^n.$$
\end{theorem}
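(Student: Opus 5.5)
The forward implication is immediate: if $\mathbf{x}\overset{d}{=}\mathbf{y}$, then $t\mapsto \mathbb{E}(e^{i\mathbf{t}'\mathbf{x}})$ and $t\mapsto\mathbb{E}(e^{i\mathbf{t}'\mathbf{y}})$ are expectations of the same bounded measurable function under the same law, so they coincide for every $\mathbf{t}\in\mathbb{R}^n$. For the converse, I would use Gaussian smoothing. The plan is to show that equal characteristic functions force $\mathbb{E}g(\mathbf{x})=\mathbb{E}g(\mathbf{y})$ for every bounded continuous $g$. Since bounded continuous functions determine a Borel probability measure on $\mathbb{R}^n$ (they determine the measure of every closed set by approximating indicators), this gives $\mathbf{x}\overset{d}{=}\mathbf{y}$.

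\emph{Step 1 (smoothed densities).} Let $\mathbf{Z}\sim N(0,I_n)$ be independent of $\mathbf{x}$ and $\mathbf{y}$, and fix $\sigma>0$. The vector $\mathbf{x}+\sigma\mathbf{Z}$ has density $f_\sigma(\mathbf{u})=\mathbb{E}\,\phi_\sigma(\mathbf{u}-\mathbf{x})$, where $\phi_\sigma$ is the $N(0,\sigma^2 I_n)$ density. Next I would write the Gaussian density through its own Fourier representation:
$$\phi_\sigma(\mathbf{v})=(2\pi)^{-n}\int_{\mathbb{R}^n} e^{-i\mathbf{t}'\mathbf{v}}e^{-\sigma^2\|\mathbf{t}\|^2/2}\,d\mathbf{t}.$$
Substituting this and exchanging the expectation with the $d\mathbf{t}$ integral gives
$$f_\sigma(\mathbf{u})=(2\pi)^{-n}\int_{\mathbb{R}^n} e^{-i\mathbf{t}'\mathbf{u}}\,c_{\mathbf{x}}(\mathbf{t})\,e^{-\sigma^2\|\mathbf{t}\|^2/2}\,d\mathbf{t}.$$
The exchange is justified by Fubini: the integrand is bounded in modulus by $e^{-\sigma^2\|\mathbf{t}\|^2/2}$, which is integrable in $\mathbf{t}$ and does not depend on $\mathbf{x}$. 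The same formula holds for $\mathbf{y}$ with $c_{\mathbf{y}}$ in place of $c_{\mathbf{x}}$. Since $c_{\mathbf{x}}=c_{\mathbf{y}}$, the two densities agree pointwise, so $\mathbf{x}+\sigma\mathbf{Z}\overset{d}{=}\mathbf{y}+\sigma\mathbf{Z}$ for every $\sigma>0$.

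\emph{Step 2 (removing the smoothing).} Fix a bounded continuous $g$. As $\sigma\to0$ we have $g(\mathbf{x}+\sigma\mathbf{Z})\to g(\mathbf{x})$ pointwise, by continuity of $g$. Since $g$ is bounded, dominated convergence gives $\mathbb{E}g(\mathbf{x}+\sigma\mathbf{Z})\to\mathbb{E}g(\mathbf{x})$, and likewise for $\mathbf{y}$. By Step 1 the two approximating sequences are equal for every $\sigma>0$, so their limits agree: $\mathbb{E}g(\mathbf{x})=\mathbb{E}g(\mathbf{y})$.

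\emph{Step 3 (conclusion).} For a closed set $C\subseteq\mathbb{R}^n$, take $g_k(\mathbf{u})=\max\{0,\,1-k\,\mathrm{dist}(\mathbf{u},C)\}$. Each $g_k$ is bounded and continuous, and $g_k\downarrow\mathbf{1}_C$ pointwise. By Step 2 and monotone convergence, $\Pr(\mathbf{x}\in C)=\Pr(\mathbf{y}\in C)$. Closed sets form a $\pi$-system that generates the Borel $\sigma$-algebra, so Dynkin's $\pi$–$\lambda$ theorem gives $\mathbf{x}\overset{d}{=}\mathbf{y}$.

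The only genuinely delicate point is Step 1. It requires the explicit Fourier representation of the Gaussian density, which follows from completing the square in the one-dimensional Gaussian integral and taking products over coordinates. It also requires the Fubini exchange, which is safe here precisely because the Gaussian factor $e^{-\sigma^2\|\mathbf{t}\|^2/2}$ makes the integrand absolutely integrable. The rest of the argument is routine measure theory.
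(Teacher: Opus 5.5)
Your proof is correct. It is the standard Gaussian-smoothing argument: Fourier inversion for $\mathbf{x}+\sigma\mathbf{Z}$, then $\sigma\to 0$, then a $\pi$--$\lambda$ argument on closed sets.

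The paper gives no proof of this statement. It quotes it as Proposition~2.9 of the cited multivariate statistics text and uses it as a black box. It needs only the direction $c_{\mathbf{x}}\equiv c_{\mathbf{y}}\Rightarrow\mathbf{x}\overset{d}{=}\mathbf{y}$, which appears in Step~2 of the proofs of Theorems~\ref{thm:injective_discrete_time} and~\ref{thm:injective_continuous_time}. There, the real and imaginary parts of the characteristic function are read off from the cosine features with $\alpha_i=0$ and $\alpha_i=-\pi/2$. For a classical fact, the citation is all the paper needs.

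Your route buys self-containedness and a bit more information. Step~1 gives an explicit inversion formula for the smooth density of $\mathbf{x}+\sigma\mathbf{Z}$ in terms of $c_{\mathbf{x}}$. Steps~2--3 reach the conclusion through bounded continuous test functions, which fits the convergence-determining viewpoint of the paper's introduction.

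One step should be tightened: the Fourier representation of $\phi_\sigma$. ``Completing the square'' in $\int_{\mathbb{R}} e^{-itv-\sigma^2t^2/2}\,dt$ requires a complex shift of the integration variable. Justify it either with Cauchy's theorem, or by computing the $N(0,1)$ characteristic function directly: differentiate under the integral and integrate by parts to get $\psi'(s)=-s\,\psi(s)$.
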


\subsection{Proof of Theorem~\ref{thm:injective_discrete_time}}\label{subsection:proof_of_injective_discrete_time}

\paragraph*{Step 1: Limiting expectations.} We claim that under Assumptions~\ref{asmpt:algorithmic_dynamic_model_discrete_time}-\ref{asmpt:weak_convergence_of_time_avg_of_distributions_discrete_time}, $\forall \theta \in \Theta$, the limiting expectation values of the $k$ random features of the time series are given by \begin{equation*}    \Phi(\theta)=\int_{\mathbb{R}^{(m+1)\times d}}\varphi(x) \, dP_{\theta}(x).\end{equation*} That is, for all $\theta \in \Theta$ and all $r \in \{0,1,\ldots,s\}$, we have, as $n\xrightarrow[]{}\infty$, 
\begin{equation}\label{eqn:convergence_to_Phi_theta_discrete_time}\left\|\frac{1}{n-m}\sum_{t=m+1}^n \mathbb{E}_{\theta}\left[ \varphi\left(\left[G_{t-m}^{(n,r)}(\symbfit{\varepsilon}_{t-m}^{(r)},\theta),\ldots,G_t^{(n,r)}(\symbfit{\varepsilon}_t^{(r)},\theta)\right]^{\top}\right)\right] - \Phi(\theta)\right\|=o(1).\end{equation}  Note that the time series have this causal representation by Assumption~\ref{asmpt:algorithmic_dynamic_model_discrete_time}.

Observe that \begin{align*} &
\left\|
\frac{1}{n-m}\sum_{t=m+1}^n 
\mathbb{E}_{\theta}\left[
\varphi\left(
\left[G_{t-m}^{(n,r)}(\symbfit{\varepsilon}_{t-m}^{(r)},\theta),\ldots,
G_t^{(n,r)}(\symbfit{\varepsilon}_t^{(r)},\theta)\right]^{\top}
\right)
\right] - \Phi(\theta)
\right\|
\\
&
\overset{(1)}{\leq} \left\|
\begin{aligned}
&\frac{1}{n-m}\sum_{t=m+1}^n 
\mathbb{E}_{\theta}\left[
\varphi\left(
\left[G_{t-m}^{(n,r)}(\symbfit{\varepsilon}_{t-m}^{(r)},\theta),\ldots,
G_t^{(n,r)}(\symbfit{\varepsilon}_t^{(r)},\theta)\right]^{\top}
\right)
\right] \\
&-
\frac{1}{n-m}\sum_{t=m+1}^n 
\mathbb{E}_{\theta}\left[
\varphi\left(
\left[G_{t-m}^{(r)}(\symbfit{\varepsilon}_{t-m}^{(r)},\theta),\ldots,
G_t^{(r)}(\symbfit{\varepsilon}_t^{(r)},\theta)\right]^{\top}
\right)
\right]
\end{aligned}
\right\| 
\\ &
+
\norm{
\frac{1}{n-m}\sum_{t=m+1}^n 
\mathbb{E}_{\theta}\left[
\varphi\left(
\left[G_{t-m}^{(r)}(\symbfit{\varepsilon}_{t-m}^{(r)},\theta),\ldots,
G_t^{(r)}(\symbfit{\varepsilon}_t^{(r)},\theta)\right]^{\top}
\right)
\right] - \Phi(\theta)
} 
\\
&
\overset{(2)}{=} o(1),
\end{align*} where (1) is from adding and subtracting the corresponding term with the limiting mapping and by applying the triangle inequality, and (2) holds because both terms are $o(1)$. Specifically, the second term is $o(1)$ by upper bounding the $\ell_2$ norm $\left\|\cdot\right\|$ by the $\ell_1$ norm (i.e., over the $k$ random features) using the monotonicity of norms, linearity, and then applying the weak convergence from Assumption~\ref{asmpt:weak_convergence_of_time_avg_of_distributions_discrete_time}. The first term is $o(1)$ because the approximations can be asymptotically replaced by the limiting mappings, i.e., because for all $\theta \in \Theta$ and all $r \in \{0,1,\ldots,s\}$, we have \begin{align*} &
\left\|
\begin{aligned}
&\frac{1}{n-m}\sum_{t=m+1}^n 
\mathbb{E}_{\theta}\left[
\varphi\left(
\left[G_{t-m}^{(n,r)}(\symbfit{\varepsilon}_{t-m}^{(r)},\theta),\ldots,
G_t^{(n,r)}(\symbfit{\varepsilon}_t^{(r)},\theta)\right]^{\top}
\right)
\right] \\
&-
\frac{1}{n-m}\sum_{t=m+1}^n 
\mathbb{E}_{\theta}\left[
\varphi\left(
\left[G_{t-m}^{(r)}(\symbfit{\varepsilon}_{t-m}^{(r)},\theta),\ldots,
G_t^{(r)}(\symbfit{\varepsilon}_t^{(r)},\theta)\right]^{\top}
\right)
\right]
\end{aligned}
\right\|
\\
&\overset{(1)}{\leq} 
\frac{1}{n-m}\sum_{t=m+1}^n  \left\|
\begin{aligned}
& 
\varphi\left(
\left[G_{t-m}^{(n,r)}(\symbfit{\varepsilon}_{t-m}^{(r)},\theta),\ldots,
G_t^{(n,r)}(\symbfit{\varepsilon}_t^{(r)},\theta)\right]^{\top}
\right)
 \\
& - 
\varphi\left(
\left[G_{t-m}^{(r)}(\symbfit{\varepsilon}_{t-m}^{(r)},\theta),\ldots,
G_t^{(r)}(\symbfit{\varepsilon}_t^{(r)},\theta)\right]^{\top}
\right)
\end{aligned}
\right\|_{\mathcal{L}^1(\theta)}
\\
& \overset{(2)}{\leq}  L_{\varphi} 
\frac{1}{n-m}\sum_{t=m+1}^n   \left\|
\begin{aligned}
&  
\mathrm{Vec}\left(G_{t-m}^{(n,r)}(\symbfit{\varepsilon}_{t-m}^{(r)},\theta),\ldots,
G_t^{(n,r)}(\symbfit{\varepsilon}_t^{(r)},\theta)\right)
 \\
&- \mathrm{Vec}
\left(G_{t-m}^{(r)}(\symbfit{\varepsilon}_{t-m}^{(r)},\theta),\ldots,
G_t^{(r)}(\symbfit{\varepsilon}_t^{(r)},\theta)\right)
\end{aligned}
\right\|_{\mathcal{L}^1(\theta)}
\\
& \overset{(3)}{\leq}  L_{\varphi} 
\frac{1}{n-m}\sum_{t=m+1}^n \sum_{j=0}^m   \left\|
G_{t-j}^{(n,r)}(\symbfit{\varepsilon}_{t-j}^{(r)},\theta) - 
G_{t-j}^{(r)}(\symbfit{\varepsilon}_{t-j}^{(r)},\theta)
\right\|_{\mathcal{L}^1(\theta)}
\\
& \overset{(4)}{\leq}
 L_{\varphi} 
\frac{1}{1-m/n} (m+1) \left(\frac{1}{n}\sum_{t=1}^n    \left\|G_{t}^{(n,r)}(\symbfit{\varepsilon}_{0}^{(r)},\theta) - 
G_{t}^{(r)}(\symbfit{\varepsilon}_{0}^{(r)},\theta)
\right\|_{\mathcal{L}^1(\theta)}\right)
\\
& \overset{(5)}{=}  o(1),
\end{align*}
where (1) holds by linearity of summation, the triangle inequality, linearity of expectation, and Jensen's inequality, (2) holds because the random Fourier features $\varphi$ are $L_{\varphi}$-Lipschitz with Lipschitz constant given by~\eqref{eqn:lipschitz_constant_rff}, (3) holds by the triangle inequality and Minkowski inequality, (4) follows by upper bounding the original quantity by $m+1$ times the full sum (i.e., by including more non-negative terms in the summation), factoring out $1/n$, and noting that each $(G_t^{(i,r)}(\symbfit{\varepsilon}_j,\theta))_{j\in\mathbb{Z}}$, $(G_t^{(r)}(\symbfit{\varepsilon}_j,\theta))_{j\in\mathbb{Z}}$ is a stationary ergodic process for each fixed $t\in \mathbb{N}$, $i\in \mathbb{N}$, and $r=0,1,\ldots,s$, which allows us to shift the noise inputs, and (5) is from applying the convergence condition from Assumption~\ref{asmpt:limiting_mapping_discrete_time}.

\paragraph*{Step 2: Injectivity.} The main ingredient is the finite witness theorem (Theorem~\ref{thm:finite_witness}), so we show that its assumptions are satisfied. Denote $$\mathbb{W}=\mathbb{R}^{d(m+1)}\times (-\pi,\pi).$$

By Assumption~\ref{asmpt:statistical_model_discrete_time}, $\Theta$ is a compact subset of $\mathbb{R}^p$ with non-empty interior, so the Hausdorff dimension is $\mathrm{dim}_H(\Theta)=p$. By Assumption~\ref{asmpt:statistical_model_discrete_time}, $\Theta$ is a semialgebraic subset of $\mathbb{R}^p$, hence it is $\sigma$-subanalytic; see Figure~\ref{fig:diagram_of_subanalytic_sets_and_functions}.

By construction, the random parameters $\vartheta_i$, $i=1,\ldots,k$ from~\eqref{eqn:parameters_for_the_random_features} for the random Fourier features $\varphi_i$, $i=1,\ldots,k$ from~\eqref{eqn:random_Fourier_features_order_m} are each random variables taking values in $\mathbb{W}$, which is semialgebraic, so, as explained in Figure~\ref{fig:diagram_of_subanalytic_sets_and_functions}, it is $\sigma$-subanalytic.

By Assumption~\ref{asmpt:statistical_model_discrete_time}, the map $(\theta,\omega)\mapsto \psi_{\theta}(\omega)$ admits a jointly real analytic extension to $O \times \mathbb{R}^{(m+1)\times d}$, and hence each $(\theta,\vartheta_i)\mapsto \int_{\mathbb{R}^{(m+1)\times d}}\varphi_i(x) \, dP_{\theta}(x)$, $i=1,\ldots,k$ admits a jointly real analytic extension to $O \times \mathbb{W}$ by using the definition of the characteristic function and each random Fourier feature. Note that the graph of this extension is $\sigma$-subanalytic, and that $\Theta$ is a $\sigma$-subanalytic set as discussed previously. Therefore, since $\sigma$-subanalytic sets are closed under finite intersections and Cartesian products by Theorem~\ref{thm:sigma_subanalytic_properties}, we have that the intersection of the graph of this extension with $\Theta \times \mathbb{W}\times \mathbb{R}$ is $\sigma$-subanalytic.

$\mathbb{W}$ is open and connected, and $(\theta,\vartheta_i)\mapsto \int_{\mathbb{R}^{(m+1)\times d}}\varphi_i(x) \, dP_{\theta}(x)$ is analytic as a function of $\vartheta_i$ for all fixed $\theta\in\Theta$. Hence, the condition~\eqref{cond_dimension_defficiency_for_separation} from the finite witness theorem (Theorem~\ref{thm:finite_witness}) is satisfied.

Let $x_1,\ldots,x_{m+1}$ be vectors in $\mathbb{R}^d$, and define $x=(x_1,\ldots,x_{m+1})^{\top}\in\mathbb{R}^{(m+1)\times d}$. Letting $a=(a_1^{\top},\ldots,a_{m+1}^{\top})^{\top}\in\mathbb{R}^{d(m+1)}$, $b\in (-\pi,\pi)$, and $c=(a,b)\in\mathbb{W}$, define $$\phi_c(x)=\cos\left(\sum_{j=1}^{m+1} a_j \cdot x_{j} + b\right).$$ Define $$\mathcal{N}=\{\left(\theta_1,\theta_2\right) \in \Theta^2 \mid \int_{\mathbb{R}^{(m+1)\times d}}\phi_c(x) \, dP_{\theta_1}(x) =\int_{\mathbb{R}^{(m+1)\times d}}\phi_c(x) \, dP_{\theta_2}(x),\ \forall c\in\mathbb{W} \}.$$ By the finite witness theorem (Theorem~\ref{thm:finite_witness}), for generic parameters $\left( \vartheta_1,\ldots,\vartheta_{k}\right) \in \mathbb{W}^k$ of the $k=2p+1$ random Fourier features $\left( \varphi_1,\ldots,\varphi_{k}\right)$, we have the set equality $$\mathcal{N}=\{\left(\theta_1,\theta_2\right) \in \Theta^2 \mid \int_{\mathbb{R}^{(m+1)\times d}}\varphi_i(x) \, dP_{\theta_1}(x) =\int_{\mathbb{R}^{(m+1)\times d}}\varphi_i(x) \, dP_{\theta_2}(x),\ \forall i=1,\ldots,k \}.$$ By the discussion directly above Theorem~\ref{thm:finite_witness}, generic here means $\Pi$-almost every $\vartheta$, where $\Pi$ is the joint distribution from~\eqref{eqn:parameters_for_the_random_features} which is absolutely continuous w.r.t.\ the Lebesgue measure.

Denoting the $\mathbb{R}^{(m+1)\times d}$-valued random matrix by $\Omega_i=(\Omega_{i,1},\ldots,\Omega_{i,m+1})^{\top}$, we have \begin{align*}&\int_{\mathbb{R}^{(m+1)\times d}}\varphi_i(x) \, dP_{\theta}(x)
\\& 
=\int_{\mathbb{R}^{(m+1)\times d}}\cos\left(\sum_{j=1}^{m+1} \Omega_{i,j} \cdot x_{j} + \alpha_i\right)\, dP_{\theta}(x)
\\&
=
\cos(\alpha_i)
\int_{\mathbb{R}^{(m+1)\times d}}\cos\left(\sum_{j=1}^{m+1} \Omega_{i,j} \cdot x_{j}\right)\, dP_{\theta}(x) 
\\&
- 
\sin(\alpha_i)
\int_{\mathbb{R}^{(m+1)\times d}}\sin\left(\sum_{j=1}^{m+1} \Omega_{i,j} \cdot x_{j}\right)\, dP_{\theta}(x) 
\\&
= \cos(\alpha_i) \  \mathrm{Re}( \psi_{\theta}(\Omega_i)) - \sin(\alpha_i) \ \mathrm{Im}(  \psi_{\theta}(\Omega_i))
\\&
= M_{\vartheta_i}(\theta),
\end{align*}
where $\psi_{\theta}$ is the characteristic function. If $(\theta_1,\theta_2)\in\mathcal{N}$, then $ M_{\vartheta_i}(\theta_1)=M_{\vartheta_i}(\theta_2)$ for every possible value of $\vartheta_i=(\mathrm{Vec}(\Omega_i),\alpha_i)\in\mathbb{W}$. In particular, the expectation values of the random Fourier features correspond to the real and imaginary parts of the characteristic function by taking $\alpha_i=0$ and $\alpha_i=-\pi/2$, respectively, so by the uniqueness theorem (Theorem~\ref{thm:uniqueness_characteristic_function}) the corresponding distributions must be equal, $P_{\theta_1}=P_{\theta_2}$, if $(\theta_1,\theta_2)\in\mathcal{N}$. However, by Assumption~\ref{asmpt:statistical_model_discrete_time}, the map $\theta\mapsto P_{\theta}$ is injective, i.e., if $\theta_1 \neq \theta_2$ then $P_{\theta_1}\neq P_{\theta_2}$, so we must have
$$(\theta_1,\theta_2)\in\mathcal{N}\implies (\theta_1,\theta_2)\in \{(\theta_1,\theta_2)\in \Theta^2 : \theta_1=\theta_2\}=\{(\theta,\theta):\theta\in\Theta\},$$ so $\mathcal{N}\subseteq \{(\theta,\theta):\theta\in\Theta\}$. Conversely, we have $\{(\theta,\theta):\theta\in\Theta\} \subseteq \mathcal{N}$ trivially. Therefore,
\begin{equation}\label{eqn:conclusion_mathcal_N_discrete_time}\mathcal{N}=\{(\theta,\theta):\theta\in\Theta\}.\end{equation}

Putting it all together, for $\Pi$-almost any random parameters $(\vartheta_1,\ldots,\vartheta_k)$ for the $k=2p+1$ random Fourier features $(\varphi_1,\ldots,\varphi_k)$, for every pair of parameter values $\theta_1, \theta_2\in\Theta$, if $\theta_1\neq \theta_2$, then the expectations of the random Fourier features are not equal, $\Phi(\theta_1)\neq \Phi(\theta_2)$. Precisely in this sense, the map $\theta\mapsto\Phi(\theta)$ is one-to-one for generic choices of $k=2p+1$ random Fourier features.

\hfill$\square$

\subsection{Proof of Theorem~\ref{thm:consistency_time_average_estimator}}\label{subsection:proof_of_consistency_time_average_estimator}

\paragraph*{Setup.} We will use Corollary 3.2 of~\cite{pakes_pollard_simulation_asymptotics} to conclude that $\hat{\theta}^{\mathrm{TA}}$ is consistent. The sample discrepancy function, from~\eqref{eqn:time_average_sample_discrepancy}, is given by
$$\hat{Q}_n^{\mathrm{TA}}(\theta) =F^{\mathrm{obs}} - \bar{F}^{\mathrm{sim}}(\theta),$$ and the population discrepancy function is given by $$Q^{\mathrm{TA}}(\theta) =\Phi(\theta_0) - \Phi(\theta).$$ 
To use Corollary 3.2 of~\cite{pakes_pollard_simulation_asymptotics}, it suffices to show that, under the assumptions of Theorem~\ref{thm:consistency_time_average_estimator}, the following conditions hold: 
\begin{enumerate}
    \item $\norm{\hat{Q}_n^{\mathrm{TA}}(\hat{\theta}^{\mathrm{TA}})}\leq \inf_{\theta \in \Theta}\norm{\hat{Q}_n^{\mathrm{TA}}(\theta)} + o_p(1)$.
    \item $\inf_{\norm{\theta - \theta_0} > \delta} \norm{Q^{\mathrm{TA}}(\theta)} >0$ for each $\delta >0$. 
    \item $\sup_{\theta\in\Theta}\frac{\norm{\hat{Q}_n^{\mathrm{TA}}(\theta) - Q^{\mathrm{TA}}(\theta)}}{1+\norm{\hat{Q}_n^{\mathrm{TA}}(\theta)}+\norm{Q^{\mathrm{TA}}(\theta)}}=o_p(1)$.
\end{enumerate}

\paragraph*{Condition 1.}

This condition is immediate from~\eqref{eqn:time_average_estimator}.

\paragraph*{Condition 2.}

We establish that the infimum is strictly positive through contradiction. Assume for contradiction that there exists a $\delta >0$ such that $$\inf_{\norm{\theta - \theta_0} > \delta} \norm{Q^{\mathrm{TA}}(\theta)} =0.$$ 
Then by applying the definition of infimum, there exists a sequence $(\theta_n)_{n\in\mathbb{N}}$, where $\norm{\theta_n - \theta_0} > \delta$ for all $n\in\mathbb{N}$, such that $\lim_{n\xrightarrow[]{}\infty} \norm{Q^{\mathrm{TA}}(\theta_n)} = 0$.

By Assumption~\ref{asmpt:statistical_model_discrete_time}, $\Theta$ is a compact subset of $\mathbb{R}^p$, hence $\Theta$ is also sequentially compact, i.e., every sequence of parameters in $\Theta$ has a convergent subsequence converging to a parameter in $\Theta$. Thus, there exists a subsequence $(\theta_{n_j})_{j\in\mathbb{N}}$, where $\norm{\theta_{n_j} - \theta_0} > \delta$ for all $j\in\mathbb{N}$, and a parameter $\theta^{\ast}\in \Theta$ satisfying $\norm{Q^{\mathrm{TA}}(\theta^{\ast})} = 0$, such that $\theta_{n_j}\xrightarrow[]{}\theta^{\ast}$ as $j\xrightarrow[]{}\infty$, which follows because $Q^{\mathrm{TA}}$ is continuous in $\theta$ since the characteristic function is jointly real analytic (Assumption~\ref{asmpt:statistical_model_discrete_time}) and by the definition of the random Fourier features. By Assumption~\ref{asmpt:injective_discrete_time}, the function $\Phi$ is one-to-one, so the function $Q^{\mathrm{TA}}$ only takes on the value zero at the true parameter $\theta_0$, which means that $\theta^{\ast}=\theta_0$.

However, by the continuity of $\theta \mapsto \norm{\theta-\theta_0}$, we have $\norm{\theta_{n_j}-\theta_0}\xrightarrow[]{}\norm{\theta^{\ast}-\theta_0}$ as $j\xrightarrow[]{}\infty$. Therefore, $\norm{\theta^{\ast}-\theta_0} \geq \delta$ because $\norm{\theta_{n_j} - \theta_0} > \delta$ for all $j\in\mathbb{N}$, so we have $$0 = \norm{\theta_0-\theta_0} = \norm{\theta^{\ast}-\theta_0} \geq \delta > 0,$$ which is a contradiction. %%% Note: Otherwise, if theta^{\ast} was less than delta distance away from theta_0, we would need elements of the sequence theta_{n_j} to be to be within epsilon distance of theta^{\ast} (for all epsilon >0) which would mean that those elements theta_{n_j} would be less than delta distance away from theta_0 which is impossible.

\paragraph*{Condition 3.}

%%% Note: because 1+\norm{\hat{Q}_n^{\mathrm{TA}}(\theta)}+\norm{Q^{\mathrm{TA}}(\theta)} >= 1 and therefore the original term is less than or equal to \sup_{\theta\in\Theta}\norm{\hat{Q}_n^{\mathrm{TA}}(\theta) - Q^{\mathrm{TA}}(\theta)}. 

It suffices to show that 
$\sup_{\theta\in\Theta}\norm{\hat{Q}_n^{\mathrm{TA}}(\theta) - Q^{\mathrm{TA}}(\theta)}=o_p(1)$ because the denominator is always greater than or equal to one. For $\theta\in\Theta$ and $r\in\{0,1,\ldots,s\}$, denote \begin{equation}\label{eqn:random_feature_time_average_approx}F^{(r)}(\theta) = \frac{1}{n-m}\sum_{t=m+1}^n  \varphi\left(\left[G_{t-m}^{(n,r)}(\symbfit{\varepsilon}_{t-m}^{(r)},\theta),\ldots,G_t^{(n,r)}(\symbfit{\varepsilon}_t^{(r)},\theta)\right]^{\top}\right).\end{equation} We have \begin{align*}&\sup_{\theta\in\Theta}\norm{\hat{Q}_n^{\mathrm{TA}}(\theta) - Q^{\mathrm{TA}}(\theta)} \\& \overset{(1)}{\leq} \norm{F^{\mathrm{obs}} - \Phi(\theta_0)} + \sup_{\theta\in\Theta}\norm{\bar{F}^{\mathrm{sim}}(\theta) - \Phi(\theta)}\\ & \overset{(2)}{\leq} \ \norm{F^{\mathrm{obs}} - \Phi(\theta_0)} + \frac{1}{s}\sum_{r=1}^s \sup_{\theta\in\Theta}\norm{F^{(r)}(\theta) - \Phi(\theta)}
\\ & \overset{(3)}{\leq}  \norm{F^{(0)}(\theta_0) - \Phi(\theta_0)} 
+ \frac{1}{s}\sum_{r=1}^s \sup_{\theta\in\Theta}\norm{F^{(r)}(\theta) - \Phi(\theta)}
\\ & \overset{(4)}{\leq} \sup_{\theta\in\Theta}\norm{F^{(0)}(\theta) - \Phi(\theta)} + \frac{1}{s}\sum_{r=1}^s \sup_{\theta\in\Theta}\norm{F^{(r)}(\theta) - \Phi(\theta)},\end{align*} by (1) the triangle inequality and subadditivity of the supremum, (2) the triangle inequality and subadditivity of the supremum, (3) noting that $F^{\mathrm{obs}}=F^{(0)}(\theta_0)$ as in~\eqref{eqn:time_average_random_features_G_general} and~\eqref{eqn:random_feature_time_average_approx}, and (4) upper bounding by the supremum over $\theta\in\Theta$.

%%% Note that, by~\eqref{eqn:convergence_to_Phi_theta_discrete_time} from Theorem~\ref{thm:injective_discrete_time}, the limiting expectation values for all $\theta\in\Theta$ and $r\in\{0,1,\ldots,s\}$ are given by $\Phi(\theta)$.

Observe that, by the triangle inequality, for each $r\in \{0,1,\ldots,s\}$, we have
\begin{equation}\label{eqn:two_terms_to_show_converge}
\sup_{\theta\in\Theta}\norm{F^{(r)}(\theta) - \Phi(\theta)}
\leq \sup_{\theta\in\Theta}\norm{F^{(r)}(\theta) - \mathbb{E}_{\theta}\left[F^{(r)}(\theta)\right]} +  \sup_{\theta\in\Theta}\norm{\mathbb{E}_{\theta}\left[F^{(r)}(\theta)\right] - \Phi(\theta)}.
\end{equation} The next steps are to show that the first term is $o_p(1)$ and the second term is $o(1)$.

\textbf{Step 1.} First, we consider the term $\sup_{\theta\in\Theta}\norm{F^{(r)}(\theta) - \mathbb{E}_{\theta}\left[F^{(r)}(\theta)\right]}$. Since $\Theta\subset \mathbb{R}^p$ is compact by Assumption~\ref{asmpt:statistical_model_discrete_time}, it is totally bounded, so 
there exists an $\epsilon$-net, i.e., for every $\epsilon>0$, there exist finitely many points $\theta_1,\ldots,\theta_{N(\epsilon)}\in\Theta$ for some $N(\epsilon) \in \mathbb{N}$, such that $$\Theta\subset\bigcup_{i\in [N(\epsilon)]} B_{\epsilon}(\theta_i),$$ where $B_{\epsilon}(\theta_i)$ is a ball of radius $\epsilon$. Thus, $\forall \theta \in \Theta$, there exists an $i\in [N(\epsilon)]$ such that $\norm{\theta-\theta_i}\leq \epsilon$. By adding and subtracting the same term, the triangle inequality, the definition of the $\epsilon$-net for some $\epsilon>0$ and $N(\epsilon)\in\mathbb{N}$, and the definition of the supremum, we have  
\begin{align*}
&\underset{\theta\in\Theta}{\sup} \ \norm{F^{(r)}(\theta) - \mathbb{E}_{\theta}\left[F^{(r)}(\theta)\right]} 
\\&
\leq 
\underset{i \in [N(\epsilon)]}{\max} \norm{F^{(r)}(\theta_i) - \mathbb{E}_{\theta_i}\left[F^{(r)}(\theta_i)\right]} 
\\ &
+
\underset{ \norm{\theta-\theta'}\leq \epsilon}{\underset{\theta,\theta'\in\Theta}{\sup}}\norm{(F^{(r)}(\theta) - \mathbb{E}_{\theta}\left[F^{(r)}(\theta)\right])-(F^{(r)}(\theta') - \mathbb{E}_{\theta'}\left[F^{(r)}(\theta')\right])}.
\end{align*}

%%% Note: the expectation \mathbb{E} is with respect to the law of the noise input sequences eps_t used to construct the time series via the generative model X_t = G_t(eps_t), whereas \mathbb{E}_{\theta} is expectation with respect to the \theta-dependent distribution P_\theta 
Taking the expectation
$\mathbb{E}(\cdot)$ with respect to the law of the noise inputs, we have
\begin{align*}
&\mathbb{E}\left(\underset{\theta\in\Theta}{\sup} \ \norm{F^{(r)}(\theta) - \mathbb{E}_{\theta}\left[F^{(r)}(\theta)\right]}\right) 
\\&
\overset{(1)}{\leq} 
\mathbb{E}\left(
\underset{i \in [N(\epsilon)]}{\max} \norm{F^{(r)}(\theta_i) - \mathbb{E}_{\theta_i}\left[F^{(r)}(\theta_i)\right]} \right)
\\ &
+
\mathbb{E}\left(
\underset{ \norm{\theta-\theta'}\leq \epsilon}{\underset{\theta,\theta'\in\Theta}{\sup}}\norm{(F^{(r)}(\theta) - \mathbb{E}_{\theta}\left[F^{(r)}(\theta)\right])-(F^{(r)}(\theta') - \mathbb{E}_{\theta'}\left[F^{(r)}(\theta')\right])}\right)
\\&
\overset{(2)}{\leq} 
\sum_{i \in [N(\epsilon)]} \mathbb{E}_{\theta_i}\left(\norm{F^{(r)}(\theta_i) - \mathbb{E}_{\theta_i}\left[F^{(r)}(\theta_i)\right]} \right)
\\ &
+
\mathbb{E}\left(
\underset{ \norm{\theta-\theta'}\leq \epsilon}{\underset{\theta,\theta'\in\Theta}{\sup}}\norm{(F^{(r)}(\theta) - \mathbb{E}_{\theta}\left[F^{(r)}(\theta)\right])-(F^{(r)}(\theta') - \mathbb{E}_{\theta'}\left[F^{(r)}(\theta')\right])}\right)
\\&
\overset{(3)}{\leq} 
\sum_{i \in [N(\epsilon)]} \mathbb{E}_{\theta_i}\left(\norm{F^{(r)}(\theta_i) - \mathbb{E}_{\theta_i}\left[F^{(r)}(\theta_i)\right]} \right)
\\ &
+
\mathbb{E}\left(
\underset{ \norm{\theta-\theta'}\leq \epsilon}{\underset{\theta,\theta'\in\Theta}{\sup}}\norm{F^{(r)}(\theta) - F^{(r)}(\theta') }\right)
+
\underset{ \norm{\theta-\theta'}\leq \epsilon}{\underset{\theta,\theta'\in\Theta}{\sup}}\norm{ \mathbb{E}_{\theta}\left[F^{(r)}(\theta)\right]-\mathbb{E}_{\theta'}\left[F^{(r)}(\theta')\right]}
\\&
\overset{(4)}{\leq}
\sum_{i \in [N(\epsilon)]} \mathbb{E}_{\theta_i}\left(\norm{F^{(r)}(\theta_i) - \mathbb{E}_{\theta_i}\left[F^{(r)}(\theta_i)\right]} \right)
\\& +
2 \ \mathbb{E}\left(
\underset{ \norm{\theta-\theta'}\leq \epsilon}{\underset{\theta,\theta'\in\Theta}{\sup}}\norm{F^{(r)}(\theta) - F^{(r)}(\theta') }\right),
\end{align*} by (1) the previous $\epsilon$-net inequality, monotonicity of expectation, and linearity of expectation, (2) upper bounding the maximum by the sum and linearity of expectation, (3) the triangle inequality and subadditivity of the supremum, and (4) linearity of expectation, Jensen's inequality, and monotonicity of expectation.
%%% Note: last monotonicity of expectation is to bring the sup inside the expectation 

\textbf{Step 1.1.} We begin with the first term. For all $i\in [N(\epsilon)]$, we have \begin{align*}
& \mathbb{E}_{\theta_i}\left(\norm{F^{(r)}(\theta_i) - \mathbb{E}_{\theta_i}\left[F^{(r)}(\theta_i)\right]} \right)
\\& 
\overset{(1)}{\leq}
2 \ C \ \frac{1}{n-m} \ \sum_{j=0}^{\infty}
\left(
\sum_{t=m+1}^n (\eta_{t,j,2,2}^{(n,r)}(\theta_i))^2
\right)^{\frac{1}{2}}
\\&
\overset{(2)}{\leq}
2 \ C \ \frac{1}{n-m} \ \sum_{j=0}^{\infty}
(n-m)^{\frac{1}{2}} \underset{t\in\{m+1,\ldots,n\}}{\max} \ \eta_{t,j,2,2}^{(n,r)}(\theta_i)
\\&
\overset{(3)}{\leq} 
2 \ C \ \frac{1}{\sqrt{n-m}} \ \Psi^{\varphi} \ \sum_{j=0}^{\infty} (j\lor 1)^{-\beta^{\varphi}}
\\&
\overset{(4)}{\leq} 
2 \ C \ \frac{1}{\sqrt{n-m}} \ \Psi^{\varphi}  K, 
\end{align*}
by (1) factoring out $\frac{1}{n-m}$ then applying the second inequality from Lemma~\ref{lma:lp_lln_discrete_time} to the time series of random features and further upper bounding with infinitely many terms in the summation, (2) upper bounding the sum by $n-m$ times the maximum and taking the square root of both terms, (3) by bounds on the physical dependence measures of the original time series by
Assumption~\ref{asmpt:temporal_dependence_discrete_time} and of the time series of random features by Lemma~\ref{lma:temporal_dependence_of_random_features_discrete_time}, and (4) because $\beta^{\varphi}>2$ so the series converges. Hence, for each fixed $\epsilon>0$ and $N(\epsilon)\in\mathbb{N}$, for all $i\in [N(\epsilon)]$, we have
$$\mathbb{E}_{\theta_i}\left(\norm{F^{(r)}(\theta_i) - \mathbb{E}_{\theta_i}\left[F^{(r)}(\theta_i)\right]} \right)=O(n^{-\frac{1}{2}}).$$ 

\textbf{Step 1.2.} 
We now consider the second term. Observe that
\begin{align*}
& \mathbb{E}\left(
\underset{ \norm{\theta-\theta'}\leq \epsilon}{\underset{\theta,\theta'\in\Theta}{\sup}}\norm{F^{(r)}(\theta) - F^{(r)}(\theta') }\right)
\\ & 
\overset{(1)}{\leq}
L_{\varphi} 
\frac{1}{n-m}\sum_{t=m+1}^n \sum_{j=0}^m   \mathbb{E}\left(
\underset{ \norm{\theta-\theta'}\leq \epsilon}{\underset{\theta,\theta'\in\Theta}{\sup}} \norm{
G_{t-j}^{(n,r)}(\symbfit{\varepsilon}_{t-j}^{(r)},\theta) - 
G_{t-j}^{(n,r)}(\symbfit{\varepsilon}_{t-j}^{(r)},\theta')}\right) 
\\& 
\overset{(2)}{\leq} L_{\varphi}  (m+1)  \eta_G,
\end{align*} because (1) the random Fourier features $\varphi$ are $L_{\varphi}$-Lipschitz with Lipschitz constant given by~\eqref{eqn:lipschitz_constant_rff}, the triangle inequality, subadditivity of the supremum, and linearity of expectation, and (2)  by upper bounding the average by the supremum over $t$ and applying the stochastic equicontinuity-type condition from Assumption~\ref{asmpt:stochastic_equicontinuity_discrete_time}, noting that each mapping $(G_t^{(i,r)}(\symbfit{\varepsilon}_j,\theta))_{j\in\mathbb{Z}}$ is a stationary ergodic process for each fixed $t\in \mathbb{N}$, $i\in\mathbb{N}$, $r=0,1,\ldots,s$, which allows us to use this condition even with shifted noise inputs. By Assumption~\ref{asmpt:stochastic_equicontinuity_discrete_time}, $\eta_G \xrightarrow[]{} 0$ as $\epsilon \xrightarrow[]{} 0$. However, taking $\epsilon \xrightarrow[]{} 0$ makes $N(\epsilon)\xrightarrow[]{}\infty$, so we must specify the rates.

Previously, we showed that, for each fixed $\epsilon>0$ and $N(\epsilon)\in\mathbb{N}$, for all $i\in [N(\epsilon)]$, we have
$$\mathbb{E}_{\theta_i}\left(\norm{F^{(r)}(\theta_i) - \mathbb{E}_{\theta_i}\left[F^{(r)}(\theta_i)\right]} \right)=O(n^{-\frac{1}{2}}).$$ Therefore, for any sequence $(\epsilon_n)_{n\in\mathbb{N}}$, $\epsilon_n \xrightarrow[]{} 0$ as $n\xrightarrow[]{}\infty$, such that $N(\epsilon_n)=o(n^{\frac{1}{2}})$, we have 
\begin{align*}&\sum_{i \in [N(\epsilon_n)]} \mathbb{E}_{\theta_i}\left(\norm{F^{(r)}(\theta_i) - \mathbb{E}_{\theta_i}\left[F^{(r)}(\theta_i)\right]} \right)
\\& 
\leq 
N(\epsilon_n) \underset{i \in [N(\epsilon_n)]}{\max} \mathbb{E}_{\theta_i}\left(\norm{F^{(r)}(\theta_i) - \mathbb{E}_{\theta_i}\left[F^{(r)}(\theta_i)\right]} \right)
\\&
=
o(1),
\end{align*}
and $\eta_G = \eta_G(\epsilon_n) =o(1)$ so that the upper bound from (2) is $L_{\varphi} (m+1) \eta_G=o(1)$. See, e.g., Corollary 4.2.11 in \cite{Vershynin} for the explicit covering number bounds.

Putting everything together, by Markov's inequality, we have the desired result $$\sup_{\theta\in\Theta}\norm{F^{(r)}(\theta) - \mathbb{E}_{\theta}\left[F^{(r)}(\theta)\right]}=o_p(1).$$

\textbf{Step 2.} Second, we consider the term $\sup_{\theta\in\Theta}\norm{\mathbb{E}_{\theta}\left[F^{(r)}(\theta)\right] - \Phi(\theta)}$. For each $r\in\{0,1,\ldots,s\}$, \begin{align*}&\sup_{\theta\in\Theta}\norm{\mathbb{E}_{\theta}\left[F^{(r)}(\theta)\right] - \Phi(\theta)} \\
&\overset{(1)}{=} \underset{\theta\in\Theta}{\sup} \ \norm{\frac{1}{n-m}\sum_{t=m+1}^n \mathbb{E}_{\theta}\left[ \varphi\left(\left[G_{t-m}^{(n,r)}(\symbfit{\varepsilon}_{t-m}^{(r)},\theta),\ldots,G_t^{(n,r)}(\symbfit{\varepsilon}_t^{(r)},\theta)\right]^{\top}\right)\right] - \Phi(\theta)}
\\&\overset{(2)}{\leq} \sup_{\theta\in\Theta} 
\norm{
\begin{aligned}
&\frac{1}{n-m}\sum_{t=m+1}^n 
\mathbb{E}_{\theta}\left[
\varphi\left(
\left[G_{t-m}^{(n,r)}(\symbfit{\varepsilon}_{t-m}^{(r)},\theta),\ldots,
G_t^{(n,r)}(\symbfit{\varepsilon}_t^{(r)},\theta)\right]^{\top}
\right)
\right] \\
&-
\frac{1}{n-m}\sum_{t=m+1}^n 
\mathbb{E}_{\theta}\left[
\varphi\left(
\left[G_{t-m}^{(r)}(\symbfit{\varepsilon}_{t-m}^{(r)},\theta),\ldots,
G_t^{(r)}(\symbfit{\varepsilon}_t^{(r)},\theta)\right]^{\top}
\right)
\right]
\end{aligned}
} 
\\ &
+ \sup_{\theta\in\Theta}
\norm{
\frac{1}{n-m}\sum_{t=m+1}^n 
\mathbb{E}_{\theta}\left[
\varphi\left(
\left[G_{t-m}^{(r)}(\symbfit{\varepsilon}_{t-m}^{(r)},\theta),\ldots,
G_t^{(r)}(\symbfit{\varepsilon}_t^{(r)},\theta)\right]^{\top}
\right)
\right] - \Phi(\theta)
} 
\\ 
& \overset{(3)}{=} o(1),
\end{align*}
by (1) the definition of $F^{(r)}(\theta)$ from~\eqref{eqn:random_feature_time_average_approx} and linearity of expectation, (2) adding and subtracting the same term, the triangle inequality, and subadditivity of the supremum, (3) because both terms are $o(1)$. Specifically, the second term is $o(1)$ by upper bounding the $\ell_2$ norm $\norm{\cdot}$ by the $\ell_1$ norm (i.e., over the $k$ random features) using the monotonicity of norms, subadditivity of the supremum, linearity, and then applying the weak convergence from Assumption~\ref{asmpt:weak_convergence_of_time_avg_of_distributions_discrete_time}. The first term is $o(1)$ because the approximations can be asymptotically replaced by the limiting mappings, because for all $r \in \{0,1,\ldots,s\}$, we have \begin{align*} & \underset{\theta\in\Theta}{\sup} \ 
\norm{
\begin{aligned}
&\frac{1}{n-m}\sum_{t=m+1}^n 
\mathbb{E}_{\theta}\left[
\varphi\left(
\left[G_{t-m}^{(n,r)}(\symbfit{\varepsilon}_{t-m}^{(r)},\theta),\ldots,
G_t^{(n,r)}(\symbfit{\varepsilon}_t^{(r)},\theta)\right]^{\top}
\right)
\right] \\
&-
\frac{1}{n-m}\sum_{t=m+1}^n 
\mathbb{E}_{\theta}\left[
\varphi\left(
\left[G_{t-m}^{(r)}(\symbfit{\varepsilon}_{t-m}^{(r)},\theta),\ldots,
G_t^{(r)}(\symbfit{\varepsilon}_t^{(r)},\theta)\right]^{\top}
\right)
\right]
\end{aligned}
}
\\
&\overset{(1)}{\leq} 
\underset{\theta\in\Theta}{\sup} \left(\frac{1}{n-m}\sum_{t=m+1}^n   \norm{
\begin{aligned}
& 
\varphi\left(
\left[G_{t-m}^{(n,r)}(\symbfit{\varepsilon}_{t-m}^{(r)},\theta),\ldots,
G_t^{(n,r)}(\symbfit{\varepsilon}_t^{(r)},\theta)\right]^{\top}
\right)
 \\
& - 
\varphi\left(
\left[G_{t-m}^{(r)}(\symbfit{\varepsilon}_{t-m}^{(r)},\theta),\ldots,
G_t^{(r)}(\symbfit{\varepsilon}_t^{(r)},\theta)\right]^{\top}
\right)
\end{aligned}
}_{\mathcal{L}^1(\theta)}\right)
\\
& \overset{(2)}{\leq}  L_{\varphi} \ \underset{\theta\in\Theta}{\sup} \left(
\frac{1}{n-m}\sum_{t=m+1}^n    \norm{
\begin{aligned}
&  
\mathrm{Vec}\left(G_{t-m}^{(n,r)}(\symbfit{\varepsilon}_{t-m}^{(r)},\theta),\ldots,
G_t^{(n,r)}(\symbfit{\varepsilon}_t^{(r)},\theta)\right)
 \\
&- \mathrm{Vec}
\left(G_{t-m}^{(r)}(\symbfit{\varepsilon}_{t-m}^{(r)},\theta),\ldots,
G_t^{(r)}(\symbfit{\varepsilon}_t^{(r)},\theta)\right)
\end{aligned}
}_{\mathcal{L}^1(\theta)}\right)
\\
& \overset{(3)}{\leq}  L_{\varphi} \ \underset{\theta\in\Theta}{\sup} \left(
\frac{1}{n-m}\sum_{t=m+1}^n \sum_{j=0}^m    \norm{
G_{t-j}^{(n,r)}(\symbfit{\varepsilon}_{t-j}^{(r)},\theta) - 
G_{t-j}^{(r)}(\symbfit{\varepsilon}_{t-j}^{(r)},\theta)
}_{\mathcal{L}^1(\theta)}\right)
\\
& \overset{(4)}{\leq}
 L_{\varphi} 
\frac{1}{1-m/n} (m+1) \ \underset{\theta\in\Theta}{\sup} \left(\frac{1}{n}\sum_{t=1}^n    \norm{G_{t}^{(n,r)}(\symbfit{\varepsilon}_{0}^{(r)},\theta) - 
G_{t}^{(r)}(\symbfit{\varepsilon}_{0}^{(r)},\theta)
}_{\mathcal{L}^1(\theta)}\right)
\\
& \overset{(5)}{=}  o(1),
\end{align*}
where (1) holds by linearity of summation, the triangle inequality, linearity of expectation, and Jensen's inequality, (2) holds because the random Fourier features $\varphi$ are $L_{\varphi}$-Lipschitz with Lipschitz constant given by~\eqref{eqn:lipschitz_constant_rff}, (3) holds by the triangle inequality and Minkowski inequality, (4) follows by upper bounding the original quantity by $m+1$ times the full sum (i.e., by including more non-negative terms in the summation), factoring out $1/n$, and noting that each $(G_t^{(i,r)}(\symbfit{\varepsilon}_j,\theta))_{j\in\mathbb{Z}}$, $(G_t^{(r)}(\symbfit{\varepsilon}_j,\theta))_{j\in\mathbb{Z}}$ is a stationary ergodic process for each fixed $t\in \mathbb{N}$, $i\in \mathbb{N}$, and $r=0,1,\ldots,s$, which allows us to shift the noise inputs, and (5) is from applying the convergence condition from Assumption~\ref{asmpt:limiting_mapping_discrete_time}.

\textbf{Step 3.} We have shown that for each $r\in \{0,1,\ldots,s\}$, the terms from~\eqref{eqn:two_terms_to_show_converge} are $o_p(1)$ and $o(1)$, respectively. Putting it all together, we have the desired result that $$\sup_{\theta\in\Theta}\norm{\hat{Q}_n^{\mathrm{TA}}(\theta) - Q^{\mathrm{TA}}(\theta)}=o_p(1).$$

\hfill$\square$

\subsection{Proof of Theorem~\ref{thm:injective_continuous_time}}\label{subsection:proof_of_injective_continuous_time}

\paragraph*{Step 1: Limiting local expectations.} We claim that under
Assumptions~\ref{asmpt:algorithmic_dynamic_model_continuous_time}-\ref{asmpt:reparametrization_continuous_time}, $\forall \tilde{\theta}_u\in\tilde{\Theta}_u$, the limiting local expectation values of the $k$ random features of the time series are given by \begin{equation*}   \tilde{\Phi}_u(\tilde{\theta}_u)=\int_{\mathbb{R}^{(m+1)\times d}}\varphi(x) \, d\tilde{P}_{\tilde{\theta}_u,u}(x), \quad u\in [0,1].\end{equation*}  That is, for all $\theta \in \Theta$, all $u\in [0,1]$, and all $r \in \{0,1,\ldots,s\}$, we have, as $n\xrightarrow[]{}\infty$, 
\begin{equation}\label{eqn:convergence_to_tilde_Phi_theta_u_continuous_time} \norm{\mathbb{E}_{\theta}\left[ \varphi\left(\left[G_{u}^{(n,r)}(\symbfit{\varepsilon}_{-m}^{(r)},\theta),\ldots,G_{u}^{(n,r)}(\symbfit{\varepsilon}_0^{(r)},\theta)\right]^{\top}\right)\right]-\tilde{\Phi}_u(\tilde{\theta}_u)}=o(1).\end{equation} Note that the time series has this causal representation by Assumption~\ref{asmpt:algorithmic_dynamic_model_continuous_time}, and that we only observe the time series at rescaled times $t/n$ for $t=1,\ldots,n$.

The desired result from~\eqref{eqn:convergence_to_tilde_Phi_theta_u_continuous_time} follows because, for all $\theta \in \Theta$, all $u\in [0,1]$, and all $r \in \{0,1,\ldots,s\}$, we have 
\begin{align*}&\norm{\mathbb{E}_{\theta}\left[ \varphi\left(\left[G_{u}^{(n,r)}(\symbfit{\varepsilon}_{-m}^{(r)},\theta),\ldots,G_{u}^{(n,r)}(\symbfit{\varepsilon}_0^{(r)},\theta)\right]^{\top}\right)\right] - \tilde{\Phi}_u(\tilde{\theta}_u)}
\\ &
\overset{(1)}{\leq} 
\norm{
\begin{aligned}
& 
\mathbb{E}_{\theta}\left[ \varphi\left(\left[G_{u}^{(n,r)}(\symbfit{\varepsilon}_{-m}^{(r)},\theta),\ldots,G_{u}^{(n,r)}(\symbfit{\varepsilon}_0^{(r)},\theta)\right]^{\top}\right)\right] \\
&-
\mathbb{E}_{\theta}\left[ \varphi\left(\left[G_{u}(\symbfit{\varepsilon}_{-m}^{(r)},\theta),\ldots,G_{u}(\symbfit{\varepsilon}_0^{(r)},\theta)\right]^{\top}\right)\right]
\end{aligned}
}
\\ &
+ \norm{\mathbb{E}_{\theta}\left[ \varphi\left(\left[G_{u}(\symbfit{\varepsilon}_{-m}^{(r)},\theta),\ldots,G_{u}(\symbfit{\varepsilon}_0^{(r)},\theta)\right]^{\top}\right)\right] - \tilde{\Phi}_u(\tilde{\theta}_u)}
\\ &
\overset{(2)}{=}
o(1),
\end{align*}
where (1) is from adding and subtracting the corresponding term with the limiting mapping and by applying the triangle inequality, and (2) holds because the second term is equal to zero by using the equivalent causal representation from Assumption~\ref{asmpt:reparametrization_continuous_time} and because the first term is $o(1)$. The approximations can be asymptotically replaced by the limiting mappings because for all $\theta \in \Theta$, all $u\in [0,1]$, and all $r \in \{0,1,\ldots,s\}$, we have
\begin{align*} &
\norm{
\begin{aligned}
& 
\mathbb{E}_{\theta}\left[ \varphi\left(\left[G_{u}^{(n,r)}(\symbfit{\varepsilon}_{-m}^{(r)},\theta),\ldots,G_{u}^{(n,r)}(\symbfit{\varepsilon}_0^{(r)},\theta)\right]^{\top}\right)\right] \\
&-
\mathbb{E}_{\theta}\left[ \varphi\left(\left[G_{u}(\symbfit{\varepsilon}_{-m}^{(r)},\theta),\ldots,G_{u}(\symbfit{\varepsilon}_0^{(r)},\theta)\right]^{\top}\right)\right]
\end{aligned}
}
\\
&\overset{(1)}{\leq} 
\norm{
\begin{aligned}
& 
 \varphi\left(\left[G_{u}^{(n,r)}(\symbfit{\varepsilon}_{-m}^{(r)},\theta),\ldots,G_{u}^{(n,r)}(\symbfit{\varepsilon}_0^{(r)},\theta)\right]^{\top}\right) \\
&-
\varphi\left(\left[G_{u}(\symbfit{\varepsilon}_{-m}^{(r)},\theta),\ldots,G_{u}(\symbfit{\varepsilon}_0^{(r)},\theta)\right]^{\top}\right)
\end{aligned}
}_{\mathcal{L}^1(\theta)}
\\
& \overset{(2)}{\leq}  L_{\varphi} 
\norm{
\begin{aligned}
& 
 \mathrm{Vec}\left(G_{u}^{(n,r)}(\symbfit{\varepsilon}_{-m}^{(r)},\theta),\ldots,G_{u}^{(n,r)}(\symbfit{\varepsilon}_0^{(r)},\theta)\right) \\
&-
\mathrm{Vec}\left(G_{u}(\symbfit{\varepsilon}_{-m}^{(r)},\theta),\ldots,G_{u}(\symbfit{\varepsilon}_0^{(r)},\theta)\right)
\end{aligned}
}_{\mathcal{L}^1(\theta)}
\\
& \overset{(3)}{\leq}  L_{\varphi} 
 \sum_{j=0}^m   \norm{
G_{u}^{(n,r)}(\symbfit{\varepsilon}_{-j}^{(r)},\theta) - 
G_{u}(\symbfit{\varepsilon}_{-j}^{(r)},\theta)
}_{\mathcal{L}^1(\theta)}
\\
& \overset{(4)}{=}  o(1),
\end{align*} where (1) holds by linearity of expectation and Jensen's inequality, (2) holds because the random Fourier features $\varphi$ are $L_{\varphi}$-Lipschitz with Lipschitz constant given by~\eqref{eqn:lipschitz_constant_rff}, (3) holds by the triangle inequality and Minkowski inequality, and (4) follows from the convergence condition from Assumption~\ref{asmpt:limiting_mapping_continuous_time}, noting that each mapping $(G_u^{(i,r)}(\symbfit{\varepsilon}_j,\theta))_{j\in\mathbb{Z}}$, $(G_u(\symbfit{\varepsilon}_j,\theta))_{j\in\mathbb{Z}}$ is a stationary ergodic process for each fixed $u\in [0,1]$, $i\in\mathbb{N}$, $r\in\{0,1,\ldots,s\}$, which allows us to use the convergence condition regardless of the shifts in the noise inputs.

\paragraph*{Step 2: Injectivity.} The two key steps are: (1) the use of the finite witness theorem (Theorem~\ref{thm:finite_witness}), and (2) the extension to almost every rescaled time $u\in [0,1]$. Denote $$\mathbb{W}=\mathbb{R}^{d(m+1)}\times (-\pi,\pi).$$ 

\textbf{Step 2.1.} We first show that the assumptions of Theorem~\ref{thm:finite_witness} are satisfied at each fixed $u$.

For each $u \in [0,1]$, the Hausdorff dimension of $\tilde{\Theta}_u$ is less than or equal to $p$ because (1)  $\Theta$ has Hausdorff dimension $p$ as it is a compact subset of $\mathbb{R}^p$ with non-empty interior by Assumption~\ref{asmpt:statistical_model_continuous_time}, and (2) $\tilde{\Theta}_u=g_u(\Theta)$ and $g_u$ is Lipschitz by Assumption~\ref{asmpt:reparametrization_continuous_time} so $\mathrm{dim}_H(\tilde{\Theta}_u)=\mathrm{dim}_H(g_u(\Theta))\leq \mathrm{dim}_H(\Theta)=p$. Hence, for each $u\in [0,1]$, its Hausdorff dimension is $\mathrm{dim}_H(\tilde{\Theta}_u)\leq p$. Therefore, it suffices to use $k=2p+1$ random Fourier features with Theorem~\ref{thm:finite_witness} (any $k\geq 2p+1$ works, see the proof of Proposition 3.6 in \cite{amir_et_al_finite_witness_thm}).

%%% Hausdorff dim <= box counting dim (when it exists, of course), see page 15 of https://pi.math.cornell.edu/~fractals/6/slides/Lapidus.pdf

By Assumption~\ref{asmpt:statistical_model_continuous_time}, $\Theta$ is a semialgebraic subset of $\mathbb{R}^p$. By Assumption~\ref{asmpt:reparametrization_continuous_time}, $g_u$ is a semialgebraic map. Therefore, by the Tarski--Seidenberg theorem (Theorem~\ref{thm:tarski_seidenberg_theorem}), $\tilde{\Theta}_u=g_u(\Theta)$ is a semialgebraic set (i.e., by using the graph of $g_u$ and repeatedly applying Theorem~\ref{thm:tarski_seidenberg_theorem}), hence it is $\sigma$-subanalytic; see Figure~\ref{fig:diagram_of_subanalytic_sets_and_functions}.

By construction, the random parameters $\vartheta_i$, $i=1,\ldots,k$ from~\eqref{eqn:parameters_for_the_random_features} for the random Fourier features $\varphi_i$, $i=1,\ldots,k$ from~\eqref{eqn:random_Fourier_features_order_m} are each random variables taking values in $\mathbb{W}$, which is semialgebraic and hence $\sigma$-subanalytic.

By Assumption~\ref{asmpt:statistical_model_continuous_time}, the map $(\tilde{\theta}_u,\omega)\mapsto \tilde{\psi}_{\tilde{\theta}_u,u}(\omega)$ admits a jointly real analytic extension to $\tilde{O}_u \times \mathbb{R}^{(m+1)\times d}$, and hence each 
$(\tilde{\theta}_u,\vartheta_i)\mapsto \int_{\mathbb{R}^{(m+1)\times d}}\varphi_i(x) \, d\tilde{P}_{\tilde{\theta}_u,u}(x)$, $i=1,\ldots,k$ admits a jointly real analytic extension to $\tilde{O}_u\times \mathbb{W}$ by using the definition of the characteristic function and each random Fourier feature. Note that the graph of this extension is $\sigma$-subanalytic, and that $\tilde{\Theta}_u$ is a $\sigma$-subanalytic set as discussed previously. Therefore, since $\sigma$-subanalytic sets are closed under finite intersections and Cartesian products by Theorem~\ref{thm:sigma_subanalytic_properties}, we have that the intersection of the graph of this extension with $\tilde{\Theta}_u \times \mathbb{W}\times\mathbb{R}$ is $\sigma$-subanalytic.

$\mathbb{W}$ is open and connected, and $(\tilde{\theta}_u,\vartheta_i)\mapsto \int_{\mathbb{R}^{(m+1)\times d}}\varphi_i(x) \, d\tilde{P}_{\tilde{\theta}_u,u}(x)$ is analytic as a function of $\vartheta_i$ for all fixed $\tilde{\theta}_u\in\tilde{\Theta}_u$. Hence, the condition~\eqref{cond_dimension_defficiency_for_separation} from the finite witness theorem (Theorem~\ref{thm:finite_witness}) is satisfied.

Let $x_1,\ldots,x_{m+1}$ be vectors in $\mathbb{R}^d$, and define $x=(x_1,\ldots,x_{m+1})^{\top}\in\mathbb{R}^{(m+1)\times d}$. Letting $a=(a_1^{\top},\ldots,a_{m+1}^{\top})^{\top}\in\mathbb{R}^{d(m+1)}$, $b\in (-\pi,\pi)$, and $c=(a,b)\in\mathbb{W}$, define $$\phi_c(x)=\cos\left(\sum_{j=1}^{m+1} a_j \cdot x_{j} + b\right).$$ For each fixed $u\in [0,1]$, define $$\mathcal{N}=\{(\tilde{\theta}_{u,1},\tilde{\theta}_{u,2}) \in \tilde{\Theta}_u^2 \mid \int_{\mathbb{R}^{(m+1)\times d}}\phi_c(x)  d\tilde{P}_{\tilde{\theta}_{u,1},u}(x) =\int_{\mathbb{R}^{(m+1)\times d}}\phi_c(x) d\tilde{P}_{\tilde{\theta}_{u,2},u}(x), \forall c\in\mathbb{W} \},$$ where we have suppressed the dependence of $\mathcal{N}$ on $u$. For each fixed $u\in [0,1]$, by the finite witness theorem (Theorem~\ref{thm:finite_witness}), for generic parameters $\left( \vartheta_1,\ldots,\vartheta_{k}\right) \in \mathbb{W}^k$ of the $k=2p+1$ random Fourier features $\left( \varphi_1,\ldots,\varphi_{k}\right)$, we have the set equality $$\mathcal{N}=\{(\tilde{\theta}_{u,1},\tilde{\theta}_{u,2}) \in \tilde{\Theta}_u^2 \mid \int_{\mathbb{R}^{(m+1)\times d}}\varphi_i(x) d\tilde{P}_{\tilde{\theta}_{u,1},u}(x) =\int_{\mathbb{R}^{(m+1)\times d}}\varphi_i(x) d\tilde{P}_{\tilde{\theta}_{u,2},u}(x),  \forall i\in [k]\}.$$  By the discussion above Theorem~\ref{thm:finite_witness}, generic here means $\Pi$-almost every $\vartheta$, where $\Pi$ is the joint distribution from~\eqref{eqn:parameters_for_the_random_features} which is absolutely continuous w.r.t.\ the Lebesgue measure.

Denoting the $\mathbb{R}^{(m+1)\times d}$-valued random matrix by $\Omega_i=(\Omega_{i,1},\ldots,\Omega_{i,m+1})^{\top}$, we have \begin{align*}&\int_{\mathbb{R}^{(m+1)\times d}}\varphi_i(x) \, d\tilde{P}_{\tilde{\theta}_u,u}(x)
\\& 
=\int_{\mathbb{R}^{(m+1)\times d}}\cos\left(\sum_{j=1}^{m+1} \Omega_{i,j} \cdot x_{j} + \alpha_i\right)\, d\tilde{P}_{\tilde{\theta}_u,u}(x)
\\&
=
\cos(\alpha_i)
\int_{\mathbb{R}^{(m+1)\times d}}\cos\left(\sum_{j=1}^{m+1} \Omega_{i,j} \cdot x_{j}\right)\, d\tilde{P}_{\tilde{\theta}_u,u}(x) 
\\&
- 
\sin(\alpha_i)
\int_{\mathbb{R}^{(m+1)\times d}}\sin\left(\sum_{j=1}^{m+1} \Omega_{i,j} \cdot x_{j}\right)\, d\tilde{P}_{\tilde{\theta}_u,u}(x) 
\\&
= \cos(\alpha_i) \  \mathrm{Re}( \tilde{\psi}_{\tilde{\theta}_u,u}(\Omega_i)) - \sin(\alpha_i) \ \mathrm{Im}(  \tilde{\psi}_{\tilde{\theta}_u,u}(\Omega_i))
\\&
= M_{\vartheta_i}(\tilde{\theta}_u,u),
\end{align*}
where $\tilde{\psi}_{\tilde{\theta}_u,u}$ is the characteristic function. If $(\tilde{\theta}_{u,1},\tilde{\theta}_{u,2})\in\mathcal{N}$, then $ M_{\vartheta_i}(\tilde{\theta}_{u,1},u)= M_{\vartheta_i}(\tilde{\theta}_{u,2},u)$ for every possible value of $\vartheta_i=(\mathrm{Vec}(\Omega_i),\alpha_i)\in\mathbb{W}$. In particular, the expectation values of the random Fourier features correspond to the real and imaginary parts of the characteristic function by taking $\alpha_i=0$ and $\alpha_i=-\pi/2$, respectively, so by the uniqueness theorem (Theorem~\ref{thm:uniqueness_characteristic_function}) the corresponding distributions must be equal, $\tilde{P}_{\tilde{\theta}_{u,1},u}= \tilde{P}_{\tilde{\theta}_{u,2},u}$, if $(\tilde{\theta}_{u,1},\tilde{\theta}_{u,2})\in\mathcal{N}$. However, by Assumption~\ref{asmpt:statistical_model_continuous_time}, the map $\tilde{\theta}_u\mapsto \tilde{P}_{\tilde{\theta}_u,u}$ is injective, i.e., if $\tilde{\theta}_{u,1} \neq \tilde{\theta}_{u,2}$ then $\tilde{P}_{\tilde{\theta}_{u,1},u}\neq \tilde{P}_{\tilde{\theta}_{u,2},u}$, so we must have $$(\tilde{\theta}_{u,1},\tilde{\theta}_{u,2})\in\mathcal{N}\implies (\tilde{\theta}_{u,1},\tilde{\theta}_{u,2})\in \{(\tilde{\theta}_{u,1},\tilde{\theta}_{u,2})\in\tilde{\Theta}_u^2:\tilde{\theta}_{u,1}=\tilde{\theta}_{u,2}\}=\{(\tilde{\theta}_u,\tilde{\theta}_u):\tilde{\theta}_u\in\tilde{\Theta}_u\},$$ so $\mathcal{N}\subseteq \{(\tilde{\theta}_u,\tilde{\theta}_u):\tilde{\theta}_u\in\tilde{\Theta}_u\}$. Conversely, we have $\{(\tilde{\theta}_u,\tilde{\theta}_u):\tilde{\theta}_u\in\tilde{\Theta}_u\} \subseteq \mathcal{N}$ trivially. Therefore,
\begin{equation}\label{eqn:conclusion_mathcal_N_continuous_time}\mathcal{N}=\{(\tilde{\theta}_u,\tilde{\theta}_u):\tilde{\theta}_u\in\tilde{\Theta}_u\}.\end{equation}

Hence, for any particular rescaled time $u\in [0,1]$, for almost any random parameters $(\vartheta_1,\ldots,\vartheta_k)$ for the $k=2p+1$ random Fourier features $(\varphi_1,\ldots,\varphi_k)$, we have that for any two distinct parameter values, $\tilde{\theta}_{u,1} \neq \tilde{\theta}_{u,2}$, the corresponding expectations of the $k=2p+1$ random Fourier features are not equal, $\tilde{\Phi}_u(\tilde{\theta}_{u,1})\neq \tilde{\Phi}_u(\tilde{\theta}_{u,2})$.

\textbf{Step 2.2.} Next, we extend the result to almost every
rescaled time $u\in[0,1]$.

For each $\ell\in\mathbb N$, define \begin{equation}\label{eqn:function_being_maximized}
b_{\ell}(u,\vartheta)
=
\max_{(\tilde{\theta}_{u,1},\tilde{\theta}_{u,2}) \in\tilde{\Theta}_u^2}
-\left(\norm{\tilde{\Phi}_u
(\tilde{\theta}_{u,1})
-
\tilde{\Phi}_u
(\tilde{\theta}_{u,2})} +
\max\left(
\frac1\ell-
\|\tilde{\theta}_{u,1}-\tilde{\theta}_{u,2}\|
, 0 \right)
\right).
\end{equation}

The function $\bar{\psi}$ from Assumption~\ref{asmpt:measurability} induces an extension of $\tilde{\Phi}_u$ to $\mathbb{R}^{\tilde{p}}$ that is jointly Borel measurable and continuous in $\tilde{\theta}_u$. We continue to denote this extension by $\tilde{\Phi}_u$ to simplify the presentation. By Assumptions~\ref{asmpt:statistical_model_continuous_time} and~\ref{asmpt:measurability}, the set-valued map $$
(u,\vartheta)
\mapsto
\tilde{\Theta}_u^2,$$ % vartheta does nothing here, but it's fine
% cartesian product of compact sets is compact, yes
is measurable and has values that are nonempty compact sets, and the function being maximized in~\eqref{eqn:function_being_maximized},
\begin{equation}\label{eqn:map_param_to_objective}(\tilde{\theta}_{u,1},\tilde{\theta}_{u,2})\mapsto -\left(\norm{\tilde{\Phi}_u
(\tilde{\theta}_{u,1})
-
\tilde{\Phi}_u
(\tilde{\theta}_{u,2})} +
\max\left(
\frac1\ell-
\|\tilde{\theta}_{u,1}-\tilde{\theta}_{u,2}\|
, 0 \right)
\right),\end{equation} is jointly Borel measurable and continuous
in
$(\tilde{\theta}_{u,1},\tilde{\theta}_{u,2})$
for each fixed $(u,\vartheta)$. Hence, by the
measurable maximum theorem (Theorem~\ref{thm:mble_max_theorem}), the function $b_{\ell}$ from~\eqref{eqn:function_being_maximized} is Borel measurable. Then the set \begin{equation}\label{eqn:set_B}B
=
\bigcup_{\ell=1}^{\infty}
\left\{
(u,\vartheta):
b_{\ell}(u,\vartheta)=0
\right\},\end{equation} is a Borel set. This fact will be used later on, so that we may apply Tonelli's theorem.

Next, we show that $(u,\vartheta)\in B$ if and only if
$\tilde{\Phi}_u$ is not one-to-one. 

First, suppose that $\tilde{\Phi}_u$ is not one-to-one. Then there exist distinct parameters $\tilde{\theta}_{u,1},\tilde{\theta}_{u,2}\in\tilde{\Theta}_u$ and some $\ell\in\mathbb{N}$ such that $\norm{\tilde{\theta}_{u,1}-\tilde{\theta}_{u,2}}\geq \frac{1}{\ell}$ and $\tilde{\Phi}_u
(\tilde{\theta}_{u,1})
=
\tilde{\Phi}_u
(\tilde{\theta}_{u,2})$. Therefore, we have that $b_{\ell}(u,\vartheta)=0$ for some $\ell\in\mathbb{N}$, so $(u,\vartheta)\in B$ by their definitions~\eqref{eqn:function_being_maximized} and~\eqref{eqn:set_B}.

Second, conversely, suppose that $(u,\vartheta)\in B$. Then there exists some $\ell\in\mathbb{N}$ such that $b_{\ell}(u,\vartheta)=0$ by definition. By the compactness of $\tilde{\Theta}_u$ from Assumption~\ref{asmpt:statistical_model_continuous_time} and the continuity of the map from~\eqref{eqn:map_param_to_objective}, the extreme value theorem implies that the map from~\eqref{eqn:map_param_to_objective} must attain its maximum at some $(\tilde{\theta}_{u,1},\tilde{\theta}_{u,2})\in\tilde{\Theta}_u^2$. Therefore, both terms from~\eqref{eqn:function_being_maximized}, $$\norm{\tilde{\Phi}_u
(\tilde{\theta}_{u,1})
-
\tilde{\Phi}_u
(\tilde{\theta}_{u,2})},\quad \max\left(
\frac1\ell-
\|\tilde{\theta}_{u,1}-\tilde{\theta}_{u,2}\|
, 0 \right),$$ must be equal to zero for some $\tilde{\theta}_{u,1},\tilde{\theta}_{u,2}\in\tilde{\Theta}_u$. That is, there exist distinct parameter values $\tilde{\theta}_{u,1},\tilde{\theta}_{u,2}\in\tilde{\Theta}_u$ and some $\ell \in \mathbb{N}$ such that $\norm{\tilde{\theta}_{u,1}-\tilde{\theta}_{u,2}}\geq \frac{1}{\ell}$ and $\tilde{\Phi}_u
(\tilde{\theta}_{u,1})
=
\tilde{\Phi}_u
(\tilde{\theta}_{u,2})$. Therefore, $\tilde{\Phi}_u$ is not one-to-one.

Let $\Pi$ denote the joint distribution of the random parameters $\vartheta=(\vartheta_1,\ldots,\vartheta_k)\in
\mathbb{W}^k$ from~\eqref{eqn:parameters_for_the_random_features}. Note that $\Pi$ is absolutely continuous
with respect to Lebesgue measure on $\mathbb{W}^k$. We just established that $(u,\vartheta)\in B$ if and only if
$\tilde{\Phi}_u$ is not one-to-one. By the conclusion of Step 2.1, namely the set equality from~\eqref{eqn:conclusion_mathcal_N_continuous_time} due to the finite witness theorem, and the absolute continuity of $\Pi$ w.r.t.\ the Lebesgue measure, for every fixed $u\in [0,1]$, we have $$\Pi\left(\left\{\vartheta\in\mathbb{W}^k:
(u,\vartheta)\in B
\right\}\right) =0.$$ Therefore, we have
$$\int_{[0,1]}\Pi\left(\left\{\vartheta\in\mathbb{W}^k:
(u,\vartheta)\in B
\right\}\right) d\lambda(u)=0.$$

We previously showed that $B$ from~\eqref{eqn:set_B} is a Borel set, hence the indicator function $$1_B(u,\vartheta)=\begin{cases}
    1, \quad (u,\vartheta)\in B, \\0, \quad (u,\vartheta)\not\in B,
\end{cases}$$ is jointly Borel measurable. Thus, we may apply Tonelli's theorem. Observe that \begin{align*}\label{eqn:conclusion_tonelli}
&\int_{\mathbb{W}^k}
\lambda\left(
\left\{
u\in[0,1]:
(u,\vartheta)\in B
\right\}
\right) d\Pi(\vartheta)\\
&
=\int_{\mathbb{W}^k} \left(\int_{[0,1]}
1_B(u,\vartheta) d\lambda(u)\right)  d\Pi(\vartheta)\\
&
=\int_{[0,1]}\left(\int_{\mathbb{W}^k} 1_B(u,\vartheta) d\Pi(\vartheta) \right) d\lambda(u)\\
&
=
\int_{[0,1]}\Pi\left(\left\{\vartheta\in\mathbb{W}^k:
(u,\vartheta)\in B
\right\}\right) d\lambda(u)
\\&
=
0.
\end{align*} The integrand $\lambda\left(
\left\{
u\in[0,1]:
(u,\vartheta)\in B
\right\}
\right)$ is nonnegative, so for $\Pi$-almost every $\vartheta$, we have  
$$\lambda\left(
\left\{
u\in[0,1]:
(u,\vartheta)\in B
\right\}
\right)=0.$$
However, we previously showed that $(u,\vartheta)\in B$ if and only if
$\tilde{\Phi}_u$ is not one-to-one, so for $\Pi$-almost every $\vartheta$, we have
$$\lambda\left(
\left\{
u\in[0,1]:
\tilde{\Phi}_u \text{ is not one-to-one }
\right\}
\right)=0.$$
Putting it all together, for $\Pi$-almost any random parameters $\vartheta=(\vartheta_1,\ldots,\vartheta_k)$ for the $k=2p+1$ random Fourier features $(\varphi_1,\ldots,\varphi_k)$, there exists a subset of rescaled times with full measure with the property that $\tilde{\Phi}_u$ is one-to-one, i.e., for $\Pi$-almost every $\vartheta$, 
\begin{equation}\label{eqn:set_of_rescaled_times_one_to_one}\exists \ \mathcal{U}_{\vartheta}=\left\{
u\in[0,1]:
\tilde{\Phi}_u \text{ is one-to-one }
\right\}\subseteq [0,1] \text{ such that } \lambda\left(\mathcal{U}_{\vartheta}\right)=1.\end{equation}

\hfill$\square$

\subsection{Proof of Theorem~\ref{thm:consistency_rolling_window_estimator}}\label{subsection:proof_of_consistency_rolling_window_estimator}

\paragraph*{Setup.} We will use Corollary 3.2 of~\cite{pakes_pollard_simulation_asymptotics} to conclude that $\hat{\theta}^{\mathrm{RW}}$ is consistent. The sample discrepancy function, from~\eqref{eqn:rolling_window_sample_discrepancy}, is given by
\begin{equation*} \begin{aligned}\hat{Q}_n^{\mathrm{RW}}(\theta)&= \frac{1}{n-m}\sum_{t=m+\tau+L}^{n} \left[\norm{F_{t-L}^{\mathrm{obs}}-\bar{F}_{t-L}^{\mathrm{sim}}(\theta)}^2 + K_{t}(\theta)\right],\\  K_{t}(\theta)&=2\left(F_{t-L}^{\mathrm{obs}}-\bar{F}_{t-L}^{\mathrm{sim}}(\theta) \right)^{\top}\left(\left[f_t^{\mathrm{obs}}-\bar{f}_{t}^{\mathrm{sim}}(\theta)\right]-\left[F_{t-L}^{\mathrm{obs}}-\bar{F}_{t-L}^{\mathrm{sim}}(\theta)\right]\right),\end{aligned}\end{equation*} and the population discrepancy function is given by $$Q^{\mathrm{RW}}(\theta) =\int_0^1 \norm{\Phi_u(\theta_0) - \Phi_u(\theta)}^2 du.$$ To use Corollary 3.2 of~\cite{pakes_pollard_simulation_asymptotics}, it suffices to show that, under the assumptions of Theorem~\ref{thm:consistency_rolling_window_estimator}, the following conditions hold: 
\begin{enumerate}
    \item $\left|\hat{Q}_n^{\mathrm{RW}}(\hat{\theta}^{\mathrm{RW}})\right|\leq \inf_{\theta \in \Theta}\left|\hat{Q}_n^{\mathrm{RW}}(\theta)\right| + o_p(1)$.
    \item $\inf_{\norm{\theta - \theta_0} > \delta} \abs{Q^{\mathrm{RW}}(\theta)} >0$ for each $\delta >0$. 
    \item $\sup_{\theta\in\Theta}\frac{\abs{\hat{Q}_n^{\mathrm{RW}}(\theta) - Q^{\mathrm{RW}}(\theta)}}{1+\abs{\hat{Q}_n^{\mathrm{RW}}(\theta)}+\abs{Q^{\mathrm{RW}}(\theta)}}=o_p(1)$.
\end{enumerate}

\paragraph*{Condition 1.}

This condition is immediate from~\eqref{eqn:rolling_window_estimator}.

\paragraph*{Condition 2.}

%%% Note: for existence of sequence (theta_n)_n see infimum "Relation of limits to sequences" https://en.wikipedia.org/wiki/Infimum_and_supremum

%%% Note: compact implies sequentially compact in R^p https://en.wikipedia.org/wiki/Sequentially_compact_space

%%% Note: transfer limit of parameter subsequence to limit of distance of parameter subsequence and true parameter by using continuity of distance i.e. for all epsilon there existence delta=epsilon such that if d(x,y)<delta then |d(x,a)-d(y,a)|< d(x,y) < delta = epsilon by reverse triangle inequality, and limit definition of continuity https://en.wikipedia.org/wiki/Continuous_function

We establish that the infimum is strictly positive through contradiction. Assume for contradiction that there exists a $\delta >0$ such that $$\inf_{\norm{\theta - \theta_0} > \delta} \abs{Q^{\mathrm{RW}}(\theta)} =0.$$ 
By applying the definition of infimum and the continuity of $Q^{\mathrm{RW}}$ by Lemma~\ref{lma:continuity_of_theta_to_QRW}, there exists a sequence $(\theta_n)_{n\in\mathbb{N}}$, where $\norm{\theta_n - \theta_0} > \delta$ for all $n\in\mathbb{N}$, such that $\lim_{n\xrightarrow[]{}\infty} \abs{Q^{\mathrm{RW}}(\theta_n)} = 0$.

By Assumption~\ref{asmpt:statistical_model_continuous_time}, $\Theta$ is a compact subset of $\mathbb{R}^p$, hence $\Theta$ is also sequentially compact, i.e., every sequence of parameters in $\Theta$ has a convergent subsequence converging to a parameter in $\Theta$. Thus, there exists a subsequence $(\theta_{n_j})_{j\in\mathbb{N}}$, where $\norm{\theta_{n_j} - \theta_0} > \delta$ for all $j\in\mathbb{N}$, and a parameter $\theta^{\ast}\in \Theta$ satisfying $\abs{Q^{\mathrm{RW}}(\theta^{\ast})} = Q^{\mathrm{RW}}(\theta^{\ast}) =0$, such that $\theta_{n_j}\xrightarrow[]{}\theta^{\ast}$ as $j\xrightarrow[]{}\infty$.

By Assumption~\ref{asmpt:distinguish_paths_of_distributions_continuous_time}, for all $\theta_0,\theta^{\ast}\in\Theta$, if $\theta_0\neq\theta^{\ast}$, then there exists a subset $\mathcal{U}_{\theta_0,\theta^{\ast}}\subseteq [0,1]$ with positive Lebesgue measure such that $P_{\theta_0,u}\neq P_{\theta^{\ast},u}$ for all $u\in \mathcal{U}_{\theta_0,\theta^{\ast}}$. Since $P_{\theta_0,u}\neq P_{\theta^{\ast},u}$ for all $u\in\mathcal{U}_{\theta_0,\theta^{\ast}}$, we also have that $\tilde{P}_{g_u(\theta_0),u}\neq \tilde{P}_{g_u(\theta^{\ast}),u}$ for all $u\in\mathcal{U}_{\theta_0,\theta^{\ast}}$ because $\tilde{P}_{g_u(\theta),u}=P_{\theta,u}$ is the distribution of $$[\tilde{G}_u(\symbfit{\varepsilon}_{-m},g_u(\theta)),\ldots,\tilde{G}_u(\symbfit{\varepsilon}_0,g_u(\theta))]^{\top}=[G_u(\symbfit{\varepsilon}_{-m},\theta),\ldots,G_u(\symbfit{\varepsilon}_0,\theta)]^{\top}.$$ Therefore, for all $u\in\mathcal{U}_{\theta_0,\theta^{\ast}}$, we have that $g_u(\theta_0)\neq g_u(\theta^{\ast})$ because $\tilde{\theta}_u \mapsto \tilde{P}_{\tilde{\theta}_u,u}$ is injective for all $u\in [0,1]$ by Assumption~\ref{asmpt:statistical_model_continuous_time}. We also have that
$\tilde{\Phi}_u(g_u(\theta_0))\neq \tilde{\Phi}_u(g_u(\theta^{\ast}))$ for almost every $u\in\mathcal{U}_{\theta_0,\theta^{\ast}}$, because by Assumption~\ref{asmpt:injective_continuous_time}, $\tilde{\theta}_u\mapsto \tilde{\Phi}_u(\tilde{\theta}_u)$ is one-to-one for almost every $u\in[0,1]$. This also implies that $\Phi_u(\theta_0)\neq \Phi_u(\theta^{\ast})$ for almost every $u\in\mathcal{U}_{\theta_0,\theta^{\ast}}$ by the definitions of $\Phi_u$ and $\tilde{\Phi}_u$ from~\eqref{eqn:original_random_feature_expectation_cont_time} and~\eqref{eqn:reparam_random_feature_expectation}, respectively, because $\tilde{P}_{g_u(\theta),u}=P_{\theta,u}$ as discussed above. Putting it all together, we have that, if $\theta_0 \neq \theta^{\ast}$, then  \begin{equation*}
Q^{\mathrm{RW}}(\theta^{\ast}) =\int_0^1 \norm{\Phi_u(\theta_0) - \Phi_u(\theta^{\ast})}^2 du  \geq \int_{\mathcal{U}_{\theta_0,\theta^{\ast}}} \norm{\Phi_u(\theta_0) - \Phi_u(\theta^{\ast})}^2 du  >0.\end{equation*} In other words, if $Q^{\mathrm{RW}}(\theta^{\ast})=0$, then $\theta^{\ast}=\theta_0$.

However, by the continuity of $\theta \mapsto \norm{\theta-\theta_0}$, we have $\norm{\theta_{n_j}-\theta_0}\xrightarrow[]{}\norm{\theta^{\ast}-\theta_0}$ as $j\xrightarrow[]{}\infty$. Therefore, $\norm{\theta^{\ast}-\theta_0} \geq \delta$ because $\norm{\theta_{n_j} - \theta_0} > \delta$ for all $j\in\mathbb{N}$, so we have $$0 = \norm{\theta_0-\theta_0} = \norm{\theta^{\ast}-\theta_0}\geq\delta > 0,$$ which is a contradiction. %%% Note: Otherwise, if theta^{\ast} was less than delta distance away from theta_0, we would need elements of the sequence theta_{n_j} to be to be within epsilon distance of theta^{\ast} (for all epsilon >0) which would mean that those elements theta_{n_j} would be less than delta distance away from theta_0 which is impossible.

\paragraph*{Condition 3.}

It suffices to show that 
$\sup_{\theta\in\Theta}\abs{\hat{Q}_n^{\mathrm{RW}}(\theta) - Q^{\mathrm{RW}}(\theta)}=o_p(1)$ because the denominator is always greater than or equal to one. For $\theta\in\Theta$ and $r\in\{0,1,\ldots,s\}$, define \begin{equation}\label{eqn:random_feature_time_average_approx_one_realization}F_{t}^{(r)}(\theta) =  \frac{1}{N_t}\sum_{j=(t-w)\lor (m+1)}^t \varphi\left(\left[G_{(j-m)/n}^{(n,r)}(\symbfit{\varepsilon}_{j-m}^{(r)},\theta),\ldots,G_{j/n}^{(n,r)}(\symbfit{\varepsilon}_j^{(r)},\theta)\right]^{\top}\right),\end{equation} and
\begin{equation}\label{eqn:decomp_for_rolling_window}\begin{aligned}
I_n^1(\theta) &= \frac{1}{n-m}\sum_{t=m+\tau+L}^n \norm{\mu_{\frac{t}{n}}^n(\theta)}^2,  \\ 
I_n^2(\theta) &=  \frac{1}{n-m}\sum_{t=m+\tau+L}^n  \norm{\mu_{\frac{t}{n}}^n(\theta) - \hat{\mu}_{t-L,n}(\theta)}^2,\\ 
I_n^3(\theta) &= \frac{1}{n-m}\sum_{t=m+\tau+L}^n K_{t}^{\ast}(\theta),
\end{aligned}
\end{equation}
where \begin{equation}\label{eqn:auxiliary_terms_for_rolling_window}\begin{aligned}\mu_{\frac{t}{n}}^n(\theta)&=\mathbb{E}_{\theta_0}\left[f_t^{\mathrm{obs}}\right]-\frac{1}{s}\sum_{r=1}^s\mathbb{E}_{\theta}\left[f_t^{(r)}(\theta)\right]\\&=\mathbb{E}_{\theta_0}\left[\varphi\left(\left[G_{(t-m)/n}^{(n,0)}(\symbfit{\varepsilon}_{t-m}^{(0)},\theta_0),\ldots,G_{t/n}^{(n,0)}(\symbfit{\varepsilon}_t^{(0)},\theta_0)\right]^{\top}\right)\right]\\ &- \frac{1}{s}\sum_{r=1}^s \mathbb{E}_{\theta}\left[\varphi\left(\left[G_{(t-m)/n}^{(n,r)}(\symbfit{\varepsilon}_{t-m}^{(r)},\theta),\ldots,G_{t/n}^{(n,r)}(\symbfit{\varepsilon}_t^{(r)},\theta)\right]^{\top}\right)\right],\\
\hat{\mu}_{t,n}(\theta) &= F_t^{\mathrm{obs}}-\bar{F}_t^{\mathrm{sim}}(\theta),
\\
K_{t}(\theta)&=2\hat{\mu}_{t-L,n}(\theta)^{\top}\left(\left[f_t^{\mathrm{obs}}-\bar{f}_{t}^{\mathrm{sim}}(\theta)\right]-\hat{\mu}_{t-L,n}(\theta)\right),\\
K_{t}^{\ast}(\theta)&=2\hat{\mu}_{t-L,n}(\theta)^{\top}\left(\left[f_t^{\mathrm{obs}}-\bar{f}_{t}^{\mathrm{sim}}(\theta)\right]-\mu_{\frac{t}{n}}^n(\theta)\right),
\end{aligned}\end{equation} %%% Note: the second derivative in I_n^2 is just the 2 times the d x d identity matrix, so since it is divided by 2 in I_n^2 we can just ignore it... and note that the result is x^T x so we just simplify to ||x||_2^2
and
\begin{equation}\label{eqn:mu_u}\begin{aligned}\breve{\mu}_{u}^n(\theta)&=\mathbb{E}_{\theta_0}\left[\varphi\left(\left[G_{u}^{(n,0)}(\symbfit{\varepsilon}_{-m}^{(0)},\theta_0),\ldots,G_{u}^{(n,0)}(\symbfit{\varepsilon}_0^{(0)},\theta_0)\right]^{\top}\right)\right]\\ &- \frac{1}{s}\sum_{r=1}^s \mathbb{E}_{\theta}\left[\varphi\left(\left[G_{u}^{(n,r)}(\symbfit{\varepsilon}_{-m}^{(r)},\theta),\ldots,G_{u}^{(n,r)}(\symbfit{\varepsilon}_0^{(r)},\theta)\right]^{\top}\right)\right].
\end{aligned}
\end{equation}

We have
\begin{align*}&\sup_{\theta\in\Theta}\abs{\hat{Q}_n^{\mathrm{RW}}(\theta) - Q^{\mathrm{RW}}(\theta)} \\& \overset{(1)}{=} \sup_{\theta\in\Theta}\abs{\frac{1}{n-m}\sum_{t=m+\tau+L}^{n} \left[\norm{F_{t-L}^{\mathrm{obs}}-\bar{F}_{t-L}^{\mathrm{sim}}(\theta)}^2 + K_{t}(\theta)\right] - \int_0^1 \norm{\Phi_u(\theta_0) - \Phi_u(\theta)}^2 du}
\\& \overset{(2)}{=} \sup_{\theta\in\Theta}\abs{I_n^1(\theta) - I_n^2(\theta) + I_n^3(\theta) - \int_0^1 \norm{\Phi_u(\theta_0) - \Phi_u(\theta)}^2 du}
\\& \overset{(3)}{\leq} \sup_{\theta\in\Theta}\abs{I_n^1(\theta) - \int_0^1 \norm{\Phi_u(\theta_0) - \Phi_u(\theta)}^2 du}
+ \sup_{\theta\in\Theta}\abs{I_n^2(\theta)} + \sup_{\theta\in\Theta}\abs{I_n^3(\theta)},
\end{align*}
by (1) the definition of $\hat{Q}_n^{\mathrm{RW}}(\theta)$ and $Q^{\mathrm{RW}}(\theta)$, (2) a Taylor expansion (see below) and the definition of $I_n^1(\theta)$, $I_n^2(\theta)$, $I_n^3(\theta)$, and (3) the triangle inequality and subadditivity of the supremum. More specifically, regarding (2), at each time $t$, a Taylor expansion yields 
\begin{align*}
    \norm{\mu_{\frac{t}{n}}^n(\theta)}^2= \norm{\hat{\mu}_{t-L,n}(\theta)}^2+2\hat{\mu}_{t-L,n}(\theta)^{\top}\left(\mu_{\frac{t}{n}}^n(\theta)-\hat{\mu}_{t-L,n}(\theta)\right)+\norm{\mu_{\frac{t}{n}}^n(\theta)-\hat{\mu}_{t-L,n}(\theta)}^2,
\end{align*}%%% Note: the higher order terms vanish because the derivatives will be zero 
so rearranging this equation and plugging in for $\norm{\hat{\mu}_{t-L,n}(\theta)}^2=\norm{F_{t-L}^{\mathrm{obs}}-\bar{F}_{t-L}^{\mathrm{sim}}(\theta)}^2$ yields
\begin{align*}
& \norm{F_{t-L}^{\mathrm{obs}}-\bar{F}_{t-L}^{\mathrm{sim}}(\theta)}^2 + K_{t}(\theta)
\\ & = 
    \norm{\mu_{\frac{t}{n}}^n(\theta)}^2  - \norm{\mu_{\frac{t}{n}}^n(\theta)-\hat{\mu}_{t-L,n}(\theta)}^2 +\left[- 2\hat{\mu}_{t-L,n}(\theta)^{\top}\left(\mu_{\frac{t}{n}}^n(\theta)-\hat{\mu}_{t-L,n}(\theta)\right) + K_{t}(\theta)\right]
\\& = \norm{\mu_{\frac{t}{n}}^n(\theta)}^2 - \norm{\mu_{\frac{t}{n}}^n(\theta)-\hat{\mu}_{t-L,n}(\theta)}^2 + 2\hat{\mu}_{t-L,n}(\theta)^{\top}\left(\left[f_t^{\mathrm{obs}}-\bar{f}_{t}^{\mathrm{sim}}(\theta)\right]-\mu_{\frac{t}{n}}^n(\theta)\right),
\end{align*}
therefore averaging yields 
$$\frac{1}{n-m}\sum_{t=m+\tau+L}^{n} \left[\norm{F_{t-L}^{\mathrm{obs}}-\bar{F}_{t-L}^{\mathrm{sim}}(\theta)}^2 + K_{t}(\theta)\right]=I_n^1(\theta) - I_n^2(\theta) + I_n^3(\theta).$$
%%% Note: Taylor expansion of ||x||^2 at $\mu_{\frac{t}{n}}$$ 
%%% Note: as usual, see e.g. https://en.wikipedia.org/wiki/Taylor%27s_theorem and https://mathinsight.org/taylors_theorem_multivariable_introduction
%%% Note: this is exact, no higher order terms... can just derive this by adding and subtracting the same term and using linearity of inner product...

%%% Note: by union bounding the above derivation, as usual... 
It suffices to show that
\begin{align*}\sup_{\theta\in\Theta}\abs{I_n^1(\theta) - \int_0^1 \norm{\Phi_u(\theta_0) - \Phi_u(\theta)}^2 du}&=o(1),\\ 
\sup_{\theta\in\Theta}\abs{I_n^2(\theta)}&=o_p(1),  \\ \sup_{\theta\in\Theta}\abs{I_n^3(\theta)}&=o_p(1),\end{align*} so we break down the remainder of the proof into these three steps.

\textbf{Step 1.} We show that \begin{align*}\sup_{\theta\in\Theta}\abs{I_n^1(\theta) - \int_0^1 \norm{\Phi_u(\theta_0) - \Phi_u(\theta)}^2 du}=o(1).\end{align*}

Observe that \begin{align*}&\sup_{\theta\in\Theta}\abs{I_n^1(\theta) - \int_0^1 \norm{\Phi_u(\theta_0) - \Phi_u(\theta)}^2 du}\\
&
\overset{(1)}{=}  \sup_{\theta\in\Theta}\abs{\frac{1}{n-m}\sum_{t=m+\tau+L}^n \norm{\mu_{\frac{t}{n}}^n(\theta)}^2 - \int_0^1 \norm{\Phi_u(\theta_0) - \Phi_u(\theta)}^2 du}\\
&
\overset{(2)}{\leq}  \sup_{\theta\in\Theta}\abs{\frac{1}{n-m}\sum_{t=m+\tau+L}^n \norm{\mu_{\frac{t}{n}}^n(\theta)}^2 - \int_0^1 \norm{\breve{\mu}_{u}^n(\theta)}^2 du}\\
&
+
\sup_{\theta\in\Theta}\abs{\int_0^1 \norm{\breve{\mu}_{u}^n(\theta)}^2 du - \int_0^1 \norm{\Phi_u(\theta_0) - \Phi_u(\theta)}^2 du},
\end{align*} by
(1) the definition of $I_n^1(\theta)$, and (2) adding and subtracting the same term, the triangle inequality, and subadditivity of the supremum. We will show that both terms are $o(1)$. 

\textbf{Step 1.1.} Observe that \begin{align*}&\sup_{\theta\in\Theta}\abs{\frac{1}{n-m}\sum_{t=m+\tau+L}^n \norm{\mu_{\frac{t}{n}}^n(\theta)}^2 - \int_{\frac{m+\tau+L-1}{n}}^1 \norm{\breve{\mu}_{u}^n(\theta)}^2 du}\\
&
\overset{(1)}{\leq} 
\frac{4mk}{n} + \sup_{\theta\in\Theta}\abs{\frac{1}{n}\sum_{t=m+\tau+L}^n \norm{\mu_{\frac{t}{n}}^n(\theta)}^2 - \int_{\frac{m+\tau+L-1}{n}}^1 \norm{\breve{\mu}_{u}^n(\theta)}^2 du}\\
&
\overset{(2)}{\leq}  \frac{4mk}{n} +  \sup_{\theta\in\Theta}\int_{\frac{m+\tau+L-1}{n}}^1\abs{ \norm{\mu_{\frac{\lceil un \rceil}{n}}^n(\theta)}^2 - \norm{\breve{\mu}_{u}^n(\theta)}^2} du\\  
&
\overset{(3)}{\leq}  \frac{4mk}{n} + \frac{1}{n} \sup_{\theta\in\Theta} \sum_{t=m+\tau+L}^{n} \underset{u\in \left[\frac{t-1}{n},\frac{t}{n}\right]}{\sup} \abs{ \norm{\mu_{\frac{t}{n}}^n(\theta)}^2 - \norm{\breve{\mu}_{u}^n(\theta)}^2}\\  
&
\overset{(4)}{\leq}   \frac{4mk}{n} + \frac{1}{n} \sup_{\theta\in\Theta} \sum_{t=m+\tau+L}^{n} \underset{u\in \left[\frac{t-1}{n},\frac{t}{n}\right]}{\sup}  \left(\norm{\mu_{\frac{t}{n}}^n(\theta)}+\norm{\breve{\mu}_{u}^n(\theta)}\right)\norm{\mu_{\frac{t}{n}}^n(\theta)-\breve{\mu}_{u}^n(\theta)}\\
&
\overset{(5)}{\leq} \frac{4mk}{n} +  4\sqrt{k} \frac{1}{n}  \sup_{\theta\in\Theta} \sum_{t=m+\tau+L}^{n} \underset{u\in \left[\frac{t-1}{n},\frac{t}{n}\right]}{\sup}  \left(\norm{\breve{\mu}_{\frac{t}{n}}^n(\theta)-\breve{\mu}_{u}^n(\theta)}+\norm{\mu_{\frac{t}{n}}^n(\theta)-\breve{\mu}_{\frac{t}{n}}^n(\theta)}\right) \\
& 
\overset{(6)}{\leq} \frac{4mk}{n} + 4\sqrt{k}L_{\varphi} (m+1) \frac{1}{n}   \sup_{\theta\in\Theta} \sum_{t=m+\tau+L}^{n} \underset{u\in \left[\frac{t-1}{n},\frac{t}{n}\right]}{\sup}  \norm{
G_{\frac{t}{n}}^{(n,0)}(\symbfit{\varepsilon}_{0}^{(0)},\theta) - 
G_{u}^{(n,0)}(\symbfit{\varepsilon}_{0}^{(0)},\theta)
}_{\mathcal{L}^1(\theta)}\\
& 
+ 4\sqrt{k}L_{\varphi}(m+1)  \frac{1}{n} \frac{1}{s}\sum_{r=1}^s \sup_{\theta\in\Theta} \left(\sum_{t=m+\tau+L}^{n}  \underset{u\in \left[\frac{t-1}{n},\frac{t}{n}\right]}{\sup}  \norm{
G_{\frac{t}{n}}^{(n,r)}(\symbfit{\varepsilon}_{0}^{(r)},\theta) - 
G_{u}^{(n,r)}(\symbfit{\varepsilon}_{0}^{(r)},\theta)
}_{\mathcal{L}^1(\theta)}\right) \\
& + 4\sqrt{k}L_{\varphi} \frac{1}{n}   \sup_{\theta\in\Theta} \sum_{t=m+\tau+L}^{n}  \sum_{i=0}^m \norm{
G_{\frac{t-i}{n}}^{(n,0)}(\symbfit{\varepsilon}_{0}^{(0)},\theta) - 
G_{\frac{t}{n}}^{(n,0)}(\symbfit{\varepsilon}_{0}^{(0)},\theta)
}_{\mathcal{L}^1(\theta)}\\
& 
+ 4\sqrt{k}L_{\varphi} \frac{1}{n} \frac{1}{s}\sum_{r=1}^s \sup_{\theta\in\Theta} \left(\sum_{t=m+\tau+L}^{n} \sum_{i=0}^m \norm{
G_{\frac{t-i}{n}}^{(n,r)}(\symbfit{\varepsilon}_{0}^{(r)},\theta) - 
G_{\frac{t}{n}}^{(n,r)}(\symbfit{\varepsilon}_{0}^{(r)},\theta)
}_{\mathcal{L}^1(\theta)}\right) \\
& 
\overset{(7)}{\leq} \frac{4mk}{n} +  8\sqrt{k}L_{\varphi} (m+1)^2\frac{n^{1-1/\kappa}}{n}\Lambda,
\end{align*} 
by (1) adding and subtracting the same term (i.e., with a factor of $\frac{1}{n}$), the triangle inequality, subadditivity of the supremum, factoring out the sum of squared norms, the equality $\frac{1}{n-m}-\frac{1}{n}=\frac{m}{n(n-m)}$, upper bounding the sum by $n-m$ times the maximum over time $t$ of the squared norm, noting that the random Fourier features are bounded between $-1$ and $1$ so the corresponding expectations must be as well, then upper bounding the squared norm by $4k$, (2) the definition of the ceiling function, using the Riemann sum approximation of the integral, and the triangle inequality for integrals, (3) upper bounding the integral over each interval by the supremum over that interval multiplied by the interval length and summing over the intervals, (4) the equality $\abs{x^2-y^2}=\abs{x-y}\abs{x+y}$ applied to the norms and the triangle inequality applied to this absolute value of the sum of norms, (5) using the definition of $\mu_{\frac{t}{n}}^n(\theta)$ and $\breve{\mu}_{u}^n(\theta)$ and applying the triangle inequality to each of their norms, applying the triangle inequality and absolute homogeneity to the simulation-average, upper bounding each norm by $\sqrt{k}$ where $k=2p+1$, $\theta \in \mathbb{R}^p$, because each of the $k$ dimensions of the random Fourier features takes values in $[-1,1]$, so we can upper bound by four times this amount, and the distributive property, (6) the triangle inequality, subadditivity of the supremum, linearity of expectation, Jensen's inequality, the random Fourier features $\varphi$ are $L_{\varphi}$-Lipschitz with Lipschitz constant given by~\eqref{eqn:lipschitz_constant_rff}, the triangle inequality, the Minkowski inequality, shifting the noise inputs which is equivalent due to stationarity, and (7) H\"{o}lder's inequality applied to each sum over time $t$ (i.e., for each $r=0,1,\ldots,s$), by telescoping the sums in the third term and fourth term, upper bounding by the $\kappa$-variation, and applying the upper bound from Assumption~\ref{asmpt:nonstationarity_continuous_time}.

Lastly, observe that 
\begin{align*}&\sup_{\theta\in\Theta}\abs{\frac{1}{n-m}\sum_{t=m+\tau+L}^n \norm{\mu_{\frac{t}{n}}^n(\theta)}^2 - \int_{0}^1 \norm{\breve{\mu}_{u}^n(\theta)}^2 du}\\
& 
\overset{(1)}{\leq} 
\sup_{\theta\in\Theta}\abs{\frac{1}{n-m}\sum_{t=m+\tau+L}^n \norm{\mu_{\frac{t}{n}}^n(\theta)}^2 - \int_{\frac{m+\tau+L-1}{n}}^1 \norm{\breve{\mu}_{u}^n(\theta)}^2 du} \\
& + \sup_{\theta\in\Theta}\abs{\int_{\frac{m+\tau+L-1}{n}}^1 \norm{\breve{\mu}_{u}^n(\theta)}^2 du - \int_{0}^1 \norm{\breve{\mu}_{u}^n(\theta)}^2 du}
\\
&
\overset{(2)}{=}
\sup_{\theta\in\Theta}\abs{\frac{1}{n-m}\sum_{t=m+\tau+L}^n \norm{\mu_{\frac{t}{n}}^n(\theta)}^2 - \int_{\frac{m+\tau+L-1}{n}}^1 \norm{\breve{\mu}_{u}^n(\theta)}^2 du} \\&  + \sup_{\theta\in\Theta}\abs{\int_0^{\frac{m+\tau+L-1}{n}}\norm{\breve{\mu}_{u}^n(\theta)}^2 du}
\\&
\overset{(3)}{\leq}
\left(\frac{4mk}{n} +  8\sqrt{k}L_{\varphi} (m+1)^2\frac{n^{1-1/\kappa}}{n}\Lambda\right)  + \frac{m+\tau+L-1}{n} \ 4k 
\\&
\overset{(4)}{=} o(1),
\end{align*}
by (1) adding and subtracting the same term, the triangle inequality, and subadditivity of the supremum, (2) subtraction, (3) upper bounding the first term by applying the inequality derived previously, and upper bounding the second term by the length of the interval $[0,\frac{m+\tau+L-1}{n}]$ multiplied by $4k$ because the integrand can be upper bounded by $4k$ since the random Fourier features each take values in $[-1,1]$, and (4) noting that both of these terms converge to zero because the first term is $o(1)$ because $\kappa\in [1,4)$ by Assumption~\ref{asmpt:nonstationarity_continuous_time} and the second term is $o(1)$ because $\tau_n =o(n)$ by Assumption~\ref{asmpt:growth_of_offset_and_window_size_continuous_time}.

\textbf{Step 1.2.} Observe that \begin{align*}&\sup_{\theta\in\Theta}\abs{\int_0^1 \norm{\breve{\mu}_{u}^n(\theta)}^2 du - \int_0^1 \norm{\Phi_u(\theta_0) - \Phi_u(\theta)}^2 du}\\
& 
\overset{(1)}{\leq}
\sup_{\theta\in\Theta}\int_0^1 \abs{\norm{\breve{\mu}_{u}^n(\theta)}^2  -  \norm{\Phi_u(\theta_0) - \Phi_u(\theta)}^2} du
\\
&
\overset{(2)}{\leq} 4\sqrt{k}
\sup_{\theta\in\Theta}\int_0^1 \norm{\breve{\mu}_{u}^n(\theta)-(\Phi_u(\theta_0) - \Phi_u(\theta))}du
\\
&
\overset{(3)}{\leq} 4\sqrt{k}
\int_0^1 \norm{\mathbb{E}_{\theta_0}\left[\varphi\left(\left[G_{u}^{(n,0)}(\symbfit{\varepsilon}_{-m}^{(0)},\theta_0),\ldots,G_{u}^{(n,0)}(\symbfit{\varepsilon}_0^{(0)},\theta_0)\right]^{\top}\right)\right] -\Phi_u(\theta_0)}du
\\&
+
4\sqrt{k} \frac{1}{s}\sum_{r=1}^s
\sup_{\theta\in\Theta}\int_0^1 \norm{ \mathbb{E}_{\theta}\left[\varphi\left(\left[G_{u}^{(n,r)}(\symbfit{\varepsilon}_{-m}^{(r)},\theta),\ldots,G_{u}^{(n,r)}(\symbfit{\varepsilon}_0^{(r)},\theta)\right]^{\top}\right)\right]-\Phi_u(\theta)}du
\\& 
\overset{(4)}{\leq} 
4\sqrt{k}
\underset{u \in [0,1]}{\sup} \norm{\mathbb{E}_{\theta_0}\left[\varphi\left(\left[G_{u}^{(n,0)}(\symbfit{\varepsilon}_{-m}^{(0)},\theta_0),\ldots,G_{u}^{(n,0)}(\symbfit{\varepsilon}_0^{(0)},\theta_0)\right]^{\top}\right)\right] -\Phi_u(\theta_0)}
\\&
+ 4\sqrt{k} \frac{1}{s}\sum_{r=1}^s
\sup_{\theta\in\Theta}\underset{u\in [0,1]}{\sup} \norm{ \mathbb{E}_{\theta}\left[\varphi\left(\left[G_{u}^{(n,r)}(\symbfit{\varepsilon}_{-m}^{(r)},\theta),\ldots,G_{u}^{(n,r)}(\symbfit{\varepsilon}_0^{(r)},\theta)\right]^{\top}\right)\right]-\Phi_u(\theta)}\\&
\overset{(5)}{=} o(1),
\end{align*} by (1) the linearity of integrals and the triangle inequality for integrals, (2) the equality $\abs{x^2-y^2}=\abs{x-y}\abs{x+y}$ applied to the norms and the triangle inequality applied to this absolute value of the sum of norms, then using the definition of $\breve{\mu}_{u}^n(\theta)$ and applying the triangle inequality to each of the norms, applying the triangle inequality and absolute homogeneity to the simulation-average terms, upper bounding each norm by $\sqrt{k}$ where $k=2p+1$, $\theta \in \mathbb{R}^p$, because each of the $k$ dimensions of the random Fourier features takes values in $[-1,1]$, so we can upper bound by four times this amount, and the distributive property, (3) the triangle inequality, subadditivity, and the distributive property, (4) upper bounding the integral by the length of the interval multiplied by the supremum over the interval, and (5) the same $(u,\theta)$-pointwise inequalities used to show~\eqref{eqn:convergence_to_tilde_Phi_theta_u_continuous_time}, where $\tilde{\Phi}_u(\tilde{\theta}_u)=\Phi_u(\theta)$ by the definitions and Assumption~\ref{asmpt:reparametrization_continuous_time}, hold uniformly over $u\in [0,1]$ and over $\theta\in\Theta$ by using the subadditivity of the supremum and because Assumption~\ref{asmpt:limiting_mapping_continuous_time} is a $(u,\theta)$-uniform convergence condition.

Putting it all together, we have \begin{align*}&\sup_{\theta\in\Theta}\abs{I_n^1(\theta) - \int_0^1 \norm{\Phi_u(\theta_0) - \Phi_u(\theta)}^2 du}=o(1).
\end{align*}

\textbf{Step 2.} We show that \begin{align*} 
\sup_{\theta\in\Theta}\abs{I_n^2(\theta)}=o_p(1).\end{align*}

%%% Note that assumption A.4 from Mies 2023 is automatically satisfied because the "time series of random features" is bounded between -1 and 1 so the norms will be bounded... and the derivative is just 2 times the RW mean and the second derivative is just 2 
Observe that
\begin{align*}
&\sup_{\theta\in\Theta}\abs{I_n^2(\theta)}
\\ 
&
\overset{(1)}{=} \sup_{\theta\in\Theta} \ \frac{1}{n-m}\sum_{t=m+\tau+L}^n  \norm{\mu_{\frac{t}{n}}^n(\theta) - \hat{\mu}_{t-L,n}(\theta)}^2
\\ 
&
\overset{(2)}{\leq}  \sup_{\theta\in\Theta} \ \frac{1}{n-m}\sum_{t=m+\tau+L}^n  \left(\norm{\mu_{\frac{t}{n}}^n(\theta) - \mu_{\frac{t-L}{n}}^n(\theta)} + \norm{\mu_{\frac{t-L}{n}}^n(\theta)-\hat{\mu}_{t-L,n}(\theta)}\right)^2
\\ 
&
\overset{(3)}{\leq}  \sup_{\theta\in\Theta} \ \frac{2}{n-m}\sum_{t=m+\tau+L}^n  \norm{\mu_{\frac{t}{n}}^n(\theta) - \mu_{\frac{t-L}{n}}^n(\theta)}^2 
\\&
+\sup_{\theta\in\Theta} \ \frac{2}{n-m}\sum_{t=m+\tau+L}^n  \norm{\mu_{\frac{t-L}{n}}^n(\theta)-\hat{\mu}_{t-L,n}(\theta)}^2,
\end{align*}
by (1) the definition of $I_n^2(\theta)$, (2) adding and subtracting the same term then applying the triangle inequality, and (3) expanding the square then applying Young's inequality and subadditivity of the supremum. We will show the first term is $o(1)$ and the second term is $o_p(1)$.

\textbf{Step 2.1.} To begin, observe that, for every $\theta\in\Theta$, we have
\begin{align*}
& \frac{1}{2}\sum_{t=m+\tau+L}^n  \norm{\mu_{\frac{t}{n}}^n(\theta) - \mu_{\frac{t-L}{n}}^n(\theta)}^2 
\\&
\overset{(1)}{\leq}
 \sum_{t=m+\tau+L}^n  
\norm{\begin{aligned}&\mathbb{E}_{\theta_0}\varphi\left(\left[G_{(t-m)/n}^{(n,0)}(\symbfit{\varepsilon}_{-m}^{(0)},\theta_0),\ldots,G_{t/n}^{(n,0)}(\symbfit{\varepsilon}_0^{(0)},\theta_0)\right]^{\top}\right)\\&-\mathbb{E}_{\theta_0}\varphi\left(\left[G_{(t-L-m)/n}^{(n,0)}(\symbfit{\varepsilon}_{-m}^{(0)},\theta_0),\ldots,G_{(t-L)/n}^{(n,0)}(\symbfit{\varepsilon}_{0}^{(0)},\theta_0)\right]^{\top}\right)\end{aligned}}^2 
\\&
+
\frac{1}{s}\sum_{r=1}^s  \sum_{t=m+\tau+L}^n  
\norm{\begin{aligned}&\mathbb{E}_{\theta}\varphi\left(\left[G_{(t-m)/n}^{(n,r)}(\symbfit{\varepsilon}_{-m}^{(r)},\theta),\ldots,G_{t/n}^{(n,r)}(\symbfit{\varepsilon}_0^{(r)},\theta)\right]^{\top}\right)\\&-\mathbb{E}_{\theta}\varphi\left(\left[G_{(t-L-m)/n}^{(n,r)}(\symbfit{\varepsilon}_{-m}^{(r)},\theta),\ldots,G_{(t-L)/n}^{(n,r)}(\symbfit{\varepsilon}_{0}^{(r)},\theta)\right]^{\top}\right)\end{aligned}}^2 
\\&
\overset{(2)}{\leq} L^2 \sum_{t=m+1}^n 
\norm{\begin{aligned}&\mathbb{E}_{\theta_0}\varphi\left(\left[G_{(t-m)/n}^{(n,0)}(\symbfit{\varepsilon}_{-m}^{(0)},\theta_0),\ldots,G_{t/n}^{(n,0)}(\symbfit{\varepsilon}_{0}^{(0)},\theta_0)\right]^{\top}\right)\\&-\mathbb{E}_{\theta_0}\varphi\left(\left[G_{(t-m-1)/n}^{(n,0)}(\symbfit{\varepsilon}_{-m}^{(0)},\theta_0),\ldots,G_{(t-1)/n}^{(n,0)}(\symbfit{\varepsilon}_{0}^{(0)},\theta_0)\right]^{\top}\right)\end{aligned}}^2 
\\&
+
\frac{L^2}{s}\sum_{r=1}^s  \sum_{t=m+1}^n  
\norm{\begin{aligned}&\mathbb{E}_{\theta}\varphi\left(\left[G_{(t-m)/n}^{(n,r)}(\symbfit{\varepsilon}_{-m}^{(r)},\theta),\ldots,G_{t/n}^{(n,r)}(\symbfit{\varepsilon}_{0}^{(r)},\theta)\right]^{\top}\right)\\&-\mathbb{E}_{\theta}\varphi\left(\left[G_{(t-m-1)/n}^{(n,r)}(\symbfit{\varepsilon}_{-m}^{(r)},\theta),\ldots,G_{(t-1)/n}^{(n,r)}(\symbfit{\varepsilon}_{0}^{(r)},\theta)\right]^{\top}\right)\end{aligned}}^2
\\& 
\overset{(3)}{\leq}
L^2 L_{\varphi}^2 (m+1)^2 \sum_{t=1}^n 
\norm{G_{t/n}^{(n,0)}(\symbfit{\varepsilon}_{0}^{(0)},\theta_0)-G_{(t-1)/n}^{(n,0)}(\symbfit{\varepsilon}_{0}^{(0)},\theta_0)}_{\mathcal{L}^q(\theta_0)}^2 
\\&
+ L^2 L_{\varphi}^2 (m+1)^2
\frac{1}{s}\sum_{r=1}^s  \sum_{t=1}^n  
\norm{G_{t/n}^{(n,r)}(\symbfit{\varepsilon}_{0}^{(r)},\theta)-G_{(t-1)/n}^{(n,r)}(\symbfit{\varepsilon}_{0}^{(r)},\theta)}_{\mathcal{L}^q(\theta)}^2,
\end{align*}
by (1) the definition of $\mu_{\frac{t}{n}}^n(\theta)$ and $\mu_{\frac{t-L}{n}}^n(\theta)$, the triangle inequality, expanding the square, Young's inequality, and the triangle inequality, (2) adding and subtracting the same intermediate lag terms between $t$ and $t-L$ (i.e., $t-1,\ldots,t-L+1$), the triangle inequality, applying Young's inequality $L$ times, factoring out $L$, adding extra lag terms so that every lag term is repeated $L$ times and then factoring out this second $L$ factor, and (3) linearity of expectation, Jensen's inequality, because the random Fourier features $\varphi$ are $L_{\varphi}$-Lipschitz with Lipschitz constant given by~\eqref{eqn:lipschitz_constant_rff}, the triangle inequality, the Minkowski inequality, and Young's inequality.

Observe that \begin{align*}&\sup_{\theta\in\Theta} \frac{2}{n-m} \sum_{t=m+\tau+L}^n  \norm{\mu_{\frac{t}{n}}^n(\theta) - \mu_{\frac{t-L}{n}}^n(\theta)}^2 \\& 
\overset{(1)}{\leq}  \frac{4L^2 L_{\varphi}^2 (m+1)^2}{n-m} \sum_{r=0}^s \sup_{\theta\in\Theta}  \sum_{t=1}^n \norm{G_{t/n}^{(n,r)}(\symbfit{\varepsilon}_{0}^{(r)},\theta)-G_{(t-1)/n}^{(n,r)}(\symbfit{\varepsilon}_{0}^{(r)},\theta)}_{\mathcal{L}^q(\theta)}^2
\\&
\overset{(2)}{\leq}
\frac{4 L^2 L_{\varphi}^2 (m+1)^2}{n-m}(s+1)n^{\max(1-2/\kappa,0)}\Lambda^2 
\\&
\overset{(3)}{=} o(1), \end{align*} by (1) the previous arguments, linearity, and subadditivity of the supremum, (2) H\"{o}lder's inequality applied to each sum over time $t$ (i.e., for each $r=0,1,\ldots,s$) so we can upper bound by the squared $\kappa$-variation multiplied by $n^{\max(1-2/\kappa,0)}$, where the max is because we consider the cases $\kappa \in [1,2]$ and $\kappa \in (2,4)$ separately, and applying the upper bound $\Lambda>0$ from Assumption~\ref{asmpt:nonstationarity_continuous_time}, and (3) the rate of growth of $L$ from Assumption~\ref{asmpt:growth_of_offset_and_window_size_continuous_time}.

\textbf{Step 2.2.} Observe that, for every $\theta\in\Theta$, we have \begin{align}\label{eqn:step_2_2}
& \frac{1}{2}\sum_{t=m+\tau+L}^n  \norm{\mu_{\frac{t-L}{n}}^n(\theta)-\hat{\mu}_{t-L,n}(\theta)}^2
\\& \nonumber
\overset{(1)}{\leq} 
\sum_{t=m+\tau+L}^n \norm{
\mathbb{E}_{\theta_0}\varphi\left(\left[G_{(t-L-m)/n}^{(n,0)}(\symbfit{\varepsilon}_{t-L-m}^{(0)},\theta_0),\ldots,G_{(t-L)/n}^{(n,0)}(\symbfit{\varepsilon}_{t-L}^{(0)},\theta_0)\right]^{\top}\right)-F_{t-L}^{\mathrm{obs}}}^2\\ &+ \nonumber \sum_{r=1}^s \sum_{t=m+\tau+L}^n \norm{ \mathbb{E}_{\theta}\varphi\left(\left[G_{(t-L-m)/n}^{(n,r)}(\symbfit{\varepsilon}_{t-L-m}^{(r)},\theta),\ldots,G_{(t-L)/n}^{(n,r)}(\symbfit{\varepsilon}_{t-L}^{(r)},\theta)\right]^{\top}\right)-F_{t-L}^{(r)}(\theta)}^2,
\end{align}
by (1) the definitions of $\mu_{\frac{t-L}{n}}^n(\theta)$ and $\hat{\mu}_{t-L,n}(\theta)$, the triangle inequality, and Young's inequality, where $F_{t}^{(r)}(\theta)$ is from~\eqref{eqn:random_feature_time_average_approx_one_realization}. We have
\begin{align*}
& \underset{\theta\in\Theta}{\sup} \ \frac{2}{n-m}\sum_{t=m+\tau+L}^n  \norm{\mu_{\frac{t-L}{n}}^n(\theta)-\hat{\mu}_{t-L,n}(\theta)}^2
\\&
\overset{(1)}{\leq} 
\sum_{r=0}^s \underset{\theta\in\Theta}{\sup} \ \frac{4}{n-m}\sum_{t=m+\tau+L}^n  \norm{ \mathbb{E}_{\theta}\left[f_{t-L}^{(r)}(\theta)\right]-F_{t-L}^{(r)}(\theta)}^2,
\end{align*} by (1) the previous arguments, the definition of the random features $f_{t-L}^{(r)}(\theta)$ at time $t-L$ in terms of the generative mappings, linearity, and subadditivity of the supremum. It suffices to show that, for all $r=0,1,\ldots,s$, we have
\begin{align*}\underset{\theta\in\Theta}{\sup} \sum_{t=m+\tau+L}^n  \norm{ \mathbb{E}_{\theta}\left[f_{t-L}^{(r)}(\theta)\right]-F_{t-L}^{(r)}(\theta)}^2=o_p(n).\end{align*} 
%%% Note: this follows by similar steps to Lemma C.3 in Mies 2023 which is used to prove the rate condition in Proposition 3.1

By adding and subtracting the same term, the triangle inequality, Young's inequality, and subadditivity of the supremum, half of the previous term is less than or equal to 
\begin{align*}\underset{\theta\in\Theta}{\sup} \sum_{t=m+\tau+L}^n  \norm{ \mathbb{E}_{\theta}\left[f_{t-L}^{(r)}(\theta)\right] -\mathbb{E}_{\theta}F_{t-L}^{(r)}(\theta)}^2+\underset{\theta\in\Theta}{\sup} \sum_{t=m+\tau+L}^n  \norm{ \mathbb{E}_{\theta}F_{t-L}^{(r)}(\theta)-F_{t-L}^{(r)}(\theta)}^2,\end{align*} so it suffices to show that the first term is $o(n)$ and the second term is $o_p(n)$.

\textbf{Step 2.2.1.} For the following, denote $\mathcal{U}_t=\left[\frac{(t-L-w)\lor (m+1)}{n},\frac{t-L}{n}\right]$. Observe that, for all $r=0,1,\ldots,s$, we have
\begin{align*}&\underset{\theta\in\Theta}{\sup} \sum_{t=m+\tau+L}^n  \norm{ \mathbb{E}_{\theta}\left[f_{t-L}^{(r)}(\theta)\right]-\mathbb{E}_{\theta}F_{t-L}^{(r)}(\theta)}^2
\\& 
\overset{(1)}{\leq} 
\underset{\theta\in\Theta}{\sup} \sum_{t=m+\tau+L}^n  \underset{u\in \mathcal{U}_t}{\sup}\norm{\begin{aligned}&\mathbb{E}_{\theta}\varphi\left(\left[G_{(t-L-m)/n}^{(n,r)}(\symbfit{\varepsilon}_{t-L-m}^{(r)},\theta),\ldots,G_{(t-L)/n}^{(n,r)}(\symbfit{\varepsilon}_{t-L}^{(r)},\theta)\right]^{\top}\right)\\ &- \mathbb{E}_{\theta}\varphi\left(\left[G_{u-m/n}^{(n,r)}(\symbfit{\varepsilon}_{-m}^{(r)},\theta),\ldots,G_{u}^{(n,r)}(\symbfit{\varepsilon}_0^{(r)},\theta)\right]^{\top}\right)\end{aligned}}^2
\\&
\overset{(2)}{\leq}  L_{\varphi}^2 (m+1) \ \underset{\theta\in\Theta}{\sup} \sum_{i=0}^m \sum_{t=m+\tau+L}^n  \underset{u\in \mathcal{U}_t}{\sup}\norm{G_{(t-L-i)/n}^{(n,r)}(\symbfit{\varepsilon}_{0}^{(r)},\theta)-G_{u-i/n}^{(n,r)}(\symbfit{\varepsilon}_0^{(r)},\theta)}_{\mathcal{L}^q(\theta)}^2 \\&
\overset{(3)}{\leq} w L_{\varphi}^2 (m+1)^2  n^{\max(1-2/\kappa,0)} \left(\underset{\theta\in\Theta}{\sup} \ \norm{\left(G_u^{(n,r)}(\symbfit{\varepsilon}_0,\theta)\right)_u}_{\kappa\text{-}\mathrm{var},\mathcal{L}^q(\theta)}\right)^2 \\&
\overset{(4)}{\leq} w L_{\varphi}^2 (m+1)^2 n^{\max(1-2/\kappa,0)}  \Lambda^2 \\& \overset{(5)}{=}o(n), 
\end{align*} by
(1) the definitions of  $F_{t-L}^{(r)}(\theta)$ and $f_{t-L}^{(r)}(\theta)$, linearity, the triangle inequality, and upper bounding the rolling-window average of the distances between the expectations by the supremum of the distance between the expectations over the corresponding rescaled time interval $\mathcal{U}_t=\left[\frac{(t-L-w)\lor (m+1)}{n},\frac{t-L}{n}\right]$, (2) linearity of expectation, Jensen's inequality, because the random Fourier features $\varphi$ are $L_{\varphi}$-Lipschitz with Lipschitz constant given by~\eqref{eqn:lipschitz_constant_rff}, the triangle inequality, the Minkowski inequality, and Young's inequality, (3) upper bounding by $w$ multiplied by the squared $\kappa$-variation multiplied by $n^{\max(1-2/\kappa,0)}$, where the max is because we consider the cases $\kappa \in [1,2]$ and $\kappa \in (2,4)$ separately, and where the $w$ factor is from the following construction: each interval $\mathcal{U}_t$ has length at most $w/n$, so we consider a family of $w$ subsequences of intervals of the form $\mathcal{U}_{j+w},\mathcal{U}_{j+2w},\ldots$ so they are disjoint (besides the endpoints), then upper bound the sum over each subsequence by the squared $\kappa$-variation multiplied by $n^{\max(1-2/\kappa,0)}$, then sum over all of the subsequences, which leads to the desired upper bound with the factor of $w$, (4) using the bound from Assumption~\ref{asmpt:nonstationarity_continuous_time}, and (5) because of the rates specified in Assumption~\ref{asmpt:growth_of_offset_and_window_size_continuous_time}.

 %%%%%% Note: we will use an argument similar to Step 1 in the proof of Theorem~\ref{thm:consistency_time_average_estimator}.
\textbf{Step 2.2.2.} By Markov's inequality, it suffices to show that, for all $r=0,1,\ldots,s$,  \begin{align*}\mathbb{E}\left(\underset{\theta\in\Theta}{\sup} \sum_{t=m+\tau+L}^n  \norm{ \mathbb{E}_{\theta}F_{t-L}^{(r)}(\theta)-F_{t-L}^{(r)}(\theta)}^2\right)=o(n),
\end{align*} where $\mathbb{E}(\cdot)$ denotes expectation with respect to the law of the noise inputs. By Assumption~\ref{asmpt:statistical_model_continuous_time}, $\Theta\subset\mathbb{R}^p$ is compact, so it is totally bounded and there exists an $\epsilon$-net, i.e., for every $\epsilon>0$, there exist finitely many points $\theta_1,\ldots,\theta_{N(\epsilon)}$ for some $N(\epsilon)\in\mathbb{N}$, such that $$\Theta\subset \bigcup_{i\in [N(\epsilon)]}B_{\epsilon}(\theta_i),$$ where $B_{\epsilon}(\theta_i)$ is a ball of radius $\epsilon$. By adding and subtracting the same term, the triangle inequality, Young's inequality, the definition of the $\epsilon$-net for some $\epsilon>0$ and $N(\epsilon)\in\mathbb{N}$, and the definition of the supremum, 
\begin{align*}
&\frac{1}{2} \ \underset{\theta\in\Theta}{\sup} \sum_{t=m+\tau+L}^n\norm{\mathbb{E}_{\theta}\left[F_{t-L}^{(r)}(\theta)\right]-F_{t-L}^{(r)}(\theta)}^2 
\\&
\leq 
\underset{i \in [N(\epsilon)]}{\max} \sum_{t=m+\tau+L}^n \norm{\mathbb{E}_{\theta_i}\left[F_{t-L}^{(r)}(\theta_i)\right]-F_{t-L}^{(r)}(\theta_i)}^2 
\\ &
+
\underset{ \norm{\theta-\theta'}\leq \epsilon}{\underset{\theta,\theta'\in\Theta}{\sup}}\sum_{t=m+\tau+L}^n\norm{[\mathbb{E}_{\theta}\left[F_{t-L}^{(r)}(\theta)\right]-F_{t-L}^{(r)}(\theta)]-[\mathbb{E}_{\theta'}\left[F_{t-L}^{(r)}(\theta')\right]-F_{t-L}^{(r)}(\theta')]}^2.
\end{align*}
Taking the expectation $\mathbb{E}(\cdot)$ with respect to the law of the noise inputs, we have
\begin{align*} & \frac{1}{2} \ \mathbb{E}\left(\underset{\theta\in\Theta}{\sup} \sum_{t=m+\tau+L}^n\norm{\mathbb{E}_{\theta}\left[F_{t-L}^{(r)}(\theta)\right]-F_{t-L}^{(r)}(\theta)}^2\right)
\\ &
\overset{(1)}{\leq} \mathbb{E}\left(\underset{i \in [N(\epsilon)]}{\max} \sum_{t=m+\tau+L}^n \norm{\mathbb{E}_{\theta_i}\left[F_{t-L}^{(r)}(\theta_i)\right]-F_{t-L}^{(r)}(\theta_i)}^2\right) 
\\ &
+ \mathbb{E}\left(
\underset{ \norm{\theta-\theta'}\leq \epsilon}{\underset{\theta,\theta'\in\Theta}{\sup}}\sum_{t=m+\tau+L}^n\norm{[\mathbb{E}_{\theta}\left[F_{t-L}^{(r)}(\theta)\right]-F_{t-L}^{(r)}(\theta)]-[\mathbb{E}_{\theta'}\left[F_{t-L}^{(r)}(\theta')\right]-F_{t-L}^{(r)}(\theta')]}^2\right)
\\ & 
\overset{(2)}{\leq}
\sum_{i \in [N(\epsilon)]}\sum_{t=m+\tau+L}^n \mathbb{E}_{\theta_i}\left( \norm{\mathbb{E}_{\theta_i}\left[F_{t-L}^{(r)}(\theta_i)\right]-F_{t-L}^{(r)}(\theta_i)}^2\right) 
\\ &
+ \mathbb{E}\left(
\underset{ \norm{\theta-\theta'}\leq \epsilon}{\underset{\theta,\theta'\in\Theta}{\sup}}\sum_{t=m+\tau+L}^n\norm{[\mathbb{E}_{\theta}\left[F_{t-L}^{(r)}(\theta)\right]-F_{t-L}^{(r)}(\theta)]-[\mathbb{E}_{\theta'}\left[F_{t-L}^{(r)}(\theta')\right]-F_{t-L}^{(r)}(\theta')]}^2\right)
\\ & 
\overset{(3)}{\leq}
\sum_{i \in [N(\epsilon)]}\sum_{t=m+\tau+L}^n \mathbb{E}_{\theta_i}\left( \norm{\mathbb{E}_{\theta_i}\left[F_{t-L}^{(r)}(\theta_i)\right]-F_{t-L}^{(r)}(\theta_i)}^2\right) 
\\ &
+ 4 \sum_{t=m+\tau+L}^n \mathbb{E}\left(
\underset{ \norm{\theta-\theta'}\leq \epsilon}{\underset{\theta,\theta'\in\Theta}{\sup}}\norm{F_{t-L}^{(r)}(\theta)-F_{t-L}^{(r)}(\theta')}^2\right),
\end{align*}
by (1) the previous $\epsilon$-net inequality, monotonicity of expectation, and linearity of expectation, (2) upper bounding the maximum by the sum and linearity of expectation, and (3) the triangle inequality, Young's inequality, subadditivity of the supremum, linearity of expectation, Jensen's inequality, and monotonicity of expectation.
%%% Note: last monotonicity of expectation is to bring the sup inside the expectation 

%%% Note: this follows by similar arguments as the second part of Lemma C.3 in Mies 2023, but where we don't use the univariate arguments we use the multivariate arguments based on the trace of the covariance as in Mies 2025 random multipliers paper.
\textbf{Step 2.2.2.1.} We begin with the first term. For the following derivations, define $$\mathcal{T}_t=\{(t-L-w)\lor (m+1),\ldots,t-L-1,t-L\}.$$ For all $i\in [N(\epsilon)]$, we have \begin{align*}
& \mathbb{E}_{\theta_i}\left( \norm{\mathbb{E}_{\theta_i}\left[F_{t-L}^{(r)}(\theta_i)\right]-F_{t-L}^{(r)}(\theta_i)}^2\right) 
\\ & \overset{(1)}{=}\mathrm{tr}\left[\mathrm{Cov}_{\theta_i}\left(F_{t-L}^{(r)}(\theta_i)\right)\right]
\\ & \overset{(2)} = \mathrm{tr}\left[\mathrm{Var}_{\theta_i}\left(\frac{1}{N_{t-L}}\sum_{j=(t-L-w)\lor (m+1)}^{t-L} \varphi\left(\left[G_{(j-m)/n}^{(n,r)}(\symbfit{\varepsilon}_{j-m}^{(r)},\theta_i),\ldots,G_{j/n}^{(n,r)}(\symbfit{\varepsilon}_j^{(r)},\theta_i)\right]^{\top}\right)\right)\right]
\\ & \overset{(3)} = \frac{1}{N_{t-L}^2}\sum_{\ell,j \in \mathcal{T}_t} 
\mathrm{tr}\left[\mathrm{Cov}_{\theta_i}\left(\begin{aligned}&\varphi\left(\left[G_{(\ell-m)/n}^{(n,r)}(\symbfit{\varepsilon}_{\ell-m}^{(r)},\theta_i),\ldots,G_{\ell/n}^{(n,r)}(\symbfit{\varepsilon}_{\ell}^{(r)},\theta_i)\right]^{\top}\right), \\& \varphi\left(\left[G_{(j-m)/n}^{(n,r)}(\symbfit{\varepsilon}_{j-m}^{(r)},\theta_i),\ldots,G_{j/n}^{(n,r)}(\symbfit{\varepsilon}_j^{(r)},\theta_i)\right]^{\top}\right)\end{aligned}\right)\right]
\\&
\overset{(4)}{\leq} \frac{1}{N_{t-L}^2}\sum_{\ell,j\in\mathcal{T}_t} C^{\varphi}(|\ell-j|+1)^{-K^{\varphi}}
\\&
\overset{(5)}{\leq} \frac{1}{N_{t-L}}\sum_{h=-\infty}^{\infty} C^{\varphi}(|h|+1)^{-K^{\varphi}},
\end{align*} by (1) the equality of the inner product with the trace of the outer product, and the definition of the covariance of this quantity as this outer product, (2) the definition of $F_{t-L}^{(r)}(\theta_i)$ from~\eqref{eqn:random_feature_time_average_approx_one_realization}, (3) expanding the trace of the covariance of the sum of random vectors by using the linearity of covariance and trace, (4) upper bounding the trace of the covariance by its trace norm and applying the trace norm inequality with constants $C^{\varphi}>0$, $K^{\varphi}>1$ from Lemma~\ref{lma:temporal_dependence_of_random_features_continuous_time_implies_covariance_decay}, (5) by enlarging the inner summation to be over all integers and upper bounding by this quantity multiplied by the number of terms in the outer summation, noting that the series converges because $K^{\varphi}>1$ by Lemma~\ref{lma:temporal_dependence_of_random_features_continuous_time_implies_covariance_decay}.   
%%% Note on (1): because inner product is trace of outer product e.g. see https://en.wikipedia.org/wiki/Trace_(linear_algebra)
%%% Note on (4): The Lemma uses trace norm of autocovariance with random features in the concise notation, which upper bounds the trace of autocovariance with random features in the verbose notation,in particular because tr(cov) \leq tr((cov^T cov)^1/2)=tr((cov^2)^1/2)=tr(cov) because the autocovariance is not positive semitifinite (in the positive semidefinite case you get equality, because singular values equal eigenvalues). this is because eigenvalues can be negative while singular values are nonnegative so sum of eigenvalues (trace) is less than sum of singular values (trace norm) 

Denoting $\sum_{h=-\infty}^{\infty} C^{\varphi}(|h|+1)^{-K^{\varphi}}$ by the constant $C'>0$, since the series converges, summing over $t$ yields
\begin{align*}
    & \sum_{t=m+\tau+L}^n \mathbb{E}_{\theta_i}\left( \norm{\mathbb{E}_{\theta_i}\left[F_{t-L}^{(r)}(\theta_i)\right]-F_{t-L}^{(r)}(\theta_i)}^2\right) 
    \\ & 
    \overset{(1)}{\leq}
    C'\sum_{t=m+\tau+L}^n \frac{1}{N_{t-L}}  
    \\ & 
    \overset{(2)}{\leq}C'\frac{n}{w+1}  + C'
    \sum_{t=1}^{w+1} \frac{1}{t}
    \\ & 
    \overset{(3)}{\leq}C'\frac{n}{w}  + C'
    (1+\log(w+1))
    \\& \overset{(4)}{=} O(\max(n/w,\log(w+1))),
\end{align*} by (1) applying the previous inequality, (2) expanding the sum with the minimum, where the inequality covers both cases, (3) upper bounding the harmonic series and further upper bounding by dropping some addition and subtraction terms, (4) because $w_n=o(n)$ by Assumption~\ref{asmpt:growth_of_offset_and_window_size_continuous_time}.

\textbf{Step 2.2.2.2.} We now consider the second term. Observe that  \begin{align*}
& \sum_{t=m+\tau+L}^n \mathbb{E}\left(
\underset{ \norm{\theta-\theta'}\leq \epsilon}{\underset{\theta,\theta'\in\Theta}{\sup}}\norm{F_{t-L}^{(r)}(\theta)-F_{t-L}^{(r)}(\theta')}^2\right)
\\&
\overset{(1)}{\leq} 2\sum_{t=m+\tau+L}^n
\frac{1}{N_{t-L}}\sum_{\ell=(t-L-w)\lor (m+1)}^{t-L} \\& \mathbb{E}\left(
\underset{ \norm{\theta-\theta'}\leq \epsilon}{\underset{\theta,\theta'\in\Theta}{\sup}}\norm{\begin{aligned}&\varphi\left(\left[G_{(\ell-m)/n}^{(n,r)}(\symbfit{\varepsilon}_{\ell-m}^{(r)},\theta),\ldots,G_{\ell/n}^{(n,r)}(\symbfit{\varepsilon}_{\ell}^{(r)},\theta)\right]^{\top}\right)\\&- \varphi\left(\left[G_{(\ell-m)/n}^{(n,r)}(\symbfit{\varepsilon}_{\ell-m}^{(r)},\theta'),\ldots,G_{\ell/n}^{(n,r)}(\symbfit{\varepsilon}_{\ell}^{(r)},\theta')\right]^{\top}\right)\end{aligned}}^2\right)
\\& \overset{(2)}{\leq}
2 (m+1) L_{\varphi}^2 
\sum_{t=m+\tau+L}^n \frac{1}{N_{t-L}}\sum_{\ell=(t-L-w)\lor (m+1)}^{t-L}  \sum_{j=0}^m  \\&  \mathbb{E}\left(
\underset{ \norm{\theta-\theta'}\leq \epsilon}{\underset{\theta,\theta'\in\Theta}{\sup}} \norm{
G_{(\ell-j)/n}^{(n,r)}(\symbfit{\varepsilon}_{\ell-j}^{(r)},\theta) - 
G_{(\ell-j)/n}^{(n,r)}(\symbfit{\varepsilon}_{\ell-j}^{(r)},\theta')}^2\right)
\\& \overset{(3)}{\leq}
2 n (m+1)^2 L_{\varphi}^2 
     \eta_G,
\end{align*} because (1) the definition of $F_{t-L}^{(r)}(\theta)$ from~\eqref{eqn:random_feature_time_average_approx_one_realization}, the Cauchy--Schwarz inequality, the subadditivity of the supremum, and linearity of expectation, (2) the random Fourier features $\varphi$ are $L_{\varphi}$-Lipschitz with Lipschitz constant given by~\eqref{eqn:lipschitz_constant_rff}, the triangle inequality, Young's inequality $m+1$ times, subadditivity of the supremum, and linearity of expectation, (3) applying the stochastic equicontinuity-type condition from Assumption~\ref{asmpt:stochastic_equicontinuity_continuous_time} and noting that each rolling window contains exactly $N_{t-L}$ terms. By Assumption~\ref{asmpt:stochastic_equicontinuity_continuous_time}, $\eta_G \xrightarrow[]{} 0$ as $\epsilon \xrightarrow[]{} 0$. However, taking $\epsilon \xrightarrow[]{} 0$ makes $N(\epsilon)\xrightarrow[]{}\infty$, so we must specify the rates.

Previously, in Step 2.2.2.1 we showed that, for each fixed $\epsilon>0$ and $N(\epsilon)\in\mathbb{N}$, for all $i\in [N(\epsilon)]$, we have $$\sum_{t=m+\tau+L}^n \mathbb{E}_{\theta_i}\left( \norm{\mathbb{E}_{\theta_i}\left[F_{t-L}^{(r)}(\theta_i)\right]-F_{t-L}^{(r)}(\theta_i)}^2\right)=O(\max(n/w,\log(w+1))).$$ Therefore, for any sequence $(\epsilon_n)_{n\in\mathbb{N}}$, $\epsilon_n\xrightarrow[]{}0$ as $n\xrightarrow[]{}\infty$, such that $$N(\epsilon_n)=o\left(\frac{n}{\max(n/w,\log(w+1))}\right),$$ we have
\begin{align*}&\sum_{i\in [N(\epsilon_n)]} \sum_{t=m+\tau+L}^n \mathbb{E}_{\theta_i}\left( \norm{\mathbb{E}_{\theta_i}\left[F_{t-L}^{(r)}(\theta_i)\right]-F_{t-L}^{(r)}(\theta_i)}^2\right) 
\\&
\leq N(\epsilon_n) \underset{i \in [N(\epsilon_n)]}{\max} \sum_{t=m+\tau+L}^n \mathbb{E}_{\theta_i}\left( \norm{\mathbb{E}_{\theta_i}\left[F_{t-L}^{(r)}(\theta_i)\right]-F_{t-L}^{(r)}(\theta_i)}^2\right)
\\& 
=o(n),
\end{align*} and $\eta_G = \eta_G(\epsilon_n)=o(1)$ so that the upper bound from (3) is $2 n (m+1)^2 L_{\varphi}^2 \eta_G=o(n)$. See, e.g., Corollary 4.2.11 in \cite{Vershynin} for the explicit covering number bounds.

Putting everything together, by Markov's inequality, we have the desired result \begin{align*}\underset{\theta\in\Theta}{\sup} \sum_{t=m+\tau+L}^n  \norm{ \mathbb{E}_{\theta}F_{t-L}^{(r)}(\theta)-F_{t-L}^{(r)}(\theta)}^2=o_p(n).
\end{align*}

\textbf{Step 3.} We show that \begin{align*}\sup_{\theta\in\Theta}\abs{I_n^3(\theta)}=o_p(1).\end{align*}

Observe that
\begin{align*}
&\sup_{\theta\in\Theta}\abs{I_n^3(\theta)} 
\\& 
\overset{(1)}{=} \sup_{\theta\in\Theta}\abs{\frac{2}{n-m}\sum_{t=m+\tau+L}^n \hat{\mu}_{t-L,n}(\theta)^{\top}\left(\left[f_t^{\mathrm{obs}}-\bar{f}_{t}^{\mathrm{sim}}(\theta)\right]-\mu_{\frac{t}{n}}^n(\theta)\right)}
\\& 
\overset{(2)}{\leq} \sup_{\theta\in\Theta}\abs{\frac{2}{n-m}\sum_{t=m+\tau+L}^n \mu_{\frac{t-L}{n}}^n(\theta)^{\top}\left(\left[f_t^{\mathrm{obs}}-\bar{f}_{t}^{\mathrm{sim}}(\theta)\right]-\mu_{\frac{t}{n}}^n(\theta)\right)}
\\&
+
\sup_{\theta\in\Theta}\abs{\frac{2}{n-m}\sum_{t=m+\tau+L}^n \left(\hat{\mu}_{t-L,n}(\theta)-\mu_{\frac{t-L}{n}}^n(\theta)\right)^{\top}\left(\left[f_t^{\mathrm{obs}}-\bar{f}_{t}^{\mathrm{sim}}(\theta)\right]-\mu_{\frac{t}{n}}^n(\theta)\right)},
\end{align*}
by (1) the definition of $I_n^3(\theta)$, and (2) adding and subtracting the same term, linearity in the first argument of the inner product, the triangle inequality, and the subadditivity of the supremum. We will show that both terms are $o_p(1)$.

\textbf{Step 3.1.} By Markov's inequality, it suffices to show that
$$\mathbb{E}\left(\sup_{\theta\in\Theta}\abs{\frac{2}{n-m}\sum_{t=m+\tau+L}^n \mu_{\frac{t-L}{n}}^n(\theta)^{\top}\left(\left[f_t^{\mathrm{obs}}-\bar{f}_{t}^{\mathrm{sim}}(\theta)\right]-\mu_{\frac{t}{n}}^n(\theta)\right)}\right)=o(1),$$ where $\mathbb{E}(\cdot)$ denotes expectation with respect to the law of the noise inputs.  We have
\begin{align*} 
&\sup_{\theta\in\Theta}\abs{\frac{2}{n-m}\sum_{t=m+\tau+L}^n \mu_{\frac{t-L}{n}}^n(\theta)^{\top}\left(\left[f_t^{\mathrm{obs}}-\bar{f}_{t}^{\mathrm{sim}}(\theta)\right]-\mu_{\frac{t}{n}}^n(\theta)\right)}
\\&
\leq \frac{2}{n-m}
\sup_{\theta\in\Theta} \left|\sum_{t=m+\tau+L}^n  \mu_{\frac{t-L}{n}}^n(\theta)^{\top}\left[f_t^{\mathrm{obs}} -\mathbb{E}_{\theta_0}[f_t^{\mathrm{obs}}]\right]\right|
\\&
+ \frac{2}{s(n-m)}\sum_{r=1}^s 
\sup_{\theta\in\Theta} \left|\sum_{t=m+\tau+L}^n  \mu_{\frac{t-L}{n}}^n(\theta)^{\top}\left[f_t^{(r)}(\theta)-\mathbb{E}_{\theta}[f_t^{(r)}(\theta)]\right]\right|,
\end{align*} by linearity, the definition of the terms, the triangle inequality, and the subadditivity of the supremum.

\textbf{Step 3.1.1.} We begin with the second term. By linearity and monotonicity of expectation, it suffices to show that, for all $r=1,\ldots,s$, we have
$$\mathbb{E}\left(\sup_{\theta\in\Theta} \left|\sum_{t=m+\tau+L}^n  \mu_{\frac{t-L}{n}}^n(\theta)^{\top}\left[f_t^{(r)}(\theta)-\mathbb{E}_{\theta}[f_t^{(r)}(\theta)]\right]\right|\right)=o(n).$$

By Assumption~\ref{asmpt:statistical_model_continuous_time}, $\Theta\subset\mathbb{R}^p$ is compact, so it is totally bounded and there exists an $\epsilon$-net, i.e., for every $\epsilon>0$, there exist finitely many points $\theta_1,\ldots,\theta_{N(\epsilon)}$ for some $N(\epsilon)\in\mathbb{N}$, such that $$\Theta\subset \bigcup_{i\in [N(\epsilon)]}B_{\epsilon}(\theta_i),$$ where $B_{\epsilon}(\theta_i)$ is a ball of radius $\epsilon$. By adding and subtracting the same term, the triangle inequality, the definition of the $\epsilon$-net for some $\epsilon>0$ and $N(\epsilon)\in\mathbb{N}$, and the definition of the supremum, 
\begin{align*}
    &  \sup_{\theta\in\Theta} \left|\sum_{t=m+\tau+L}^n  \mu_{\frac{t-L}{n}}^n(\theta)^{\top}\left[f_t^{(r)}(\theta)-\mathbb{E}_{\theta}[f_t^{(r)}(\theta)]\right]\right|
    \\& 
    \overset{(1)}{\leq}
    \max_{i \in [N(\epsilon)]} \left|\sum_{t=m+\tau+L}^n  \mu_{\frac{t-L}{n}}^n(\theta_i)^{\top}\left[f_t^{(r)}(\theta_i)-\mathbb{E}_{\theta_i}[f_t^{(r)}(\theta_i)]\right]\right|
    \\ &
    + \underset{ \norm{\theta-\theta'}\leq \epsilon}{\underset{\theta,\theta'\in\Theta}{\sup}}\abs{\sum_{t=m+\tau+L}^n \left[\begin{aligned}&\mu_{\frac{t-L}{n}}^n(\theta)^{\top}\left[f_t^{(r)}(\theta)-\mathbb{E}_{\theta}[f_t^{(r)}(\theta)]\right]\\& - \mu_{\frac{t-L}{n}}^n(\theta')^{\top}\left[f_t^{(r)}(\theta')-\mathbb{E}_{\theta'}[f_t^{(r)}(\theta')]\right]\end{aligned}\right]}.
\end{align*}
Taking the expectation $\mathbb{E}(\cdot)$ with respect to the law of the noise inputs, we have
\begin{align*}
    & \mathbb{E}\left( \sup_{\theta\in\Theta} \left|\sum_{t=m+\tau+L}^n  \mu_{\frac{t-L}{n}}^n(\theta)^{\top}\left[f_t^{(r)}(\theta)-\mathbb{E}_{\theta}[f_t^{(r)}(\theta)]\right]\right|\right)
    \\& 
    \overset{(1)}{\leq}
    \mathbb{E}\left(\max_{i \in [N(\epsilon)]} \left|\sum_{t=m+\tau+L}^n  \mu_{\frac{t-L}{n}}^n(\theta_i)^{\top}\left[f_t^{(r)}(\theta_i)-\mathbb{E}_{\theta_i}[f_t^{(r)}(\theta_i)]\right]\right|\right)
    \\ &
    + \mathbb{E}\left[\underset{ \norm{\theta-\theta'}\leq \epsilon}{\underset{\theta,\theta'\in\Theta}{\sup}}\abs{\sum_{t=m+\tau+L}^n \left[\begin{aligned}&\mu_{\frac{t-L}{n}}^n(\theta)^{\top}\left[f_t^{(r)}(\theta)-\mathbb{E}_{\theta}[f_t^{(r)}(\theta)]\right]\\& - \mu_{\frac{t-L}{n}}^n(\theta')^{\top}\left[f_t^{(r)}(\theta')-\mathbb{E}_{\theta'}[f_t^{(r)}(\theta')]\right]\end{aligned}\right]}\right]
        \\& 
    \overset{(2)}{\leq}
    \sum_{i \in [N(\epsilon)]} \mathbb{E}_{\theta_i}\left|\sum_{t=m+\tau+L}^n  \mu_{\frac{t-L}{n}}^n(\theta_i)^{\top}\left[f_t^{(r)}(\theta_i)-\mathbb{E}_{\theta_i}[f_t^{(r)}(\theta_i)]\right]\right|
    \\ &
    + \mathbb{E}\left[\underset{ \norm{\theta-\theta'}\leq \epsilon}{\underset{\theta,\theta'\in\Theta}{\sup}}\abs{\sum_{t=m+\tau+L}^n \left[\begin{aligned}&\mu_{\frac{t-L}{n}}^n(\theta)^{\top}\left[f_t^{(r)}(\theta)-\mathbb{E}_{\theta}[f_t^{(r)}(\theta)]\right]\\& - \mu_{\frac{t-L}{n}}^n(\theta')^{\top}\left[f_t^{(r)}(\theta')-\mathbb{E}_{\theta'}[f_t^{(r)}(\theta')]\right]\end{aligned}\right]}\right]
       \\& 
    \overset{(3)}{\leq}
    \sum_{i \in [N(\epsilon)]} \mathbb{E}_{\theta_i}\left|\sum_{t=m+\tau+L}^n  \mu_{\frac{t-L}{n}}^n(\theta_i)^{\top}\left[f_t^{(r)}(\theta_i)-\mathbb{E}_{\theta_i}[f_t^{(r)}(\theta_i)]\right]\right|
    \\ &
    + \mathbb{E}\left[\underset{ \norm{\theta-\theta'}\leq \epsilon}{\underset{\theta,\theta'\in\Theta}{\sup}}\sum_{t=m+\tau+L}^n \norm{\mu_{\frac{t-L}{n}}^n(\theta)-\mu_{\frac{t-L}{n}}^n(\theta')} \norm{f_t^{(r)}(\theta)-\mathbb{E}_{\theta}[f_t^{(r)}(\theta)]}\right]   
    \\ &
    + \mathbb{E}\left[\underset{ \norm{\theta-\theta'}\leq \epsilon}{\underset{\theta,\theta'\in\Theta}{\sup}}\sum_{t=m+\tau+L}^n \norm{\mu_{\frac{t-L}{n}}^n(\theta')} \norm{(f_t^{(r)}(\theta)-f_t^{(r)}(\theta'))-(\mathbb{E}_{\theta}[f_t^{(r)}(\theta)]-\mathbb{E}_{\theta'}[f_t^{(r)}(\theta')])}\right] 
       \\& 
    \overset{(4)}{\leq}
    \sum_{i \in [N(\epsilon)]} \mathbb{E}_{\theta_i}\left|\sum_{t=m+\tau+L}^n  \mu_{\frac{t-L}{n}}^n(\theta_i)^{\top}\left[f_t^{(r)}(\theta_i)-\mathbb{E}_{\theta_i}[f_t^{(r)}(\theta_i)]\right]\right|
    \\ &
    + 2\sqrt{k}\underset{ \norm{\theta-\theta'}\leq \epsilon}{\underset{\theta,\theta'\in\Theta}{\sup}}\sum_{t=m+\tau+L}^n \norm{\mu_{\frac{t-L}{n}}^n(\theta)-\mu_{\frac{t-L}{n}}^n(\theta')}    
    \\ &
    + 4\sqrt{k}\mathbb{E}\left[\underset{ \norm{\theta-\theta'}\leq \epsilon}{\underset{\theta,\theta'\in\Theta}{\sup}}\sum_{t=m+\tau+L}^n \norm{f_t^{(r)}(\theta)-f_t^{(r)}(\theta')}\right]    
       \\& 
    \overset{(5)}{\leq}
    \sum_{i \in [N(\epsilon)]} \mathbb{E}_{\theta_i}\left|\sum_{t=m+\tau+L}^n  \mu_{\frac{t-L}{n}}^n(\theta_i)^{\top}\left[f_t^{(r)}(\theta_i)-\mathbb{E}_{\theta_i}[f_t^{(r)}(\theta_i)]\right]\right|
    \\& + 6\sqrt{k}\underset{1\leq r \leq s}{\max}\mathbb{E}\left[\underset{ \norm{\theta-\theta'}\leq \epsilon}{\underset{\theta,\theta'\in\Theta}{\sup}}\sum_{t=m+\tau}^n \norm{f_t^{(r)}(\theta)-f_t^{(r)}(\theta')}\right],       
\end{align*} by (1) the previous $\epsilon$-net inequality, monotonicity of expectation, and linearity of expectation, (2) upper bounding the maximum by the sum and linearity of expectation, (3)  adding and subtracting the cross-term with $\theta$ and $\theta'$, the triangle inequality, the Cauchy-Schwarz inequality, and the subadditivity of the supremum, (4) upper bounding because the random features are bounded between $-1$ and $1$ so the expectations must be as well, and for the last term: the triangle inequality, linearity, subadditivity of the supremum, linearity of expectation, Jensen's inequality, and monotonicity of expectation, and (5) linearity of expectation, Jensen's inequality, the triangle inequality, the Minkowski inequality, and subadditivity of the supremum. Note that in the second term, in the sum over $t=m+\tau,\ldots,n$, we sum over more terms than in the previous line. Next, we show both terms are $o(n)$.
%%% Note: last monotonicity of expectation is to bring the sup inside the expectation 

\textbf{Step 3.1.1.1.} We begin with the first term. For each $r=1,\ldots,s$, and $\theta\in\Theta$, denote $$Y_t^{(r)}(\theta)=\mu_{\frac{t-L}{n}}^n(\theta)^{\top}\left[f_t^{(r)}(\theta)-\mathbb{E}_{\theta}[f_t^{(r)}(\theta)]\right],\quad t=m+\tau+L,\ldots,n,$$
which admits the representation
$$Y_t^{(r)}(\theta)=H_{t/n}^{(n,r)}(\symbfit{\varepsilon}_t^{(r)},\theta),$$ where the mapping $H_{t/n}^{(n,r)}(\symbfit{\varepsilon}_t^{(r)},\theta)$ is defined in terms of the original causal representations from Assumption~\ref{asmpt:algorithmic_dynamic_model_continuous_time}, $$H_{t/n}^{(n,r)}(\symbfit{\varepsilon}_t^{(r)},\theta)=\mu_{\frac{t-L}{n}}^n(\theta)^{\top}\left[\varphi\left(\left[G_{(t-m)/n}^{(n,r)}(\symbfit{\varepsilon}_{t-m}^{(r)},\theta),\ldots,G_{t/n}^{(n,r)}(\symbfit{\varepsilon}_t^{(r)},\theta)\right]^{\top}\right)-\mathbb{E}_{\theta}[f_t^{(r)}(\theta)]\right].$$
Analogous to Assumption~\ref{asmpt:temporal_dependence_continuous_time}, denote the version of $Y_t^{(r)}(\theta)$ with the $j$-th noise input in the past replaced with its iid copy by
\begin{align*}\tilde{Y}_{t,j}^{(r)}(\theta)&=H_{t/n}^{(n,r)}(\tilde{\symbfit{\varepsilon}}_{t,j}^{(r)},\theta)\\&=\mu_{\frac{t-L}{n}}^n(\theta)^{\top}\left[\varphi\left(\left[G_{(t-m)/n}^{(n,r)}(\tilde{\symbfit{\varepsilon}}_{t-m,j-m}^{(r)},\theta),\ldots,G_{t/n}^{(n,r)}(\tilde{\symbfit{\varepsilon}}_{t,j}^{(r)},\theta)\right]^{\top}\right)-\mathbb{E}_{\theta}[f_t^{(r)}(\theta)]\right],\end{align*}
where for any $i\in\mathbb{N}$ we define $\tilde{\symbfit{\varepsilon}}_{t-i,j-i}^{(r)}$ as $\symbfit{\varepsilon}_{t-i}^{(r)}$ if $j-i < 0$. Observe that the time series $Y_t^{(r)}(\theta)$ satisfies a decaying physical dependence measure condition
\begin{align*}
&\norm{Y_t^{(r)}(\theta)-\tilde{Y}_{t,j}^{(r)}(\theta)}_{\mathcal{L}^q(\theta)}
\\ &
\overset{(1)}{=}
\norm{H_{t/n}^{(n,r)}(\symbfit{\varepsilon}_{t}^{(r)},\theta)-H_{t/n}^{(n,r)}(\tilde{\symbfit{\varepsilon}}_{t,j}^{(r)},\theta)}_{\mathcal{L}^q(\theta)} 
\\&
\overset{(2)}{=}
\norm{\mu_{\frac{t-L}{n}}^n(\theta)^{\top}\left[\begin{aligned}
    & \varphi\left(\left[G_{(t-m)/n}^{(n,r)}(\symbfit{\varepsilon}_{t-m}^{(r)},\theta),\ldots,G_{t/n}^{(n,r)}(\symbfit{\varepsilon}_{t}^{(r)},\theta)\right]^{\top}\right) \\ & - \varphi\left(\left[G_{(t-m)/n}^{(n,r)}(\tilde{\symbfit{\varepsilon}}_{t-m,j-m}^{(r)},\theta),\ldots,G_{t/n}^{(n,r)}(\tilde{\symbfit{\varepsilon}}_{t,j}^{(r)},\theta)\right]^{\top}\right)\end{aligned}\right]
}_{\mathcal{L}^q(\theta)}
\\&
\overset{(3)}{\leq}
\norm{\mu_{\frac{t-L}{n}}^n(\theta)}\norm{\begin{aligned}
    & \varphi\left(\left[G_{(t-m)/n}^{(n,r)}(\symbfit{\varepsilon}_{t-m}^{(r)},\theta),\ldots,G_{t/n}^{(n,r)}(\symbfit{\varepsilon}_{t}^{(r)},\theta)\right]^{\top}\right) \\ & - \varphi\left(\left[G_{(t-m)/n}^{(n,r)}(\tilde{\symbfit{\varepsilon}}_{t-m,j-m}^{(r)},\theta),\ldots,G_{t/n}^{(n,r)}(\tilde{\symbfit{\varepsilon}}_{t,j}^{(r)},\theta)\right]^{\top}\right)\end{aligned}}_{\mathcal{L}^q(\theta)}
\\&
\overset{(4)}{\leq} 2\sqrt{k} L_{\varphi} \sum_{i=0}^m
\norm{
G_{(t-i)/n}^{(n,r)}(\symbfit{\varepsilon}_{t-i}^{(r)},\theta)  - G_{(t-i)/n}^{(n,r)}(\tilde{\symbfit{\varepsilon}}_{t-i,j-i}^{(r)},\theta) }_{\mathcal{L}^q(\theta)}
\\&
\overset{(5)}{=} 
2\sqrt{k} L_{\varphi} \sum_{i=0}^{\min(m,j)}
\norm{
G_{(t-i)/n}^{(n,r)}(\symbfit{\varepsilon}_{t-i}^{(r)},\theta)  - G_{(t-i)/n}^{(n,r)}(\tilde{\symbfit{\varepsilon}}_{t-i,j-i}^{(r)},\theta) }_{\mathcal{L}^q(\theta)}
\\&
\overset{(6)}{\leq} 2\sqrt{k} L_{\varphi} \sum_{i=0}^{\min(m,j)}
\Lambda \rho^{j-i}
\\&
\overset{(7)}{\leq} 2\sqrt{k} L_{\varphi} (m+1)
\Lambda \rho^{j-\min(m,j)}
\\&
\overset{(8)}{\leq} \Lambda^H (\rho^H)^j,
\end{align*} by (1) definition, (2) cancellation of terms and the linearity of the inner product, (3) the Cauchy-Schwarz inequality and linearity of expectation, (4) because the random Fourier features are bounded between $-1$ and $1$ so the expectations must be as well, and because the random Fourier features $\varphi$ are $L_{\varphi}$-Lipschitz with Lipschitz constant given by~\eqref{eqn:lipschitz_constant_rff}, the triangle inequality, and the Minkowski inequality, (5) because the distances are zero for $j-i< 0$, so it suffices to consider the terms where $j-i \geq 0 \iff i \leq j$, (6) using the bound on the physical dependence measure from Assumption~\ref{asmpt:temporal_dependence_continuous_time}, (7) upper bounding by $m+1$ times the largest term which is the ``earliest'' term because $\rho\in (0,1)$, and (8) upper bounding with an expression that satisfies exponential decay condition analogous to Assumption~\ref{asmpt:temporal_dependence_continuous_time}, where
$\Lambda^H = 2\sqrt{k} L_{\varphi} (m+1)
\Lambda \rho^{-m} >0$ and $\rho^H = \rho\in (0,1)$. Lemma~\ref{lma:lp_lln_continuous_time} is therefore applicable to the time series $Y_t^{(r)}(\theta)$.

Let us return to the term of interest. For all $i\in [N(\epsilon)]$, we have 
\begin{align*}
&\mathbb{E}_{\theta_i}\left|\sum_{t=m+\tau+L}^n  \mu_{\frac{t-L}{n}}^n(\theta_i)^{\top}\left[f_t^{(r)}(\theta_i)-\mathbb{E}_{\theta_i}[f_t^{(r)}(\theta_i)]\right]\right|
\\& 
\overset{(1)}{=}
\mathbb{E}_{\theta_i}\left|\sum_{t=m+\tau+L}^n  Y_t^{(r)}(\theta_i)\right|
\\&
\overset{(2)}{\leq}
C n^{\frac{1}{2}} \Lambda^H \sum_{j=0}^{\infty}  (\rho^H)^j
\\&
\overset{(3)}{=}
C n^{\frac{1}{2}} \Lambda^H K^H,
\end{align*} by (1) definition, (2) the second inequality from Lemma~\ref{lma:lp_lln_continuous_time}, noting that $\mathbb{E}_{\theta}[Y_t^{(r)}(\theta)]=0$ for all $t$, $r$, $\theta$, by construction, applying the inequality on the physical dependence measure derived above, and further upper bounding with infinitely many terms in the summation, (3) noting that the geometric series converges and the constant $K^H=\frac{1}{1-\rho^H}>0$ since $\rho^H=\rho\in (0,1)$ by Assumption~\ref{asmpt:temporal_dependence_continuous_time}. Hence, for each fixed $\epsilon>0$ and $N(\epsilon)\in\mathbb{N}$, for all $i\in [N(\epsilon)]$, we have
$$\mathbb{E}_{\theta_i}\left|\sum_{t=m+\tau+L}^n  \mu_{\frac{t-L}{n}}^n(\theta_i)^{\top}\left[f_t^{(r)}(\theta_i)-\mathbb{E}_{\theta_i}[f_t^{(r)}(\theta_i)]\right]\right|=O(n^{\frac{1}{2}}).$$

\textbf{Step 3.1.1.2.} We now consider the second term. Observe that
\begin{align*}
    & \underset{1\leq r \leq s}{\max}\mathbb{E}\left[\underset{ \norm{\theta-\theta'}\leq \epsilon}{\underset{\theta,\theta'\in\Theta}{\sup}}\sum_{t=m+\tau}^n \norm{f_t^{(r)}(\theta)-f_t^{(r)}(\theta')}\right]
\\ & 
\overset{(1)}{\leq}
L_{\varphi} 
\sum_{t=m+\tau}^n \sum_{j=0}^m   \underset{1\leq r \leq s}{\max} \mathbb{E}\left(
\underset{ \norm{\theta-\theta'}\leq \epsilon}{\underset{\theta,\theta'\in\Theta}{\sup}} \norm{
G_{(t-j)/n}^{(n,r)}(\symbfit{\varepsilon}_{t-j}^{(r)},\theta) - 
G_{(t-j)/n}^{(n,r)}(\symbfit{\varepsilon}_{t-j}^{(r)},\theta')}\right) 
\\& 
\overset{(2)}{\leq} L_{\varphi} n (m+1)  \sqrt{\eta_G},
\end{align*} because (1) the random Fourier features $\varphi$ are $L_{\varphi}$-Lipschitz with Lipschitz constant given by~\eqref{eqn:lipschitz_constant_rff}, the triangle inequality, subadditivity of the supremum, and linearity of expectation, and (2)  by upper bounding by $n$ times the supremum over $u\in [0,1]$, the Cauchy-Schwarz inequality, and applying the stochastic equicontinuity-type condition from Assumption~\ref{asmpt:stochastic_equicontinuity_continuous_time}. By Assumption~\ref{asmpt:stochastic_equicontinuity_continuous_time}, $\eta_G \xrightarrow[]{} 0$ as $\epsilon \xrightarrow[]{} 0$. However, taking $\epsilon \xrightarrow[]{} 0$ makes $N(\epsilon)\xrightarrow[]{}\infty$, so we must specify the rates.

Previously, we showed that, for each fixed $\epsilon>0$ and $N(\epsilon)\in\mathbb{N}$, for all $i\in [N(\epsilon)]$, we have
$$\mathbb{E}_{\theta_i}\left|\sum_{t=m+\tau+L}^n  \mu_{\frac{t-L}{n}}^n(\theta_i)^{\top}\left[f_t^{(r)}(\theta_i)-\mathbb{E}_{\theta_i}[f_t^{(r)}(\theta_i)]\right]\right|=O(n^{\frac{1}{2}}).$$ 
Therefore, for any sequence $(\epsilon_n)_{n\in\mathbb{N}}$, $\epsilon_n \xrightarrow[]{} 0$ as $n\xrightarrow[]{}\infty$, such that $N(\epsilon_n)=o(n^{\frac{1}{2}})$, we have
\begin{align*}
&\sum_{i \in [N(\epsilon_n)]} \mathbb{E}_{\theta_i}\left|\sum_{t=m+\tau+L}^n  \mu_{\frac{t-L}{n}}^n(\theta_i)^{\top}\left[f_t^{(r)}(\theta_i)-\mathbb{E}_{\theta_i}[f_t^{(r)}(\theta_i)]\right]\right|
\\ &
\leq
N(\epsilon_n) \underset{i\in [N(\epsilon_n)]}{\max} \mathbb{E}_{\theta_i}\left|\sum_{t=m+\tau+L}^n  \mu_{\frac{t-L}{n}}^n(\theta_i)^{\top}\left[f_t^{(r)}(\theta_i)-\mathbb{E}_{\theta_i}[f_t^{(r)}(\theta_i)]\right]\right|
\\ &
=
o(n),
\end{align*}
and $\eta_G=\eta_G(\epsilon_n)=o(1)$ so that the upper bound from (2) is $L_{\varphi}n(m+1)\sqrt{\eta_G}=o(n)$. See, e.g., Corollary 4.2.11 in \cite{Vershynin} for the explicit covering number bounds.

\textbf{Step 3.1.2.} We now turn to the first term. By linearity and monotonicity of expectation, it suffices to show that
$$\mathbb{E}_{\theta_0}\left(\sup_{\theta\in\Theta} \left|\sum_{t=m+\tau+L}^n  \mu_{\frac{t-L}{n}}^n(\theta)^{\top}\left[f_t^{\mathrm{obs}} -\mathbb{E}_{\theta_0}[f_t^{\mathrm{obs}}]\right]\right|\right)=o(n).$$

By Assumption~\ref{asmpt:statistical_model_continuous_time}, $\Theta\subset\mathbb{R}^p$ is compact, so it is totally bounded and there exists an $\epsilon$-net, i.e., for every $\epsilon>0$, there exist finitely many points $\theta_1,\ldots,\theta_{N(\epsilon)}$ for some $N(\epsilon)\in\mathbb{N}$, such that $$\Theta\subset \bigcup_{i\in [N(\epsilon)]}B_{\epsilon}(\theta_i),$$ where $B_{\epsilon}(\theta_i)$ is a ball of radius $\epsilon$. By adding and subtracting the same term, the triangle inequality, the definition of the $\epsilon$-net for some $\epsilon>0$ and $N(\epsilon)\in\mathbb{N}$, and the definition of the supremum, 
\begin{align*}
    &  \sup_{\theta\in\Theta} \left|\sum_{t=m+\tau+L}^n  \mu_{\frac{t-L}{n}}^n(\theta)^{\top}\left[f_t^{\mathrm{obs}}-\mathbb{E}_{\theta_0}[f_t^{\mathrm{obs}}]\right]\right|
    \\& 
    \overset{(1)}{\leq}
    \max_{i \in [N(\epsilon)]} \left|\sum_{t=m+\tau+L}^n  \mu_{\frac{t-L}{n}}^n(\theta_i)^{\top}\left[f_t^{\mathrm{obs}} -\mathbb{E}_{\theta_0}[f_t^{\mathrm{obs}}]\right]\right|
    \\ &
    + \underset{ \norm{\theta-\theta'}\leq \epsilon}{\underset{\theta,\theta'\in\Theta}{\sup}}\abs{\sum_{t=m+\tau+L}^n \left[\begin{aligned}&\mu_{\frac{t-L}{n}}^n(\theta)^{\top}\left[f_t^{\mathrm{obs}}-\mathbb{E}_{\theta_0}[f_t^{\mathrm{obs}}]\right]\\& - \mu_{\frac{t-L}{n}}^n(\theta')^{\top}\left[f_t^{\mathrm{obs}}-\mathbb{E}_{\theta_0}[f_t^{\mathrm{obs}}]\right]\end{aligned}\right]}.
\end{align*}

The arguments for handling the second term are nearly identical to those used in Step 3.1.1, so we omit the details due to the high similarity. The main idea is to use the stochastic equicontinuity-type condition from Assumption~\ref{asmpt:stochastic_equicontinuity_continuous_time} with any sequence $(\epsilon_n)_{n\in\mathbb{N}}$, $\epsilon_n \xrightarrow[]{} 0$ as $n\xrightarrow[]{}\infty$, such that $N(\epsilon_n)=o(n^{\frac{1}{2}})$ and $\eta_G=\eta_G(\epsilon_n)=o(1)$. 

We explain how to handle the first term. By arguments similar to those used in Step 3.1.1 it suffices to show that, for each $i\in [N(\epsilon)]$, we have
$$\mathbb{E}_{\theta_0}\left|\sum_{t=m+\tau+L}^n  \mu_{\frac{t-L}{n}}^n(\theta_i)^{\top}\left[f_t^{\mathrm{obs}}-\mathbb{E}_{\theta_0}(f_t^{\mathrm{obs}})\right]\right|=O(n^{\frac{1}{2}}).$$

For each $\theta\in\Theta$, denote $$Y_t^{\mathrm{obs}}(\theta)=\mu_{\frac{t-L}{n}}^n(\theta)^{\top}\left[f_t^{\mathrm{obs}}-\mathbb{E}_{\theta_0}(f_t^{\mathrm{obs}})\right],\quad t=m+\tau+L,\ldots,n,$$
which admits the representation
$$Y_t^{\mathrm{obs}}(\theta)=H_{t/n}^{(n,0)}(\symbfit{\varepsilon}_t^{(0)},\theta),$$ where the mapping $H_{t/n}^{(n,0)}(\symbfit{\varepsilon}_t^{(0)},\theta)$ is defined in terms of the original causal representations from Assumption~\ref{asmpt:algorithmic_dynamic_model_continuous_time}, $$H_{t/n}^{(n,0)}(\symbfit{\varepsilon}_t^{(0)},\theta)=\mu_{\frac{t-L}{n}}^n(\theta)^{\top}\left[\varphi\left(\left[G_{(t-m)/n}^{(n,0)}(\symbfit{\varepsilon}_{t-m}^{(0)},\theta_0),\ldots,G_{t/n}^{(n,0)}(\symbfit{\varepsilon}_t^{(0)},\theta_0)\right]^{\top}\right)-\mathbb{E}_{\theta_0}(f_t^{\mathrm{obs}})\right].$$
Analogous to Assumption~\ref{asmpt:temporal_dependence_continuous_time}, denote the version of $Y_t^{\mathrm{obs}}(\theta)$ with the $j$-th noise input in the past replaced with its iid copy by
\begin{align*}\tilde{Y}_{t,j}^{\mathrm{obs}}(\theta)&=H_{t/n}^{(n,0)}(\tilde{\symbfit{\varepsilon}}_{t,j}^{(0)},\theta)\\&=\mu_{\frac{t-L}{n}}^n(\theta)^{\top}\left[\varphi\left(\left[G_{(t-m)/n}^{(n,0)}(\tilde{\symbfit{\varepsilon}}_{t-m,j-m}^{(0)},\theta_0),\ldots,G_{t/n}^{(n,0)}(\tilde{\symbfit{\varepsilon}}_{t,j}^{(0)},\theta_0)\right]^{\top}\right)-\mathbb{E}_{\theta_0}(f_t^{\mathrm{obs}})\right],\end{align*}
where for any $i\in\mathbb{N}$ we define $\tilde{\symbfit{\varepsilon}}_{t-i,j-i}^{(0)}$ as $\symbfit{\varepsilon}_{t-i}^{(0)}$ if $j-i < 0$. Observe that the time series $Y_t^{\mathrm{obs}}(\theta)$ satisfies a decaying physical dependence measure condition
\begin{align*}
&\norm{Y_t^{\mathrm{obs}}(\theta)-\tilde{Y}_{t,j}^{\mathrm{obs}}(\theta)}_{\mathcal{L}^q(\theta_0)}
\\ &
\overset{(1)}{=}
\norm{H_{t/n}^{(n,0)}(\symbfit{\varepsilon}_{t}^{(0)},\theta)-H_{t/n}^{(n,0)}(\tilde{\symbfit{\varepsilon}}_{t,j}^{(0)},\theta)}_{\mathcal{L}^q(\theta_0)} 
\\&
\overset{(2)}{=}
\norm{\mu_{\frac{t-L}{n}}^n(\theta)^{\top}\left[\begin{aligned}
    & \varphi\left(\left[G_{(t-m)/n}^{(n,0)}(\symbfit{\varepsilon}_{t-m}^{(0)},\theta_0),\ldots,G_{t/n}^{(n,0)}(\symbfit{\varepsilon}_{t}^{(0)},\theta_0)\right]^{\top}\right) \\ & - \varphi\left(\left[G_{(t-m)/n}^{(n,0)}(\tilde{\symbfit{\varepsilon}}_{t-m,j-m}^{(0)},\theta_0),\ldots,G_{t/n}^{(n,0)}(\tilde{\symbfit{\varepsilon}}_{t,j}^{(0)},\theta_0)\right]^{\top}\right)\end{aligned}\right]
}_{\mathcal{L}^q(\theta_0)}
\\&
\overset{(3)}{\leq}
\norm{\mu_{\frac{t-L}{n}}^n(\theta)}\norm{\begin{aligned}
    & \varphi\left(\left[G_{(t-m)/n}^{(n,0)}(\symbfit{\varepsilon}_{t-m}^{(0)},\theta_0),\ldots,G_{t/n}^{(n,0)}(\symbfit{\varepsilon}_{t}^{(0)},\theta_0)\right]^{\top}\right) \\ & - \varphi\left(\left[G_{(t-m)/n}^{(n,0)}(\tilde{\symbfit{\varepsilon}}_{t-m,j-m}^{(0)},\theta_0),\ldots,G_{t/n}^{(n,0)}(\tilde{\symbfit{\varepsilon}}_{t,j}^{(0)},\theta_0)\right]^{\top}\right)\end{aligned}}_{\mathcal{L}^q(\theta_0)}
\\&
\overset{(4)}{\leq} 2\sqrt{k} L_{\varphi} \sum_{i=0}^m
\norm{
G_{(t-i)/n}^{(n,0)}(\symbfit{\varepsilon}_{t-i}^{(0)},\theta_0)  - G_{(t-i)/n}^{(n,0)}(\tilde{\symbfit{\varepsilon}}_{t-i,j-i}^{(0)},\theta_0) }_{\mathcal{L}^q(\theta_0)}
\\&
\overset{(5)}{=} 
2\sqrt{k} L_{\varphi} \sum_{i=0}^{\min(m,j)}
\norm{
G_{(t-i)/n}^{(n,0)}(\symbfit{\varepsilon}_{t-i}^{(0)},\theta_0)  - G_{(t-i)/n}^{(n,0)}(\tilde{\symbfit{\varepsilon}}_{t-i,j-i}^{(0)},\theta_0) }_{\mathcal{L}^q(\theta_0)}
\\&
\overset{(6)}{\leq} 2\sqrt{k} L_{\varphi} \sum_{i=0}^{\min(m,j)}
\Lambda \rho^{j-i}
\\&
\overset{(7)}{\leq} 2\sqrt{k} L_{\varphi} (m+1)
\Lambda \rho^{j-\min(m,j)}
\\&
\overset{(8)}{\leq} \Lambda^H (\rho^H)^j,
\end{align*} by (1) definition, (2) cancellation of terms and the linearity of the inner product, (3) the Cauchy-Schwarz inequality and linearity of expectation, (4) because the random Fourier features are bounded between $-1$ and $1$ so the expectations must be as well, and because the random Fourier features $\varphi$ are $L_{\varphi}$-Lipschitz with Lipschitz constant given by~\eqref{eqn:lipschitz_constant_rff}, the triangle inequality, and the Minkowski inequality, (5) because the distances are zero for $j-i< 0$, so it suffices to consider the terms where $j-i \geq 0 \iff i \leq j$, (6) using the bound on the physical dependence measure from Assumption~\ref{asmpt:temporal_dependence_continuous_time}, (7) upper bounding by $m+1$ times the largest term which is the ``earliest'' term because $\rho\in (0,1)$, and (8) upper bounding with an expression that satisfies exponential decay condition analogous to Assumption~\ref{asmpt:temporal_dependence_continuous_time}, where
$\Lambda^H = 2\sqrt{k} L_{\varphi} (m+1)
\Lambda \rho^{-m} >0$ and $\rho^H = \rho\in (0,1)$. Lemma~\ref{lma:lp_lln_continuous_time} is therefore applicable to the time series $Y_t^{\mathrm{obs}}(\theta)$.

Let us return to the term of interest. For all $i\in [N(\epsilon)]$, we have 
\begin{align*}
&\mathbb{E}_{\theta_0}\left|\sum_{t=m+\tau+L}^n  \mu_{\frac{t-L}{n}}^n(\theta_i)^{\top}\left[f_t^{\mathrm{obs}}-\mathbb{E}_{\theta_0}(f_t^{\mathrm{obs}})\right]\right|
\\& 
\overset{(1)}{=}
\mathbb{E}_{\theta_0}\left|\sum_{t=m+\tau+L}^n  Y_t^{\mathrm{obs}}(\theta_i)\right|
\\&
\overset{(2)}{\leq}
C n^{\frac{1}{2}} \Lambda^H \sum_{j=0}^{\infty}  (\rho^H)^j
\\&
\overset{(3)}{=}
C n^{\frac{1}{2}} \Lambda^H K^H,
\end{align*} by (1) definition, (2) the second inequality from Lemma~\ref{lma:lp_lln_continuous_time}, noting that $\mathbb{E}_{\theta_0}[Y_t^{\mathrm{obs}}(\theta)]=0$ for all $t$, $\theta$, by construction, applying the inequality on the physical dependence measure derived above, and further upper bounding with infinitely many terms in the summation, (3) noting that the geometric series converges and the constant $K^H=\frac{1}{1-\rho^H}>0$ since $\rho^H=\rho\in (0,1)$ by Assumption~\ref{asmpt:temporal_dependence_continuous_time}. Hence, for each fixed $\epsilon>0$ and $N(\epsilon)\in\mathbb{N}$, for all $i\in [N(\epsilon)]$, we have
$$\mathbb{E}_{\theta_0}\left|\sum_{t=m+\tau+L}^n  \mu_{\frac{t-L}{n}}^n(\theta_i)^{\top}\left[f_t^{\mathrm{obs}}-\mathbb{E}_{\theta_0}(f_t^{\mathrm{obs}})\right]\right|=O(n^{\frac{1}{2}}).$$

Putting everything together, by Markov's inequality, we have the desired result
$$\sup_{\theta\in\Theta}\abs{\frac{2}{n-m}\sum_{t=m+\tau+L}^n \mu_{\frac{t-L}{n}}^n(\theta)^{\top}\left(\left[f_t^{\mathrm{obs}}-\bar{f}_{t}^{\mathrm{sim}}(\theta)\right]-\mu_{\frac{t}{n}}^n(\theta)\right)}=o_p(1).$$

\textbf{Step 3.2.} Next, we show that $$\sup_{\theta\in\Theta}\abs{\frac{2}{n-m}\sum_{t=m+\tau+L}^n \left(\hat{\mu}_{t-L,n}(\theta)-\mu_{\frac{t-L}{n}}^n(\theta)\right)^{\top}\left(\left[f_t^{\mathrm{obs}}-\bar{f}_{t}^{\mathrm{sim}}(\theta)\right]-\mu_{\frac{t}{n}}^n(\theta)\right)}=o_p(1).$$
Observe that
\begin{align*}
&\sup_{\theta\in\Theta}\abs{\frac{2}{n-m}\sum_{t=m+\tau+L}^n \left(\hat{\mu}_{t-L,n}(\theta)-\mu_{\frac{t-L}{n}}^n(\theta)\right)^{\top}\left(\left[f_t^{\mathrm{obs}}-\bar{f}_{t}^{\mathrm{sim}}(\theta)\right]-\mu_{\frac{t}{n}}^n(\theta)\right)}
\\&
\overset{(1)}{\leq}
\sup_{\theta\in\Theta}\frac{2}{n-m}\sum_{t=m+\tau+L}^n \norm{\hat{\mu}_{t-L,n}(\theta)-\mu_{\frac{t-L}{n}}^n(\theta)}\norm{\left(\left[f_t^{\mathrm{obs}}-\bar{f}_{t}^{\mathrm{sim}}(\theta)\right]-\mu_{\frac{t}{n}}^n(\theta)\right)}
\\&
\overset{(2)}{\leq}
\left(\sup_{\theta\in\Theta}\frac{2}{n-m}\sum_{t=m+\tau+L}^n \norm{\hat{\mu}_{t-L,n}(\theta)-\mu_{\frac{t-L}{n}}^n(\theta)}^2\right)^{\frac{1}{2}} \\&   \left( \sup_{\theta\in\Theta}\frac{2}{n-m}\sum_{t=m+\tau+L}^n\norm{\left(\left[f_t^{\mathrm{obs}}-\bar{f}_{t}^{\mathrm{sim}}(\theta)\right]-\mu_{\frac{t}{n}}^n(\theta)\right)}^2\right)^{\frac{1}{2}}
\\&
\overset{(3)}{\leq}
C  \left(\sup_{\theta\in\Theta}\frac{2}{n-m}\sum_{t=m+\tau+L}^n \norm{\hat{\mu}_{t-L,n}(\theta)-\mu_{\frac{t-L}{n}}^n(\theta)}^2\right)^{\frac{1}{2}}
\\&
\overset{(4)}{=}
o_p(1),
\end{align*} by (1) the triangle inequality and the Cauchy-Schwarz inequality, (2) the Cauchy-Schwarz inequality, (3) the triangle inequality and the fact that the random features are bounded between $-1$ and $1$, so we can upper bound this average by some constant $C>0$, and (4) because this term was previously shown to be $o_p(1)$ in Step 2.2 of the proof~\eqref{eqn:step_2_2}.
\hfill$\square$

\subsection{Auxiliary results}\label{subsection:auxiliary_lemmas}

To begin, we derive the Lipschitz constant for the random Fourier features.

\begin{lemma} The function $\varphi=(\varphi_1,\ldots,\varphi_k)$ from~\eqref{eqn:all_k_random_Fourier_features_order_m} is Lipschitz with Lipschitz constant  \begin{equation}\label{eqn:lipschitz_constant_rff}L_{\varphi}=\left(\sum_{i=1}^k\sum_{j=1}^{m+1}\norm{\Omega_{i,j}}^2\right)^{1/2}.\end{equation}
\end{lemma}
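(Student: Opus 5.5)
The plan is to prove the Lipschitz bound coordinatewise and then assemble it with the Euclidean norm on $\mathbb{R}^k$. Fix $x,y\in\mathbb{R}^{(m+1)\times d}$ with rows $x_j,y_j$, and write $\langle \Omega_i, x\rangle=\sum_{j=1}^{m+1}\Omega_{i,j}\cdot x_j$, so that $\varphi_i(x)=\cos(\langle\Omega_i,x\rangle+\alpha_i)$. Because $\cos$ is $1$-Lipschitz on $\mathbb{R}$ (its derivative $-\sin$ is bounded by one), the mean value theorem gives
\begin{equation*}
|\varphi_i(x)-\varphi_i(y)|\leq \left|\sum_{j=1}^{m+1}\Omega_{i,j}\cdot(x_j-y_j)\right|.
\end{equation*}

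Next, I apply Cauchy--Schwarz on the stacked vector in $\mathbb{R}^{d(m+1)}$, i.e.\ with $\mathrm{Vec}$ as used elsewhere in the proofs. This bounds the right-hand side by $\left(\sum_{j=1}^{m+1}\norm{\Omega_{i,j}}^2\right)^{1/2}\norm{\mathrm{Vec}(x)-\mathrm{Vec}(y)}$.

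Squaring and summing over $i=1,\ldots,k$ then gives
\begin{equation*}
\norm{\varphi(x)-\varphi(y)}^2\leq \left(\sum_{i=1}^k\sum_{j=1}^{m+1}\norm{\Omega_{i,j}}^2\right)\norm{\mathrm{Vec}(x)-\mathrm{Vec}(y)}^2.
\end{equation*}
Taking square roots yields exactly $L_{\varphi}$ as in~\eqref{eqn:lipschitz_constant_rff}. The phases $\alpha_i$ drop out entirely, and the bound holds for every realization of $\vartheta$, so $L_\varphi$ is a finite random constant that is fixed once the features are drawn.

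There is no real obstacle here. The only point needing care is the norm convention: the Lipschitz property must be stated with respect to the Frobenius/$\mathrm{Vec}$ Euclidean norm on $\mathbb{R}^{(m+1)\times d}$. That is the norm the later proofs use when they bound $\norm{\mathrm{Vec}(\cdot)-\mathrm{Vec}(\cdot)}$ by a sum over rows via Minkowski.
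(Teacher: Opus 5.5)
Your proposal is correct and follows essentially the same route as the paper: bound each coordinate using that $\cos$ is $1$-Lipschitz, apply Cauchy--Schwarz, then square, sum over $i$, and take square roots. The only difference is cosmetic: you apply Cauchy--Schwarz once on the stacked vector in $\mathbb{R}^{d(m+1)}$, whereas the paper applies it first within each row and then across rows, and both give the same constant $L_{\varphi}$.
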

\textit{Proof:}
 For each coordinate $i\in [k]$ of $\varphi=(\varphi_1,\ldots,\varphi_k)$, we have, for $x,y \in \mathbb{R}^{(m+1)\times d}$, \begin{align*}
&\left|\varphi_i(x)-\varphi_i(y)\right| 
\\ 
& \overset{(1)}{=} \left| \cos\left(\sum_{j=1}^{m+1} \Omega_{i,j} \cdot x_{j} + \alpha_i\right)
-
\cos\left(\sum_{j=1}^{m+1} \Omega_{i,j} \cdot y_{j} + \alpha_i\right)
\right| 
\\ 
& \overset{(2)}{\leq} \left| \sum_{j=1}^{m+1} \Omega_{i,j} \cdot \left(x_{j} - y_{j}\right) 
\right|
\\ 
& \overset{(3)}{\leq}  \sum_{j=1}^{m+1} \norm{\Omega_{i,j}} \norm{x_{j} - y_{j}} 
\\ 
& \overset{(4)}{\leq}  \left(\sum_{j=1}^{m+1} \norm{\Omega_{i,j}}^2\right)^{1/2} \left(\sum_{j=1}^{m+1}\norm{x_{j} - y_{j}}^2 \right)^{1/2} 
\\ 
& \overset{(5)}{\leq}  \left(\sum_{j=1}^{m+1} \norm{\Omega_{i,j}}^2\right)^{1/2} \norm{\mathrm{Vec}(x) - \mathrm{Vec}(y)}, 
\end{align*} because (1) the definition of the random Fourier features $\varphi$, (2) cosine is $1$-Lipschitz, canceling the phase $\alpha_i$, and linearity of sums and inner products, (3) the triangle inequality and the Cauchy-Schwarz inequality applied to each inner product, (4) the Cauchy-Schwarz inequality applied to the sum of products, and (5) the definition of the Euclidean norm, which yields $$\left|\varphi_i(x)-\varphi_i(y)\right|\leq \left(\sum_{j=1}^{m+1} \norm{\Omega_{i,j}}^2\right)^{1/2} \norm{\mathrm{Vec}(x) - \mathrm{Vec}(y)},$$ so squaring both sides, summing over $i=1,\ldots,k$, and taking square roots, yields $$\norm{\varphi(x)-\varphi(y)}\leq \left(\sum_{i=1}^k\sum_{j=1}^{m+1} \norm{\Omega_{i,j}}^2\right)^{1/2} \norm{\mathrm{Vec}(x) - \mathrm{Vec}(y)}.$$

\hfill$\square$

The next result establishes the continuity of $\theta\mapsto Q^{\mathrm{RW}}(\theta)$.

\begin{lemma}\label{lma:continuity_of_theta_to_QRW}
Suppose the assumptions of Theorem~\ref{thm:consistency_rolling_window_estimator} hold. Then the population discrepancy function
$$ Q^{\mathrm{RW}}(\theta)
= \int_0^1
\|\Phi_u(\theta_0) - \Phi_u(\theta)\|^2\,du,
$$ is continuous in $\theta$ on $\Theta$.
\end{lemma}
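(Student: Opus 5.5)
The plan is to prove that $\theta\mapsto\Phi_u(\theta)$ is continuous uniformly in $u\in[0,1]$, and then pass continuity through the integral by bounded convergence. The main tool is the stochastic equicontinuity condition, Assumption~\ref{asmpt:stochastic_equicontinuity_continuous_time}. It is stated for the approximating mappings $G_u^{(i,r)}$, so it first has to be transferred to the limiting mappings $G_u$. Assumption~\ref{asmpt:limiting_mapping_continuous_time} does this.

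\textbf{Step 1: moving to the limiting mapping.} Fix $\theta,\theta'\in\Theta$ with $\|\theta-\theta'\|\le\epsilon$, and fix $u\in[0,1]$ and any $i$. By the triangle inequality,
\[
\|G_u(\symbfit{\varepsilon}_0,\theta)-G_u(\symbfit{\varepsilon}_0,\theta')\|_{\mathcal{L}^1}\le \|G_u^{(i,0)}(\symbfit{\varepsilon}_0,\theta)-G_u^{(i,0)}(\symbfit{\varepsilon}_0,\theta')\|_{\mathcal{L}^1}+2\sup_{\theta''}\|G_u^{(i,0)}(\symbfit{\varepsilon}_0,\theta'')-G_u(\symbfit{\varepsilon}_0,\theta'')\|_{\mathcal{L}^q}.
\]
The expectations here are all taken over the law of the noise inputs. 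The first term is at most $\sqrt{\eta_G(\epsilon)}$, by Jensen's inequality applied to Assumption~\ref{asmpt:stochastic_equicontinuity_continuous_time}. The second term tends to $0$ as $i\to\infty$, uniformly in $u$, by Assumption~\ref{asmpt:limiting_mapping_continuous_time}; the $\mathcal{L}^q$ norm dominates the $\mathcal{L}^1$ norm. Letting $i\to\infty$ gives
\[
\sup_{u\in[0,1]}\ \sup_{\|\theta-\theta'\|\le\epsilon}\|G_u(\symbfit{\varepsilon}_0,\theta)-G_u(\symbfit{\varepsilon}_0,\theta')\|_{\mathcal{L}^1}\le\sqrt{\eta_G(\epsilon)}.
\]

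\textbf{Step 2: uniform continuity of $\Phi_u$.} By definition, $\Phi_u(\theta)=\mathbb{E}\,\varphi([G_u(\symbfit{\varepsilon}_{-m},\theta),\ldots,G_u(\symbfit{\varepsilon}_0,\theta)]^\top)$. Note that $P_{\theta,u}$ is exactly the law of this vector under the noise. By the Lipschitz lemma with constant $L_\varphi$, Minkowski's inequality, and stationarity (shifting the noise index back to $0$),
\[
\|\Phi_u(\theta)-\Phi_u(\theta')\|\le L_\varphi(m+1)\sqrt{\eta_G(\epsilon)}
\]
whenever $\|\theta-\theta'\|\le\epsilon$, for every $u$. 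The right-hand side tends to $0$ as $\epsilon\to0$.

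\textbf{Step 3: continuity of the integral.} Each coordinate of $\Phi_u$ lies in $[-1,1]$, so $\|\Phi_u(\theta_0)-\Phi_u(\theta)\|\le2\sqrt{k}$. Using $|a^2-b^2|\le|a-b|(a+b)$ and the reverse triangle inequality,
\[
|Q^{\mathrm{RW}}(\theta)-Q^{\mathrm{RW}}(\theta')|\le\int_0^1 4\sqrt{k}\,\|\Phi_u(\theta)-\Phi_u(\theta')\|\,du\le4\sqrt{k}\,L_\varphi(m+1)\sqrt{\eta_G(\epsilon)}.
\]
This gives continuity, and in fact uniform continuity, on $\Theta$. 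For $Q^{\mathrm{RW}}$ to be well defined, the integrand must be measurable in $u$. This follows from Assumption~\ref{asmpt:measurability}, since $\Phi_u(\theta)=\tilde\Phi_u(g_u(\theta))$ is expressed through the jointly Borel function $\bar\psi$ and the jointly measurable map $(u,\theta)\mapsto g_u(\theta)$.

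\textbf{Main obstacle.} I expect Step 1 to be the delicate point. Assumption~\ref{asmpt:stochastic_equicontinuity_continuous_time} is only stated for the approximations, so it has to be checked that the uniform-in-$(u,\theta)$ convergence in Assumption~\ref{asmpt:limiting_mapping_continuous_time} really lets the bound pass to the limit without losing uniformity in $u$. The measurability remark in Step 3 is the other point that needs care.
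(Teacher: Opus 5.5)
Your proof is correct and follows essentially the same route as the paper: bound $|Q^{\mathrm{RW}}(\theta)-Q^{\mathrm{RW}}(\theta')|$ by $4\sqrt{k}\sup_{u}\|\Phi_u(\theta)-\Phi_u(\theta')\|$, then control that supremum using the Lipschitz constant $L_\varphi$, Minkowski's inequality, the uniform approximation in Assumption~\ref{asmpt:limiting_mapping_continuous_time}, and the stochastic equicontinuity in Assumption~\ref{asmpt:stochastic_equicontinuity_continuous_time}. The differences are cosmetic. The paper adds and subtracts the level-$i$ feature expectations and carries an approximation error $e(\epsilon)$ with $i=i(\epsilon)$, whereas you let $i\to\infty$ at the level of $G_u$ first and obtain the clean bound $L_\varphi(m+1)\sqrt{\eta_G(\epsilon)}$; you also check measurability of $u\mapsto\Phi_u(\theta)$, which the paper leaves implicit.
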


\textit{Proof:} For all $\theta\in\Theta$, and for all $c > 0$, there exists an $\epsilon >0$ such that, for all $\theta' \in\Theta$, if $\norm{\theta-\theta'}<\epsilon$, then
\begin{align*}
& \abs{Q^{\mathrm{RW}}(\theta)-Q^{\mathrm{RW}}(\theta')}
\\&
\overset{(1)}{=} \abs{\int_0^1
\|\Phi_u(\theta_0) - \Phi_u(\theta)\|^2\,du - \int_0^1
\|\Phi_u(\theta_0) - \Phi_u(\theta')\|^2\,du}
\\& 
\overset{(2)}{\leq} \int_0^1 \abs{
\|\Phi_u(\theta_0) - \Phi_u(\theta)\|^2 - 
\|\Phi_u(\theta_0) - \Phi_u(\theta')\|^2 }\,du
\\& 
\overset{(3)}{\leq} \int_0^1 \left(
\|\Phi_u(\theta_0) - \Phi_u(\theta)\| + \|\Phi_u(\theta_0) - \Phi_u(\theta')\| \right) \|\Phi_u(\theta) - \Phi_u(\theta')\|\,du
\\& 
\overset{(4)}{\leq} 4\sqrt{k} \underset{u\in [0,1]}{\sup} \norm{\Phi_u(\theta)-\Phi_u(\theta')} 
\\& 
\overset{(5)}{<} 4\sqrt{k} (e(\epsilon) + L_{\varphi} (m+1) \ \underset{i \in\mathbb{N}}{\sup} \ \underset{u \in [0,1]}{\sup} \ \underset{0\leq r\leq s}{\max} \ \mathbb{E}(
\underset{ \left\|\theta-\theta'\right\|\leq \epsilon}{\underset{\theta,\theta'\in\Theta}{\sup}} \left\|
G_{u}^{(i,r)}(\symbfit{\varepsilon}_{0},\theta) - 
G_{u}^{(i,r)}(\symbfit{\varepsilon}_{0},\theta')\right\|^2)^{1/2})
\\& 
\overset{(6)}{<} 4\sqrt{k} \left(e(\epsilon) +  L_{\varphi} (m+1) \sqrt{\eta_G(\epsilon)}\right)
\\&
\overset{(7)}{<}
c,
\end{align*}
by (1) definition, (2) the triangle inequality, (3) the equality $\abs{a^2-b^2}=(a+b)\abs{a-b}$ and the reverse triangle inequality applied to $\abs{a-b}$, (4) upper bounding by two times $M=\underset{u\in [0,1]}{\sup} \ \underset{\theta \in \Theta}{\sup} \norm{\Phi_u(\theta_0)-\Phi_u(\theta)}\leq 2\sqrt{k} < \infty$, i.e., by taking the supremum over rescaled time and parameter values, and multiplying by the length of the interval $[0,1]$ (i.e., $1$), (5) adding and subtracting the expectations at approximation level $i$ and the triangle inequality, where we can let $i=i(\epsilon)$ grow as $\epsilon$ decreases so that the approximation error, denoted by $e(\epsilon)$, decreases to $0$ as in~\eqref{eqn:convergence_to_tilde_Phi_theta_u_continuous_time}, then using the same arguments from Theorems~\ref{thm:injective_continuous_time} and~\ref{thm:consistency_rolling_window_estimator} to upper bound the Euclidean distance between the expectations of the features (i.e., by linearity of expectation, Jensen's inequality, because the random Fourier features $\varphi$ are $L_{\varphi}$-Lipschitz with Lipschitz constant given by~\eqref{eqn:lipschitz_constant_rff}, the triangle inequality, and Minkowski inequality), (6) upper bounding by $\eta_G(\epsilon)$ using the stochastic equicontinuity condition from Assumption~\ref{asmpt:stochastic_equicontinuity_continuous_time}, (7) follows by choosing $\epsilon>0$ sufficiently small, since both $e(\epsilon)\xrightarrow[]{} 0$ and $\eta_G(\epsilon)\xrightarrow[]{} 0$ as $\epsilon \xrightarrow[]{} 0$.

\hfill$\square$

For $t\in \mathbb{N}$, $i\in\mathbb{N}$, $j\in\mathbb{N}_0$, $r=0,1,\ldots,s$, define
\begin{align*}Z_{t}^{(i,r)}(\theta)&=\left[G_{t-m}^{(i,r)}(\symbfit{\varepsilon}_{-m}^{(r)},\theta),\ldots,G_t^{(i,r)}(\symbfit{\varepsilon}_0^{(r)},\theta)\right]^{\top}, 
\\ 
\tilde{Z}_{t,j}^{(i,r)}(\theta) &=\left[G_{t-m}^{(i,r)}(\bar{\symbfit{\varepsilon}}_{-m,j}^{(r)},\theta),\ldots,G_t^{(i,r)}(\bar{\symbfit{\varepsilon}}_{0,j}^{(r)},\theta)\right]^{\top},\end{align*}
where for $\ell=0,\ldots,m$ we define $$\bar{\symbfit{\varepsilon}}_{-\ell,j}^{(r)}=\begin{cases} \tilde{\symbfit{\varepsilon}}_{-\ell,j-\ell}^{(r)} \text{ for } \ell \leq j,
\\ \symbfit{\varepsilon}_{-\ell}^{(r)} \text{ for } \ell > j,
\end{cases}$$
where $\tilde{\symbfit{\varepsilon}}_{-\ell,j-\ell}$ is from~\eqref{eqn:replaced_j_past_noise_input_sequence_up_to_time_t}.

The following result establishes that the time series of random features satisfies a physical dependence measure condition, analogous to Assumption~\ref{asmpt:temporal_dependence_discrete_time}. 
\begin{lemma}\label{lma:temporal_dependence_of_random_features_discrete_time}
There exist $\Psi^{\varphi}>0$, $\beta^{\varphi}>2$ such that, for some $q^{\varphi} > 2$ and all $j\in\mathbb{N}_0$,   \begin{align*}  \underset{i \in\mathbb{N}}{\sup} \ \underset{\theta\in\Theta}{\sup} \ \underset{t\in\mathbb{N}}{\sup}   \  \underset{0\leq r \leq s}{\max} \   \norm{\varphi\left(Z_{t}^{(i,r)}(\theta)\right) - \varphi\left(
\tilde{Z}_{t,j}^{(i,r)}(\theta)\right)}_{\mathcal{L}^{q^{\varphi}}(\theta)}   & \leq   \Psi^{\varphi} (j \lor 1)^{-\beta^{\varphi}},\\  \underset{i \in\mathbb{N}}{\sup} \ \underset{\theta\in\Theta}{\sup} \ \underset{t\in\mathbb{N}}{\sup} \ \underset{0\leq r \leq s}{\max} \ \norm{\varphi\left(Z_{t}^{(i,r)}(\theta)\right)}_{\mathcal{L}^{q^{\varphi}}(\theta)}  &\leq  \Psi^{\varphi}. \end{align*} 
\end{lemma}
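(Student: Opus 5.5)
The plan is to transfer the physical dependence bounds of Assumption~\ref{asmpt:temporal_dependence_discrete_time} from the generative mappings to the feature process, using two facts about $\varphi$: it is $L_{\varphi}$-Lipschitz by~\eqref{eqn:lipschitz_constant_rff}, and each coordinate takes values in $[-1,1]$. I would take $q^{\varphi}=q$ and $\beta^{\varphi}=\beta$, with $q>2$ and $\beta>2$ from Assumption~\ref{asmpt:temporal_dependence_discrete_time}. The constant $\Psi^{\varphi}$ will be built from $\Psi$, $L_{\varphi}$, $m$ and $k$; here $L_{\varphi}$ is treated as fixed, since the random parameters $\vartheta$ are drawn once and held fixed.

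The moment bound is immediate. Since $|\varphi_i|\leq 1$ for every $i\in[k]$, we have $\norm{\varphi(z)}\leq\sqrt{k}$ for all $z$, and hence $\norm{\varphi(Z_t^{(i,r)}(\theta))}_{\mathcal{L}^{q}(\theta)}\leq\sqrt{k}$, uniformly in $i,\theta,t,r$.

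For the dependence bound, I would proceed in three steps.

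\begin{enumerate}
\item Apply the Lipschitz property and then Minkowski's inequality over the $m+1$ blocks:
\[
\norm{\varphi(Z_t^{(i,r)}(\theta))-\varphi(\tilde Z_{t,j}^{(i,r)}(\theta))}_{\mathcal{L}^q(\theta)}\leq L_{\varphi}\sum_{\ell=0}^m\norm{G_{t-\ell}^{(i,r)}(\symbfit{\varepsilon}_{-\ell}^{(r)},\theta)-G_{t-\ell}^{(i,r)}(\bar{\symbfit{\varepsilon}}_{-\ell,j}^{(r)},\theta)}_{\mathcal{L}^q(\theta)}.
\]

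\item Use the definition of $\bar{\symbfit{\varepsilon}}_{-\ell,j}^{(r)}$ to see which summands survive. For $\ell>j$ the two arguments coincide, so the summand vanishes. For $\ell\leq j$, the sequence $\bar{\symbfit{\varepsilon}}_{-\ell,j}^{(r)}$ is $\symbfit{\varepsilon}_{-\ell}^{(r)}$ with the input at lag $j-\ell$ replaced by an iid copy. Because $(\symbfit{\varepsilon}_{-\ell},\tilde{\symbfit{\varepsilon}}_{-\ell,j-\ell})$ has the same joint law as $(\symbfit{\varepsilon}_0,\tilde{\symbfit{\varepsilon}}_{0,j-\ell})$ (stationarity of the iid inputs), Assumption~\ref{asmpt:temporal_dependence_discrete_time} bounds each surviving summand by $\Psi((j-\ell)\lor 1)^{-\beta}$.

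\item Turn $((j-\ell)\lor1)^{-\beta}$ into a multiple of $(j\lor1)^{-\beta}$, splitting into two cases.
\begin{itemize}
\item If $j\geq 2m$, then $j-\ell\geq j/2$ for every $\ell\leq m$, so $((j-\ell)\lor1)^{-\beta}\leq 2^{\beta}(j\lor1)^{-\beta}$.
\item If $j<2m$, bound each summand crudely by $\Psi$. Since $(j\lor1)^{-\beta}\geq(2m)^{-\beta}$ in this range, we get $\Psi\leq(2m)^{\beta}\Psi(j\lor1)^{-\beta}$.
\end{itemize}
\end{enumerate}

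Combining the two cases, we may take
\[
\Psi^{\varphi}=\max\{\sqrt{k},\;L_{\varphi}(m+1)\Psi\max(2^{\beta},(2m)^{\beta})\}.
\]
No uniformity issue arises, because every bound used holds uniformly in $i,\theta,t,r$.

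The main obstacle is the bookkeeping in step 2. One has to be careful that the replacement in $\bar{\symbfit{\varepsilon}}$ hits the same underlying input $\varepsilon_{-j}$ in every block, so that the shifted-stationarity argument applies. The assumption is stated only at time index $0$, which is why this relabeling step is needed. If the time index $t-\ell$ inside $G$ varies, the bound still applies because Assumption~\ref{asmpt:temporal_dependence_discrete_time} is uniform over $t\in\mathbb{N}$.
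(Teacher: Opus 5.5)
Your proposal is correct and follows the paper's route: Lipschitz continuity of $\varphi$, Minkowski's inequality over the $m+1$ blocks, the shift-invariance of the iid noise inputs, and the bounds of Assumption~\ref{asmpt:temporal_dependence_discrete_time}. You also supply details the paper omits, namely the case split that converts $((j-\ell)\lor 1)^{-\beta}$ into a constant multiple of $(j\lor 1)^{-\beta}$, and the use of $|\varphi_i|\leq 1$ to get the moment bound directly.
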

\textit{Proof:} The result follows from the fact that $\varphi$ is Lipschitz with Lipschitz constant $L_{\varphi}$ from~\eqref{eqn:lipschitz_constant_rff}, the triangle inequality and Minkowski inequality, and applying the bounds from Assumption~\ref{asmpt:temporal_dependence_discrete_time}. The same line of reasoning is used in the proofs of Theorems~\ref{thm:injective_discrete_time} and~\ref{thm:consistency_time_average_estimator}, so the details are omitted.

\hfill$\square$

The following result presents Theorem 3.2 from~\cite{seq_gauss_approx2022} using our notation.

\begin{lemma}\label{lma:lp_lln_discrete_time} 
For $r=0,1,\ldots,s$ and $\theta\in\Theta$, suppose a time series $X_t^{(r)}(\theta)$, $t=1,\ldots,n$, is generated as $$X_t^{(r)}(\theta)=G_{t}^{(n,r)}(\symbfit{\varepsilon}_{t},\theta).$$ Suppose there exists a constant $q\geq 2$, such that $$\underset{i\in\mathbb{N}}{\sup} \ \underset{\theta\in\Theta}{\sup} \  \underset{t\in \mathbb{N}}{\sup} \ \underset{0\leq r \leq s}{\max} \  \norm{G_t^{(i,r)}(\symbfit{\varepsilon}_{0},\theta)}_{\mathcal{L}^q(\theta)}<\infty.$$ For some $2\leq h \leq q < \infty$, define the physical dependence measure at time $t$ and sample size $n$ as $$\eta_{t,j,q,h}^{(n,r)}(\theta)=  \mathbb{E}_{\theta}\left(\norm{G_t^{(n,r)}(\symbfit{\varepsilon}_{t},\theta) - G_t^{(n,r)}(\tilde{\symbfit{\varepsilon}}_{t,j},\theta)}_h^q\right)^{1/q}.$$ There exists a constant $C = C(q,h)$, such that for all $r=0,1,\ldots,s$, $\theta\in\Theta$, and $n \in \mathbb{N}$, we have
\begin{align*}&
 \left(
\mathbb{E}_{\theta} \left[\max_{\ell \leq n}
\left\|
\sum_{t=1}^{\ell} \left(X_t^{(r)}(\theta) - \mathbb{E}_{\theta}\left[X_t^{(r)}(\theta)\right]\right)
\right\|_h^q\right]
\right)^{\frac{1}{q}}\\
&=
 \left(
\mathbb{E}_{\theta} \left[\max_{\ell \leq n}
\left\|
\sum_{t=1}^{\ell} \left(G_t^{(n,r)}(\symbfit{\varepsilon}_{t},\theta) - \mathbb{E}_{\theta}\left[G_t^{(n,r)}(\symbfit{\varepsilon}_{t},\theta)\right]\right)
\right\|_h^q\right]
\right)^{\frac{1}{q}}\\
&\leq
C n^{\frac{1}{2} - \frac{1}{q}}
\sum_{j=0}^{\infty}
\left(
\sum_{t=1}^n (\eta_{t,j,q,h}^{(n,r)}(\theta))^q
\right)^{\frac{1}{q}} \\ &\leq
C n^{\frac{1}{2}}
\sum_{j=0}^{\infty}
\max_{t \leq n} \eta_{t,j,q,h}^{(n,r)}(\theta).
\end{align*} % and in the special case of $h = 2$, the inequality may be improved to \begin{align*}& \left(\mathbb{E}_{\theta}\left[ \max_{\ell \leq n}\left\|\sum_{t=1}^{\ell} \left(X_t^{(r)}(\theta) - \mathbb{E}_{\theta}\left[X_t^{(r)}(\theta)\right]\right)\right\|_2^q\right]\right)^{\frac{1}{q}}\\&=  \left(\mathbb{E}_{\theta}\left[ \max_{\ell \leq n}\left\|\sum_{t=1}^{\ell} \left(G_t^{(n,r)}(\symbfit{\varepsilon}_{t},\theta) - \mathbb{E}_{\theta}\left[G_t^{(n,r)}(\symbfit{\varepsilon}_{t},\theta)\right]\right)\right\|_2^q\right]\right)^{\frac{1}{q}}\\&\leq C \sum_{j=0}^{\infty} (j \wedge n)^{\frac{1}{2} - \frac{1}{q}}\left(\sum_{t=1}^n (\eta_{t,j,q,2}^{(n,r)}(\theta))^q \right)^{\frac{1}{q}}+ C \sum_{j=0}^n\left(\sum_{t=1}^n (\eta_{t,j,2,2}^{(n,r)}(\theta))^2\right)^{\frac{1}{2}}.\end{align*}
\end{lemma}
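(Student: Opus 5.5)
This lemma restates Theorem 3.2 of \cite{seq_gauss_approx2022} in our notation, so the plan is to check that our setting meets its hypotheses and then deduce the final, cruder bound from the first one. First, I would fix $r$, $\theta$ and $n$, and view $X_t^{(r)}(\theta)=G_t^{(n,r)}(\symbfit{\varepsilon}_t,\theta)$, $t=1,\ldots,n$, as a causal, possibly nonstationary, Bernoulli shift of the iid inputs $(\varepsilon_i)$. This is exactly the framework of \cite{seq_gauss_approx2022}. Their kernels $G_t$ become our $G_t^{(n,r)}(\cdot,\theta)$, and their physical dependence measure with the $(q,h)$ norms becomes our $\eta_{t,j,q,h}^{(n,r)}(\theta)$. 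The uniform $\mathcal{L}^q$ moment assumption makes every term finite.

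Second, I would apply Theorem 3.2 of \cite{seq_gauss_approx2022} directly. It gives the maximal inequality with the first upper bound, $C n^{1/2-1/q}\sum_j (\sum_t \eta^q)^{1/q}$. The constant $C=C(q,h)$ depends only on $(q,h)$, so it is uniform over $r$, $\theta$ and $n$. The main thing to verify here is that the $\theta$-dependent expectation $\mathbb{E}_\theta$ is simply the expectation under the law of the noise inputs with $\theta$ fixed. Then nothing in their proof, which uses a martingale decomposition into projections $P_{t-j}X_t$ followed by Burkholder's and Doob's inequalities, depends on $\theta$.

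Third, I would derive the last inequality elementarily. Since $(\sum_{t=1}^n \eta_{t,j}^q)^{1/q}\le n^{1/q}\max_{t\le n}\eta_{t,j}$, multiplying by $n^{1/2-1/q}$ gives $n^{1/2}\max_t\eta_{t,j}$, and summing over $j$ finishes the argument.

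The main obstacle is the bookkeeping needed to match the statement of \cite{seq_gauss_approx2022}. Their result is stated for $\mathbb{R}^d$-valued sums with the $\|\cdot\|_h$ norm and for a fixed sequence of kernels. I would therefore check that letting the kernels depend on $n$ through the approximation index $i=n$ is allowed. It is, because the inequality holds for each fixed $n$, with a constant that does not depend on the kernels.
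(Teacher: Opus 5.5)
Your proposal is correct and follows the paper's route: the paper's proof is only a pointer to the proof of Theorem 3.2 in \cite{seq_gauss_approx2022}, applied for each fixed $r$, $\theta$ and $n$ exactly as you describe. Your extra step, obtaining the final bound from $\bigl(\sum_{t=1}^n (\eta_{t,j,q,h}^{(n,r)}(\theta))^q\bigr)^{1/q}\le n^{1/q}\max_{t\le n}\eta_{t,j,q,h}^{(n,r)}(\theta)$, is the elementary reduction the paper leaves implicit.
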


\textit{Proof:} See the proof of Theorem 3.2 in~\cite{seq_gauss_approx2022}.
\hfill$\square$

For $u\in [0,1]$, $i\in\mathbb{N}$, $j\in\mathbb{N}_0$, $r=0,1,\ldots,s$, define
\begin{align*}Z_{u}^{(i,r)}(\theta)&=\left[G_{u}^{(i,r)}(\symbfit{\varepsilon}_{-m}^{(r)},\theta),\ldots,G_u^{(i,r)}(\symbfit{\varepsilon}_0^{(r)},\theta)\right]^{\top}, 
\\ 
\tilde{Z}_{u,j}^{(i,r)}(\theta) &=\left[G_{u}^{(i,r)}(\bar{\symbfit{\varepsilon}}_{-m,j}^{(r)},\theta),\ldots,G_u^{(i,r)}(\bar{\symbfit{\varepsilon}}_{0,j}^{(r)},\theta)\right]^{\top},\end{align*} where for $\ell=0,\ldots,m$ we define $$\bar{\symbfit{\varepsilon}}_{-\ell,j}^{(r)}=\begin{cases} \tilde{\symbfit{\varepsilon}}_{-\ell,j-\ell}^{(r)} \text{ for } \ell \leq j,
\\ \symbfit{\varepsilon}_{-\ell}^{(r)} \text{ for } \ell > j,
\end{cases}$$
where $\tilde{\symbfit{\varepsilon}}_{-\ell,j-\ell}^{(r)}$ is from~\eqref{eqn:replaced_j_past_noise_input_sequence_up_to_time_t}.

The next result shows that the time series of random features satisfies a physical dependence measure condition, analogous to Assumption~\ref{asmpt:temporal_dependence_continuous_time}.
\begin{lemma}\label{lma:temporal_dependence_of_random_features_continuous_time}
 There exist constants $\rho^{\varphi}\in (0,1)$, $\Lambda^{\varphi} >0$, such that, for some $q^{\varphi} > 2$ and all $j\in\mathbb{N}_0$, we have \begin{align*}  \underset{i\in\mathbb{N}}{\sup} \ \underset{\theta\in\Theta}{\sup} \ \underset{u\in [0,1]}{\sup}  \ \underset{0\leq r \leq s}{\max} \   \norm{\varphi\left(Z_{u}^{(i,r)}(\theta)\right) - \varphi\left(\tilde{Z}_{u,j}^{(i,r)}(\theta)\right)}_{\mathcal{L}^{q^{\varphi}}(\theta)}   &\leq \Lambda^{\varphi} (\rho^{\varphi})^j, \\  \underset{i\in\mathbb{N}}{\sup} \ \underset{\theta\in\Theta}{\sup} \  \underset{u\in [0,1]}{\sup}  \ \underset{0\leq r \leq s}{\max} \    \norm{\varphi\left(Z_{u}^{(i,r)}(\theta)\right)}_{\mathcal{L}^{q^{\varphi}}(\theta)}   &\leq \Lambda^{\varphi}.\end{align*} 
\end{lemma}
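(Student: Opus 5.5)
The plan mirrors the proof of Lemma~\ref{lma:temporal_dependence_of_random_features_discrete_time}, replacing the polynomial decay of Assumption~\ref{asmpt:temporal_dependence_discrete_time} with the exponential decay of Assumption~\ref{asmpt:temporal_dependence_continuous_time}. We take $q^{\varphi}=q$, with $q>2$ from Assumption~\ref{asmpt:temporal_dependence_continuous_time}.

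\textbf{Moment bound.} Each coordinate of $\varphi$ is a cosine and so lies in $[-1,1]$. Hence $\norm{\varphi(Z_u^{(i,r)}(\theta))}\leq\sqrt{k}$ almost surely, and therefore
\[
\norm{\varphi(Z_u^{(i,r)}(\theta))}_{\mathcal{L}^{q}(\theta)}\leq\sqrt{k},
\]
uniformly in $i,\theta,u,r$. No moment assumption is needed for this part.

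\textbf{Dependence bound.} Since $\varphi$ is $L_{\varphi}$-Lipschitz with constant given by~\eqref{eqn:lipschitz_constant_rff}, and $\norm{\mathrm{Vec}(\cdot)}$ is bounded by the sum of the row norms, Minkowski's inequality gives
\[
\norm{\varphi(Z_u^{(i,r)}(\theta))-\varphi(\tilde Z_{u,j}^{(i,r)}(\theta))}_{\mathcal{L}^q(\theta)}
\leq L_{\varphi}\sum_{\ell=0}^{m}\norm{G_u^{(i,r)}(\symbfit{\varepsilon}_{-\ell}^{(r)},\theta)-G_u^{(i,r)}(\bar{\symbfit{\varepsilon}}_{-\ell,j}^{(r)},\theta)}_{\mathcal{L}^q(\theta)}.
\]
We bound the terms on the right in two cases.
\begin{itemize}
\item For $\ell>j$ we have $\bar{\symbfit{\varepsilon}}_{-\ell,j}^{(r)}=\symbfit{\varepsilon}_{-\ell}^{(r)}$, so the term vanishes.
\item For $\ell\leq j$, the sequence $\bar{\symbfit{\varepsilon}}_{-\ell,j}^{(r)}$ is $\symbfit{\varepsilon}_{-\ell}^{(r)}$ with the innovation $j-\ell$ steps back replaced by an iid copy. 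The process $(G_u^{(i,r)}(\symbfit{\varepsilon}_t,\theta))_t$ is stationary for fixed $u$, and the pair (original, coupled) is jointly shift-invariant. So this term equals the physical dependence measure at lag $j-\ell$, which Assumption~\ref{asmpt:temporal_dependence_continuous_time} bounds by $\Lambda\rho^{j-\ell}$.
\end{itemize}
Summing over $0\leq\ell\leq\min(m,j)$, each term is at most $\Lambda\rho^{j-m}$, so the total is at most $L_{\varphi}(m+1)\Lambda\rho^{-m}\rho^{j}$. We therefore set $\rho^{\varphi}=\rho$ and
\[
\Lambda^{\varphi}=\max\bigl(\sqrt{k},\,L_{\varphi}(m+1)\Lambda\rho^{-m}\bigr).
\]
Alternatively, one can use $\min\bigl(2\sqrt{k},\,L_{\varphi}(m+1)\Lambda\rho^{j-m}\bigr)$ and take the same constant.

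\textbf{Main obstacle.} The delicate point is the stationarity shift in the case $\ell\leq j$. We must justify that replacing $\symbfit{\varepsilon}_{-\ell}$ by $\symbfit{\varepsilon}_0$, together with the corresponding replaced index, leaves the joint law unchanged. This holds because the innovations are iid and the map $G_u^{(i,r)}$ does not depend on the shift. All constants above are independent of $i,\theta,u,r$, since $L_{\varphi}$ depends only on the drawn $\vartheta$. This gives the required uniformity.
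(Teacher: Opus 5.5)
Your proof is correct and follows the paper's route. The paper's proof uses the $L_{\varphi}$-Lipschitz property of $\varphi$, Minkowski's inequality over the $m+1$ rows, and the exponential bound of Assumption~\ref{asmpt:temporal_dependence_continuous_time} after an iid shift, as in the explicit computation in Step 3.1.1.1 of the proof of Theorem~\ref{thm:consistency_rolling_window_estimator}; your bound $L_{\varphi}(m+1)\Lambda\rho^{-m}\rho^{j}$ and your use of $\|\varphi\|\le\sqrt{k}$ for the moment bound are both valid.
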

\textit{Proof:} The result follows from the fact that $\varphi$ is Lipschitz with Lipschitz constant $L_{\varphi}$ from~\eqref{eqn:lipschitz_constant_rff}, the triangle inequality and Minkowski inequality, and applying the bounds from Assumption~\ref{asmpt:temporal_dependence_continuous_time}. The same line of reasoning is used in the proofs of Theorems~\ref{thm:injective_continuous_time} and~\ref{thm:consistency_rolling_window_estimator}, so the details are omitted.

\hfill$\square$

The following result establishes that the time series of random features satisfies a nonstationarity condition, analogous to Assumption~\ref{asmpt:nonstationarity_continuous_time}.
\begin{lemma}\label{lma:nonstationarity_of_random_features_continuous_time}
For some $q^{\varphi} > 2$ and some $\kappa^{\varphi}\in [1,4)$, the constant $\Lambda^{\varphi} >0$ from Lemma~\ref{lma:temporal_dependence_of_random_features_continuous_time} also satisfies \begin{align*} \underset{i \in\mathbb{N}}{\sup} \ \underset{\theta\in\Theta}{\sup} \ \underset{0 \leq r \leq s}{\max} \left( \underset{u\in [0,1]}{\sup} \  \norm{\varphi\left(Z_{u}^{(i,r)}(\theta)\right)}_{\mathcal{L}^{q^{\varphi}}(\theta)}  +  \norm{\left(\varphi\left(Z_{u}^{(i,r)}(\theta)\right)\right)_u}_{\kappa^{\varphi}\text{-}\mathrm{var},\mathcal{L}^{q^{\varphi}}(\theta)}  \right) &\leq \Lambda^{\varphi}.\end{align*}
\end{lemma}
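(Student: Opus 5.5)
The plan is to reduce everything to the Lipschitz property of $\varphi$ with constant $L_{\varphi}$ from~\eqref{eqn:lipschitz_constant_rff}, transferring the bounds of Assumptions~\ref{asmpt:temporal_dependence_continuous_time} and~\ref{asmpt:nonstationarity_continuous_time} on the mappings $G_u^{(i,r)}$ to the feature process $u\mapsto\varphi(Z_u^{(i,r)}(\theta))$. We take $q^{\varphi}=q$, $\kappa^{\varphi}=\kappa$, and enlarge $\Lambda^{\varphi}$ from Lemma~\ref{lma:temporal_dependence_of_random_features_continuous_time} if needed. Enlarging is harmless: that lemma's two bounds only get weaker when $\Lambda^{\varphi}$ increases, so a single constant can serve both lemmas.

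For the sup term, note that each coordinate of $\varphi$ lies in $[-1,1]$. Hence $\|\varphi(Z_u^{(i,r)}(\theta))\|\le\sqrt{k}$ almost surely, and the $\mathcal{L}^{q}(\theta)$ norm is at most $\sqrt{k}$, uniformly in $i,\theta,u,r$. No moment assumption is needed for this part.

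For the $\kappa$-variation term, fix a partition $0=u_0<\dots<u_\ell=1$. Pointwise in the noise, the Lipschitz property and the triangle inequality over the $m+1$ lag slots give
\[
\|\varphi(Z_{u_j})-\varphi(Z_{u_{j-1}})\|\le L_{\varphi}\sum_{a=0}^{m}\|G_{u_j}^{(i,r)}(\symbfit{\varepsilon}_{-a}^{(r)},\theta)-G_{u_{j-1}}^{(i,r)}(\symbfit{\varepsilon}_{-a}^{(r)},\theta)\|.
\]
Taking $\mathcal{L}^q(\theta)$ norms and applying Minkowski's inequality, then using stationarity in the noise index to replace $\symbfit{\varepsilon}_{-a}$ by $\symbfit{\varepsilon}_0$, yields the bound $L_{\varphi}(m+1)\,\|G_{u_j}^{(i,r)}(\symbfit{\varepsilon}_0,\theta)-G_{u_{j-1}}^{(i,r)}(\symbfit{\varepsilon}_0,\theta)\|_{\mathcal{L}^q(\theta)}$.

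Raising to the power $\kappa$, summing over $j$, taking the $1/\kappa$ root, and then the supremum over partitions gives
\[
\|(\varphi(Z_u^{(i,r)}(\theta)))_u\|_{\kappa\text{-}\mathrm{var},\mathcal{L}^q(\theta)}\le L_{\varphi}(m+1)\,\|(G_u^{(i,r)}(\symbfit{\varepsilon}_0,\theta))_u\|_{\kappa\text{-}\mathrm{var},\mathcal{L}^q(\theta)}\le L_{\varphi}(m+1)\Lambda
\]
by Assumption~\ref{asmpt:nonstationarity_continuous_time}, uniformly over $i,\theta,r$.

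Combining the two parts, it suffices to choose $\Lambda^{\varphi}\ge\sqrt{k}+L_{\varphi}(m+1)\Lambda$, together with whatever Lemma~\ref{lma:temporal_dependence_of_random_features_continuous_time} requires.

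The only real subtlety is the lag structure. As written, $Z_u$ evaluates every lag slot at the same $u$, so the step above is clean. If one instead wanted the version used in the proof with times $(t-a)/n$, the shifted evaluation points would require telescoping, exactly as in Step 1.1 of the proof of Theorem~\ref{thm:consistency_rolling_window_estimator}. For the statement as posed, no such step is needed.
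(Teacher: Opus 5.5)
Your proof is correct and follows essentially the same route as the paper, which also gets the bound from the Lipschitz property of $\varphi$ (constant $L_{\varphi}$), the triangle and Minkowski inequalities, and Assumption~\ref{asmpt:nonstationarity_continuous_time}, giving a $\kappa$-variation bound of $L_{\varphi}(m+1)\Lambda$. Your pointwise bound $\|\varphi(\cdot)\|\le\sqrt{k}$ for the supremum term, and your remark that enlarging $\Lambda^{\varphi}$ keeps Lemma~\ref{lma:temporal_dependence_of_random_features_continuous_time} valid, simply make explicit details that the paper omits.
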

\textit{Proof:} The result follows from the fact that $\varphi$ is Lipschitz with Lipschitz constant $L_{\varphi}$ from~\eqref{eqn:lipschitz_constant_rff}, the triangle inequality and Minkowski inequality, and applying the bounds from Assumption~\ref{asmpt:nonstationarity_continuous_time}. The same line of reasoning is used in the proofs of Theorems~\ref{thm:injective_continuous_time} and~\ref{thm:consistency_rolling_window_estimator}, so the details are omitted.

\hfill$\square$

For all times $t=m+1,\ldots,n$, simulation indices $r=0,1,\ldots,s$, and parameter values $\theta\in\Theta$, recall the $r$-th random feature $f_t^{(r)}(\theta)$ at time $t$ based on $\theta$ as in~\eqref{eqn:random_features_at_time_t}. The time series of random features satisfies a decaying autocovariance condition, analogous to Proposition 1 from~\cite{mies_random_mult_sga_2024}. For the following result, denote the trace norm of a matrix $A\in\mathbb{R}^{k\times k}$ by $\norm{A}_{\mathrm{tr}}=\mathrm{tr}((A^{\top}A)^{\frac{1}{2}})$.
\begin{lemma}\label{lma:temporal_dependence_of_random_features_continuous_time_implies_covariance_decay}
 There exist constants $C^{\varphi}\equiv C^{\varphi}(\rho^{\varphi},\Lambda^{\varphi})>0$, $K^{\varphi}\equiv K^{\varphi}(\rho^{\varphi},\Lambda^{\varphi})>1$ depending on the constants $\rho^{\varphi}\in (0,1)$, $\Lambda^{\varphi} >0$ from Lemma~\ref{lma:temporal_dependence_of_random_features_continuous_time} and Lemma~\ref{lma:nonstationarity_of_random_features_continuous_time} such that, for all $n\in\mathbb{N}$ and $t_1,t_2\in\{m+1,\ldots,n\}$, we have $$\underset{\theta\in\Theta}{\sup} \ \underset{0\leq r \leq s}{\max} \  \norm{\mathrm{Cov}\left(f_{t_2}^{(r)}(\theta),f_{t_1}^{(r)}(\theta)\right)}_{\mathrm{tr}}\leq C^{\varphi}(|t_2-t_1|+1)^{-K^{\varphi}}.$$
\end{lemma}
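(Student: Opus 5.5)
The plan is the standard coupling argument for physical dependence. Fix $r$, $\theta$, $i=n$, and $t_1\le t_2$, and write $h=t_2-t_1$. Write $f_{t}^{(r)}(\theta)=\varphi(Z_t)$ with $Z_t$ the length-$(m+1)$ window built from the noise inputs $\symbfit{\varepsilon}_{t-m}^{(r)},\ldots,\symbfit{\varepsilon}_t^{(r)}$.

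First, construct a coupled version $f_{t_2}^{\ast}$ of $f_{t_2}^{(r)}(\theta)$. It is obtained by replacing every innovation $\varepsilon_{\ell}^{(r)}$ with $\ell\le t_1$ by an independent copy, while keeping the innovations $\varepsilon_{t_1+1}^{(r)},\ldots,\varepsilon_{t_2}^{(r)}$. Then $f_{t_2}^{\ast}$ is independent of $f_{t_1}^{(r)}(\theta)$, which depends only on innovations with index at most $t_1$. Hence
\[
\mathrm{Cov}\left(f_{t_2}^{(r)}(\theta),f_{t_1}^{(r)}(\theta)\right)=\mathbb{E}\left[\left(f_{t_2}^{(r)}(\theta)-f_{t_2}^{\ast}\right)\left(f_{t_1}^{(r)}(\theta)-\mathbb{E}f_{t_1}^{(r)}(\theta)\right)^{\top}\right].
\]
For a rank-one matrix $ab^{\top}$ the trace norm equals $\norm{a}\norm{b}$, so by convexity of the trace norm and Cauchy--Schwarz,
\[
\norm{\mathrm{Cov}}_{\mathrm{tr}}\le \norm{f_{t_2}^{(r)}(\theta)-f_{t_2}^{\ast}}_{\mathcal{L}^2}\cdot 2\sqrt{k},
\]
using that the random features lie in $[-1,1]^k$.

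Second, bound $\norm{f_{t_2}^{(r)}(\theta)-f_{t_2}^{\ast}}_{\mathcal{L}^2}$. This replaces infinitely many innovations, not a single one, so I would telescope: replace the innovations at indices $t_1,t_1-1,t_1-2,\ldots$ one at a time. At the step that replaces the innovation at index $\ell$, the two inputs differ in a single innovation. By stationarity in the noise index, each such difference has the law of a single-replacement difference of the type controlled by Lemma~\ref{lma:temporal_dependence_of_random_features_continuous_time}, with lag $t_2-\ell\ge h$. That lemma gives $\Lambda^{\varphi}(\rho^{\varphi})^{t_2-\ell}$ in $\mathcal{L}^{q^{\varphi}}\supseteq\mathcal{L}^2$. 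Summing the geometric series over $\ell\le t_1$ yields $\Lambda^{\varphi}(\rho^{\varphi})^{h}/(1-\rho^{\varphi})$. The bound is uniform over $\theta$, $r$, and $t$ because Lemma~\ref{lma:temporal_dependence_of_random_features_continuous_time} is uniform over $\theta$, $u$, $i$, and $r$. The rescaled-time index $t/n$ in $G^{(n,r)}_{t/n}$ is fixed throughout the replacement, so the lemma at $u=t_2/n$ (and at $(t_2-j)/n$ for the window coordinates) applies directly. The $m$-window shift costs at most a factor $\rho^{-m}$, which is absorbed into the constant.

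Third, conclude. Since exponential decay dominates any polynomial, $(\rho^{\varphi})^h\le C(h+1)^{-K}$ for any fixed $K>1$, say $K^{\varphi}=2$, with $C$ depending only on $\rho^{\varphi}$. This gives the claim with $C^{\varphi}=2\sqrt{k}\,\Lambda^{\varphi}C/((1-\rho^{\varphi})\rho^{m})$. The case $t_1>t_2$ follows by symmetry, since transposition preserves the trace norm.

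The main obstacle is the telescoping step. One must check that each intermediate sequence, with a mix of original and copied innovations, still has the joint law needed to apply the single-replacement bound. This is handled by the iid structure of the innovations: each intermediate sequence is itself an iid sequence, so the single-replacement bound applies verbatim to it.
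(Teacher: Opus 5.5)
Your argument is correct in substance, but it takes a different route from the paper's. The paper does not argue directly. It weakens the exponential decay in Lemma~\ref{lma:temporal_dependence_of_random_features_continuous_time} to polynomial decay $A^{\varphi}(j+1)^{-K^{\varphi}}$, matches this to Assumption~A.2 of Mies (2024), and invokes Proposition~1 there, remarking only that the constants are free of $\theta$.

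You instead give a self-contained coupling proof:
\begin{itemize}
\item replace the whole past up to $t_1$;
\item use independence to write the covariance as $\mathbb{E}[(f_{t_2}-f_{t_2}^{\ast})(f_{t_1}-\mathbb{E}f_{t_1})^{\top}]$;
\item bound the trace norm through rank-one matrices and $\norm{f_{t_1}-\mathbb{E}f_{t_1}}\le 2\sqrt{k}$;
\item sum single-replacement bounds.
\end{itemize}

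This buys several things. You get an exponential rate rather than a polynomial one. You make no use of the $\kappa$-variation bound of Lemma~\ref{lma:nonstationarity_of_random_features_continuous_time}, which the paper lists as an input but which plays no role in covariance decay. The uniformity in $\theta$, $r$, $n$, $t_1,t_2$ is transparent. You also handle explicitly that the coordinates of $f_t^{(r)}(\theta)$ sit at different rescaled times $(t-i)/n$, whereas Lemma~\ref{lma:temporal_dependence_of_random_features_continuous_time} is stated for windows with a common $u$. Strictly, what you use is Assumption~\ref{asmpt:temporal_dependence_continuous_time} coordinatewise together with the Lipschitz bound~\eqref{eqn:lipschitz_constant_rff}; that is where your factor $\rho^{-m}$ comes from. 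The paper's route is shorter but leaves the matching of hypotheses to the reader.

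One step needs an extra line. The telescoping only controls $\norm{f_{t_2}-f_{t_2}^{(K)}}_{\mathcal{L}^2}$, where $f_{t_2}^{(K)}$ replaces just the innovations at indices $t_1,\ldots,t_1-K+1$. To reach $f_{t_2}^{\ast}$ you need $f_{t_2}^{(K)}\to f_{t_2}^{\ast}$ in $\mathcal{L}^2$, and the triangle inequality alone does not give this.

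It does hold. Let $\mathcal{H}_K$ be generated by the retained innovations and the first $K$ copies. Then $f_{t_2}^{(K)}$ and $f_{t_2}^{\ast}$ have the same joint law with the generators of $\mathcal{H}_K$, hence a common conditional expectation $M_K$ given $\mathcal{H}_K$. Therefore $\norm{f_{t_2}^{(K)}-f_{t_2}^{\ast}}_{\mathcal{L}^2}\le 2\norm{f_{t_2}^{\ast}-M_K}_{\mathcal{L}^2}\to 0$ by L\'evy's upward theorem. The joint-law issue you flagged for the intermediate sequences is the easy part.

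Also, your inclusion is reversed: on a probability space $\mathcal{L}^{q^{\varphi}}\subseteq\mathcal{L}^{2}$. What you actually use is $\norm{\cdot}_{\mathcal{L}^2}\le\norm{\cdot}_{\mathcal{L}^{q^{\varphi}}}$.
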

\textit{Proof:} Lemma~\ref{lma:temporal_dependence_of_random_features_continuous_time} states that, \textit{uniformly} over the parameter values $\theta\in\Theta$, the physical dependence measure of the time series of random features decays exponentially. In particular, this implies a (slower) polynomial decay of the physical dependence measure of the random features as in Assumption A.2 of~\cite{mies_random_mult_sga_2024}.  That is, there exist constants $A^{\varphi}\equiv A^{\varphi}(\rho^{\varphi},\Lambda^{\varphi})>0$, $K^{\varphi}\equiv K^{\varphi}(\rho^{\varphi},\Lambda^{\varphi})>1$ such that, for some $q^{\varphi} > 2$, all $j\in\mathbb{N}_0$, all $i\in \mathbb{N}$, and all $r=0,1,\ldots,s$, we have \begin{align*}  \underset{\theta\in\Theta}{\sup} \ \underset{u\in [0,1]}{\sup}   \    \norm{\varphi\left(Z_{u}^{(i,r)}(\theta)\right) - \varphi\left(
\tilde{Z}_{u,j}^{(i,r)}(\theta)\right)}_{\mathcal{L}^{q^{\varphi}}(\theta)}   & \leq   A^{\varphi} (j + 1)^{-K^{\varphi}}. \end{align*}

Proposition 1 in~\cite{mies_random_mult_sga_2024} proves that a polynomial decay of the physical dependence measure is sufficient to imply a decay of the trace norm of the autocovariance matrix of a time series. The same arguments used to prove Proposition 1 in~\cite{mies_random_mult_sga_2024} can be applied to the time series of random features for each parameter value, i.e., for all $t_1,t_2\in\{m+1,\ldots,n\}$, $r=0,\ldots,s$, and $\theta\in\Theta$ we have $$ \norm{\mathrm{Cov}\left(f_{t_2}^{(r)}(\theta),f_{t_1}^{(r)}(\theta)\right)}_{\mathrm{tr}}\leq C^{\varphi}(|t_2-t_1|+1)^{-K^{\varphi}}.$$ Since the suprema over all $\theta\in\Theta$ of the upper bounds are always finite, i.e., because $C^{\varphi}$ and $K^{\varphi}$ do not depend on $\theta$, the $\Theta$-uniform inequality holds by basic properties of the supremum.
\hfill$\square$

The following result is Theorem 3.2 from~\cite{seq_gauss_approx2022} using our notation, which is possible because the triangular array framework of~\cite{seq_gauss_approx2022} nests the locally stationary time series setting; see Examples 2 and 3 therein.

\begin{lemma}\label{lma:lp_lln_continuous_time}
For $r=0,1,\ldots,s$ and $\theta\in\Theta$, suppose a time series $X_t^{(r)}(\theta)$, $t=1,\ldots,n$, is generated as $$X_t^{(r)}(\theta)=G_{t/n}^{(n,r)}(\symbfit{\varepsilon}_{t},\theta).$$ Suppose there exists a constant $q\geq 2$, such that $$\underset{i\in\mathbb{N}}{\sup} \ \underset{\theta\in\Theta}{\sup} \  \underset{u\in [0,1]}{\sup} \  \underset{0\leq r \leq s}{\max} \ \norm{G_u^{(i,r)}(\symbfit{\varepsilon}_{0},\theta)}_{\mathcal{L}^q(\theta)}<\infty.$$ For some $2\leq h \leq q < \infty$, define the physical dependence measure at time $t$ and sample size $n$ as $$\eta_{t,j,q,h}^{(n,r)}(\theta)= \mathbb{E}_{\theta}\left(\norm{G_{t/n}^{(n,r)}(\symbfit{\varepsilon}_{t},\theta) - G_{t/n}^{(n,r)}(\tilde{\symbfit{\varepsilon}}_{t,j},\theta)}_h^q\right)^{1/q}.$$ There exists a constant $C = C(q,h)$, such that for all $r=0,1,\ldots,s$, $\theta\in\Theta$, and $n \in \mathbb{N}$, we have
\begin{align*}
&
 \left(
\mathbb{E}_{\theta} \left[\max_{\ell \leq n}
\left\|
\sum_{t=1}^{\ell} \left(X_{t}^{(r)}(\theta) - \mathbb{E}_{\theta}\left[X_{t}^{(r)}(\theta)\right]\right)
\right\|_h^q\right]
\right)^{\frac{1}{q}}\\
&=
 \left(
\mathbb{E}_{\theta} \left[\max_{\ell \leq n}
\left\|
\sum_{t=1}^{\ell} \left(G_{t/n}^{(n,r)}(\symbfit{\varepsilon}_{t},\theta) - \mathbb{E}_{\theta}\left[G_{t/n}^{(n,r)}(\symbfit{\varepsilon}_{t},\theta)\right]\right)
\right\|_h^q\right]
\right)^{\frac{1}{q}}\\
&\leq
C n^{\frac{1}{2} - \frac{1}{q}}
\sum_{j=0}^{\infty}
\left(
\sum_{t=1}^n (\eta_{t,j,q,h}^{(n,r)}(\theta))^q
\right)^{\frac{1}{q}} \\ &\leq
C n^{\frac{1}{2}}
\sum_{j=0}^{\infty}
\max_{t \leq n} \eta_{t,j,q,h}^{(n,r)}(\theta).
\end{align*} % and in the special case of $h = 2$, the inequality may be improved to\begin{align*}& \left(\mathbb{E}_{\theta}\left[ \max_{\ell \leq n}\left\|\sum_{t=1}^{\ell} \left(X_{t}^{(r)}(\theta) - \mathbb{E}_{\theta}\left[X_{t}^{(r)}(\theta)\right]\right)\right\|_2^q \right]\right)^{\frac{1}{q}}\\&= \left(\mathbb{E}_{\theta}\left[ \max_{\ell \leq n}\left\|\sum_{t=1}^{\ell} \left(G_{t/n}^{(n,r)}(\symbfit{\varepsilon}_{t},\theta) - \mathbb{E}_{\theta}\left[G_{t/n}^{(n,r)}(\symbfit{\varepsilon}_{t},\theta)\right]\right)\right\|_2^q\right]\right)^{\frac{1}{q}}\\&\leq C \sum_{j=0}^{\infty} (j \wedge n)^{\frac{1}{2} - \frac{1}{q}}\left(\sum_{t=1}^n (\eta_{t,j,q,2}^{(n,r)}(\theta))^q\right)^{\frac{1}{q}}+ C \sum_{j=0}^n \left(\sum_{t=1}^n (\eta_{t,j,2,2}^{(n,r)}(\theta))^2 \right)^{\frac{1}{2}}.\end{align*}
\end{lemma}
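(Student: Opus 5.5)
This lemma is stated as Theorem 3.2 of \cite{seq_gauss_approx2022}, transcribed into our notation. The proof therefore reduces to checking that our locally stationary setting fits inside their triangular-array framework, and then obtaining the second inequality from the first by an elementary bound.

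\textbf{Embedding.} For each fixed $n$, $r$ and $\theta$, I define the array $X_{n,t}=G_{t/n}^{(n,r)}(\symbfit{\varepsilon}_t,\theta)$ for $t=1,\ldots,n$. This is a causal measurable function of the iid innovations $\symbfit{\varepsilon}_t$, which is exactly the form $G_{n,t}(\symbfit{\varepsilon}_t)$ that \cite{seq_gauss_approx2022} requires. Examples 2 and 3 there treat the same rescaled-time case.

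\textbf{Matching the hypotheses.} Their physical dependence measure $\mathbb{E}(\|G_{n,t}(\symbfit{\varepsilon}_t)-G_{n,t}(\tilde{\symbfit{\varepsilon}}_{t,j})\|_h^q)^{1/q}$ coincides with our $\eta_{t,j,q,h}^{(n,r)}(\theta)$. Their $q$-th moment condition follows from the assumed uniform bound on $\|G_u^{(i,r)}(\symbfit{\varepsilon}_0,\theta)\|_{\mathcal{L}^q(\theta)}$. This uses that $\symbfit{\varepsilon}_t\overset{d}{=}\symbfit{\varepsilon}_0$, so shifting the innovations does not change any $\mathcal{L}^q$ norm. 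Their Theorem 3.2 then gives the first inequality. Its constant $C(q,h)$ does not depend on $n$, $t$, $\theta$ or $r$, so the bound holds uniformly in those indices.

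\textbf{Second inequality.} For each $j$, I bound $\sum_{t=1}^n(\eta_{t,j,q,h}^{(n,r)})^q\le n\max_{t\le n}(\eta_{t,j,q,h}^{(n,r)})^q$. Taking the $1/q$ power gives $n^{1/q}\max_t\eta_{t,j,q,h}^{(n,r)}$, and multiplying by the prefactor $n^{1/2-1/q}$ produces $n^{1/2}$.

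\textbf{Main obstacle.} The substantive work sits in the cited theorem; here the only real check is that its conditions are exactly met. The point to confirm is whether it needs $q>2$ strictly or permits $q\ge2$, and that the index conventions for replacing $\varepsilon_{t-j}$ agree with our definition \eqref{eqn:replaced_j_past_noise_input_sequence_up_to_time_t}.
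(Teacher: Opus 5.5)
Your proposal is correct and follows the paper's route. The paper also proves this lemma purely by citing Theorem 3.2 of \cite{seq_gauss_approx2022}, noting that its triangular-array framework nests the locally stationary setting (Examples 2 and 3 there). Your derivation of the last inequality via $\sum_{t\le n}(\eta_{t,j,q,h}^{(n,r)}(\theta))^q\le n\max_{t\le n}(\eta_{t,j,q,h}^{(n,r)}(\theta))^q$ fills in the elementary step the paper leaves implicit.
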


\textit{Proof:} See the proof of Theorem 3.2 in~\cite{seq_gauss_approx2022}.
\hfill$\square$

\section{Additional discussions}\label{section:additional_discussions}

We provide a review of additional related topics.

\paragraph*{Random features in machine learning.}

\cite{Rahimi_Recht_2007_random_features} originally proposed using random features to approximate kernel-based predictors. Following up on this work,~\cite{Rahimi_Recht_2008_uniform_approximation,Rahimi_Recht_2008_weighted_sums} showed that smooth functions can be well-approximated by a linear combination of randomly chosen basis functions. Since then, researchers have explored using random features for various statistical tasks, such as measuring dependence~\citep{random_dependence_coeff}, independence testing~\citep{kernel_indep_test}, and conditional independence testing~\citep{kernel_conditional_indep_test}.

\paragraph*{Embeddings in nonlinear dynamics.}

\cite{geom_ts_Packard_et_al_1980} initiated the ``geometry from a time series'' research program, which seeks to reconstruct the attractor of a deterministic dynamical system from an observed signal. They propose concatenating a finite number of lags of the observed signal into a vector, known as a time-delay embedding, and analyzing the resulting space. This idea inspired many subsequent developments in nonlinear dynamics~\citep{nonlinear_time_series_analysis_kantz_schreiber_2003}.

\cite{embedding_takens_1981} shows that the time-delay embedding space is topologically equivalent to the underlying $p$-dimensional state space, once $2p+1$ lags are used for the time-delay embedding. In fact, the minimal embedding dimension is often less, and depends on the box-counting dimension of the attractor~\citep{embedology_Sauer_et_al_1991}.~\cite{sontag_2003} shows that a generic set of $2p+1$ experiments suffices to distinguish two different $p$-dimensional parameters in analytically parametrized dynamical systems.~\cite{aeyels_1981_2,aeyels_1981} shows that only $2p+1$ exact measurements at random times are needed for the observability of $p$-dimensional dynamical systems.

The foundation of these $2p+1$ results is the classical Whitney embedding theorem~\citep{whitney_embedding_1936}. Notably,~\cite{embedology_Sauer_et_al_1991} developed probabilistic versions of this result.

\paragraph*{Limit theorems for stochastic processes.}

Although our theory is based on (extensions of) the physical dependence measure of~\cite{wu_funct_dep_meas}, numerous central limit theorems (CLTs) and laws of large numbers (LLNs) have been established under different regularity conditions. We emphasize that the physical dependence measure conditions we use imply and are implied by $\beta$-mixing and strong mixing conditions under certain regularity conditions~\citep{Hill2025_physical_dependence_mixingale,Heinrichs2026_physical_dependence_mixing}. In what follows, we review existing CLTs and LLNs for mixing processes, mixingales, and Markov chains.

Many CLTs and LLNs have been proven under mixing conditions, which control the decay of temporal dependence so that observations far apart in time become asymptotically independent. For a comprehensive overview of CLTs for strongly mixing processes, see~\cite{CLT_mixing_Bradley2005} and Chapter 4 of~\cite{CLT_mixing_book_Rio}. A seminal result by~\cite{CLT_mixing_Rosenblatt1956} showed that stationary processes satisfying a strong mixing condition admit a CLT. Subsequently,~\cite{CLT_mixing_Ibragimov1975} demonstrated that stationary $\rho$-mixing processes satisfy a CLT. Regarding LLNs for mixing processes,~\cite{LLN_mixing_BlumHansonKoopmans1963} and~\cite{LLN_mixing_Berbee1987} proved strong LLNs for mixing processes (cf. Chapter 3 of~\cite{CLT_mixing_book_Rio}).

We now discuss mixingales, which are a generalization of mixing processes and martingale difference sequences.~\cite{CLT_mixingale_McLeish1974, LLN_mixingale_McLeish1975} introduced a CLT and an LLN for mixingales. Chapter 2 of~\cite{HallHeyde1980} discusses how martingale difference sequences, strongly mixing sequences, and certain linear processes are all examples of mixingales. We also mention the foundational work of~\cite{CLT_MartingaleApprox_Gordin1969}, who developed a CLT for stationary processes using a martingale approximation technique.

Next, we discuss Markov chains.
\cite{CLT_MarkovChain_Jones2004} provides sufficient conditions for a CLT to hold for Markov chains, and also discusses relationships to results for mixing processes.~\cite{CLT_MarkovChain_KipnisVaradhan1986} prove a CLT for reversible Markov chains with a finite limiting variance.~\cite{CLT_IteratedRandomFunctions_WuShao2004} provide a CLT for iterated random functions; see~\cite{IteratedRandomFunctions_DiaconisFreedman1999}. Regarding LLNs,~\cite{LLN_MarkovChain_Breiman} shows that a strong LLN holds for Markov chains. For more discussion of CLTs and LLNs for geometrically ergodic Markov chains, see Chapter 17 of~\cite{CLT_LLN_MarkovChains_MeynTweedie2009}.

%\clearpage % Clear all remaining figures and tables then start a new page

%%%%%%%%%%%%%%%% SUPPLEMENTARY FIGURES %%%%%%%%%%%%%%%

%%%%%%%%%%%%%%%% SUPPLEMENTARY TABLES %%%%%%%%%%%%%%%

%\matmethods{Please describe your materials and methods here. This can be more than one paragraph, and may contain subsections and equations as required. If your research involved human or animal participants, please identify the institutional review board and/or licensing committee that approved the experiments. Please also include a brief description of your informed consent procedure if your experiments involved human participants.

%\subsection*{Subsection for Method}
%Example text for subsection.
%}

%\showmatmethods{} % Display the Materials and Methods section

%Use \bibsplit to split the references from the body of the text. Value "[3]" represents the number of reference in the left column (Note: Please avoid single column figures & tables on this page.)

% Bibliography
%\bibliography{random_features}
\putbib
\end{bibunit}

\end{document}